\newtheorem{axiom}{Axiom}
\newtheorem{definition}{Definition}
\newtheorem{lemma}{Lemma}
\newtheorem{remark}{Remark}
\newtheorem{theorem}{Theorem}
\newcommand{\te}{\text}
\newcommand{\ol}{\overline}
\def\F{\mathbf{F}}
\def\N{\mathbf{N}}
\def\Q{\mathbf{Q}}
\def\Re{\mathbf{R}}
\def\ul{\underline}
\def\epsilon{\varepsilon}
\def\ep{\varepsilon}
\def\Var{\text{Var}}
\newcommand{\qu}{\quad }
\newcommand{\sa}{\sigma}
\newcommand{\da}{\delta}
\newcommand{\X}{\mathcal{X}}
\newcommand{\la}{\lambda}
\newcommand{\pr}{\operatorname{Pr}}
\newcites{SOMpapers}{References}
\newcommand*{\thisdraft}{This draft: \today}
\newcommand*{\firstdraft}{First draft: June 22, 2018}
\renewcommand{\baselinestretch}{1.2}
\title{Approximate Expected Utility Rationalization}
\author{
Federico Echenique \and Taisuke Imai \and Kota Saito 
\thanks{Echenique: Division of the Humanities and Social Sciences, California Institute of Technology, \href{mailto:fede@hss.caltech.edu}{\texttt{fede@hss.caltech.edu}}. 
Imai: Department of Economics, LMU Munich, \href{mailto:taisuke.imai@econ.lmu.de}{\texttt{taisuke.imai@econ.lmu.de}}.
Saito: Division of the Humanities and Social Sciences, California Institute of Technology, \href{mailto:saito@caltech.edu}{\texttt{saito@caltech.edu}}. 
We are very grateful to Nicola Persico, who posed questions to us that led to some of the results in this paper, and to Jose Apestegu\'{\i}a, Miguel Ballester, 
Geoffroy de~Clippel, Dan Friedman, Yves Le~Yaouanq, Pietro Ortoleva, Matthew Polisson, John Quah and Kareen Rozen for helpful comments. 
We are also grateful for the feedback provided by numerous audience at 
FUR 2018, % 2018 Taisuke
ESA World Meetings 2018, % 2018 Taisuke
4th Hitotsubashi Summer Institute: Microeconomic Theory, % 2018 Taisuke
2018 European Summer Meeting of the Econometric Society, % 2018 Taisuke
CESifo Area Conference on Behavioural Economics 2018, % 2018 Taisuke
Measuring Individual Well-Being Workshop, and % 2019 Federico
2019 European Summer Symposium in Economic Theory. % 2019 Taisuke
This research is supported by Grant SES-1558757 from the National Science Foundation. The authors also acknowledge financial support by the NSF through the grants CNS-1518941 (Echenique) and SES-1919263 (Saito), and the Deutsche Forschungsgemeinschaft through CRC TRR 190 (Imai).} 
}
\date{\thisdraft \\ \firstdraft}
\begin{document}

\maketitle 

\begin{abstract}
We propose a new measure of deviations from expected utility theory. For any positive number~$e$, we give a characterization of the datasets with a rationalization that is within~$e$ (in beliefs, utility, or perceived prices) of expected utility theory. The number~$e$ can then be used as a measure of how far the data is to expected utility theory. We apply our methodology to data from three large-scale experiments. Many subjects in those experiments are consistent with utility maximization, but not with expected utility maximization. Our measure of distance to expected utility is correlated with subjects' demographic characteristics.
\end{abstract}

% ----------
% Introduction 
% ----------
\clearpage 
\section{Introduction}

Revealed preference theory has traditionally, through its 80-year history, dealt with the empirical content of {\em general} utility maximization. Recent research has, in contrast, turned to the empirical content of {\em specific} utility theories. Mostly the focus has been on expected utility (EU): recent theoretical work seeks to characterize the observable choice behaviors that are consistent with expected utility maximization. At the same time, a number of recent empirical revealed-preference studies use data on choices under risk and uncertainty, in which participants make a series of choices from budget sets. We seek to bridge the gap between the theoretical understanding of expected utility theory, and the machinery needed to analyze experimental data on choices under risk and uncertainty.

Imagine an agent making economic decisions, choosing contingent consumption given market prices and income. Revealed preference theory studies the consistency of such choices with utility maximization. Consistency, however, is a black or white question. The choices are either consistent with EU or they are not. Our contribution is to provide a way to describe the degree to which choices are consistent with EU. We propose a {\em measure} of the degree of a dataset's consistency with EU.

Revealed preference theory has developed measures of consistency with general utility maximization. The most widely used measure is the Critical Cost Efficiency Index (CCEI) proposed by \cite{afriat1972efficiency}. The basic idea in the CCEI is to fictitiously decrease an agent's budget so that fewer options are revealed preferred to a given choice. The CCEI has been widely used to analyze experimental data on choices from budget sets. See, for example, \cite{choi2007}, \cite{ahn2014}, \cite{choi2014more}, \cite{carvalho2016poverty}, \cite{carvalho2017complexity}, and \cite{halevy2018parametric}. All of these experimental studies involve subjects making decisions under risk or uncertainty, and CCEI was proposed as a measure of consistency with general utility maximization, not EU, the most commonly-used theory to explain choices under risk or uncertainty. 

Of course, there is nothing wrong with studying general utility maximization in environments with risk and uncertainty, but the data is ideally suited to studying theories of choice under risk and uncertainty, and it should be of great interest to evaluate EU using this data. We shall argue (on both theoretical and empirical grounds) that our method provides a more accurate and intuitive measure  of consistency with EU than using CCEI.

Our main contribution is to propose a measure of how far a dataset is from being consistent with EU. The measure is different from CCEI: we explain theoretically why our measure, and not CCEI, best captures the distance of a dataset to EU theory. We also argue on empirical grounds that our measure passes ``smell tests'' that CCEI fails. For example, CCEI ignores the manifest violations of EU where subjects make first-order stochastically dominated choices. And CCEI does not correlate well with the property of downward-sloping demand, a property that is implied by EU maximization.\footnote{Roughly speaking, it says that prices and quantities must be inversely related, subject to certain qualifications.} 
We also provide a revealed preference axiomatization of the measure based on observed prices and consumption.

In the sequel, we first lay out the implications of EU that cannot be captured by CCEI, and give an overview of our approach. After a theoretical discussion of our measure of consistency (with objective EU discussed in Section~\ref{section:robustoeu} and subjective EU in Section~\ref{section:robustseu}), we present an empirical application using data from experiments on choices under risk (Section~\ref{section:oeu_application}). 

Our empirical application has two purposes. The first is to illustrate how our method can be applied and to argue that our measure of distance to EU is useful and sensible. 
The second is to offer new insights into existing data. We use data from three large-scale experiments \citep{choi2014more, carvalho2016poverty, carvalho2017complexity}, each with over 1,000 subjects, that involve choices under risk. 
Consistency with general utility maximization is well understood in these studies using CCEI. We test for EU theory using our methodology. 

There are two main take-away messages from our empirical application. 
First, the data shows that there is a gap between consistency with general utility maximization (measured with CCEI) and EU maximization (quantified with our measure). 
Subjects with CCEI close to one, who are largely consistent with utility maximization, exhibit diverse degrees of consistency with EU. Our measure detects violations of a basic property of EU that we term downward-sloping demand, and violations of monotonicity with respect to first-order stochastic dominance. CCEI, on the other hand, is less sensitive to these features in choice data. 
Second, the correlation between closeness to EU and demographic characteristics yields intuitive results. We find that younger subjects, those who have high cognitive abilities, and those who are working, are closer to EU behavior than older, low cognitive ability, or non-working, subjects. For some of the three experiments, we also find that highly educated, high-income, and male subjects, are closer to EU. 
These observations suggest that our measure complements CCEI as an empirical toolkit and provides additional insights on datasets that had been analyzed primarily with CCEI.

\subsection{How to Measure Deviations from EU}
\label{sec:EUCCEI}

The CCEI is meant to test deviations from general utility maximization. If an agent's behavior is not consistent with utility maximization, then it cannot possibly be consistent with EU maximization. Thus it stands to reason that if an agent's behavior is far from being rationalizable as measured by CCEI, then it is also far from being rationalizable with an EU function. The problem is, of course, that an agent's behavior may be rationalizable with a general utility function but not with EU. 

\begin{figure}[t]
\centering
\begin{minipage}{0.3\textwidth}
\centering
\resizebox{\textwidth}{!}{
\begin{tikzpicture}[x=1pt,y=1pt,scale=0.6,]
\definecolor{fillColor}{RGB}{255,255,255}
\path[use as bounding box,fill=fillColor,fill opacity=0.00] (0,0) rectangle (289.08,289.08);
\begin{scope}
\path[clip] (  0.00,  0.00) rectangle (289.08,289.08);

\path[] (  0.00,  0.00) rectangle (289.08,289.08);
\end{scope}
\begin{scope}
\path[clip] ( 30.52, 30.52) rectangle (282.08,282.08);
\definecolor{drawColor}{RGB}{0,0,0}

\path[draw=drawColor,draw opacity=0.80,line width= 1.5pt,line join=round] ( 30.52,247.78) -- (154.41,  0.00);

\path[draw=drawColor,draw opacity=0.80,line width= 1.5pt,line join=round] ( 30.52,187.74) -- (280.84,  0.00);
\definecolor{drawColor}{RGB}{0,100,0}

\path[draw=drawColor,draw opacity=0.80,line width= 1.5pt,dash pattern=on 4pt off 4pt ,line join=round] ( 30.52,224.91) -- (142.97,  0.00);
\definecolor{drawColor}{RGB}{0,0,255}

\path[draw=drawColor,draw opacity=0.80,line width= 1.5pt,dash pattern=on 4pt off 4pt ,line join=round] ( 30.52,147.72) -- (227.48,  0.00);
\definecolor{drawColor}{RGB}{255,0,0}
\definecolor{fillColor}{RGB}{255,0,0}

\path[draw=drawColor,draw opacity=0.80,line width= 0.4pt,line join=round,line cap=round,fill=fillColor,fill opacity=0.80] ( 60.25,165.45) circle (  3.57);

\path[draw=drawColor,draw opacity=0.80,line width= 0.4pt,line join=round,line cap=round,fill=fillColor,fill opacity=0.80] (110.56, 87.69) circle (  3.57);
\definecolor{drawColor}{RGB}{0,0,0}

\node[text=drawColor,anchor=base,inner sep=0pt, outer sep=0pt,fill=white] at ( 73.68,175.08) {$x^a$};

\node[text=drawColor,anchor=base,inner sep=0pt, outer sep=0pt,fill=white] at (124.00, 97.32) {$x^b$};
\end{scope}
\begin{scope}
\path[clip] (  0.00,  0.00) rectangle (289.08,289.08);
\definecolor{drawColor}{RGB}{0,0,0}

\path[->,draw=drawColor,line width= 1.5pt,line join=round] ( 30.52, 30.52) --
	( 30.52,282.08);
\end{scope}
\begin{scope}
\path[clip] (  0.00,  0.00) rectangle (289.08,289.08);
\definecolor{drawColor}{RGB}{0,0,0}

\path[->,draw=drawColor,line width= 1.5pt,line join=round] ( 30.52, 30.52) --
	(282.08, 30.52);
\end{scope}
\begin{scope}
\path[clip] (  0.00,  0.00) rectangle (289.08,289.08);
\definecolor{drawColor}{RGB}{0,0,0}

\node[text=drawColor,anchor=base,inner sep=0pt, outer sep=0pt] at (282.08, 10.00) {$x_1$};
%\node[text=drawColor,anchor=base,inner sep=0pt, outer sep=0pt] at (156.30, 10.00) {$x_1^k$};
\end{scope}
\begin{scope}
\path[clip] (  0.00,  0.00) rectangle (289.08,289.08);
\definecolor{drawColor}{RGB}{0,0,0}

\node[text=drawColor,anchor=base,inner sep=0pt, outer sep=0pt] at ( 49.02,270.54) {$x_2$};
%\node[text=drawColor,rotate= 90.00,anchor=base,inner sep=0pt, outer sep=0pt] at ( 21.02,156.30) {$x_2^k$};
\end{scope}
\begin{scope}
\path[clip] (  0.00,  0.00) rectangle (289.08,289.08);
\definecolor{drawColor}{RGB}{0,0,0}

\node[text=drawColor,anchor=base west,inner sep=0pt, outer sep=0pt] at (  6.11,274.54) {\textbf{\textsf{A}}};
\end{scope}
\end{tikzpicture}}
\end{minipage}
\begin{minipage}{0.3\textwidth}
\centering
\resizebox{\textwidth}{!}{
\begin{tikzpicture}[x=1pt,y=1pt,scale=0.6]
\definecolor{fillColor}{RGB}{255,255,255}
\path[use as bounding box,fill=fillColor,fill opacity=0.00] (0,0) rectangle (289.08,289.08);
\begin{scope}
\path[clip] (  0.00,  0.00) rectangle (289.08,289.08);

\path[] (  0.00,  0.00) rectangle (289.08,289.08);
\end{scope}
\begin{scope}
\path[clip] ( 30.52, 30.52) rectangle (282.08,282.08);
\definecolor{drawColor}{RGB}{0,0,0}

\path[draw=drawColor,draw opacity=0.80,line width= 1pt,dash pattern=on 4pt off 4pt ,line join=round] ( 30.52, 30.52) -- (282.08,282.08);
\definecolor{drawColor}{RGB}{0,100,0}

\path[draw=drawColor,draw opacity=0.80,line width= 1.5pt,dash pattern=on 4pt off 4pt ,line join=round] ( 30.52,182.59) -- (282.08, 98.74);

\path[draw=drawColor,draw opacity=0.80,line width= 1.5pt,dash pattern=on 4pt off 4pt ,line join=round] ( 41.11,289.08) -- (137.47,  0.00);
\definecolor{drawColor}{RGB}{0,0,0}

\path[draw=drawColor,draw opacity=0.80,line width= 1.5pt,line join=round] ( 30.52,235.20) -- (282.08,  8.79);

\path[draw=drawColor,draw opacity=0.80,line width= 1.5pt,line join=round] ( 30.52,196.32) -- (161.40,  0.00);
\definecolor{drawColor}{RGB}{0,0,255}

\path[draw=drawColor,draw opacity=0.80,line width= 1pt,line join=round] ( 76.26,255.59) --
	( 78.54,239.23) --
	( 80.83,226.33) --
	( 83.12,215.90) --
	( 85.40,207.29) --
	( 87.69,200.07) --
	( 89.98,193.92) --
	( 92.27,188.62) --
	( 94.55,184.01) --
	( 96.84,179.95) --
	( 99.13,176.36) --
	(101.41,173.15) --
	(103.70,170.27) --
	(105.99,167.67) --
	(108.27,165.31) --
	(110.56,163.16) --
	(112.85,161.19) --
	(115.13,159.38) --
	(117.42,157.70) --
	(119.71,156.15) --
	(122.00,154.72) --
	(124.28,153.38) --
	(126.57,152.13) --
	(128.86,150.96) --
	(131.14,149.86) --
	(133.43,148.83) --
	(135.72,147.86) --
	(138.00,146.94) --
	(140.29,146.08) --
	(142.58,145.26) --
	(144.86,144.48) --
	(147.15,143.74) --
	(149.44,143.04) --
	(151.73,142.37) --
	(154.01,141.73) --
	(156.30,141.12) --
	(158.59,140.54) --
	(160.87,139.99) --
	(163.16,139.45) --
	(165.45,138.94) --
	(167.73,138.45) --
	(170.02,137.98) --
	(172.31,137.53) --
	(174.59,137.10) --
	(176.88,136.68) --
	(179.17,136.27) --
	(181.46,135.88) --
	(183.74,135.51) --
	(186.03,135.15) --
	(188.32,134.80) --
	(190.60,134.46) --
	(192.89,134.13) --
	(195.18,133.81) --
	(197.46,133.50) --
	(199.75,133.21) --
	(202.04,132.92) --
	(204.32,132.64) --
	(206.61,132.37) --
	(208.90,132.10) --
	(211.19,131.85) --
	(213.47,131.60) --
	(215.76,131.36) --
	(218.05,131.12) --
	(220.33,130.89) --
	(222.62,130.67) --
	(224.91,130.45) --
	(227.19,130.24) --
	(229.48,130.03) --
	(231.77,129.83) --
	(234.05,129.63) --
	(236.34,129.44) --
	(238.63,129.26) --
	(240.92,129.07) --
	(243.20,128.89) --
	(245.49,128.72) --
	(247.78,128.55) --
	(250.06,128.38) --
	(252.35,128.22) --
	(254.64,128.06) --
	(256.92,127.91) --
	(259.21,127.75) --
	(261.50,127.60) --
	(263.78,127.46) --
	(266.07,127.32) --
	(268.36,127.18) --
	(270.65,127.04);

\path[draw=drawColor,draw opacity=0.80,line width= 1pt,line join=round] ( 69.40,257.41) --
	( 71.68,242.62) --
	( 73.97,228.66) --
	( 76.26,215.46) --
	( 78.54,202.95) --
	( 80.83,191.10) --
	( 83.12,179.86) --
	( 85.40,169.19) --
	( 87.69,159.06) --
	( 89.98,149.44) --
	( 92.27,140.31) --
	( 94.55,131.65) --
	( 96.84,123.44) --
	( 99.13,115.65) --
	(101.41,108.29) --
	(103.70,101.34) --
	(105.99, 94.77) --
	(108.27, 88.60) --
	(110.56, 82.80) --
	(112.85, 77.37) --
	(115.13, 72.30) --
	(117.42, 67.60) --
	(119.71, 63.27) --
	(122.00, 59.29) --
	(124.28, 55.67) --
	(126.57, 52.43) --
	(128.86, 49.56) --
	(131.14, 47.09) --
	(133.43, 45.02) --
	(135.72, 43.41) --
	(138.00, 42.32);
\definecolor{drawColor}{RGB}{255,0,0}
\definecolor{fillColor}{RGB}{255,0,0}

\path[draw=drawColor,draw opacity=0.80,line width= 0.4pt,line join=round,line cap=round,fill=fillColor,fill opacity=0.80] (110.56,163.16) circle (  3.57);

\path[draw=drawColor,draw opacity=0.80,line width= 0.4pt,line join=round,line cap=round,fill=fillColor,fill opacity=0.80] (124.28, 55.67) circle (  3.57);
\definecolor{drawColor}{RGB}{0,0,0}

\node[text=drawColor,anchor=base,inner sep=0pt, outer sep=0pt,fill=white] at (124.00,172.79) {$x^a$};

\node[text=drawColor,anchor=base,inner sep=0pt, outer sep=0pt,fill=white] at (137.72, 65.31) {$x^b$};
\end{scope}
\begin{scope}
\path[clip] (  0.00,  0.00) rectangle (289.08,289.08);
\definecolor{drawColor}{RGB}{0,0,0}

\path[->,draw=drawColor,line width= 1.4pt,line join=round] ( 30.52, 30.52) --
	( 30.52,282.08);
\end{scope}
\begin{scope}
\path[clip] (  0.00,  0.00) rectangle (289.08,289.08);
\definecolor{drawColor}{RGB}{0,0,0}

\path[->,draw=drawColor,line width= 1.4pt,line join=round] ( 30.52, 30.52) --
	(282.08, 30.52);
\end{scope}
\begin{scope}
\path[clip] (  0.00,  0.00) rectangle (289.08,289.08);
\definecolor{drawColor}{RGB}{0,0,0}

\node[text=drawColor,anchor=base,inner sep=0pt, outer sep=0pt] at (282.08, 10.00) {$x_1$};
%\node[text=drawColor,anchor=base,inner sep=0pt, outer sep=0pt] at (156.30, 10.00) {$x_1^k$};
\end{scope}
\begin{scope}
\path[clip] (  0.00,  0.00) rectangle (289.08,289.08);
\definecolor{drawColor}{RGB}{0,0,0}

\node[text=drawColor,anchor=base,inner sep=0pt, outer sep=0pt,fill=white] at ( 49.02,270.54) {$x_2$};
%\node[text=drawColor,rotate= 90.00,anchor=base,inner sep=0pt, outer sep=0pt] at ( 21.02,156.30) {$x_2^k$};
\end{scope}
\begin{scope}
\path[clip] (  0.00,  0.00) rectangle (289.08,289.08);
\definecolor{drawColor}{RGB}{0,0,0}

\node[text=drawColor,anchor=base west,inner sep=0pt, outer sep=0pt] at (  5.75,274.54) {\textbf{\textsf{B}}};

\node[text=drawColor,anchor=base,inner sep=0pt, outer sep=0pt,fill=white] at (190, 250) {\footnotesize$\text{MRS} = \dfrac{\mu_1 \cancel{u' (x_1^k)}}{\mu_2 \cancel{u' (x_2^k)}} = \dfrac{\mu_1}{\mu_2}$};
\end{scope}
\end{tikzpicture}}
\end{minipage}
\begin{minipage}{0.3\textwidth}
\centering
\resizebox{\textwidth}{!}{
\begin{tikzpicture}[x=1pt,y=1pt,scale=0.6]
\definecolor{fillColor}{RGB}{255,255,255}
\path[use as bounding box,fill=fillColor,fill opacity=0.00] (0,0) rectangle (289.08,289.08);
\begin{scope}
\path[clip] ( 32.02, 32.02) rectangle (282.08,282.08);
\definecolor{drawColor}{RGB}{0,0,0}

\path[draw=drawColor,draw opacity=0.80,line width= 1pt,dash pattern=on 4pt off 4pt ,line join=round] ( 32.02, 32.02) -- (282.08,282.08);
\definecolor{drawColor}{RGB}{0,100,0}

\path[draw=drawColor,draw opacity=0.80,line width= 1.5pt,dash pattern=on 4pt off 4pt ,line join=round] ( 32.02,246.46) -- (278.48,  0.00);

\path[draw=drawColor,draw opacity=0.80,line width= 1.5pt,dash pattern=on 4pt off 4pt ,line join=round] ( 32.02,166.14) -- (198.17,  0.00);
\definecolor{drawColor}{RGB}{0,0,0}

\path[draw=drawColor,draw opacity=0.80,line width= 1.5pt,line join=round] ( 32.02,235.48) -- (282.08, 10.43);

\path[draw=drawColor,draw opacity=0.80,line width= 1.5pt,line join=round] ( 32.02,196.83) -- (163.25,  0.00);
\definecolor{drawColor}{RGB}{0,0,255}

\path[draw=drawColor,draw opacity=0.80,line width= 1pt,line join=round] ( 84.31,266.31) --
	( 86.58,254.76) --
	( 88.85,244.35) --
	( 91.13,234.91) --
	( 93.40,226.32) --
	( 95.67,218.46) --
	( 97.95,211.25) --
	(100.22,204.61) --
	(102.49,198.47) --
	(104.77,192.78) --
	(107.04,187.49) --
	(109.31,182.56) --
	(111.59,177.95) --
	(113.86,173.64) --
	(116.13,169.59) --
	(118.41,165.78) --
	(120.68,162.20) --
	(122.95,158.82) --
	(125.23,155.62) --
	(127.50,152.60) --
	(129.77,149.73) --
	(132.05,147.01) --
	(134.32,144.42) --
	(136.59,141.96) --
	(138.87,139.61) --
	(141.14,137.38) --
	(143.41,135.24) --
	(145.69,133.20) --
	(147.96,131.24) --
	(150.23,129.37) --
	(152.51,127.57) --
	(154.78,125.85) --
	(157.05,124.19) --
	(159.33,122.60) --
	(161.60,121.08) --
	(163.87,119.60) --
	(166.14,118.19) --
	(168.42,116.82) --
	(170.69,115.50) --
	(172.96,114.23) --
	(175.24,113.00) --
	(177.51,111.81) --
	(179.78,110.66) --
	(182.06,109.55) --
	(184.33,108.48) --
	(186.60,107.44) --
	(188.88,106.43) --
	(191.15,105.45) --
	(193.42,104.50) --
	(195.70,103.58) --
	(197.97,102.69) --
	(200.24,101.82) --
	(202.52,100.98) --
	(204.79,100.16) --
	(207.06, 99.36) --
	(209.34, 98.58) --
	(211.61, 97.83) --
	(213.88, 97.10) --
	(216.16, 96.38) --
	(218.43, 95.68) --
	(220.70, 95.01) --
	(222.98, 94.34) --
	(225.25, 93.70) --
	(227.52, 93.07) --
	(229.80, 92.46) --
	(232.07, 91.86) --
	(234.34, 91.27) --
	(236.62, 90.70) --
	(238.89, 90.14) --
	(241.16, 89.60) --
	(243.43, 89.07) --
	(245.71, 88.54) --
	(247.98, 88.04) --
	(250.25, 87.54) --
	(252.53, 87.05) --
	(254.80, 86.57) --
	(257.07, 86.11) --
	(259.35, 85.65) --
	(261.62, 85.20) --
	(263.89, 84.76) --
	(266.17, 84.33) --
	(268.44, 83.91) --
	(270.71, 83.50);

\path[draw=drawColor,draw opacity=0.80,line width= 1pt,line join=round] ( 57.03,266.41) --
	( 59.30,235.32) --
	( 61.58,211.85) --
	( 63.85,193.50) --
	( 66.12,178.76) --
	( 68.40,166.65) --
	( 70.67,156.53) --
	( 72.94,147.93) --
	( 75.22,140.55) --
	( 77.49,134.14) --
	( 79.76,128.52) --
	( 82.03,123.55) --
	( 84.31,119.12) --
	( 86.58,115.16) --
	( 88.85,111.59) --
	( 91.13,108.35) --
	( 93.40,105.41) --
	( 95.67,102.71) --
	( 97.95,100.24) --
	(100.22, 97.97) --
	(102.49, 95.87) --
	(104.77, 93.92) --
	(107.04, 92.11) --
	(109.31, 90.43) --
	(111.59, 88.85) --
	(113.86, 87.38) --
	(116.13, 86.00) --
	(118.41, 84.70) --
	(120.68, 83.48) --
	(122.95, 82.33) --
	(125.23, 81.24) --
	(127.50, 80.21) --
	(129.77, 79.23) --
	(132.05, 78.30) --
	(134.32, 77.42) --
	(136.59, 76.58) --
	(138.87, 75.78) --
	(141.14, 75.02) --
	(143.41, 74.29) --
	(145.69, 73.60) --
	(147.96, 72.93) --
	(150.23, 72.30) --
	(152.51, 71.69) --
	(154.78, 71.10) --
	(157.05, 70.54) --
	(159.33, 70.00) --
	(161.60, 69.48) --
	(163.87, 68.98) --
	(166.14, 68.50) --
	(168.42, 68.04) --
	(170.69, 67.59) --
	(172.96, 67.16) --
	(175.24, 66.74) --
	(177.51, 66.34) --
	(179.78, 65.95) --
	(182.06, 65.57) --
	(184.33, 65.21) --
	(186.60, 64.85) --
	(188.88, 64.51) --
	(191.15, 64.18) --
	(193.42, 63.86) --
	(195.70, 63.55) --
	(197.97, 63.25) --
	(200.24, 62.95) --
	(202.52, 62.67) --
	(204.79, 62.39) --
	(207.06, 62.12) --
	(209.34, 61.86) --
	(211.61, 61.60) --
	(213.88, 61.35) --
	(216.16, 61.11) --
	(218.43, 60.88) --
	(220.70, 60.65) --
	(222.98, 60.42) --
	(225.25, 60.21) --
	(227.52, 59.99) --
	(229.80, 59.79) --
	(232.07, 59.58) --
	(234.34, 59.39) --
	(236.62, 59.19) --
	(238.89, 59.00) --
	(241.16, 58.82) --
	(243.43, 58.64) --
	(245.71, 58.47) --
	(247.98, 58.29) --
	(250.25, 58.13) --
	(252.53, 57.96) --
	(254.80, 57.80) --
	(257.07, 57.64) --
	(259.35, 57.49) --
	(261.62, 57.34) --
	(263.89, 57.19) --
	(266.17, 57.04) --
	(268.44, 56.90) --
	(270.71, 56.76);
\definecolor{drawColor}{RGB}{255,0,0}
\definecolor{fillColor}{RGB}{255,0,0}

\path[draw=drawColor,draw opacity=0.80,line width= 0.4pt,line join=round,line cap=round,fill=fillColor,fill opacity=0.80] (144.48,134.27) circle (  3.57);

\path[draw=drawColor,draw opacity=0.80,line width= 0.4pt,line join=round,line cap=round,fill=fillColor,fill opacity=0.80] ( 88.76,111.73) circle (  3.57);
\definecolor{drawColor}{RGB}{0,0,0}

\node[text=drawColor,anchor=base,inner sep=0pt, outer sep=0pt,fill=white] at (155.84,141.84) {$x^a$};

\node[text=drawColor,anchor=base,inner sep=0pt, outer sep=0pt,fill=white] at (100.13,119.29) {$x^b$};
\end{scope}
\begin{scope}
\path[clip] (  0.00,  0.00) rectangle (289.08,289.08);
\definecolor{drawColor}{RGB}{0,0,0}

\path[->,draw=drawColor,line width= 1.4pt,line join=round,line cap=rect] ( 32.02, 32.02) --
	( 32.02,282.08);
\end{scope}
\begin{scope}
\path[clip] (  0.00,  0.00) rectangle (289.08,289.08);
\definecolor{drawColor}{RGB}{0,0,0}

\path[->,draw=drawColor,line width= 1.4pt,line join=round,line cap=rect] ( 32.02, 32.02) --
	(282.08, 32.02);
\end{scope}
\begin{scope}
\path[clip] (  0.00,  0.00) rectangle (289.08,289.08);
\definecolor{drawColor}{RGB}{0,0,0}

\node[text=drawColor,anchor=base,inner sep=0pt, outer sep=0pt] at (282.08, 10.00) {$x_1$};
%\node[text=drawColor,anchor=base,inner sep=0pt, outer sep=0pt] at (156.30, 10.00) {$x_1^k$};
\end{scope}
\begin{scope}
\path[clip] (  0.00,  0.00) rectangle (289.08,289.08);
\definecolor{drawColor}{RGB}{0,0,0}

\node[text=drawColor,anchor=base,inner sep=0pt, outer sep=0pt,fill=white] at ( 49.02,270.54) {$x_2$};
%\node[text=drawColor,rotate= 90.00,anchor=base,inner sep=0pt, outer sep=0pt] at ( 21.02,156.30) {$x_2^k$};
\end{scope}
\begin{scope}
\path[clip] (  0.00,  0.00) rectangle (289.08,289.08);
\definecolor{drawColor}{RGB}{0,0,0}

\node[text=drawColor,anchor=base west,inner sep=0pt, outer sep=0pt] at (  5.81,274.59) {\textbf{\textsf{C}}};
\end{scope}
\end{tikzpicture}}
\end{minipage}
\caption{(A) A violation of WARP. (B) A violation of EU: $x_2^a > x_1^a$, $x_1^b > x_2^b$, and $p_1^b/p_2^b < p_1^a/p_2^a$. (C) A choice pattern consistent with EU.}
\label{fig:violation}
\end{figure}
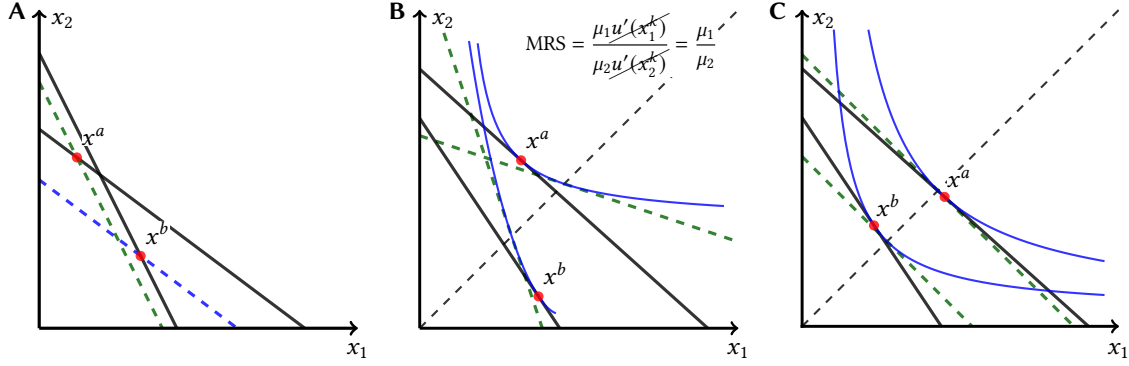

Broadly speaking, the CCEI proceeds by ``amending'' inconsistent choices through the device of changing income. This works for general utility maximization, but it is the wrong way to amend choices that are inconsistent with EU. Since EU is about getting marginal rates of substitution right, prices, not incomes, need to be changed. The problem is illustrated with a simple example in Figure~\ref{fig:violation}. 

Suppose that there are two states of the world, labeled~1 and~2. An agent purchases a state-contingent asset $x=(x_1,x_2)$, given Arrow-Debreu prices $p=(p_1,p_2)$ and her income. Prices and income define a budget set. In Figure~\ref{fig:violation}A, we are given two choices for the agent, $x^a$ and $x^b$, for two different budgets. The choices in Figure~\ref{fig:violation}A are inconsistent with utility maximization: they violate the weak axiom of revealed preference (WARP). When $x^b$ ($x^a$) was chosen, $x^a$ ($x^b$, respectively) was strictly inside of the budget set. This violation of WARP can be resolved by shifting down the budget line associated with choice $x^b$ to the dashed green line passing through $x^a$. Alternatively, the violation can be resolved by shifting down the budget line associated with choice $x^a$ to the dashed blue line passing through $x^b$. CCEI is the smallest of the two shifts that are needed: the smallest proportion of shifting down a budget line to resolve WARP violation. Therefore, the CCEI of this dataset is given by the dashed green line passing through $x^a$. That is, the CCEI is $(p^b \cdot x^a )/(p^b \cdot x^b)$.

Now consider the example in Figure~\ref{fig:violation}B. There are again two choices, $x^a$ and $x^b$, for two different budgets. These choices do not violate WARP, and comply with the theory of utility maximization with $\text{CCEI} = 1$. The choices in the panel are {\em not}, however, compatible with EU. 
To see why, assume that the dataset were rationalized by an expected utility: $\mu_1 u(x^k_1)+\mu_2 u(x^k_2)$, where $(\mu_1, \mu_2)$ are the probabilities of the two states, and $u$ is a (smooth) concave utility function over money. Note that the slope of a tangent line to the indifference curve at a point $x^k$ is equal to the marginal rate of substitution (MRS): $\mu_1 u'(x^k_{1})/ \mu_2 u'(x^k_{2})$. Moreover, at the 45-degree line (i.e., when $x^k_1=x^k_2$), the slope must be equal to $\mu_1\cancel{u'(x^k_{1})}/\mu_2 \cancel{u'(x^k_{2})}= \mu_1/\mu_2$. This is a contradiction because in Figure~\ref{fig:violation}B, the two tangent lines (green dashed lines) associated with $x^a$ and $x^b$ cross each other. Figure~\ref{fig:violation}C shows an example of choices that are consistent with EU. Note that tangent lines at the 45-degree line are parallel in this case. 

Importantly, the violation in Figure~\ref{fig:violation}B cannot be resolved by shifting budget lines up or down, or more generally by adjusting agents' expenditures. The reason is that {\em the empirical content of expected utility is captured by the relation between prices and marginal rates of substitution. The slope, not the level, of the budget line, is what matters.} The basic insight comes from the equality of marginal rates of substitution and relative prices: 
\begin{equation}
\frac{\mu_1 u'(x^k_{1})}{\mu_2 u'(x^k_{2})} = \frac{p^k_1}{p^k_2}. \label{eq:MRSeqprices}
\end{equation} 
Since marginal utility is decreasing, equation~\eqref{eq:MRSeqprices} imposes a negative relation between prices and quantities. The distance to EU is directly related to how far the data is to complying with such a negative relation between prices and quantities. The formal connection is established in Theorem~\ref{theorem:robustoeu}. Empirically,
as we shall see, the degree of compliance of a subject's choices with this ``downward sloping demand'' property, goes a long way to capturing the degree of compliance of the subject's choices with EU.

We propose a measure of how close the data is to being consistent with EU maximization. Our measure is based on the idea that marginal rates of substitution have to conform to EU maximization: whether data conform to equation~\eqref{eq:MRSeqprices}. If one ``perturbs'' marginal utility enough, then a dataset is always consistent with expected utility. Our measure is simply a measure of how large of a perturbation is needed to rationalize the data. Perturbations of marginal utility can be interpreted in three different, but equivalent, ways: as measurement error on prices, as random shocks to marginal utility in the fashion of random utility theory \citep{mcfadden1974frontiers}, or as perturbations to agents' beliefs. For example, if the data in Figure~\ref{fig:violation}B is ``$e$~away'' from being consistent with expected utility given a positive number~$e$, then one can find beliefs $\mu^a$ and $\mu^b$, one for each observation so that EU is maximized for these observation-specific beliefs, and the degree of perturbation of beliefs is bounded by~$e$. 

Our measure can be applied in settings where probabilities are known and objective, for which we develop a theory in Section~\ref{section:robustoeu}, and an application to experimental data in Section~\ref{section:oeu_application}. It can also be applied to settings where probabilities are not known, and therefore subjective (Section~\ref{section:robustseu}).

Finally, we propose a statistical methodology for testing the null hypothesis of consistency with EU (Section~\ref{sec:implementation_perturbation}). Our test relies on a set of auxiliary assumptions. %: the methodology is developed in Section~\ref{sec:implementation_perturbation}. 
The test indicates moderate levels of rejection of the EU hypothesis.

\subsection{Related Literature}
Revealed preference theory has developed tests for consistency with general utility maximization. The seminal papers include \cite{samuelson1938note}, \cite{afria67}, and \cite{varian1982nonparametric}. See \cite{chambers2016revealed} for an exposition of the basic theory. 

More recent work has explored the testable implications of EU theory. This work includes \cite{green1986}, \cite{chambers2016test}, \cite{kubler2014asset}, \cite{echenique2015savage}, and \cite{polissonquahrenou17}. The first four papers focus, as we do here, on rationalizability for risk-averse agents. \cite{green1986} and \cite{chambers2016test} allow for many goods in each state, which our methodology cannot accommodate. \cite{polissonquahrenou17} present a general approach to testing that allows for a test of EU in isolation, not jointly with risk aversion. Our assumptions are the same as in \cite{kubler2014asset} and \cite{echenique2015savage}. 

Compared to most of the existing revealed preference literature on EU, our focus is on measuring consistency with EU, not on providing a test. Our assumption of monetary payoffs and risk aversion is restrictive but consistent with how EU theory is used in economics. Many economic models assume EU together with risk aversion. Our results speak directly to the empirical relevance of such models. A further motivation for focusing on risk aversion is empirical: in the data we have looked at, corner choices are very rare. This would rule out risk-seeking behavior in the context of EU. Thus, arguably, EU and risk-loving behavior would not be a candidate explanation of the experimental data we examine in this paper.

As mentioned, the CCEI was proposed by \cite{afriat1972efficiency}. \cite{varian1990goodness} proposes a modification, and \cite{echenique2011money} and \cite{dean2016measuring} propose alternative measures. \cite{dziewulski2020just} provides a foundation for CCEI based on the model in \cite{dziewulski2016eliciting}, which seeks to rationalize violations of utility-maximizing behavior with a model of just-noticeable differences. Compared to the literature based on the CCEI, we present an explicit model of the errors that would explain the deviation from EU. As a consequence, our measure of consistency with EU is based on a ``story'' for why choices are inconsistent with EU. And, as we have explained above, the nature of EU-consistent choices is poorly reflected in the CCEI's budget adjustments.

\cite{apesteguia2015measure} propose a general method to measure the distance between theory and data in revealed preference settings. For each possible preference relation, they calculate the {\it swaps index}, which counts the number of alternatives that must be swapped with the chosen alternative in order for the preference relation to rationalize the data. Then, \cite{apesteguia2015measure} consider the preference relation that minimizes the total number of swaps in all the observations, weighted by their relative occurrence in the data. \cite{apesteguia2015measure} assume that there is a finite number of alternatives, and thus a finite number of preference relations over the set of alternatives. Because of the finiteness, they can calculate the swaps index for each preference relation and find the preference relation that minimizes the swaps index. This method by \cite{apesteguia2015measure} is not directly applicable to our setup because in our setup, a set of alternatives is a budget set and contains infinitely many elements;  moreover, the number of expected utility preferences relation is infinite.\footnote{In Appendix D.1 of \cite{apesteguia2015measure}, they consider the swaps index for expected utility preferences while assuming the finiteness of the set of alternatives. In their Appendix D.3, without axiomatization, they consider the swaps index for an infinite set of alternatives using the Lebesgue measure to ``count''  the number of swaps. However, they do not study the case where the number of alternatives is infinite and the preference relations are expected utility.} 

There are many other studies of revealed preference that are based on a notion of distance between the theory and the data. For example, \cite{halevy2018parametric} uses such distances as a guide in estimating parametric functional forms for the utility function. 

\cite{polissonquahrenou17} develop a general method called the Generalized Restriction of Infinite Domain (GRID) for testing consistency with models of choice under risk and uncertainty. Using GRID, they provide a way to calculate CCEI for departures from EU. Importantly, and in contrast with our measure, their approach does not rely on risk aversion. They present measures of departure from EU and risk-averse EU. We compare empirically our measure to theirs in Section~\ref{section:oeu_application_results} (the Online Appendix has additional details). 
Suffice it to say here that the measures are similar, but distinct, when applied to the data, and that the differences cannot be attributed to risk aversion. Theoretically, our approach has the advantage of modeling a specific source of deviations from EU, and our results connect the measure to certain observable behavioral patterns. These include exact behavioral patterns described by the theorems, but also an empirically motivated observation that our measure captures compliance with downward-sloping demand.

Finally, \cite{declippel2020relaxed} measure consistency with utility maximization by way of departures from first-order conditions, an approach similar to ours. Their FOC-Departure Index (FDI) can be computed for different classes of utility functions. In particular, their FDI measure for risk-averse expected utility is equivalent to our measure, except for the use of different scaling (their measure $\varepsilon \in [0, 1]$ is the same as a transformation of our measure $e \geq 0$, with $\varepsilon = e / (1+e)$). 
Their axiomatization is different from ours in that their primitives are weak orderings on pairs of price and utility gradient (derivatives of utility function).\footnote{In their paper, $(p, g) \succeq (p_0, g_0)$ means that ``the utility gradient $g$ is farther apart from the price vector $p$ than $g_0$ is from $p_0$.''} On the other hand, we provide an axiomatization based on the observed prices and chosen allocations.
The result in their Proposition~8 is perhaps closest in spirit to our exercise, where they show that computing the measure reduces to checking a set of inequalities. See Remarks~\ref{remark:e-oeu_linear_problem} and~\ref{remark:e-oeu_linear_problem_existence} in Online Appendix~\ref{sec:compute_e} of our paper.  
De~Clippel and Rozen's work is independent and contemporaneous to ours.

% ----------
% Model 
% ----------
\section{Model}
\label{model}

Let $S$ be a finite set of {\em states}. We occasionally use $S$ to denote the number $|S|$ of states. Let $\Delta_{++}(S) = \{ \mu \in \Re^S_{++} \mid \sum_{s=1}^S \mu_s = 1 \}$ denote the set of strictly positive probability measures on $S$.  In our model, the objects of choice are state-contingent monetary payoffs, or {\em monetary acts}. A monetary act is a vector in  $\Re^S_+$.  

\begin{definition}
A {\em dataset} is a finite collection of pairs $(x,p) \in \Re^S_{+}\times \Re^S_{++}$. 
\end{definition}

The interpretation of a dataset $(x^k,p^k)_{k=1}^K$ is that it describes $K$ purchases of a state-contingent payoff $x^k$ at some given vector of prices $p^k$, and income $p^k \cdot x^k = \sum_{s \in S} p^k_s x^k_s$. We sometimes use $K$ to denote the set $\{ 1, \dots, K \}$. 
For any prices $p\in \Re^S_{++}$ and positive number $I>0$, the set $B (p, I) = \{ y \in \Re^S_+ \mid p \cdot y \leq I \}$ is the {\em budget set} defined by $p$ and $I$. 

Expected utility theory requires a decision maker to solve the problem 
\begin{equation}\label{eq:maxEU}
\max_{x \in B(p,I)} \sum_{s\in S} \mu_s u(x_s) , 
\end{equation}
when faced with prices $p \in \Re^S_{++}$ and income $I>0$, where $\mu \in \Delta_{++}(S)$ is a belief and $u$ is a concave utility function over money. We are interested in concave $u$; an assumption that corresponds to risk aversion. 

The belief $\mu$ will have two interpretations in our model. First, in Section~\ref{section:robustoeu}, we shall focus on decisions taken under {\em risk}. The belief $\mu$ will be a known ``objective'' probability measure $\mu^* \in \Delta_{++}(S)$. Then, in Section~\ref{section:robustseu}, we study choice under {\em uncertainty}. Consequently, The belief $\mu$ will be a subjective beliefs, which is unobservable to us as outside observers. 

The following definition formalizes the concept of as-if choices \citep{echenique2015savage}. 
\begin{definition}
A dataset $(x^k,p^k)_{k=1}^K$ is
{\em Objective Expected Utility rational} if there exists a concave and strictly
increasing function $u : \Re_+ \rightarrow \Re$ such that, for all $k$, 
\begin{equation*}
y \in B (p^k, p^k \cdot x^k) \implies 
\sum_{s \in S} \mu_s^* u(y_s) \leq \sum_{s \in S} \mu_s^* u(x^k_s) ,
\end{equation*}
where $\mu^* \in \Delta_{++}(S)$ is an objective probability. 
A dataset $(x^k,p^k)_{k=1}^K$ is
{\em Subjective Expected Utility rational} if there exist
$\mu \in \Delta_{++}(S)$ and a concave and strictly
increasing function $u : \Re_+ \rightarrow \Re$ such that, for all $k$, 
\begin{equation*}
y \in B (p^k, p^k \cdot x^k) \implies 
\sum_{s \in S} \mu_s u(y_s) \leq \sum_{s \in S} \mu_s u(x^k_s) .
\end{equation*}
\end{definition}

When imposed on a dataset, expected utility maximization~\eqref{eq:maxEU} may be too demanding. We are interested in situations where the model in~\eqref{eq:maxEU} holds {\em approximately}. As a result, we shall relax~\eqref{eq:maxEU} by ``perturbing'' some elements of the model. The exercise will be to see if a dataset is consistent with the model in which some elements have been perturbed. Specifically, we shall perturb beliefs, utilities, or prices.

First, consider a perturbation of utility $u$. We allow $u$ to depend on the choice problem $k$ and the realization of the state $s$. We suppose that the utility of consumption $x_s$ in state $s$ is given by $\ep^k_s u(x_s),$ with $\ep^k_s$ being a (multiplicative) perturbation in utility. To sum up, given price $p$ and income $I$, a decision maker solves the problem 
\begin{equation*}
%\label{eq:maxEUrobust2}
\max_{x \in B(p,I)} \sum_{s \in S} \mu_s \ep^k_s u(x_s) , 
\end{equation*}
when faced with prices $p \in \Re^S_{++}$ and income $I>0$. Here $\{ \ep^k_s \}_{s \in S, k \in K}$ is a set of perturbations, and $u$ is, as before, a concave utility function over money. 

In the second place, consider a perturbation of beliefs. We allow $\mu$ to be different for each choice problem $k$. That is, given price $p$ and income $I$, a decision maker solves the problem 
\begin{equation}\label{eq:maxEUrobust}
\max_{x \in B(p,I)} \sum_{s \in S} \mu^k_s u(x_s) , 
\end{equation}
when faced with prices $p \in \Re^S_{++}$ and income $I > 0$, where $\{ \mu^k \}_{k \in K} \subset \Delta_{++}(S)$ is a set of beliefs and $u$ is a concave utility function over money. 

Finally, consider a perturbation of prices. Our consumer faces perturbed prices  $\tilde{p}^k_s = \ep^k_s p^k_s$, with a perturbation $\ep^k_s$ that depends on the choice problem $k$ and the state $s$. Given price $p$ and income $I$, a decision maker solves the problem 
\begin{equation*}
%\label{eq:maxEUrobust3}
\max_{x \in B(\tilde{p},I)} \sum_{s\in S} \mu_s u(x_s) , 
\end{equation*}
when faced with income $I>0$ and the perturbed prices $\tilde{p}^k_s = \ep^k_s p^k_s$  for each $k \in K$ and $s \in S$. 

Observe that our three sources of perturbations have different interpretations, each can be traced back to a long-standing tradition for how errors are introduced in economic models. Perturbed prices can be thought of a prices subject to measurement error, measurement error being a very common source of perturbations in econometrics \citep{griliches1986economic}. Perturbed utility is an instance of random utility models \citep{mcfadden1974frontiers}. Finally, perturbations of beliefs can be thought of as a kind of random utility, or as an inability to exactly use probabilities. 
Note that we perturb one source at a time and do not consider combinations of perturbations.

% ----------
% Perturbed Objective Expected Utility 
% ----------
\section{Perturbed Objective Expected Utility}
\label{section:robustoeu}

In this section, we discuss choice under risk: there exists a known ``objective'' belief $\mu^* \in \Delta_{++}(S)$ that determines the realization of states. The experiments we discuss in Section~\ref{section:oeu_application} are all on choice under risk. 

As mentioned above, we go through each of the sources of perturbation: beliefs, utility, and prices. We seek to understand how large a perturbation has to be in order to rationalize a dataset. It turns out that, for this purpose, all sources of perturbations are equivalent.

\subsection{Belief Perturbation}

Deviations from EU are accommodated by allowing a different belief at each observation. So we assume a belief $\mu^k$ for each choice $k$, and allow $\mu^k$ to differ from the objective $\mu^*$. We seek to understand how much the belief $\mu^k$ deviates from the objective belief $\mu^*$ by evaluating how far the ratio, 
\[ \frac{\mu^k_s/\mu^k_t}{\mu^*_s/\mu^*_t} , \] 
where $s \neq t$, differs from~1. If the ratio is larger (smaller) than one, then it means that in choice $k$, the decision maker believes the relative likelihood of state $s$ with respect to state $t$ is larger (smaller, respectively) than what he should believe, given the objective belief $\mu^*$.

Given a non-negative number $e$, we say that a dataset is $e$-belief-perturbed objective expected utility (OEU) rational, if it can be rationalized using expected utility with perturbed beliefs for which the relative likelihood ratios do not differ by more than $e$ from their objective equivalents. Formally:

\begin{definition}
Let $e \in \Re_+$. A dataset $(x^k,p^k)_{k=1}^K$ is
{\em $e$-belief-perturbed OEU rational} if there exist
$\mu^k \in \Delta_{++}(S)$ for each $k\in K$, and a concave and strictly
increasing function $u:\Re_+ \rightarrow \Re$, such that, for all $k$, 
\begin{equation*}
y\in B(p^k,p^k\cdot x^k) \implies 
\sum_{s\in S}\mu^k_su(y_s) \leq \sum_{s\in S}\mu^k_s u(x^k_s),
\end{equation*}
and for each $k \in K$ and $s,t \in S$,
\begin{equation}
\label{eq:oeubound}
\frac{1}{1+e} \le \frac{\mu^k_s/\mu^k_t}{\mu^*_s/\mu^*_t}\le 1+e.
\end{equation}
\end{definition}

When $e = 0$, $e$-belief-perturbed OEU rationality requires that $\mu^k_s = \mu^*_s$ for all $s$ and $k$, so the case of exact consistency with expected utility is obtained with a zero bound of belief perturbations. Moreover, it is easy to see that by taking $e$ to be large enough, any dataset can be $e$-belief-perturbed rationalizable. 

We should note that $e$ bounds belief perturbations for all states and observations. As such, it can be sensitive to extreme observations and outliers \citep[the CCEI is also subject to this critique: see][]{echenique2011money}. In our empirical application, we carry out a robustness analysis to account for such sensitivity (see Online Appendix~\ref{appendix:sensitivity}). 

Finally, we mention a potential relationship with models of nonexpected utility. One could think of rank-dependent utility, for example, as a way of allowing agent's beliefs to adapt to his observed choices. However, unlike $e$-belief-perturbed OEU, the nonexpected utility theory requires some consistencies on the dependency. For example, for the case of rank-dependent utility, the agent's belief over the states is affected by the ranking of the outcomes across states.

\subsection{Price Perturbation}
\label{sec:priceperturb}

We now turn to perturbed prices: think of them as prices measured with error. The perturbation is a multiplicative noise term $\ep^k_s$ to the Arrow-Debreu state price $p^k_s$. Thus, perturbed state prices are $\ep^k_sp^k_s $. Note that if $\ep^k_s=\ep^k_t$ for all $s,t$, then introducing the noise does not affect anything because it only changes the scale of prices. In other words, what matters is how perturbations affect relative prices, that is  $\ep^k_s/\ep^k_t$.  

We can measure how much the noise $\ep^k$ perturbs relative prices by evaluating how much the ratio, 
\[ \frac{\ep^k_s}{\ep^k_t} , \]
where $s \neq t$, differs from~1.

\begin{definition}\label{def:e-price-perturbed_oeu}
Let $e \in \Re_+$. A dataset $(x^k,p^k)_{k=1}^K$ is {\em $e$-price-perturbed OEU rational} if there exists a concave and strictly increasing function $u:\Re_+ \rightarrow \Re$, and $\ep^k \in \Re^S_{+}$ for each $k \in K$ such that, for all $k$, 
\begin{equation*}
%\label{eq:budget}
y\in B(\tilde{p}^k ,\tilde{p}^k \cdot x^k) \implies \sum_{s\in S}\mu^*_s u(y_s)  \leq \sum_{s\in S}\mu^*_s u(x^k_s),
\end{equation*}
where for each $k\in K$ and $s \in S$
\begin{equation*}
\tilde{p}^k_s= p^k_s \ep^k_s 
\end{equation*}
and for each $k \in K$ and $s,t \in S$
\begin{equation}
\label{eq:oeubound:price}
\frac{1}{1+e} \le \frac{\ep^k_s}{\ep^k_t} \le 1+e .
\end{equation}
\end{definition}
It is without loss of generality to add an additional restriction that $\tilde{p}^k \cdot x^k = p^k \cdot x^k$ for each $k \in K$ because what matters are the relative prices. 

The idea is illustrated in Figure~\ref{fig:perturbation_rationalization_example}. The figure shows how the perturbations to relative prices affect budget lines, under the assumption that $|S|=2$.  For each value of $e\in \{ 0.1, 0.25, 1 \}$ and $k \in K$, the blue area represents the set 
\[ \left\{ x \in \Re_{+}^S \;\middle|\; \tilde{p} \cdot x = p^k \cdot x^k \te{ for some } \tilde{p} \in \Re_{++}^S\text{ such that } \forall s, t \in S , \frac{1}{1+e} \leq \frac{\tilde{p}_s/p_s^k}{\tilde{p}_t/p_t^k} \leq 1+e \right\} \]
of perturbed budget lines. The dataset in the figure is the same as in Figure~\ref{fig:violation}B, which is not rationalizable with any expected utility function as we discussed.

\begin{figure}[t]
\centering 
\includegraphics[width=0.95\textwidth]{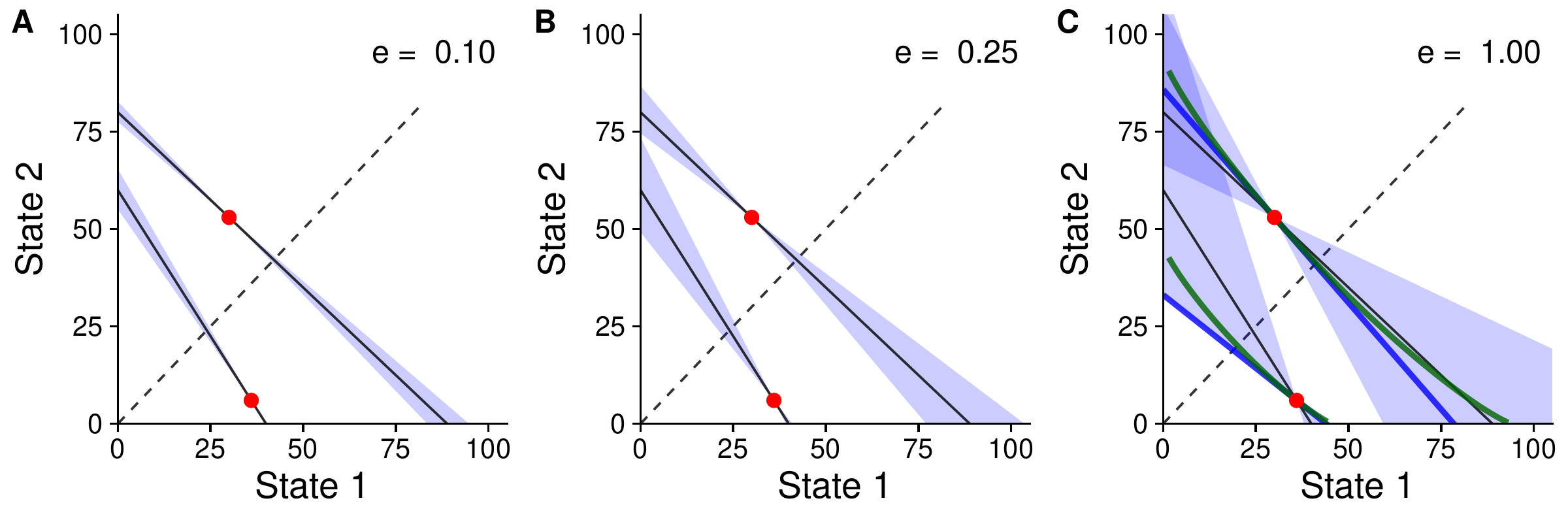}
\caption{Illustration of the set of possible perturbed budget sets with $e \in \{ 0.1, 0.25, 1 \}$. {\em Notes}: Panel~C presents an example of price-perturbed OEU rationalization. The solid blue line represents the perturbed budget set and the green line represents the indifference curve.} 
\label{fig:perturbation_rationalization_example}
\end{figure}

Figure~\ref{fig:perturbation_rationalization_example}C illustrates how we rationalize the dataset in Figure~\ref{fig:violation}B. The blue bold lines are perturbed budget lines and the green bold curves are (fixed) indifference curves passing through each of the $x^k$ in the data. 
The blue shaded areas are the sets of perturbed budget lines bounded by $e = 1$. Perturbed budget lines needed to rationalize the choices are indicated with blue bold lines. Since they are inside the shaded areas, the dataset is price-perturbed OEU rational with $e = 1$.

\subsection{Utility Perturbation}

Finally, we turn to perturbed utility. As explained above, perturbations are multiplicative and take the form $\ep^k_su(x^k_s)$. It is easy to see that this method is equivalent to belief perturbation.\footnote{We consider state-contingent perturbations. As such, perturbed utilities fall outside of the domain of EU theory. We thank Jose Apestegu\'{\i}a and Miguel Ballester for pointing this out to us.} As for price  perturbations, we seek to measure how much the $\ep^k$ perturbs  utilities at choice problem $k$ by evaluating how much the ratio, 
\[ \frac{\ep^k_s}{\ep^k_t} , \]
where $s \neq t$, differs from~1. 

\begin{definition}
Let $e\in \Re_+$. A dataset $(x^k,p^k)_{k=1}^K$ is {\em $e$-utility-perturbed OEU rational} if there exists a concave and strictly increasing function $u:\Re_+ \rightarrow \Re$ and $\ep^k \in \Re^S_{+}$ for each $k \in K$ such that, for all $k$, 
\begin{equation*}
y\in B(p^k ,p^k \cdot x^k) \implies \sum_{s\in S}\mu^*_s \ep^k_s u(y_s)  \leq \sum_{s\in S}\mu^*_s  \ep^k_su(x^k_s),
\end{equation*}
and for each $k \in K$ and $s,t \in S$
\begin{equation}
\label{eq:oeubound:utility}
\frac{1}{1+e} \le \frac{\ep^k_s}{\ep^k_t} \le 1+e.
\end{equation}
\end{definition}

\subsection{Equivalence of Belief, Price, and Utility Perturbations}

The first observation we make is that the three sources of perturbations are equivalent, in the sense that for any $e$ a dataset is $e$-perturbed rationalizable according to one of the sources if and only if it is also rationalizable according to any of the other sources with the same~$e$. By virtue of this result, we can interpret our measure of deviations from OEU in any of the ways we have introduced. 

\begin{theorem}
\label{theorem:robustoeu0}
Let $e \in \Re_+$, and $D$ be a dataset. The following are equivalent: 
\begin{itemize}
\item $D$ is $e$-belief-perturbed OEU rational; 
\item $D$ is $e$-price-perturbed OEU rational;
\item $D$ is $e$-utility-perturbed OEU rational.
\end{itemize}
\end{theorem}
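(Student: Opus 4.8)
The plan is to reduce all three notions to a single ``first-order'' characterization phrased in terms of the (unknown) marginal utilities of $u$, and then to observe that the three perturbation bounds \eqref{eq:oeubound}, \eqref{eq:oeubound:price}, and \eqref{eq:oeubound:utility} collapse onto the same inequality. Concretely, for a candidate concave, strictly increasing $u$ and a supergradient $v^k_s \in \partial u(x^k_s)$ at each observed coordinate, set
\[ r^k_s = \frac{\mu^*_s\, v^k_s}{p^k_s}. \]
I would show that $D$ is $e$-perturbed OEU rational under each of the three definitions if and only if there exist such a $u$ and supergradients $v^k_s>0$ with
\[ \frac{1}{1+e} \le \frac{r^k_s}{r^k_t} \le 1+e \qquad (\star) \]
for all $k\in K$ and $s,t\in S$. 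Since the right-hand condition $(\star)$ makes no reference to which perturbation we started from, the equivalence follows at once. The bridge between ``$x^k$ is a maximizer'' and $(\star)$ is the first-order (Karush--Kuhn--Tucker) condition for a concave program over a budget set. Crucially, across all three constructions the data $x^k$ and the function $u$ are never altered, so the \emph{same} supergradients $v^k_s$ are reused and only the placement of the common distortion (in the belief $\mu^k$, in the utility weights, or in the prices) changes.

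The belief/utility equivalence is the easy half and uses no first-order conditions. Given a utility-perturbation witness $(u,\{\ep^k\})$, set $\mu^k_s = \mu^*_s\ep^k_s / \sum_{t}\mu^*_t\ep^k_t$; then $\sum_s \mu^k_s u(\cdot)$ is a positive multiple of $\sum_s \mu^*_s\ep^k_s u(\cdot)$, so the maximizer over the common budget $B(p^k,p^k\cdot x^k)$ is unchanged, and a direct computation gives $(\mu^k_s/\mu^k_t)/(\mu^*_s/\mu^*_t)=\ep^k_s/\ep^k_t$, so \eqref{eq:oeubound} and \eqref{eq:oeubound:utility} are literally the same constraint. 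The reverse substitution $\ep^k_s = \mu^k_s/\mu^*_s$ makes the two objectives equal (not merely proportional).

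For the price equivalence I keep $u$ fixed and move the distortion between the objective weights and the price vector. The sufficiency half is painless: if $w_s v_s = \lambda q_s$ for some $\lambda>0$ and $v_s\in\partial u(x_s)$, then for every $y\in B(q,q\cdot x)$ the supergradient inequality gives
\[ \sum_{s} w_s u(y_s) \le \sum_{s} w_s u(x_s) + \lambda\, q\cdot(y-x) \le \sum_{s} w_s u(x_s), \]
so $x$ is optimal, with no differentiability needed. Starting from a price witness, KKT necessity yields $\mu^*_s v^k_s = \lambda^k p^k_s \ep^k_s$; defining $\mu^k_s \propto \mu^*_s/\ep^k_s$, the same $v^k_s$ satisfy $\mu^k_s v^k_s \propto p^k_s$, so the display above returns optimality of $x^k$ for the belief problem, while the bound transforms as $(\mu^k_s/\mu^k_t)/(\mu^*_s/\mu^*_t)=\ep^k_t/\ep^k_s$, which obeys the same two-sided bound by the symmetry of $(\star)$ in $s,t$. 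The reverse direction is identical with the roles of $\mu^k$ and $\ep^k$ exchanged.

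I expect the one genuine subtlety to be the necessity step (extracting $v^k_s$ and $\lambda^k$ from optimality) when $u$ is merely concave and a chosen bundle has a zero coordinate. Sufficiency is immediate from the supergradient inequality, but to \emph{produce} $v^k_s$ from optimality I need a supporting hyperplane to the concave objective over the budget set; strictly positive prices supply Slater's condition and interior optima make this standard, while the corner case $x^k_s=0$ requires working with the one-sided supergradient set $\partial u(0)$ and verifying that $(\star)$ remains well defined there. Everything else is routine bookkeeping in transferring the fixed $u$ and recomputing the induced perturbation.
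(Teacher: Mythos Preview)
Your proposal is correct and follows essentially the same route as the paper: both arguments reduce each of the three perturbation notions to a common first-order (Afriat-type) system in supergradients $v^k_s$ and a multiplier $\lambda^k$, and then verify that the three bounds \eqref{eq:oeubound}, \eqref{eq:oeubound:price}, \eqref{eq:oeubound:utility} are interchangeable via the substitutions $\mu^k_s\propto \mu^*_s/\ep^k_s$ and $\ep^k_s=1/\hat\ep^k_s$. Your device of packaging the first-order content into the single ratio $r^k_s=\mu^*_s v^k_s/p^k_s$ and the symmetric constraint $(\star)$ is a tidy presentational choice, but it is exactly the paper's Lemma~\ref{1lem:smoothisok} in different clothing; the paper likewise delegates the necessity step (extracting $v^k_s,\lambda^k$ from optimality, including at corners) to the prior work of \cite{echenique2015savage}.
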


The proof appears in Appendix~\ref{proof}. 
In light of Theorem~\ref{theorem:robustoeu0}, we shall simply say that a dataset is {\em $e$-perturbed OEU rational} if it is $e$-belief-perturbed OEU rational, and this will be equivalent to being $e$-price-perturbed OEU rational, and $e$-utility-perturbed OEU rational.

\subsection{Characterizations}
We proceed to give a characterization of the dataset that are $e$-perturbed OEU rational. Specifically, given $e \in \Re_+$, we propose a revealed preference axiom and prove that a dataset satisfies the axiom if and only if it is $e$-perturbed OEU rational. 

Before we state the axiom, we need to introduce some additional notation. In the current model, where $\mu^*$ is known and objective, what matters to an expected utility maximizer is not the state price itself, but instead the {\em risk-neutral} price. 

\begin{definition}
For any dataset $(p^k,x^k)_{k=1}^K$, the {\em  risk neutral price} $\rho^k_s \in \Re^S_{++}$ in choice problem $k$ at state $s$ is defined by 
\[ \rho^k_s = \frac{p^k_s}{\mu^*_s} . \]
\end{definition}

As in \cite{echenique2015savage}, the axiom we propose involves a sequence $(x^{k_i}_{s_i}, x^{k'_i}_{s'_i})_{i=1}^n$ of pairs satisfying certain conditions. 

\begin{definition} 
\label{def:testsequence}
A sequence of pairs $(x^{k_i}_{s_i}, x^{k'_i}_{s'_i})_{i=1}^n$ is called a {\em test sequence} if  
\begin{enumerate}[label=(\roman*),ref=(\roman*)]
\item\label{1it:sarseuone} $x^{k_i}_{s_i} >  x^{k'_i}_{s'_i}$ for all $i$;
\item\label{1it:sarseuthree} each $k$ appears as $k_i$ (on the left of the pair) the same  number of times it appears as $k'_i$ (on the right).
\end{enumerate}
\end{definition}

\cite{echenique2015savage} provide an axiom for OEU rationalization, termed the Strong Axiom for Revealed Objective Expected Utility (SAROEU), which states that for any test sequence $(x^{k_i}_{s_i}, x^{k'_i}_{s'_i})_{i=1}^n$, we have
\begin{equation}\label{eq:conclusion}
\prod_{i=1}^n \frac{\rho^{k_i}_{s_i}}{\rho^{k'_i}_{s'_i}} \leq 1.
\end{equation}
SAROEU is equivalent to the axiom provided by \cite{kubler2014asset}. 

It is easy to see why SAROEU is necessary for OEU rationalization. Assuming (for simplicity of exposition) that $u$ is differentiable, the first-order condition of the maximization problem~\eqref{eq:maxEU} for choice problem $k$ is 
\[ \lambda^k p^k_s= \mu^*_s u'(x^k_s), \text{ or equivalently, } \rho^k_s= \frac{u'(x^k_s)}{\lambda^k} , \] 
where $\la^k>0$ is a Lagrange multiplier. 

By substituting this equation on the left hand side of~\eqref{eq:conclusion}, we have 
\begin{equation*}
\prod_{i=1}^n \frac{\rho^{k_i}_{s_i}}{\rho^{k'_i}_{s'_i}} 
= \prod_{i=1}^n \frac{\la^{k'_i}}{\la^{k_i}} 
\cdot \prod_{i=1}^n \frac{u'(x^{k_i}_{s_i})}{u'(x^{k'_i}_{s'_i})} 
\leq 1 .
\end{equation*}
To see that this term is smaller than 1, note that the first term of the product of the $\lambda$-ratios is equal to one because of the condition~\ref{1it:sarseuthree} of the test sequence: all $\lambda^k$ must cancel out. The second term of the product of $u'$-ratio is less than one because of the concavity of $u$, and the condition~\ref{1it:sarseuone} of the test sequence (i.e., $u'(x^{k_i}_{s_i})/u'(x^{k'_i}_{s'_i})\le 1$). Thus, SAROEU is implied. It is more complicated to show that SAROEU is sufficient \citep[see][]{echenique2015savage}. 

Now, $e$-perturbed OEU rationality allows the decision maker to use different beliefs $\mu^k \in \Delta_{++}(S)$ for each choice problem $k$. Consequently, SAROEU is not necessary for $e$-perturbed OEU rationality. To see that SAROEU can be violated, note that the first-order condition of the maximization~\eqref{eq:maxEUrobust} for choice $k$ is as follows: there exists a positive number (Lagrange multiplier) $\lambda^k$ such that for each $s \in S$, 
\[ \lambda^k p^k_s = \mu^k_s u'(x^k_s), \text{ or equivalently, } \rho^k_s = \frac{\mu^k_s}{\mu^*_s} \frac{u'(x^k_s)}{\lambda^k} . \]

Suppose that $x^k_s > x^k_t$. Then $(x^k_s, x^k_t)$ is a test sequence (of length one) according to Definition~\ref{def:testsequence}. We have 
\begin{equation*}
\frac{\rho^{k}_{s}}{\rho^{k}_{t}} 
= \left( \frac{\mu^k_s}{\mu^*_s} \frac{u'(x^k_s)}{\lambda^k}\right ) \bigg/ \left( \frac{\mu^k_t}{\mu^*_t} \frac{u'(x^k_t)}{\lambda^k} \right) 
= \frac{u'(x^k_s)}{u'(x^k_t)} \frac{\mu^k_s/\mu^k_t}{\mu^*_s/\mu^*_t}.
\end{equation*}
Even though $x^k_s> x^k_t$ implies the first term of the ratio of $u'$ is less than one, the second term can be strictly larger than one. When $x^k_s$ is close enough to $x^k_t$, the first term is almost one while the second term can be strictly larger than one. Consequently, SAROEU can be violated.  

However, by~\eqref{eq:oeubound}, we know that the second term is bounded by $1+e$. So we must have 
\[ \frac{\rho^{k}_{s}}{\rho^{k}_{t}}\le 1+e. \]
In general, for a sequence $(x^{k_i}_{s_i}, x^{k'_i}_{s'_i})_{i=1}^n$ of pairs, one may suspect that the bound is calculated as $(1+e)^n$. This is not true because if $x^k_s$ appears both as $x^{k_i}_{s_i}$ for some $i$ (on the left of the pair) and as $x^{k'_j}_{s'_j}$ for some $j$ (on the right of the pair), then all $\mu^k_s$ can be canceled out. What matters is the number of times $x^k_s$ appears without being canceled out. This number can be defined as follows. 

\begin{definition}
\label{def:sk_remaining}
Consider any sequence $(x^{k_i}_{s_i}, x^{k'_i}_{s'_i})_{i=1}^n$ of pairs. Let  $(x^{k_i}_{s_i}, x^{k'_i}_{s'_i})_{i=1}^n \equiv\sigma$. For any  $k\in K$ and $s \in S$, 
\[ d(\sigma,k,s)= \# \{i \mid x^k_s= x^{k_i}_{s_i} \} - \# \{ i \mid x^k_s= x^{k'_i}_{s'_i} \} , \] 
and 
\[ m(\sigma)=\sum_{s\in S} \sum_{k\in K: d(\sigma,k,s)>0}d(\sigma,k,s). \]
\end{definition}

Note that, if $d(\sigma,k,s)$ is positive, then $d(\sigma,k,s)$ is the number of times $\mu^k_s$ appears as a numerator without being canceled out. If it is negative, then $d(\sigma,k,s)$ is the number of times $\mu^k_s$ appears as a denominator without being canceled out. So $m(\sigma)$ is the ``net'' number of terms such as $\mu^k_s/\mu^k_t$ that are present in the numerator. Thus the relevant bound is $(1+e)^{m(\sigma)}$.

Given the discussion above, it is easy to see that the following axiom is necessary for $e$-perturbed OEU rationality. 

\begin{axiom}[$e$-Perturbed Strong Axiom for Revealed Objective Expected Utility ($e$-PSAROEU)] For any test sequence of pairs 
$(x^{k_i}_{s_i}, x^{k'_i}_{s'_i})_{i=1}^n\equiv \sigma$, we have 
\[
\prod_{i=1}^n \frac{\rho^{k_i}_{s_i}}{\rho^{k'_i}_{s'_i}} \leq (1+e)^{m(\sigma)}.
\]
\end{axiom}

The main result of this section is to show that the axiom is also sufficient. 

\begin{theorem}\label{theorem:robustoeu}
Given $e\in \Re_+$, and let $D$ be a dataset. The following are equivalent:
\begin{itemize}
\item $D$ is $e$-belief-perturbed OEU rational.
\item $D$ satisfies $e$-PSAROEU.
\end{itemize}
\end{theorem}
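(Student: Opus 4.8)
The statement is an Afriat-style characterization, so my plan is to prove necessity directly from the first-order conditions (essentially the computation already displayed before the axiom) and to prove sufficiency by reducing rationalizability to the feasibility of a finite system of difference inequalities, whose solvability is governed by a no-negative-cycle condition that will turn out to be exactly $e$-PSAROEU. Throughout I would work with supergradients rather than derivatives, so that no differentiability is assumed. For necessity, suppose $D$ is $e$-belief-perturbed OEU rational with beliefs $\mu^k$ and concave increasing $u$. Choosing at each point a supergradient $v^k_s>0$ of $u$, optimality of $x^k$ on its budget yields a multiplier $\lambda^k>0$ with $\lambda^k p^k_s=\mu^k_s v^k_s$, i.e. $\rho^k_s=(\mu^k_s/\mu^*_s)\,v^k_s/\lambda^k$. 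For a test sequence $\sigma$, substitution factorizes $\prod_i \rho^{k_i}_{s_i}/\rho^{k'_i}_{s'_i}$ into three products: the $\lambda$-product is $1$ by the balance condition~\ref{1it:sarseuthree}; the $v$-product is $\le 1$ since $x^{k_i}_{s_i}>x^{k'_i}_{s'_i}$ forces $v^{k_i}_{s_i}\le v^{k'_i}_{s'_i}$ by concavity; and the belief product collects into $\prod_{k,s}(\mu^k_s/\mu^*_s)^{d(\sigma,k,s)}$, whose exponents sum to zero within each $k$. Maximizing this last product over the band~\eqref{eq:oeubound} by placing the upper bound on the positive exponents gives exactly $(1+e)^{m(\sigma)}$, which is the axiom.

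For sufficiency I would first reduce rationalizability to the existence of numbers $v^k_s>0$ satisfying (a) $x^k_s\ge x^{k'}_{s'}\Rightarrow v^k_s\le v^{k'}_{s'}$ (with equality when the consumptions coincide) and (b) $(\rho^k_s/v^k_s)/(\rho^k_t/v^k_t)\in[1/(1+e),1+e]$ for every $k,s,t$. From such $v$ one builds a strictly increasing concave piecewise-linear $u$ having $v^k_s$ as a supergradient at $x^k_s$ (the standard one-dimensional concavification, using (a)), and recovers beliefs $\mu^k_s\propto \mu^*_s(\rho^k_s/v^k_s)$ in $\Delta_{++}(S)$; condition (b) is precisely~\eqref{eq:oeubound}, and the first-order conditions $\lambda^k p^k_s=\mu^k_s v^k_s$ together with concavity make each $x^k$ optimal on its budget. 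Writing $c^k_s=\log v^k_s-\log\rho^k_s$ then turns (a) and (b) into the difference-inequality system $c^k_s-c^{k'}_{s'}\le\log(\rho^{k'}_{s'}/\rho^k_s)$ for $x^k_s> x^{k'}_{s'}$, and $c^k_s-c^k_t\le\log(1+e)$ for all $s,t$, on the node set $\{(k,s)\}$.

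Such a difference system is feasible if and only if its constraint graph has no negative-weight cycle, so it remains to show that $e$-PSAROEU rules out negative cycles. Given a negative cycle, its type-(a) edges have the form $x^{k_i}_{s_i}>x^{k'_i}_{s'_i}$ and therefore constitute a candidate test sequence $\sigma$; flow conservation at each node forces $\sum_s d(\sigma,k,s)=0$ for every $k$, which is the balance condition~\ref{1it:sarseuthree}, and also forces the number of type-(b) edges used to be at least $\sum_{k}\sum_{s:d(\sigma,k,s)>0}d(\sigma,k,s)=m(\sigma)$. The cycle having negative weight then reads $\prod_i \rho^{k_i}_{s_i}/\rho^{k'_i}_{s'_i}>(1+e)^{\#\{\text{(b)-edges}\}}\ge(1+e)^{m(\sigma)}$, contradicting the axiom. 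Hence the system is feasible, and by the reduction above $D$ is $e$-belief-perturbed OEU rational.

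The main obstacle is this last translation: reading a negative cycle as a genuine test sequence and proving the exponent bookkeeping $\#\{\text{(b)-edges}\}\ge m(\sigma)$ from flow conservation, while correctly handling ties in consumption (equal-value nodes must be identified so that the resulting type-(a) pairs are strict, as Definition~\ref{def:testsequence} demands). This combinatorial core is the delicate step; by contrast the reduction of the second paragraph and the concavification are routine and follow the pattern of \cite{echenique2015savage}.
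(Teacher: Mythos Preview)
Your necessity argument is the same as the paper's. For sufficiency, however, you take a genuinely different route. The paper linearizes the Afriat inequalities, applies Motzkin's theorem of the alternative, and---because the matrices are only guaranteed to be rational when $\log p^k_s$, $\log\mu^*_s$, and $\log(1+e)$ are rational---must pass through an approximation lemma (Lemma~\ref{1lem:approximate}) to perturb the data into the rational case and then push the solution back. Your approach recasts the linearized system as a set of difference constraints on $c^k_s=\log v^k_s-\log\rho^k_s$ and invokes the no-negative-cycle criterion directly over the reals. This sidesteps the rational approximation entirely, which is a real gain in simplicity. The combinatorial core---reading the dual certificate as a test sequence and showing that the number of belief-band constraints used is at least $m(\sigma)$---is essentially identical: your flow-conservation argument is exactly what the paper does with the $\eta(k,s,t)$ multipliers in Claims~\ref{1claim11} and~\ref{1claim13}, and your bound $N_b\ge m(\sigma)$ is their inequality $m(\sigma^*)\le\sum_{k,s,t}\eta(k,s,t)$.

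One correction to your reduction: you do not need, and should not impose, the equality $v^k_s=v^{k'}_{s'}$ when $x^k_s=x^{k'}_{s'}$. The paper's Afriat system (Lemma~\ref{1lem:smoothisok}) only imposes $v^k_s\le v^{k'}_{s'}$ under the \emph{strict} premise $x^k_s>x^{k'}_{s'}$, and the standard concavification already accommodates multiple supergradients at a repeated consumption point. Dropping the equality constraints means every type-(a) edge in a negative cycle automatically has $x^{k_i}_{s_i}>x^{k'_i}_{s'_i}$, so the extracted pairs form a test sequence with no tie-handling needed; your proposed node identification then becomes unnecessary.
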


The proof appears in Appendix~\ref{proof}. 

Axioms like $e$-PSAROEU can be interpreted as a statement about downward-sloping demand \citep[see][]{echenique2020edu}. For example, $(x^k_s,x^k_{s'})$ with $x^k_s > x^k_{s'}$ is a test sequence. If risk neutral prices satisfy $\rho^k_s > \rho^k_{s'}$, then the dataset violates downward-sloping demand. Now $e$-PSAROEU measures the extent of the violation by controlling the size of $\rho^k_s / \rho^k_{s'}$.

In its connection to downward-sloping demand, Theorem~\ref{theorem:robustoeu} formalizes the idea of testing OEU through the correlation of risk-neutral prices and quantities: see \cite{friedman2018varieties} and our discussion in Section~\ref{section:oeu_application_results}. Theorem~\ref{theorem:robustoeu} and the axiom $e$-PSAROEU give the precise form that the downward-sloping demand property takes in order to characterize OEU, and provide a non-parametric justification to the practice of analyzing the correlation of prices and quantities.

As mentioned, $0$-PSAROEU is equivalent to SAROEU. When $e=\infty$, the $e$-PSAROEU always holds because $(1+e)^{m(\sigma)}=\infty$. 

Given a dataset, we shall calculate the {\em smallest} $e$ for which the dataset satisfies $e$-PSAROEU. It is easy to see that such a minimal level of $e$ exists.\footnote{In Online Appendix~\ref{sec:compute_e}, we show that $e_*$ can be obtained as a solution of minimization of a continuous function on a compact space. Hence, the minimum exists.} 
We explain in Online Appendices~\ref{sec:compute_e} and~\ref{appendix:implementation} how it is calculated in practice.

\begin{definition}
{\em Minimal~$e$}, denoted $e_*$, is the smallest $e'\geq 0$ for which the data satisfies $e'$-PSAROEU.
\end{definition}

The number $e_*$ is a crucial component of our empirical analysis. Importantly, it is the basis of a statistical procedure for testing the null hypothesis of OEU rationality. 

As mentioned above, $e_*$ is a bound that has to hold across all observations, and therefore may be sensitive to extreme outliers. It is, however, easy to check the sensitivity of the calculated $e_*$ to an extreme observation. One can, for example, re-calculate $e_*$ after dropping one or two observations, and look for large changes. 

Finally, $e_*$ depends on the prices and the objective probability which a decision maker faces. In particular, it is clear from $e$-PSAROEU that $1 + e$ is bounded by the maximum ratio of risk-neutral prices (i.e., $\max_{k, k' \in K, s, s' \in S} \rho^{k}_{s} / \rho^{k'}_{s'}$). 

We should mention that Theorem~\ref{theorem:robustoeu} is similar in spirit to some of the results in \cite{allen2020satisficing}, who consider approximate rationalizability of quasilinear utility. They present a revealed preference characterization with a measure of error ``built in'' to the axiom, similar to ours, which they then use as an input to a statistical test. The two papers were developed independently, and since the models in question are very different, the results are unrelated.

% ----------
% Testing Objective Expected Utility 
% ----------
\section{Testing Objective Expected Utility}
\label{section:oeu_application}

We apply our methodology to data from three large-scale online experiments. The experiments were implemented through representative surveys, and the task involved objective risk, not uncertainty. The data are taken from \citet[hereafter CKMS]{choi2014more}, \citet[hereafter CMW]{carvalho2016poverty}, and \citet[hereafter CS]{carvalho2017complexity}. All three experiments share a common experimental structure, the portfolio allocation task introduced by \cite{loomes1991evidence} and \cite{choi2007}. 

It is worth mentioning again that the three studies focus on CCEI as a measure of violation of basic rationality. We shall instead look at OEU, and use $e_*$ as our measure of violations of OEU. The procedure for calculating $e_*$ is explained in Online Appendices~\ref{sec:compute_e} and~\ref{appendix:implementation}.

\subsection{Datasets}
\label{section:oeu_application_data}

In the experiments, subjects were presented with a sequence of decision problems under risk in a graphical illustration of a two-dimensional budget line. They were asked to select a point $(x_1, x_2)$, an ``allocation,'' by clicking on the budget line (subjects were therefore forced to exhaust the income). The coordinates of the selected point represent an allocation of points between ``accounts''~$1$ and~$2$. They received the points allocated to one of the accounts, determined at random with an equal chance ($\mu_1^* = \mu_2^* = 0.5$). Subjects faced 25 budgets, as illustrated in Figure~\ref{fig:sample_budgets}. 

\begin{figure}[t]
\centering
\includegraphics[width=0.35\textwidth]{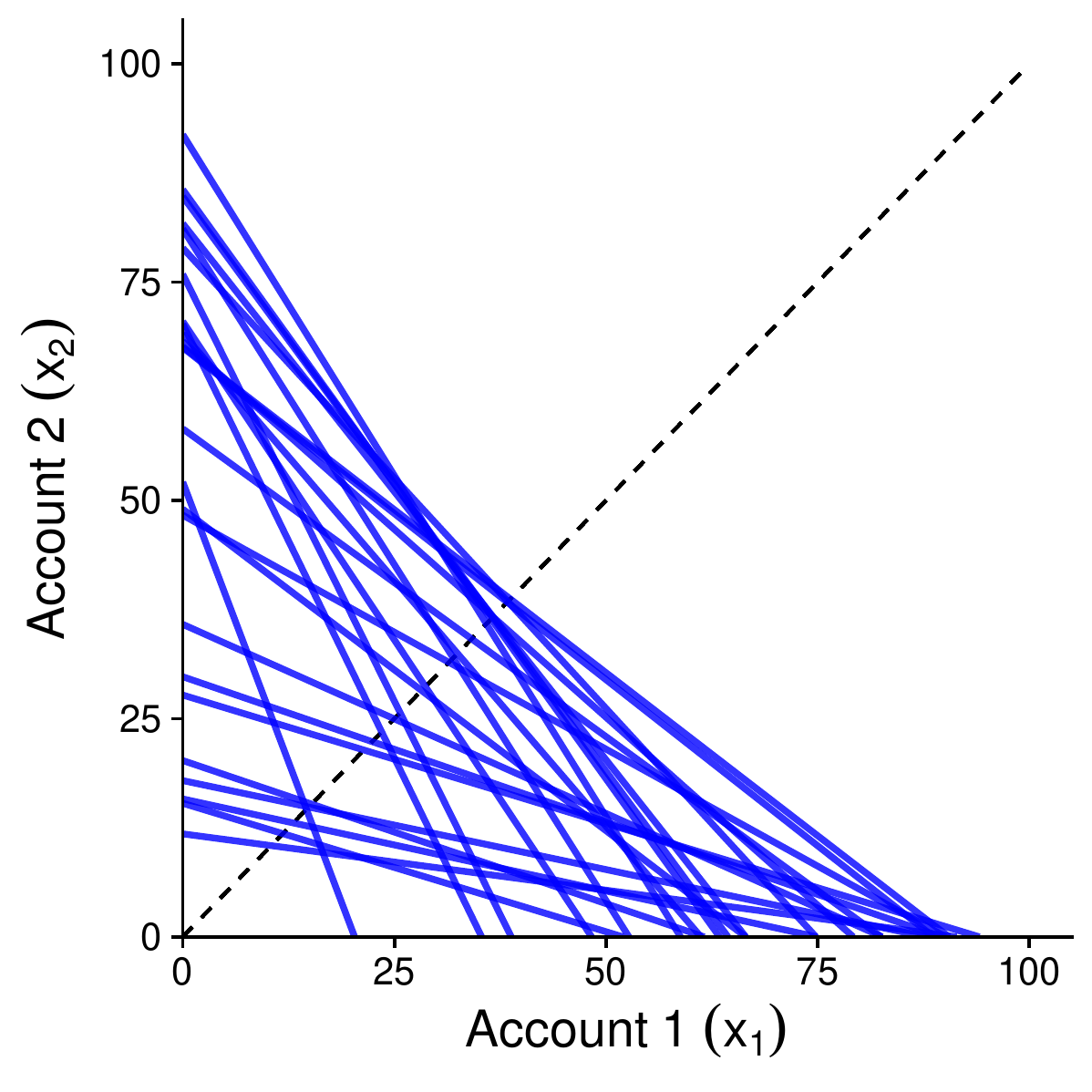}
\caption{Sample budget lines. A set of 25 budgets from one real subject in \cite{choi2014more}.} 
\label{fig:sample_budgets}
\end{figure}

We note some interpretations of the design that matter for our posterior discussion. First, points on the 45-degree line correspond to equal allocations between the two accounts and therefore involve no risk. The 45-degree line is the ``full insurance'' line. Second, we can interpret the slope of a budget line as a price in the usual sense: if the $x_2$-intercept is larger than the $x_1$-intercept, points in the account~$2$ are ``cheaper'' than those in the account~$1$. 

\cite{choi2014more} implemented the task using the instrument of the CentERpanel, randomly recruiting subjects from the entire panel sample in the Netherlands. \cite{carvalho2016poverty} administered the task using the GfK KnowledgePanel, a representative panel of the adult U.S. population. \cite{carvalho2017complexity} used the Understanding America Study panel. The number of subjects who completed the task in each study is 1,182 in CKMS, 1,119 in CMW, and 1,423 in CS. 

The survey instruments in these studies allowed them to collect a wide variety of individual demographic and economic information from the respondents. The main demographic information they obtained include gender, age, education level, household income, occupation, and household composition. 

The selection of~25 budget lines was independent across subjects in CKMS (i.e., the subjects were given different sets of budget lines), fixed in CMW (i.e., all subjects saw the same set of budgets), and semi-randomized across subjects in CS (i.e., each subject drew one of the prepared sets of~25 budgets).

\subsection{Results}
\label{section:oeu_application_results}

\paragraph{Summary statistics.} 
We exclude five subjects who are ``exactly'' OEU rational, leaving us a total of 3,719 subjects in the three experiments. 
About~76\% of subjects never chose corners of the budget lines, and there is only two percent of the entire sample who chose corners in more than half of the~25 questions. Finally, no subjects chose corners in all~25 questions. Given these observations, our focus on risk aversion does not seem to be too restrictive in these datasets. 

We calculate $e_*$ for each individual subject. The distributions of $e_*$ are displayed in Figure~\ref{fig:minimal_e_cdf}A.\footnote{Earlier drafts of the paper (posted before summer 2019) reported $\log{(1+e_*)}$, not $e_*$ itself.}$^{,}$\footnote{The empirical CDF for the CMW data has several ``steps'' since all subjects faced with the same set of~25 budget lines. For example, there are 172 subjects with $e_* = 3.5925$. The maximum adjustment required to make their data $e$-perturbed OEU rational is on the budget line, with prices $(p_1, p_2) = (1, 0.2177)$.} 
The CKMS sample has a mean $e_*$ of~3.034, and a median of~2.729. The CMW subjects have a mean of~2.487 and a median of~2.533. The CS sample has a mean of~2.494 and a median of~2.088.\footnote{Since $e_*$ depends on the design of set(s) of budgets, comparing $e_*$ across studies requires caution.}
Recall that the smaller a subject's $e_*$ is, the closer are her choices to OEU rationality. It is, however, hard to exactly interpret the magnitude of $e_*$. We turn to this issue in Section~\ref{sec:implementation_perturbation}. 

\begin{figure}[!t]
\centering 
\includegraphics[scale=0.6]{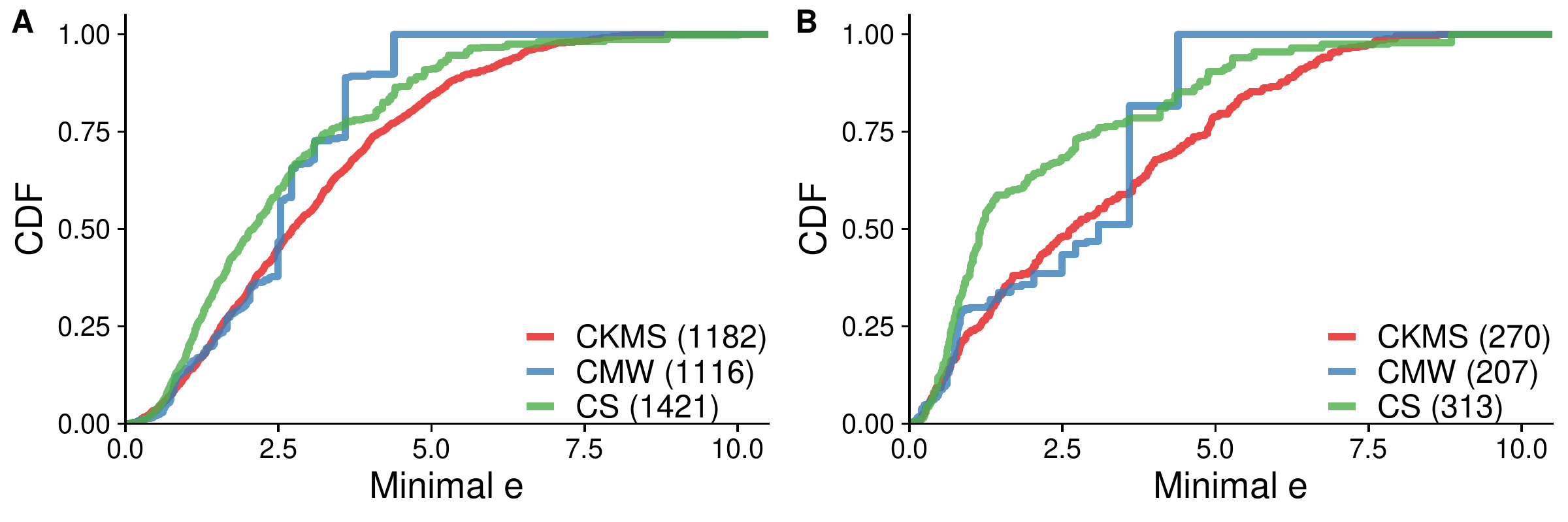}
\caption{Empirical CDFs of $e_*$. (A) All subjects. (B) The subsample of subjects with $\text{CCEI} = 1$. {\em Notes}: The number of observations in each dataset is presented in parentheses.} 
\label{fig:minimal_e_cdf}
\end{figure}

\paragraph{Downward-sloping demand and $e_*$.} 
Perturbations in beliefs, prices, or utility, seek to accommodate a dataset so that it is OEU rationalizable. The accommodation can be seen as correcting a mismatch of relative prices and marginal rates of substitution: recall our discussion in the introduction. Another way to see the accommodation is through the relation between prices and quantities. Our revealed preference axiom, $e$-PSAROEU, bounds certain deviations from downward-sloping demand. The minimal~$e$ is therefore a measure of the kinds of deviations from downward-sloping demand that are crucial to OEU rationality. 

Figure~\ref{fig:45degree_and_measures} illustrates this idea. We calculate the Spearman's correlation coefficient between $\log (x_2 / x_1)$ and $\log (p_2 / p_1)$ for each subject in the datasets.\footnote{Note that $\log (x_2 / x_1)$ is not defined at the corners. We thus adjust corner choices (less than~5\% of all choices) by a small constant,~0.1\% of the budget in each choice, in calculation of the correlation coefficient.} 
Roughly speaking, downward-sloping demand corresponds to the correlation between changes in quantities $\log (x_2 / x_1)$, and changes in prices $\log (p_2 / p_1)$, being negative. The idea is that if a subject properly responds to price changes, then as $\log (x_2 / x_1)$ becomes larger, $\log (x_2 / x_1)$ should become lower. The correlation is close to zero if subjects do not respond to price changes. 

The top panels of Figure~\ref{fig:45degree_and_measures} confirms that $e_*$ and the correlation between prices and quantities are closely related. This means that subjects with smaller $e_*$ tend to exhibit downward-sloping demand, while those with larger $e_*$ are insensitive to price changes. Across all three datasets, $e_*$ and downward-sloping demand are strongly and positively related.

\begin{figure}[t]
\centering 
\includegraphics[width=\textwidth]{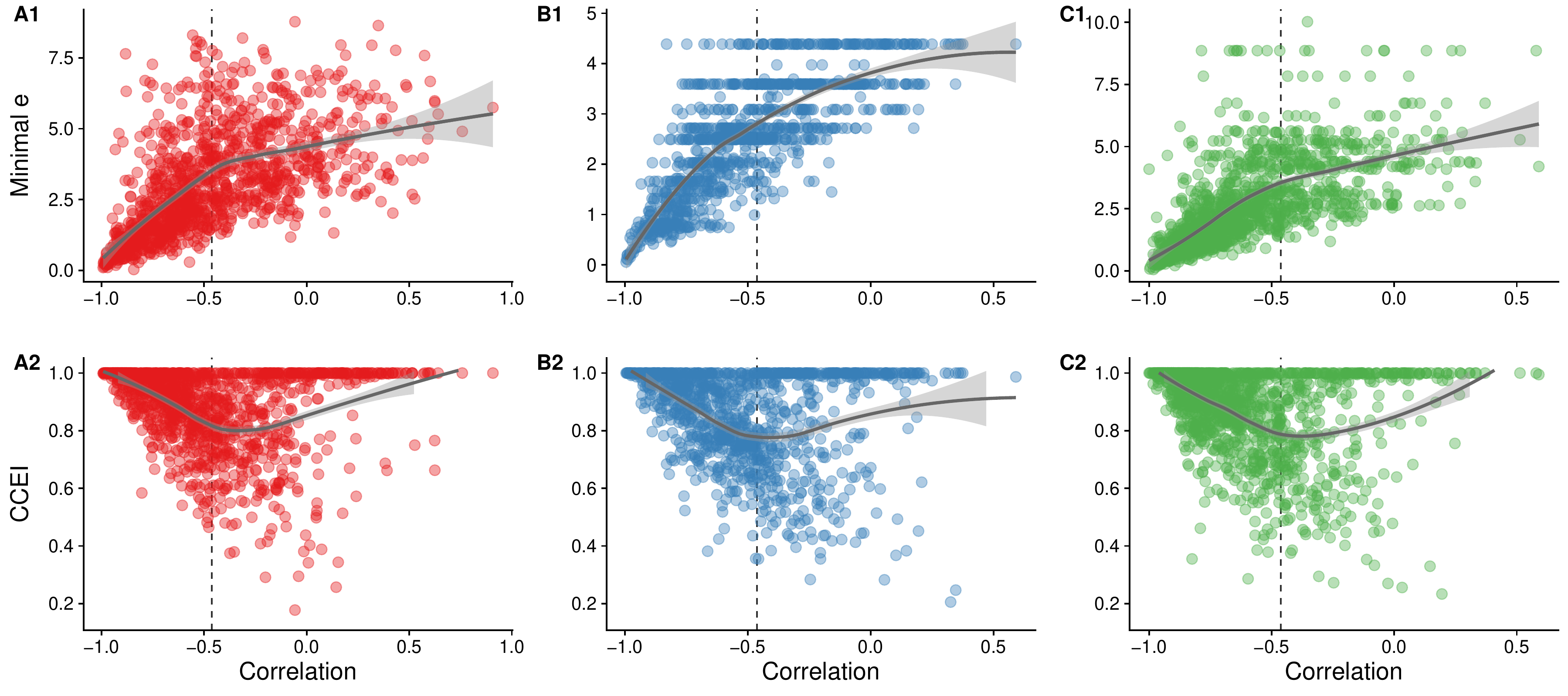}
\caption{Correlation between $\log (x_2 / x_1)$ and $\log (p_2 / p_1)$ and measures of rationality. Panels: (A) CKMS, (B) CMW, (C) CS. {\em Notes}: The vertical dashed line indicates the threshold below which Spearman's correlation is significantly negative (one-sided, at the~1\% level). Black curves represent LOESS smoothing with~95\% confidence bands.}
\label{fig:45degree_and_measures}
\end{figure}

The CCEI, on the other hand, is not clearly related to downward-sloping demand. As illustrated in the bottom panels of Figure~\ref{fig:45degree_and_measures}, the relation between CCEI and the correlation between prices and quantities is not monotonic. Agents who are closer to complying with utility maximization do not necessarily display a stronger negtive correlation between prices and quantities. The finding is consistent with our comment about CCEI, $e_*$, and OEU rationality: CCEI measures the distance from utility maximization, which is related to parallel shifts in budget lines, while $e_*$ and OEU are about the slope of the budget lines, and about a negative relation between quantities and prices. 

We should mention the practice by some authors, notably, \cite{friedman2018varieties}, to evaluate compliance with OEU by looking at the correlation between risk-neutral prices and quantities. Our $e_*$ is related to that idea, and the empirical results presented in this section can be read as a validation of the correlational approach. \cite{friedman2018varieties} use their approach to estimate a parametric functional form, using experimental data in which they vary objective probabilities, not just prices. 
Our approach is non-parametric, and focused on testing OEU itself, not estimating any particular utility specification.

\paragraph{First-order stochastic dominance and $e_*$.}
In the experiments we consider, choosing $(x_1, x_2)$ at prices $(p_1, p_2)$ violates {\em monotonicity with respect to first-order stochastic dominance} (hereafter {\em FOSD-monotonicity}) when either (i) $p_1 > p_2$ and $x_1 > x_2$ or (ii) $p_2 > p_1$ and $x_2 > x_1$. Since the two states have the same objective probability in our datasets, choosing a greater payoff in the more expensive state violates FOSD-monotonicity. Violations of FOSD-monotonicity are related to downward-sloping demand, as they involve consuming more in the more expensive state. 
Choices that violate FOSD-monotonicity are not uncommon in the data (see Online Appendix~\ref{appendix:fosd}).

Since OEU-rational choices must satisfy FOSD-monotonicity, $e_*=0$ implies no violations of FOSD-monotonicity. Moreover, the value of $e_*$ is a good indicator of FOSD-monotonicity violations. 
See the positive relationship between the fraction of FOSD-monotonicity violations and $e_*$ in the top row of Figure~\ref{fig:e-ccei_fosd}: subjects who frequently made choices violating FOSD-monotonicity tend to have larger $e_*$ compared to those with fewer such violations. 

\begin{figure}[t]
\centering
\includegraphics[width=\textwidth]{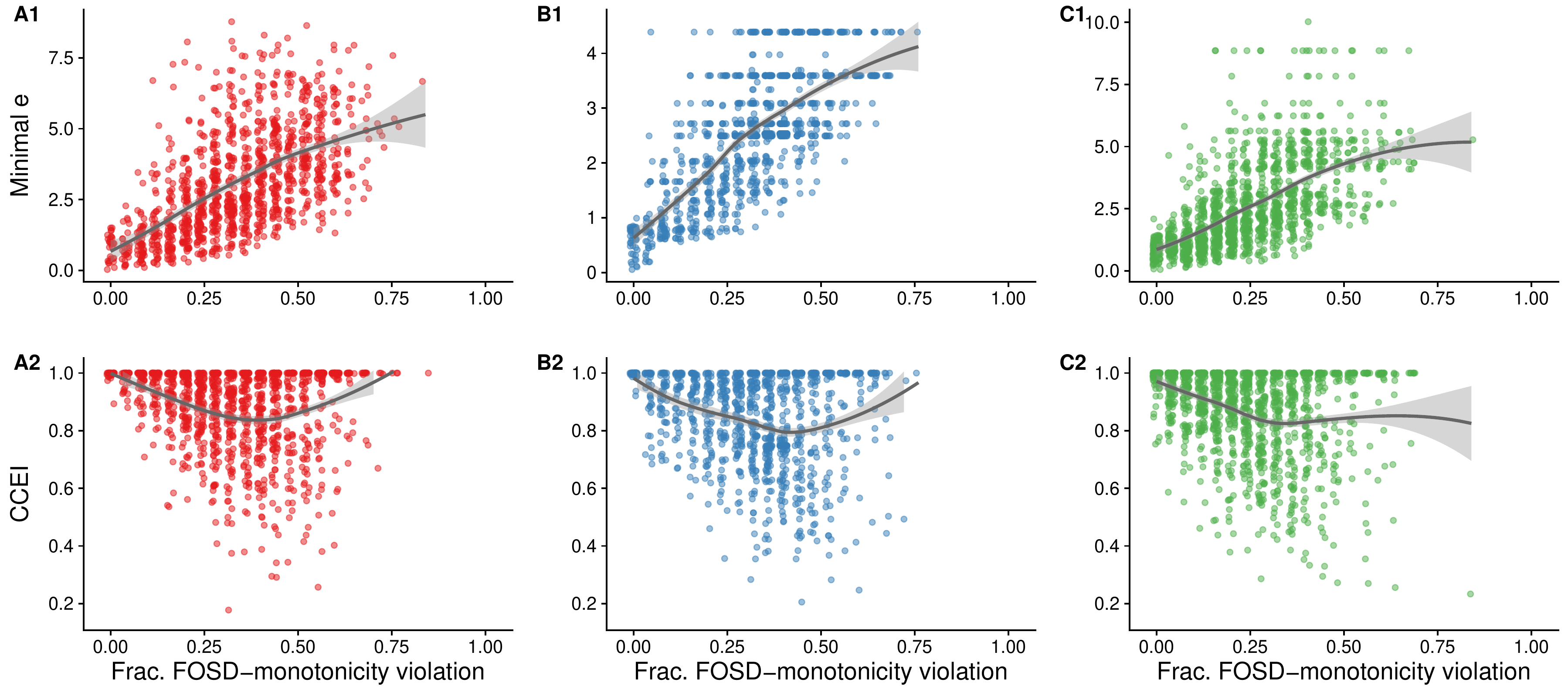}
\caption{Violation of FOSD-monotonicity and measures of rationality. Black curves represent LOESS smoothing with~95\% confidence bands. Panels: (A) CKMS, (B) CMW, (C) CS.}
\label{fig:e-ccei_fosd}
\end{figure}

The relation between $e_*$ and violations of FOSD-monotonicity stands in sharp contrast with CCEI. First, choices that violate FOSD-monotonicity {\em can} be consistent with GARP. Our data exhibits subjects that pass GARP while making choices that violate FOSD-monotonicity \citep[an empirical fact that was first pointed out by][]{choi2014more}. The bottom panels of Figure~\ref{fig:e-ccei_fosd} show that a substantial number of subjects with perfect compliance with GARP ($\text{CCEI} = 1$) make at least one violation of FOSD-monotonicity. 
The existence of these subjects generates a nonmonotonic relationship between CCEI and the frequency of violation of FOSD-monotonicity.

\paragraph{Typical patterns of choices.}
We can gain some insights into the data by considering ``typical'' patterns of choice. Figure~\ref{fig:choice_pattern_ccei_1} presents choice patterns from selected subjects with $\text{CCEI} = 1$ and varying degrees of~$e_*$.\footnote{The patterns in Figure~\ref{fig:choice_pattern_ccei_1} are not an exhaustive list by any means. See Online Appendix~\ref{appendix:more_choice_pattern} for more examples.} 
Panels~A-F plot observed choices and panels~a-f plot the relationship between $\log (x_2 / x_1)$ and $\log (p_2 / p_1)$ associated with each choice pattern. As discussed above, panels~a-f should exhibit a negative relationship (downward-sloping demand) for the subject to be OEU rational.

\begin{figure}[!tp]
\centering 
\includegraphics[width=0.85\textwidth]{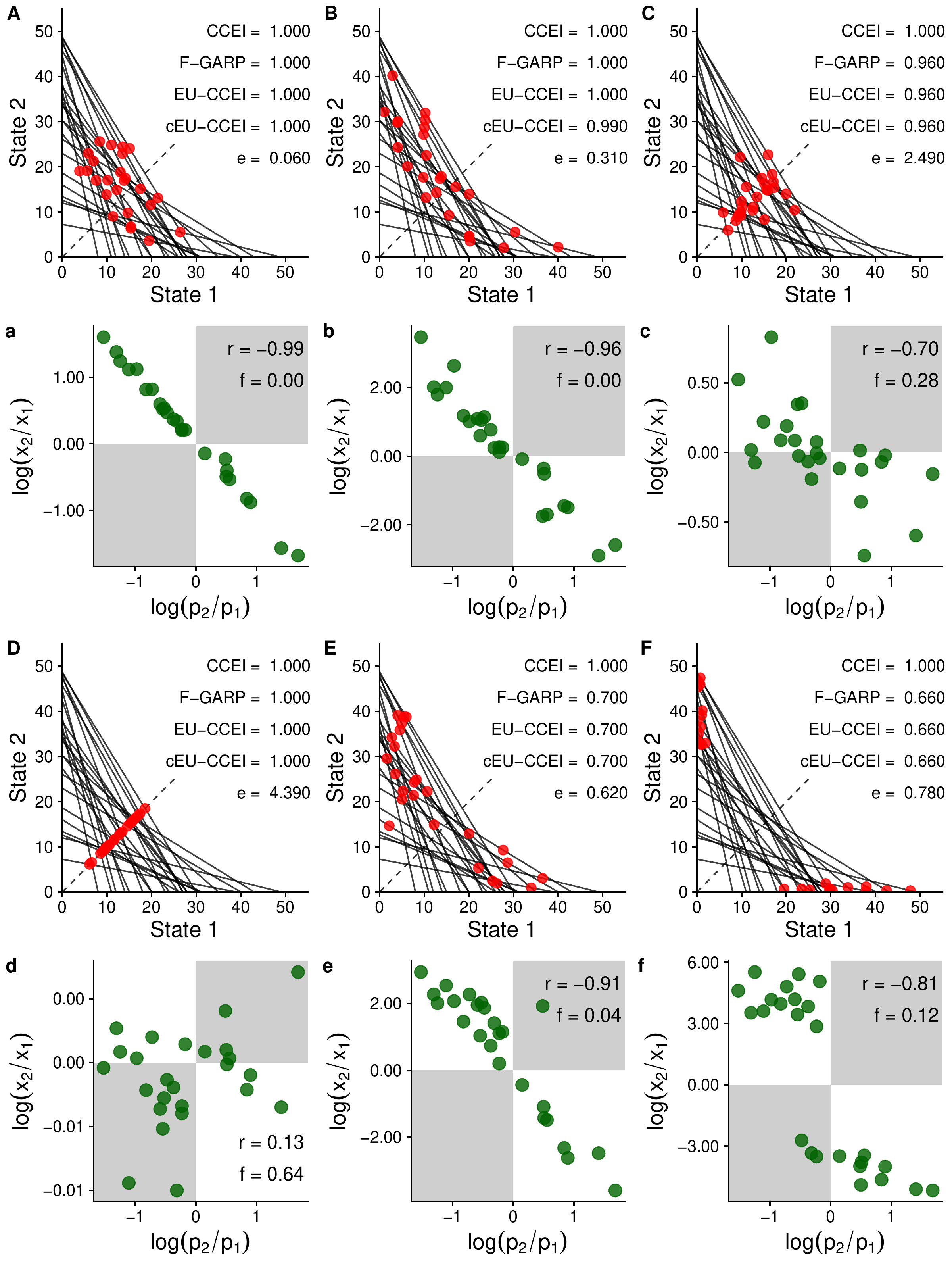}
\caption{Choice patterns from six subjects in the CMW data with $\text{CCEI} = 1$ and varying $e_*$. (A-F) Observed choices. (a-f) The relation between $\log (x_2 / x_1)$ and $\log (p_2 / p_1)$. {\em Notes}: Choices in shaded areas violate FOSD-monotonicity. $r$ indicates the Spearman's correlation coefficient and $f$ indicates the fraction of choices violating FOSD-monotonicity. In this data, median CCEI is 0.889, median EU-CCEI is 0.730, and median $e_*$ is 2.533. 
F-GARP, EU-CCEI, cEU-CCEI are calculated with the GRID method of \cite{polissonquahrenou17}.} 
\label{fig:choice_pattern_ccei_1}
\end{figure}

Panel~A presents a choice pattern that is ``almost'' consistent with OEU. The relation between $\log (x_2 / x_1)$ and $\log (p_2 / p_1)$ fits close to a line with negative slope, but there is a small deviation around $\log (p_2 / p_1) = -1$ which makes the subject's $e_*$ nonzero. 
Panel~B also shows a pattern that does not involve any FOSD-monotonicity violations but is not OEU rational due to small deviations from the downward-sloping demand (see panel~b). 
The pattern in panel~C exhibits larger deviations from the downward-sloping demand (panel~c), which push its $e_*$ higher than the previous two subjects. 

The subject's choices in panel~D are close to the 45-degree line. At first glance, such choices might seem to be rationalizable by a very risk-averse expected utility function. However, as panel~d shows, the subject's choices deviate from the downward-sloping demand property, and hence cannot be rationalized by any risk-averse expected utility function. Note that the ``size'' of the deviation from the downward-sloping demand is small (see the scale of the $y$-axis in panel~d). One might be able to rationalize the choices made in panel~D with some models of errors in choices, but not with the types of errors captured by our model.\footnote{This is, in our opinion, a strength of our approach. We do not ex-post seek to invent a model of errors that might rescue EU. Instead we have written down what we think are natural sources of errors and perturbation (random utility, beliefs, and measurement errors). Our results deal with what can be rationalized when these sources of errors, and only those, are used to explain the data. A general enough model of errors will, of course, render the theory untestable.} 
We will discuss other two subjects (panels~E and~F) below. 

Figure~\ref{fig:choice_pattern_ccei_1} also illustrates how $e_*$ operates in practice when there are two states. Under the price-perturbation interpretation, it measures how big of an adjustment of prices would be needed to satisfy downward-sloping demand. Such adjustments will be represented as ``horizontal shifts'' of points in the bottom panels of the figure (since we fix the chosen bundle and rotate the budget line), and the largest adjustment corresponds to $e_*$. 
A scatterplot of $\log(x_2/x_1)$ versus $\log (p_2/p_1)$, as in panels~a-f of Figure~\ref{fig:choice_pattern_ccei_1}, works as a graphical tool to get a sense of whether a subject's $e_*$ is big or small. Online Appendix~\ref{appendix:example_e-price-perturbation} discusses this idea, and illustrates $e$-price-perturbed OEU rationalization using the choice data presented in Figure~\ref{fig:choice_pattern_ccei_1}.

\paragraph{Relationship between $e_*$, CCEI, and EU-CCEI.} 
CCEI serves a different purpose than $e_*$; it is meant to capture deviations from general utility maximization, and not OEU. Nevertheless, it is informative to understand the relationship between these measures in the data. We also comment on the recent proposal by \cite{polissonquahrenou17} of an adaptation of CCEI to test for OEU.

We observe, in Figure~\ref{fig:minimal_e_cdf}, that the distribution of $e_*$ among subjects whose CCEI is equal to one (panel~B) varies as much as in thw whole population (panel~A). Many subjects have CCEI equal to one, but their $e_*$'s can be far from zero. This means that consistency with general utility maximization is not necessarily a good indication of consistency with OEU. 

\begin{figure}[!t]
\centering 
\includegraphics[width=\textwidth]{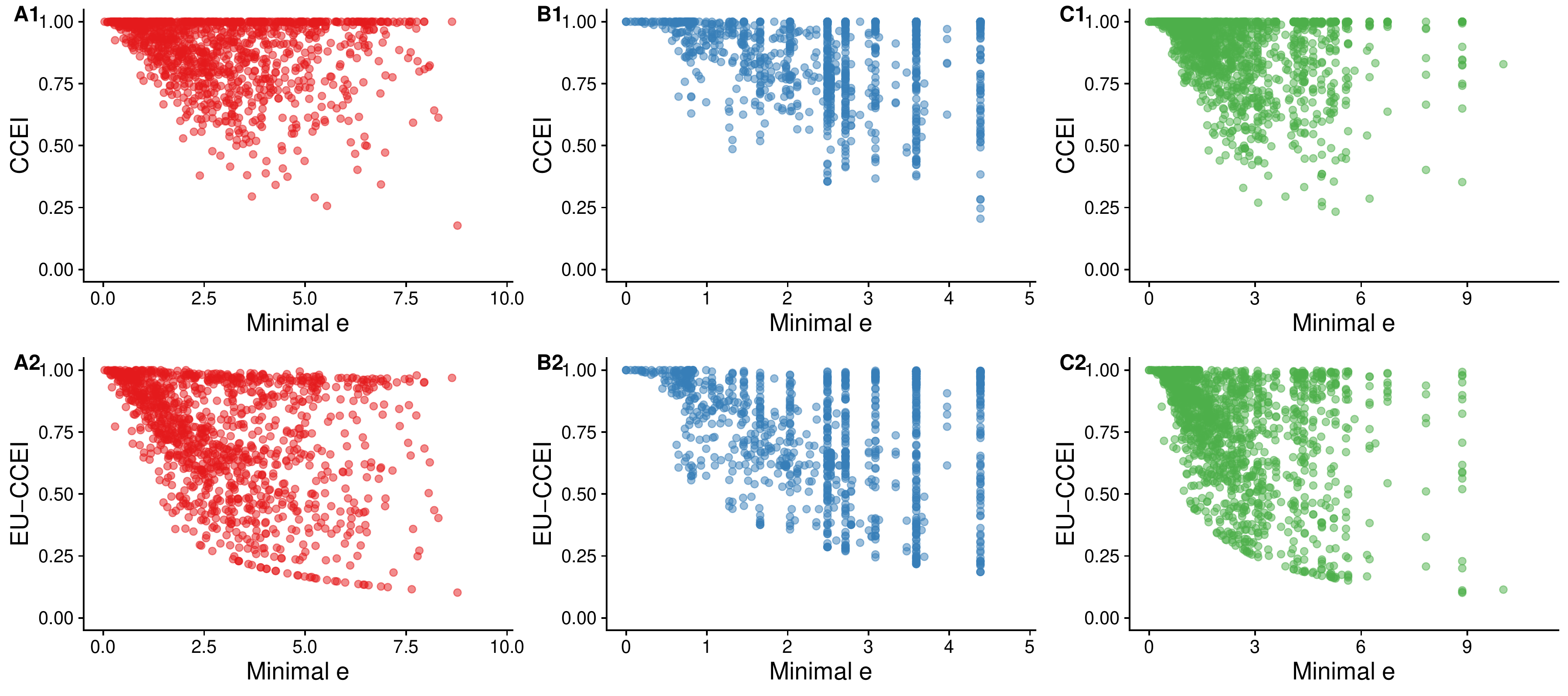}
\caption{Correlation between $e_*$ and CCEI (top panels) and EU-CCEI of \cite{polissonquahrenou17} (bottom panels). Panels: (A) CKMS, (B) CMW, (C) CS.} 
\label{fig:minimal_e_vs_ccei}
\end{figure}

That said, the measures are clearly correlated. 
Figure~\ref{fig:minimal_e_vs_ccei}, top panels, plot the relation between CCEI and $e_*$. As we expect from their definitions ({\em larger} CCEI and {\em smaller} $e_*$ correspond to higher consistency), there is a negative and significant relation between them (Spearman's correlation coefficient: $r = -0.18$ for CKMS, $r = -0.11$ for CMW, $r = -0.35$ for CS, all $p < 0.001$). Of course, subjects that are consistent with OEU as measured by $e_*$ (they have $e_*=0$) must exhibit $\text{CCEI}=1$.

Notice that the variability of the CCEI widens as $e_*$ becomes larger. Obviously, subjects with a small $e_*$ are close to being consistent with general utility maximization, and therefore have a CCEI that is close to one. However, subjects with large $e_*$ seem to have dispersed values of CCEI. 

\cite{polissonquahrenou17} propose a version of CCEI meant to measure departures from EU using their GRID method. We term this measure EU-CCEI. In contrast with our measure $e_*$, which assumes risk aversion and is based on rotating budget lines, EU-CCEI does not impose risk aversion and uses the same idea of shrinking budget lines as in standard CCEI. 
The bottom panels of Figure~\ref{fig:minimal_e_vs_ccei} exhibit the relationship between $e_*$ and EU-CCEI. It is clear that the relation between  $e_*$ and EU-CCEI is similar to that between $e_*$ and CCEI. The two measures are strongly correlated, but they also provide different conclusions for many subjects. 

There are many subjects that EU-CCEI deems consistent with OEU, but have high levels of $e_*$. This could be attributed to the more restrictive theory being tested by $e_*$. Subjects with EU-CCEI close to one and large $e_*$ could simply be non-risk-averse OEU maximizers. Perhaps more puzzling is the existence of subjects that $e_*$ sees as close to OEU while EU-CCEI does not: subjects with small values of both $e_*$ and EU-CCEI. 

It is hard to investigate the differences between EU-CCEI and $e_*$ methodologically. EU-CCEI does not specify a source of deviations from OEU, so we cannot say that one measure emphasizes one source of errors and the other a different source. Instead, we look at some of the patterns in the data that gives rise to differences. 
An example of a choice pattern in which $e_*$ and EU-CCEI differ is provided by Figure~\ref{fig:choice_pattern_ccei_1}, panel~D. The subject in question exhibits $\text{CCEI} = \text{EU-CCEI} = 1$, while $e_*$ is large and indicates a violation of OEU. (The pattern involves choices close to the 45-degree line, but with a clear violation of downward sloping demand, see panel~d.) 
Panels~E and~F exhibit subjects that $e_*$ says are close to (risk-averse) OEU, but EU-CCEI deems far from OEU. We see in panels~e and~f that the conclusion using $e_*$ can be understood by the subjects' compliance with downward sloping demand. The subjects in panels~E and~F make a few FOSD-monotonicity violations, which might explain the behavior of EU-CCEI, but that cannot be the end of the story because the subject in panel~D makes substantial FOSD-monotonicity violations and exhibits the opposite behavior of $e_*$ and EU-CCEI. Finally, we should say that there are many other patterns for which the conclusions of $e_*$ and EU-CCEI differ: see Online Appendices~\ref{appendix:diagonal} and~\ref{appendix:more_choice_pattern} for additional examples.

In Online Appendix~\ref{appendix:comparing_measures}, we examine the relationship between $e_*$ and modified CCEI indices for two additional models considered in \cite{polissonquahrenou17}: stochastically monotone utility maximization and risk-averse EU. We call these indices F-GARP and cEU-CCEI, respectively. Their values are reported for the patterns in Figure~\ref{fig:choice_pattern_ccei_1}; see Figures~\ref{fig:compare_measures_ckms}-\ref{fig:compare_measures_cs} in the Online Appendix for pairwise scatter plots of five indices (CCEI, F-GARP, EU-CCEI, cEU-CCEI, and $e_*$). The modified CCEI measures provide a more refined index for consistency for EU than CCEI, but differences with $e_*$ persist. In fact, the basic conclusions outlined in the comparison between $e_*$ and EU-CCEI hold true for these indices.

\paragraph{Correlation with demographic characteristics.} 
We investigate the correlation between our measure of consistency with OEU, $e_*$, and various demographic variables available in the data. The exercise is analogous to findings in \cite{choi2014more} that use CCEI. 

We find that younger subjects, those who have high cognitive abilities, and those who are working, are closer to being consistent with OEU than older, low ability, or non-working, subjects. For some of the three experiments we also find that highly educated, high-income subjects, and males, are closer to OEU. Figure~\ref{fig:minimal_e_demographics} summarizes the mean $e_*$ (along with the standard error of mean) across several socioeconomic categories. 
We use the same categorization as in \cite{choi2014more} to compare our results with their Figure~3. 

\begin{figure}[t]
\centering 
\includegraphics[width=\textwidth]{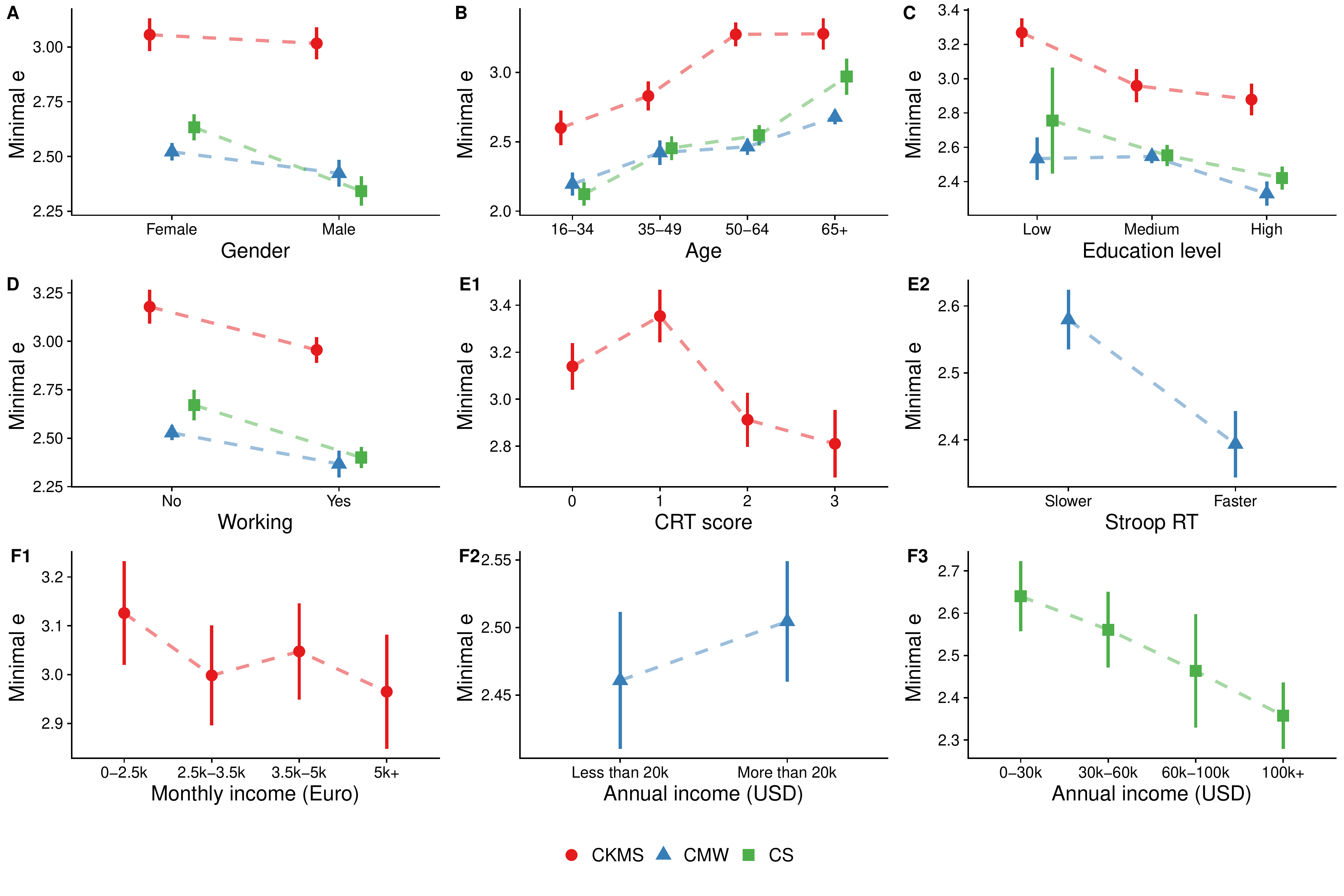}
\caption{Correlation between $e_*$ and demographic variables. {\em Notes}: Bars represent standard errors of means.}
\label{fig:minimal_e_demographics}
\end{figure}

We observe statistically significant (at the~5\% level) gender differences in CS (Welch's $t = -3.21$, $\mathit{df} = 1381.7$, $p = 0.001$) but not in CKMS (Welch's $t = -0.37$, $\mathit{df} = 1162.8$, $p = 0.708$) and CMW (Welch's $t = -1.35$, $\mathit{df} = 715.5$, $p = 0.178$). 
Male subjects were on average closer to OEU rationality than female subjects in the CS sample (panel~A). 

We find significant effects of age in all three datasets. Panel~B shows that younger subjects are on average closer to OEU rationality than older subjects (the comparison between age groups~16-34 and~65+ reveals a statistically significant difference in all three datasets; all Welch's $t$-tests give $p < 0.001$). 

We observe weak effects of education on $e_*$ (panel~C).\footnote{The low, medium, and high education levels correspond to primary or prevocational secondary education, pre-university secondary education or senior vocational training, and vocational college or university education, respectively.} 
Subjects with higher education are on average closer to OEU than those with lower education in CKMS (Welch's $t = 3.11$, $\mathit{df} = 826.9$, $p = 0.002$), but the difference is not significant in the CMW and CS (Welch's $t = 1.43$, $\mathit{df} = 121.6$, $p = 0.155$ in CMW; Welch's $t = 1.06$, $\mathit{df} = 47.2$, $p = 0.295$ in CS). 

Panel~D shows that subjects who were working at the time of the survey are on average closer to OEU than those who were not (Wlech's $t = 2.03$, $\mathit{df} = 865.1$, $p = 0.043$ in CKMS; Welch's $t = 2.04$, $\mathit{df} = 469.8$, $p = 0.042$ in CMW; Welch's $t = 2.82$, $\mathit{df} = 972.0$, $p = 0.005$ in CS). 

In panels~E1 and~E2, we classify subjects according to their Cognitive Reflection Test score \citep[CRT;][]{frederick2005} or average reaction times in the numerical Stroop task.\footnote{CRT consists of three questions, all of which have an intuitive and spontaneous, but incorrect, answers, and a deliberative and correct answer. 
In the numerical Stroop task, subjects are presented with a number, such as 888, and are asked to identify the number of times the digit is repeated (in this example the answer is~``3'', while an ``intuitive'' response is~``8''). It has been shown that response times in this task capture the subject's cognitive control ability.} 
The average $e_*$ for those who correctly answered two questions or more of the CRT is lower than the average for those who answered at most one question (Welch's $t = -3.16$, $\mathit{df} = 929.4$, $p = 0.002$). Subjects with lower response times in the numerical Stroop task have significantly lower $e_*$ (Welch's $t = -2.78$, $\mathit{df} = 1101.8$, $p = 0.005$).

One of the key findings in \cite{choi2014more} is that consistency with utility maximization as measured by CCEI correlates with household wealth. 
When we look at the relation between $e_*$ and household income, there is a negative trend but the differences across income brackets are not statistically significant (bracket ``0-2.5k'' vs. ``5k+'', Welch's $t = 1.02$, $\mathit{df} = 527.5$, $p = 0.309$; panel~F1). Panel~F2 presents a similar result between subjects who earned more than 20~thousand USD annually or not in the CMW sample (Welch's $t = 0.64697$, $\mathit{df} = 1011.3$, $p = 0.518$). When we compare poor households (annual income less than 20~thousand USD) and wealthy households (annual income more than 100~thousand USD) from the CS sample, average $e_*$ is significantly smaller for the latter sample (Welch's $t = 2.468$, $\mathit{df} = 852.7$, $p = 0.014$; panel~F3). 

\paragraph{Robustness of the results.} 
The measure $e_*$ is a bound that has to hold across all observations and states (see conditions~\eqref{eq:oeubound},~\eqref{eq:oeubound:price}, and~\eqref{eq:oeubound:utility} in the definitions of $e$-perturbed OEU). One may wonder how sensitive $e_*$ is to a small number of ``bad'' choices. Online Appendix~\ref{appendix:sensitivity} presents two robustness checks. 
In the first robustness check, we recalculate $e_*$ using subsets of observed choices after dropping one or two ``critical mistakes''. More precisely, for each subject, we calculate $e_*$ for all combinations of $25 - m$ ($m = 1, 2$) choices and pick the smallest $e_*$ among them. 
In the second robustness check, we calculate the ``average'' perturbation necessary to rationalize the data to mitigate the effect of extreme mistakes. 
These alternative ways of calculating $e_*$ do not change the general pattern of correlation between $e_*$ and CCEI or $e_*$ and demographic variables. The main empirical results are robust to the presence of a small number of bad choices.

\subsection{Minimum Perturbation Test} 
\label{sec:implementation_perturbation} 

Our discussion so far has sidestepped one issue: How are we to interpret the absolute magnitude of $e_*$? {\em When can we say that $e_*$ is large enough to ``reject'' consistency with OEU rationality?}  To answer this question, we present a statistical test of the hypothesis that an agent is OEU rational. The test needs some assumptions, but it gives us a threshold level (a critical value) for $e_*$. Any value of $e_*$ that exceeds the threshold indicates inconsistency with OEU at some given statistical significance level. 

Our approach follows the methodology laid out in \cite{echenique2011money} and \cite{echenique2016edu}. First, we adopt the price perturbation interpretation of $e$ in Section~\ref{sec:priceperturb}, that is we consider an agent who may misperceive prices. The advantage of doing so is that we can use the observed variability in price to get a handle on the assumptions we need to make on perturbed prices. 
To this end, let $D_{\text{true}} = (p^k, x^k)_{k=1}^K$ denote a dataset and $D_{\text{pert}} = (\tilde{p}^k, x^k)_{k=1}^K$ denote an ``perturbed'' dataset, where $\tilde{p}_s^k = p_s^k \varepsilon_s^k$ and $\varepsilon_s^k > 0$ for all $s \in S$ and $k \in K$. Prices $\tilde{p}^k$ are prices $p^k$ measured with error, or misperceived.

If the {\em variance} of $\epsilon$ is large, it will be easy to accommodate a dataset as OEU rational. The larger is the variance of $\epsilon$, the larger the magnitudes of $e$ that can rationalize a dataset as consistent with OEU.  In other words, we can attribute the agent's large $e$ as his misperception of prices rather than his violation of EU rationality. Our procedure is thus sensitive to the assumptions we make about the variance of $\epsilon$.

To get a handle on the variance of $\epsilon$, our approach is to assume that an agent mistakes true prices $p$ with perturbed prices $\tilde{p}$. The distributions of $p$ and $\tilde{p}$ should be similar enough that the agent might plausibly confuse the two. To make this operational, we imagine an agent who conducts a statistical test for the variance of prices. If the true variance of $p$ is $\sigma^2_0$ and the implied variance of $\tilde p$ is $\sigma^2_1 > \sigma^2_0$, then the agent would conduct a test for the null of $\sigma^2 = \sigma^2_0$ against the alternative of $\sigma^2 = \sigma^2_1$. We want the variances to be close enough that the agent might reasonably get inconclusive results from such a test (i.e., the agent may reasonably mistake true prices $p$ with perturbed prices $\tilde p$, as we assumed). {\em Specifically, we assume the sum of probabilities of type~I and type~II errors in this test is relatively large}.\footnote{The problem of variance is pervasive in statistical implementations of revealed preference tests, see \cite{varian1990goodness}, \cite{echenique2011money}, and \cite{echenique2016edu} for example. The use of the sum of type~I and type~II errors to calibrate a variance, is new to the present paper.} 
The details of how we design the test are presented in Online Appendix~\ref{appendix:minimum_perturbation_test}. 

\begin{figure}[!t]
\centering 
\includegraphics[width=0.9\textwidth]{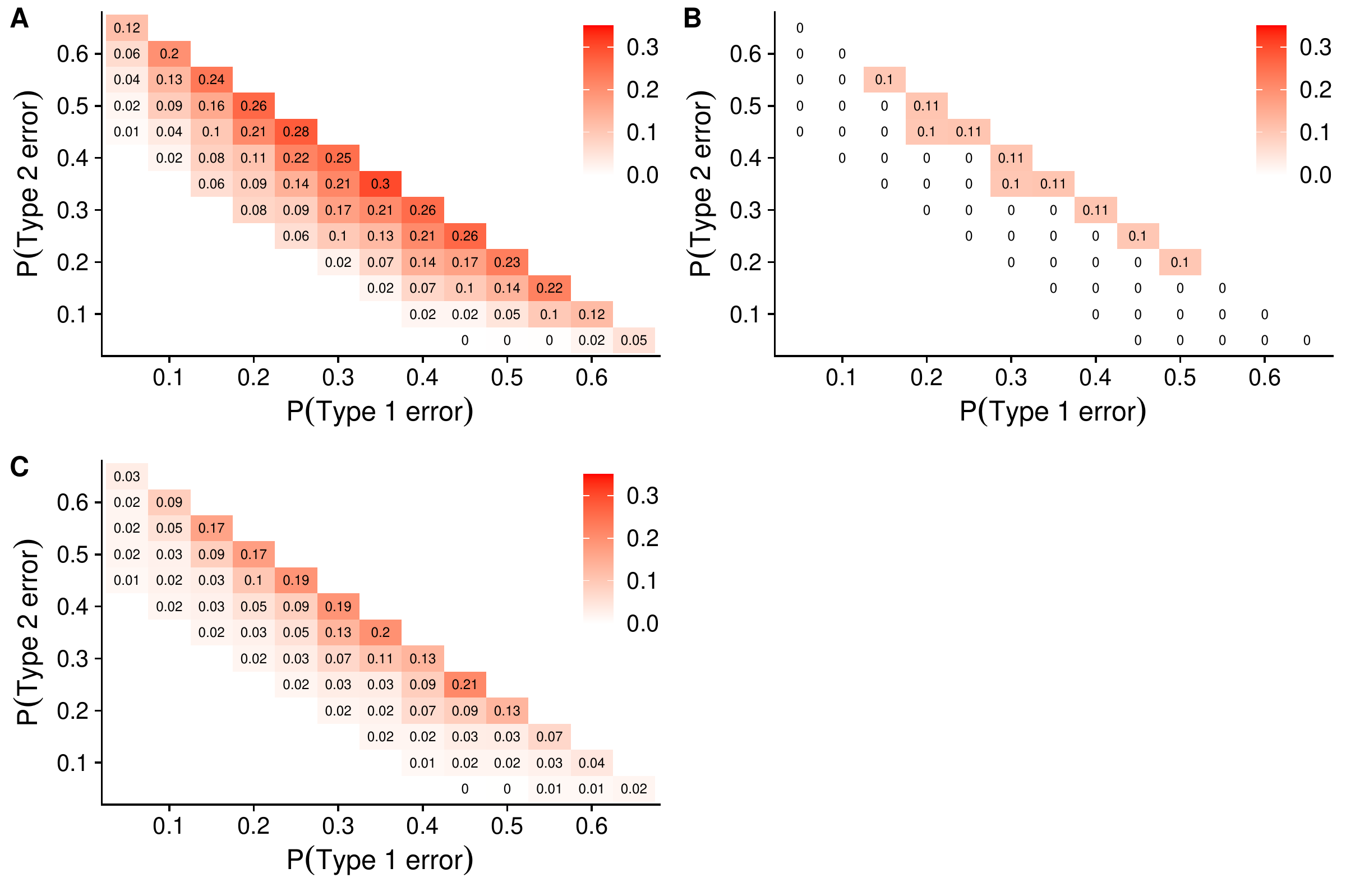}
\caption{Rejection rates under each combination of type~I and type~II error probabilities $(\eta^{I}, \eta^{\mathit{II}})$. Panels: (A) CKMS, (B) CMW, (C) CS.} 
\label{fig:rejection_merged}
\end{figure}

The main results are summarized in Figure~\ref{fig:rejection_merged}. The probability of a type~I error is $\eta^I$ and the probability of a type~II error is $\eta^{\mathit{II}}$. Recall that we focus on situations where $\eta^I+\eta^{\mathit{II}}$ is relatively large, as we want our consumer to plausibly mistake the distributions of $p$ and $\tilde p$. 
Consider, for example, our results for CKMS. The outermost numbers assume that $\eta^I + \eta^{\mathit{II}} = 0.7$. For such numbers, the rejection rates range from $5\%$ to $30\%$. This means that if prices $p$ and $\tilde{p}$ are close enough so that the agent may misperceive the prices and make type~I and type~II errors with probability $70\%$, then we can reject the hypothesis that the agent is an OEU maximizer at most $30\%$ of the cases. %For the CS dataset, we observe that the rejection rates is at most $10\%$.

Overall, it is fair to say that rejection rates of the hypothesis that the decision maker is an OEU miximizer are modest. Notice also that smaller values of $\eta^I + \eta^{\mathit{II}}$ corresponds to smaller rejection rates. This is because when values of $\eta^I + \eta^{\mathit{II}}$ are smaller (i.e., the decision maker does not misperceive prices much), the difference between $p$ and $\tilde{p}$ should be large, which corresponds to larger variances of $\epsilon$. Larger variance, in turn, leads to smaller rejection rates. The figure also illustrates that the conclusions of the test are very sensitive to what one assumes about variances, through the assumptions about $\eta^I$ and $\eta^{\mathit{II}}$. But if we look at the largest rejection rates, for the largest values of $\eta^I + \eta^{\mathit{II}}$, we get 30\% for CKMS, 11\% for CMW, and 21\% for CS. Hence, while many subjects in the experiments are inconsistent with OEU, for most of these subjects, our statistical tests would attribute such inconsistency to misperception of prices and do not reject that the subjects are OEU maximizers.

% ----------
% Perturbed Subjective Expected Utility 
% ----------
\section{Perturbed Subjective Expected Utility}
\label{section:robustseu}

We now turn to the model of subjective expected utility (SEU), in which beliefs are not known. Instead, beliefs are subjective and unobservable. The analysis will be analogous to what we did for OEU, and therefore proceed at a faster pace. In particular, all the definitions and results parallel those of the section on OEU. The proof of the main result (the axiomatic characterization) is substantially more challenging here because both beliefs and utilities are unknown: there is a classical problem in disentangling beliefs from utility. The technique for solving this problem was introduced in \cite{echenique2015savage}. 
The proofs of the theorems are in Online Appendix~\ref{appendix:proof}. 

\begin{definition} 
Let $e \in \Re_+$. A dataset $(x^k,p^k)_{k=1}^K$ is {\em $e$-belief-perturbed SEU rational} if there exist $\mu^k \in \Delta_{++}(S)$ for each $k\in K$ and a concave and strictly increasing function $u : \Re_+ \rightarrow \Re$ such that, for all $k$, 
\begin{equation*}
y\in B(p^k,p^k\cdot x^k) \implies 
\sum_{s\in S}\mu^k_su(y_s)  \leq \sum_{s\in S}\mu^k_s u(x^k_s)
\end{equation*}
and for each $k, l \in K$ and $s,t \in S$
\begin{equation}\label{eq:seubound}
\frac{\mu^k_s/\mu^k_t}{\mu^l_s/\mu^l_t}\le 1+e.
\end{equation}
\end{definition}

Note that the definition of $e$-belief-perturbed SEU rationality differs from the definition of $e$-belief-perturbed OEU rationality, only in condition~\eqref{eq:seubound}, establishing bounds on perturbations. Here there is no objective probability from which we can evaluate the deviation of the set $\{\mu^k\}$ of beliefs. Thus we evaluate perturbations {\it among} beliefs, as in~\eqref{eq:seubound}. 

\begin{remark}
The constraint on the perturbation applies for each $k,l \in K$ and $s,t \in S$, so it implies for each $k,l \in K$ and $s,t \in S$ 
\[ \frac{1}{1+e} \le \frac{\mu^k_s/\mu^k_t}{\mu^l_s/\mu^l_t}\le 1+e. \]
Hence, when $e=0$, it must be that $\mu^k_s/\mu^k_t=\mu^l_s/\mu^l_t$. This implies that $\mu^k=\mu^l$ for a dataset that is $0$-belief perturbed SEU rational. 
\end{remark}

Next, we propose perturbed SEU rationality with respect to prices. 

\begin{definition} 
Let $e \in \Re_+$. A dataset $(x^k,p^k)_{k=1}^K$ is {\em $e$-price-perturbed SEU rational} if there exist $\mu \in \Delta_{++}(S)$ and a concave and strictly increasing function $u : \Re_+ \rightarrow \Re$ and $\ep^k \in \Re^S_{+}$ for each $k \in K$ such that, for all $k$, 
\begin{equation*}
y\in B(\tilde{p}^k ,\tilde{p}^k \cdot x^k) \implies \sum_{s\in S}\mu_su(y_s)  \leq \sum_{s\in S}\mu_s u(x^k_s),
\end{equation*}
where for each $k\in K$ and $s \in S$
\begin{equation*}
\tilde{p}^k_s= p^k_s \ep^k_s,
\end{equation*}
and for each $k, l \in K$ and $s,t \in S$
\begin{equation}\label{eq:seubound:price}
\frac{\ep^k_s/\ep^k_t}{\ep^l_s/\ep^l_t}\le 1+e.
\end{equation}
\end{definition}

Again, the definition differs from the corresponding definition of price-perturbed OEU rationality only in condition~\eqref{eq:seubound:price}, establishing bounds on perturbations. In condition~\eqref{eq:seubound:price}, we measure the size of the perturbations by 
\[ \frac{\ep^k_s/\ep^k_t}{\ep^l_s/\ep^l_t} , \]
not $\ep^k_s/\ep^k_t$ as in~\eqref{eq:oeubound:price}. This change is necessary to accommodate the existence of subjective beliefs. By choosing subjective beliefs appropriately, one can neutralize the perturbation in prices if  $\ep^k_s/\ep^k_t=\ep^l_s/\ep^l_t$ for all $k, l \in K$. That is, as long as $\ep^k_s/\ep^k_t=\ep^l_s/\ep^l_t$ for all $k, l \in K$, if we can rationalize the dataset by introducing the noise with some subjective belief $\mu$, then without using the noise, we can rationalize the dataset with another subjective belief $\mu'$ such that $\ep^k_s \mu'_s / \ep^k_t \mu'_t= \mu_s/\mu_t$. 

Finally, we define utility-perturbed SEU rationality.

\begin{definition} 
Let $e \in\Re_+$. A dataset $(x^k,p^k)_{k=1}^K$ is {\em $e$-utility-perturbed SEU rational} if there exist $\mu \in \Delta_{++}(S)$, a concave and strictly increasing function $u : \Re_+ \rightarrow \Re$, and $\ep^k \in \Re^S_{+}$ for each $k \in K$ such that, for all $k$, 
\begin{equation*}
y\in B(p^k ,p^k \cdot x^k) \implies \sum_{s\in S}\mu_s \ep^k_s u(y_s)  \leq \sum_{s\in S}\mu_s  \ep^k_s u(x^k_s),
\end{equation*}
and for each $k, l \in K$ and $s,t \in S$
\begin{equation*}
\label{eq:seubound:utility}
\frac{\ep^k_s/\ep^k_t}{\ep^l_s/\ep^l_t} \le 1+e.
\end{equation*}
\end{definition}

As in the previous section, given $e$, we can show that these three concepts of rationality are equivalent. 

\begin{theorem}
\label{theorem:robustseu0} 
Let $e\in\Re_+$ and $D$ be a dataset. The following are equivalent:
\begin{itemize}
\item $D$ is $e$-belief-perturbed SEU rational; 
\item $D$ is $e$-price-perturbed SEU rational;
\item $D$ is $e$-utility-perturbed SEU rational.
\end{itemize}
\end{theorem}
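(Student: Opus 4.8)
The plan is to mirror the proof of Theorem~\ref{theorem:robustoeu0}, establishing the two equivalences (belief $\Leftrightarrow$ utility) and (belief $\Leftrightarrow$ price) by explicit reparametrizations of the perturbation parameters that keep the utility index $u$ and the chosen bundles $x^k$ fixed. In each case I would verify two things: that the rationalization inequality for one notion transforms into the rationalization inequality for the other, and that the perturbation bound for one notion transforms into the bound for the other. The only structural novelty relative to the OEU case is that the SEU bound is the double ratio $\frac{\mu^k_s/\mu^k_t}{\mu^l_s/\mu^l_t}$ ranging over all ordered pairs $k,l$ (equivalently, sandwiched between $\frac{1}{1+e}$ and $1+e$, as noted in the Remark); this symmetry in $(k,l)$ is exactly what will absorb the reciprocals that arise below.

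For belief $\Leftrightarrow$ utility the budget set $B(p^k,p^k\cdot x^k)$ is the same in both problems, so I would simply fold the utility perturbation into the belief. Given a utility-perturbed rationalization with reference belief $\mu$, perturbations $\ep^k$, and index $u$, set $\mu^k_s = \mu_s \ep^k_s / (\sum_t \mu_t \ep^k_t)$, which lies in $\Delta_{++}(S)$ since $\mu,\ep^k \gg 0$. Because the normalizing constant is positive, $\sum_s \mu^k_s u(y_s) \leq \sum_s \mu^k_s u(x^k_s)$ holds iff $\sum_s \mu_s \ep^k_s u(y_s) \leq \sum_s \mu_s \ep^k_s u(x^k_s)$, and both are quantified over the identical budget, so one rationalization holds iff the other does. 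The bound is immediate: $\mu^k_s/\mu^k_t = (\mu_s \ep^k_s)/(\mu_t \ep^k_t)$, hence $\frac{\mu^k_s/\mu^k_t}{\mu^l_s/\mu^l_t} = \frac{\ep^k_s/\ep^k_t}{\ep^l_s/\ep^l_t}$, so the two constraints literally coincide. The converse reads the construction backwards, setting $\ep^k_s = \mu^k_s/\mu_s$ for an arbitrary fixed $\mu$.

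For belief $\Leftrightarrow$ price the budget set does change, so I would instead fold the belief ratios into the prices, using that the concavity of $\sum_s \mu_s u(\cdot)$ and convexity of the budget make the (subgradient) first-order conditions both necessary and sufficient for $x^k$ to be a global maximizer. Given a belief-perturbed rationalization with beliefs $\{\mu^k\}$ and multipliers $\lambda^k$ satisfying $\lambda^k p^k_s = \mu^k_s u'(x^k_s)$, fix any reference belief $\mu \in \Delta_{++}(S)$ and define $\tilde p^k_s = p^k_s \, \mu_s/\mu^k_s$. Then $\lambda^k \tilde p^k_s = \mu_s u'(x^k_s)$, which is exactly the first-order condition for $x^k$ to maximize $\sum_s \mu_s u(y_s)$ over $B(\tilde p^k,\tilde p^k\cdot x^k)$, with the same multiplier and supergradient; by sufficiency $x^k$ is optimal there. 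Writing $\ep^k_s = \mu_s/\mu^k_s$ for the implied price perturbation, one computes $\frac{\ep^k_s/\ep^k_t}{\ep^l_s/\ep^l_t} = \frac{\mu^l_s/\mu^l_t}{\mu^k_s/\mu^k_t}$, which is bounded by $1+e$ since the SEU constraint holds for every ordered pair $(k,l)$. The reverse direction is symmetric, setting $\mu^k_s \propto \mu_s/\ep^k_s$ and normalizing into $\Delta_{++}(S)$.

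I expect the belief $\Leftrightarrow$ price step to be the main obstacle, precisely because it is the one step that alters the budget set and so cannot be handled by reweighting a single maximization problem. The care required is (i) to justify the first-order conditions for a possibly non-differentiable concave $u$, which I would do through supergradients and the complementary-slackness conditions that remain necessary and sufficient for concave programs even at corner choices; (ii) to confirm that all constructed objects stay strictly positive and that the rescaled beliefs normalize into $\Delta_{++}(S)$, which holds because every original price, belief, and perturbation is strictly positive; and (iii) to track correctly the inversion of ratios produced by the substitution and verify that it is neutralized by the symmetry of the SEU bound over pairs $(k,l)$. Chaining the two equivalences then yields the three-way equivalence asserted in the theorem.
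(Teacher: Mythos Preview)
Your proposal is correct and follows essentially the same route as the paper. The paper packages the argument via an Afriat-inequalities lemma (first-order conditions $\mu^k_s v^k_s = \lambda^k p^k_s$ with $v^k_s$ a supergradient of $u$ at $x^k_s$), then performs exactly the reparametrizations you describe---in particular $\ep^k_s = \mu_s/\mu^k_s$ for belief $\Leftrightarrow$ price, yielding $\frac{\ep^k_s/\ep^k_t}{\ep^l_s/\ep^l_t} = \frac{\mu^l_s/\mu^l_t}{\mu^k_s/\mu^k_t}$ and invoking the $(k,l)$-symmetry of the SEU bound just as you do; the non-differentiability of $u$ is handled by the supergradient formulation of the first-order conditions, which the paper absorbs into the cited ``standard arguments'' from \cite{echenique2015savage}.
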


In light of Theorem~\ref{theorem:robustseu0}, we shall speak simply of $e$-perturbed SEU rationality to refer to any of the above notions of perturbed SEU rationality. 

\cite{echenique2015savage} prove that a dataset is SEU rational if and only if it satisfies a revealed-preference axiom termed the Strong Axiom for Revealed Subjective Expected Utility (SARSEU). SARSEU states that, for any test sequence $(x^{k_i}_{s_i}, x^{k'_i}_{s'_i})_{i=1}^n$, if each $s$  appears as $s_i$ (on the left of the pair) the same number of times it appears as $s'_i$ (on the right), then 
\[ \prod_{i=1}^n \frac{p^{k_i}_{s_i}}{p^{k'_i}_{s'_i}} \leq 1. \]

SARSEU is no longer necessary for perturbed SEU-rationality. This is easy to see, as we allow the decision maker to have a different belief $\mu^k$ for each choice $k$, and reason as in our discussion of SAROEU. Analogous to our analysis of OEU, we introduce a perturbed version of SARSEU to capture perturbed SEU rationality. Let $e\in\Re_+$. 

\begin{axiom}[$e$-Perturbed SARSEU ($e$-PSARSEU)]
For any test sequence $(x^{k_i}_{s_i}, x^{k'_i}_{s'_i})_{i=1}^n \equiv \sigma$, if each $s$ appears as $s_i$ (on the left of the pair) the same number of times it appears as $s'_i$ (on the right), then 
\[ \prod_{i=1}^n \frac{p^{k_i}_{s_i}}{p^{k'_i}_{s'_i}} \leq (1+e)^{m(\sigma)}. \]
\end{axiom}

We can easily see the necessity of $e$-PSARSEU by reasoning from the first-order conditions, as in our discussion of $e$-PSAROEU. The main result of this section shows that $e$-PSARSEU is not only necessary for $e$-perturbed SEU rationality, but also sufficient. 

\begin{theorem}
\label{theorem:robustseu}
Let $e\in\Re_+$ and $D$ be a dataset. The following are equivalent:
\begin{itemize}
\item $D$ is $e$-perturbed SEU rational; 
\item $D$ satisfies $e$-PSARSEU.
\end{itemize}
\end{theorem}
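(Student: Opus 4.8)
The plan is to adapt the argument of \cite{echenique2015savage} for SARSEU to the quantitative bound appearing in $e$-PSARSEU, invoking Theorem~\ref{theorem:robustseu0} so that I may work with the belief-perturbation formulation throughout (there $u$ is common across observations and all the slack is carried by the family $\{\mu^k\}$). Necessity is the routine direction. Writing the first-order conditions $\lambda^k p^k_s = \mu^k_s\,u'(x^k_s)$ for the program~\eqref{eq:maxEUrobust} and substituting into $\prod_i p^{k_i}_{s_i}/p^{k'_i}_{s'_i}$, the multipliers cancel by the $k$-balancing condition~\ref{1it:sarseuthree}, the marginal-utility ratios contribute a factor $\le 1$ by concavity together with~\ref{1it:sarseuone}, and the surviving belief factor is $\prod_{k,s}(\mu^k_s)^{d(\sigma,k,s)}$. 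The key point is that $k$-balancing makes each row sum $\sum_s d(\sigma,k,s)$ vanish while the \emph{state}-balancing hypothesis of $e$-PSARSEU makes each column sum $\sum_k d(\sigma,k,s)$ vanish, so the exponent array is doubly centered. Such an array decomposes into $m(\sigma)$ elementary ``second-difference'' pieces of the form $\mu^k_s\mu^l_t/(\mu^k_t\mu^l_s)$, each bounded by $1+e$ via~\eqref{eq:seubound}; this is exactly what yields $(1+e)^{m(\sigma)}$ and explains why $m(\sigma)$ (Definition~\ref{def:sk_remaining}), and not the raw length of the sequence, is the correct exponent.

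For sufficiency I would first reduce rationalizability to a finite linear feasibility problem. Using a one-dimensional interpolation lemma (as in \cite{echenique2015savage}), $D$ is $e$-belief-perturbed SEU rational iff there exist positive reals $\{\lambda^k\}$, $\{\mu^k_s\}$, $\{v^k_s\}$ such that (a) $x^k_s = x^{k'}_{s'}\Rightarrow v^k_s = v^{k'}_{s'}$ and $x^k_s > x^{k'}_{s'}\Rightarrow v^k_s \le v^{k'}_{s'}$, (b) $\lambda^k p^k_s = \mu^k_s v^k_s$, and (c) the perturbation bounds~\eqref{eq:seubound}; the $v^k_s$ play the role of $u'(x^k_s)$, and (a) is exactly the condition under which they extend to a concave, strictly increasing $u$. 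Taking logarithms turns (a)--(c) into a system of linear (in)equalities in $\log\lambda^k,\ \log\mu^k_s,\ \log v^k_s$, so the problem becomes the solvability of a linear program.

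The heart of the argument is then a theorem of the alternative: infeasibility of this system is equivalent to the existence of a nonnegative rational certificate, a weighting of the defining inequalities that sums to a contradiction. I would show that clearing denominators in any such certificate produces a genuine test sequence. The structure of the certificate forces the combinatorial conditions of Definition~\ref{def:testsequence}: the coefficients on the $\log\lambda^k$ variables must cancel (giving the $k$-balancing~\ref{1it:sarseuthree}), the coefficients on the $\log\mu^k_s$ variables, after accounting for how~\eqref{eq:seubound} enters, must cancel in the doubly-centered manner described above (giving the state-balancing and the exponent $m(\sigma)$), and the coefficients on the $\log v^k_s$ variables reproduce condition~\ref{1it:sarseuone}. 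The contradiction asserted by the certificate is then precisely a violation of $\prod_i p^{k_i}_{s_i}/p^{k'_i}_{s'_i}\le(1+e)^{m(\sigma)}$, so by contraposition $e$-PSARSEU implies feasibility, and hence rationalizability.

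The main obstacle is the one the paper flags: disentangling beliefs from utility. Because $\mu$ is unknown, the products carry both belief and marginal-utility factors, and only the extra state-balancing condition lets the belief factors collapse to the net count $m(\sigma)$; keeping the bookkeeping of this doubly-centered cancellation correct, with the perturbation slack apportioned exactly, is the delicate part. A second, more technical obstacle is that the theorem of the alternative delivers a \emph{real} certificate, whereas a test sequence is an \emph{integer} object. I would resolve this exactly as in \cite{echenique2015savage}: approximate $D$ by datasets whose relevant log-prices are rational, for which the certificate scales to integers and hence to an honest test sequence, establish the implication for each approximating dataset, and pass to the limit. The limit is legitimate because, after normalizing each $\mu^k\in\Delta_{++}(S)$ and pinning the utility values to a compact range, the set of rationalizing parameters is compact, so rationalizations of the approximating datasets subconverge to one for $D$.
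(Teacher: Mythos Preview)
Your proposal is correct and follows essentially the same architecture as the paper: reduce rationalizability to the Afriat-type system of Lemma~\ref{lem:smoothisok:seu}, log-linearize, apply a theorem of the alternative, and show that any dual certificate is (after scaling) a test sequence violating $e$-PSARSEU, with a rational-approximation step to guarantee integer certificates. Your necessity argument via ``doubly-centered'' exponent arrays decomposed into second differences is exactly what the paper accomplishes with its $s^*$ trick (fixing a reference state and letting the auxiliary ratios telescope by $k$-balancing); these are two phrasings of the same cancellation.

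The one place you diverge is the endgame: you propose to rationalize the rational approximants and extract a subsequential limit of rationalizations by compactness, whereas the paper argues contrapositively on the dual side (if the original system is infeasible, the real certificate persists under small perturbations of the last column, contradicting feasibility for the rational approximant). Your route is viable but needs a line of work you have not written: you must show the rationalizing parameters stay uniformly in a compact set with $\mu^k$ bounded away from $\partial\Delta(S)$ and $v^k_s,\lambda^k$ bounded above and below, and that the approximants satisfy $e_n$-PSARSEU with $e_n\downarrow e$ so the closed bound~\eqref{eq:seubound} survives the limit. The paper's dual-side argument avoids these uniformity issues entirely, which is why it is cleaner here.
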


It is easy to see that $0$-PSARSEU is equivalent to SARSEU, and that by choosing $e$ to be arbitrarily large it is possible to rationalize any dataset. As a consequence, we shall be interested in finding a minimal value of $e$ that rationalizes a dataset.  \cite{echenique2019decision} apply the idea to datasets of choice under uncertainty collected in the laboratory as well as on the large-scale online survey of the general U.S.\ population.

% ---------- 
% Conclusion
% ---------- 
\section{Conclusion}

We present a measure of deviations from expected utility theory, called minimal~$e$ (or $e_*$), that is based on a revealed-preference characterization of the ``perturbed'' version of the model. 

We start from an observation that the empirical content of EU is captured by the relation between prices and marginal rates of substitution. We measure the deviations from EU by the smallest amount of perturbations one needs to add in order to get the ``right'' relation between prices and marginal rates of substitution. There are three components of the EU model, beliefs, prices, and utilities, which we can perturb, but we can interpret the measure in any of the ways (Theorem~\ref{theorem:robustoeu0}). 

We apply our method to data from three large-scale experiments and find that the measure delivers additional insights on datasets that had been analyzed with CCEI, a measure of consistency with general utility maximization. Our measure can be used as an additional toolkit for data analysis in empirical studies employing choices from linear budgets.

% ---------- 
% Proofs
% ---------- 
\begin{appendices}
\section{Proofs of Theorems~\ref{theorem:robustoeu0} and~\ref{theorem:robustoeu}}
\label{proof}

\subsection{Proof of Theorem~\ref{theorem:robustoeu0}}

First we prove a lemma that implies Theorem~\ref{theorem:robustoeu0}, and is useful for the sufficiency part of Theorem~\ref{theorem:robustoeu}. The lemma provides ``Afriat inequalities'' for the problem at hand.

\begin{lemma}\label{1lem:smoothisok}
Given $e \in \Re_+$, and let $(x^k,p^k)_{k=1}^K$ be a dataset. The following statements are equivalent. 
\begin{enumerate}[label=(\alph*),ref=(\alph*)]
\item\label{1lem:smoothisok:1} $(x^k,p^k)_{k=1}^K$ is $e$-belief-perturbed OEU rational.
\item\label{1lem:smoothisok:2} There are  strictly positive numbers $v^k_s$, $\la^k$, $\mu^k_s$, for $s\in S$ and $k\in K$, such that
\begin{equation} \label{1thesystem}
\mu^k_s v^k_s = \la^k p^k_s, \;\; \text{ and } \;\;
x^k_s > x^{k'}_{s'} \implies v^k_s \leq v^{k'}_{s'},
\end{equation}
and for all $k \in K$ and $s,t \in S$, 
\begin{equation} \label{1constraint}
\frac{1}{1+e} \le \frac{\mu^k_s/\mu^k_t}{\mu^*_s/\mu^*_t}\le 1+e.
\end{equation}
\item\label{1lem:smoothisok:3} $(x^k,p^k)_{k=1}^K$ is $e$-price-perturbed OEU rational.
\item\label{1lem:smoothisok:4} There are  strictly positive numbers $\hat{v}^k_s$, $\hat{\la}^k$, and $\ep^k_s$ for $s\in S$ and $k\in K$, such that
\begin{equation*} 
\mu^*_s \hat{v}^k_s = \hat{\la}^k \ep^k_s p^k_s, \;\;  \text{ and } \;\;
x^k_s > x^{k'}_{s'} \implies \hat{v}^k_s \leq \hat{v}^{k'}_{s'},
\end{equation*}
and for all $k \in K$ and $s,t \in S$, 
\begin{equation*} 
\frac{1}{1+e} \le \frac{\ep^k_s}{\ep^k_t} \le 1+e . 
\end{equation*} 
\item\label{1lem:smoothisok:5} $(x^k,p^k)_{k=1}^K$ is $e$-utility-perturbed OEU rational.
\item\label{1lem:smoothisok:6} There are  strictly positive numbers $\hat{v}^k_s$, $\hat{\la}^k$, and $\hat{\ep}^k_s$ for $s\in S$ and $k\in K$, such that
\begin{equation*} 
\mu^*_s \hat{\ep}^k_s  \hat{v}^k_s = \hat{\la}^k p^k_s, \;\;  \text{ and } \;\;
x^k_s > x^{k'}_{s'} \implies \hat{v}^k_s \leq \hat{v}^{k'}_{s'},
\end{equation*}
and for all $k \in K$ and $s,t \in S$, 
\begin{equation*} 
\frac{1}{1+e}\le \frac{\hat{\ep}^k_s}{\hat{\ep}^k_t} \le 1+e.
\end{equation*}
\end{enumerate}
\end{lemma}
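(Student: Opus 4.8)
The plan is to split the six-way equivalence into three ``rationalization $\Leftrightarrow$ Afriat-inequality'' pairs, $\ref{1lem:smoothisok:1}\Leftrightarrow\ref{1lem:smoothisok:2}$, $\ref{1lem:smoothisok:3}\Leftrightarrow\ref{1lem:smoothisok:4}$ and $\ref{1lem:smoothisok:5}\Leftrightarrow\ref{1lem:smoothisok:6}$, and then to connect the three Afriat systems $\ref{1lem:smoothisok:2}$, $\ref{1lem:smoothisok:4}$, $\ref{1lem:smoothisok:6}$ to one another by pure algebra. The first pair is the heart of the matter and is a version of Afriat's theorem adapted to expected utility, following \cite{echenique2015savage}.

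For $\ref{1lem:smoothisok:1}\Rightarrow\ref{1lem:smoothisok:2}$ I would start from a rationalizing pair $(u,(\mu^k))$ and apply the first-order (KKT) conditions of the concave program $\max_{y\in B(p^k,p^k\cdot x^k)}\sum_s\mu^k_s u(y_s)$: at the optimum $x^k$ there is a multiplier $\lambda^k>0$ and, for each $s$, a supergradient $v^k_s\in\partial u(x^k_s)$ with $\mu^k_s v^k_s=\lambda^k p^k_s$. Because $u$ is concave its superdifferential is a decreasing correspondence, so $x^k_s>x^{k'}_{s'}$ forces $v^k_s\le v^{k'}_{s'}$, which is exactly the monotonicity clause of \eqref{1thesystem}; strict monotonicity of $u$ makes every $v^k_s$ strictly positive, and \eqref{1constraint} is just the belief bound \eqref{eq:oeubound}. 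For the converse $\ref{1lem:smoothisok:2}\Rightarrow\ref{1lem:smoothisok:1}$ I would build $u$ explicitly from the numbers $v^k_s$. Since the consumption levels are one-dimensional and the hypothesis $x^k_s>x^{k'}_{s'}\Rightarrow v^k_s\le v^{k'}_{s'}$ says the candidate slopes are nonincreasing in the consumption level, the piecewise-linear lower envelope $u(z)=\min_{k,s}\{v^k_s(z-x^k_s)+c^k_s\}$, with intercepts chosen so that the $(k,s)$-piece is active at $x^k_s$, is concave, strictly increasing (all slopes $v^k_s>0$), and satisfies $v^k_s\in\partial u(x^k_s)$. The equality $\mu^k_s v^k_s=\lambda^k p^k_s$ then says the supergradient of $\sum_s\mu^k_s u(\cdot)$ at $x^k$ is proportional to $p^k$, which by concavity certifies that $x^k$ solves the budget problem, while \eqref{1constraint} delivers \eqref{eq:oeubound}.

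The pairs $\ref{1lem:smoothisok:3}\Leftrightarrow\ref{1lem:smoothisok:4}$ and $\ref{1lem:smoothisok:5}\Leftrightarrow\ref{1lem:smoothisok:6}$ I would handle identically: the price- and utility-perturbed problems have exactly the same concave-maximization-over-a-linear-budget structure, so the same KKT/supergradient argument and the same envelope construction apply verbatim, now producing the systems in $\ref{1lem:smoothisok:4}$ and $\ref{1lem:smoothisok:6}$. To finish, I would link the three Afriat systems algebraically while keeping the same marginal utilities $\hat v^k_s=v^k_s$ and multipliers. From $\ref{1lem:smoothisok:2}$ to $\ref{1lem:smoothisok:4}$, set $\varepsilon^k_s=\mu^*_s/\mu^k_s$ and $\hat\lambda^k=\lambda^k$; then $\hat\lambda^k\varepsilon^k_s p^k_s=\mu^*_s v^k_s=\mu^*_s\hat v^k_s$, and $\varepsilon^k_s/\varepsilon^k_t=\bigl((\mu^k_s/\mu^k_t)/(\mu^*_s/\mu^*_t)\bigr)^{-1}$, so the symmetric bound \eqref{1constraint} is equivalent to \eqref{eq:oeubound:price}. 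From $\ref{1lem:smoothisok:2}$ to $\ref{1lem:smoothisok:6}$, set instead $\hat\varepsilon^k_s=\mu^k_s/\mu^*_s$, so that $\mu^*_s\hat\varepsilon^k_s\hat v^k_s=\mu^k_s v^k_s=\lambda^k p^k_s$ and $\hat\varepsilon^k_s/\hat\varepsilon^k_t$ equals the belief ratio directly. Each map is invertible (e.g.\ $\mu^k_s\propto\mu^*_s/\varepsilon^k_s$ recovers $\ref{1lem:smoothisok:2}$ from $\ref{1lem:smoothisok:4}$, with the normalizing constant absorbed into $\lambda^k$), so the three systems are equivalent, and Theorem~\ref{theorem:robustoeu0} is the restriction of the lemma to $\ref{1lem:smoothisok:1}$, $\ref{1lem:smoothisok:3}$, $\ref{1lem:smoothisok:5}$.

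I expect the only real obstacle to be the sufficiency construction inside $\ref{1lem:smoothisok:2}\Rightarrow\ref{1lem:smoothisok:1}$: verifying that the envelope $u$ is genuinely concave, strictly increasing, and has each $v^k_s$ in its superdifferential requires the monotonicity clause of \eqref{1thesystem} to be used exactly, and one must work with superdifferentials rather than derivatives and handle possible corner choices $x^k_s=0$ (where the KKT equality weakens to the appropriate inequality). Everything else---the KKT direction and the three algebraic reparametrizations---is routine bookkeeping, the main point being that holding $\hat v^k_s=v^k_s$ fixed keeps the monotonicity condition intact across all three systems and that all perturbation bounds are symmetric under inversion.
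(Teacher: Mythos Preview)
Your proposal is correct and follows essentially the same route as the paper: the three rationalization\,$\Leftrightarrow$\,Afriat-inequality pairs are handled by the concave-KKT/envelope argument of \cite{echenique2015savage} (which the paper simply cites, while you sketch it), and the algebraic links among \ref{1lem:smoothisok:2}, \ref{1lem:smoothisok:4}, \ref{1lem:smoothisok:6} are exactly the reparametrizations $\varepsilon^k_s=\mu^*_s/\mu^k_s$ and $\hat\varepsilon^k_s=1/\varepsilon^k_s$ that the paper uses. The only cosmetic difference is that the paper connects \ref{1lem:smoothisok:4} and \ref{1lem:smoothisok:6} directly and then shows \ref{1lem:smoothisok:2}$\Leftrightarrow$\ref{1lem:smoothisok:4}, whereas you route both through \ref{1lem:smoothisok:2}; either order works.
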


\begin{proof}
The equivalence between~\ref{1lem:smoothisok:1} and~\ref{1lem:smoothisok:2}, the equivalence between~\ref{1lem:smoothisok:3} and~\ref{1lem:smoothisok:4}, and the equivalence between~\ref{1lem:smoothisok:5} and~\ref{1lem:smoothisok:6} follow from arguments in  \cite{echenique2015savage}. The equivalence between~\ref{1lem:smoothisok:4} and~\ref{1lem:smoothisok:6} with $\ep^k_s=1/ \hat{\ep}^k_s$ for each $k \in K$ and $s \in S$ is straightforward. Thus, to show the result, it suffices to show that~\ref{1lem:smoothisok:2} and~\ref{1lem:smoothisok:4} are equivalent.

To show that~\ref{1lem:smoothisok:4} implies~\ref{1lem:smoothisok:2}, define $v = \hat{v}$ and $\mu^k_s = \frac{\mu^*_s}{\ep^k_s} / \left( \sum_{s \in S} \frac{\mu^*_s}{\ep^k_s} \right)$ for each $k \in K$ and $s \in S$ and $\la^k = \hat{\la}^k / \left( \sum_{s \in S} \frac{\mu^*_s}{\ep^k_s} \right)$ for each $k \in K$.  Then, $\mu^k \in \Delta_{++}(S)$. Since $\mu^*_s \hat{v}^k_s = \hat{\la}^k \ep^k_s p^k_s$, we have $\mu^k_s v^k_s = \la^k p^k_s$. 
Moreover, for each $k \in K$ and $s,t \in S$, $\frac{\ep^k_s}{\ep^k_t} = \frac{\mu^k_s/\mu^k_t}{\mu^*_s/\mu^*_t}$. Hence, $\frac{1}{1+e} \le \frac{\ep^k_s}{\ep^k_t} \le 1+e$.

To show that~\ref{1lem:smoothisok:2} implies~\ref{1lem:smoothisok:4}, for all $s \in S$ define $\hat{v}=v$ and for all  $k \in K$, $\hat{\la}^k=\la^k$. For all  $k \in K$ and $s \in S$, define $\ep^k_s = \frac{\mu^*_s}{\mu^k_s}$. For each $k \in K$ and $s \in S$, since $\mu^k_s u^k_s = \la^k p^k_s$, we have $\mu^*_s v^k_s = \hat{\la}^k \ep^k_s p^k_s$. Finally, for each $k \in K$ and $s,t \in S$, $\frac{\ep^k_s}{\ep^k_t} = \frac{\mu^*_s/ \mu^k_s}{\mu^*_t/\mu^{k}_t} = \frac{\mu^{k}_t/ \mu^k_s}{\mu^*_t/\mu^*_s}$. Therefore, we obtain $\frac{1}{1+e} \le \frac{\ep^k_s}{\ep^k_t} \le 1+e$.
\end{proof}

\subsection{Proof of the Necessity Direction of Theorem~\ref{theorem:robustoeu}}

\begin{lemma}
\label{1lem:necessity}
Given $e \in \Re_+$, if a dataset is $e$-belief-perturbed OEU rational, then the dataset satisfies $e$-PSAROEU.
\end{lemma}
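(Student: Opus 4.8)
The plan is to reduce the claim to the ``Afriat inequalities'' of Lemma~\ref{1lem:smoothisok} and then verify $e$-PSAROEU by a purely combinatorial computation. Since the dataset is $e$-belief-perturbed OEU rational, the implication \ref{1lem:smoothisok:1}$\Rightarrow$\ref{1lem:smoothisok:2} of Lemma~\ref{1lem:smoothisok} furnishes strictly positive numbers $v^k_s$, $\la^k$, and $\mu^k_s$ satisfying~\eqref{1thesystem} and~\eqref{1constraint}. From $\mu^k_s v^k_s = \la^k p^k_s$ and the definition $\rho^k_s = p^k_s/\mu^*_s$, I would write each risk-neutral price as $\rho^k_s = (v^k_s/\la^k)\,(\mu^k_s/\mu^*_s)$, isolating the three ingredients — marginal utilities $v$, multipliers $\la$, and the belief-distortion factors $w^k_s := \mu^k_s/\mu^*_s$ — whose behavior along a test sequence I will control separately.

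Fix a test sequence $(x^{k_i}_{s_i},x^{k'_i}_{s'_i})_{i=1}^n\equiv\sigma$ (Definition~\ref{def:testsequence}). Substituting the expression for $\rho$ factors the target product as
\[ \prod_{i=1}^n \frac{\rho^{k_i}_{s_i}}{\rho^{k'_i}_{s'_i}} = \left(\prod_{i=1}^n \frac{v^{k_i}_{s_i}}{v^{k'_i}_{s'_i}}\right)\left(\prod_{i=1}^n \frac{\la^{k'_i}}{\la^{k_i}}\right)\left(\prod_{i=1}^n \frac{w^{k_i}_{s_i}}{w^{k'_i}_{s'_i}}\right). \]
The middle factor equals $1$: by condition~\ref{1it:sarseuthree} each $k$ occurs equally often as some $k_i$ and as some $k'_i$, so every $\la^k$ cancels. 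The first factor is at most $1$: condition~\ref{1it:sarseuone} gives $x^{k_i}_{s_i}>x^{k'_i}_{s'_i}$, so the monotonicity implication in~\eqref{1thesystem} yields $v^{k_i}_{s_i}\le v^{k'_i}_{s'_i}$ for every $i$. It therefore remains to bound the third factor by $(1+e)^{m(\sigma)}$.

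For the third factor I would group the $w$-terms by the choice index $k$. For fixed $k$, the factor $w^k_s$ enters the numerator once for every $i$ with $(k_i,s_i)=(k,s)$ and the denominator once for every $i$ with $(k'_i,s'_i)=(k,s)$, so its net exponent is exactly $d(\sigma,k,s)$ of Definition~\ref{def:sk_remaining}. Condition~\ref{1it:sarseuthree} again forces $\sum_{s\in S} d(\sigma,k,s)=0$ for each $k$, so within choice $k$ the number of surviving numerator terms, $\sum_{s:\,d(\sigma,k,s)>0} d(\sigma,k,s)$, equals the number of surviving denominator terms. After canceling the matched terms I would pair each surviving $w^k_s$ in the numerator with a surviving $w^k_t$ in the denominator and invoke~\eqref{1constraint}, which rewrites as $w^k_s/w^k_t=(\mu^k_s/\mu^k_t)/(\mu^*_s/\mu^*_t)\le 1+e$. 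Each of these $\sum_{s:\,d(\sigma,k,s)>0} d(\sigma,k,s)$ ratios is thus at most $1+e$, and multiplying over all $k$ gives the third factor $\le (1+e)^{m(\sigma)}$. Combining the three bounds yields $\prod_{i=1}^n \rho^{k_i}_{s_i}/\rho^{k'_i}_{s'_i}\le (1+e)^{m(\sigma)}$, which is exactly $e$-PSAROEU.

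The main obstacle is the bookkeeping in the third factor: one must see that the correct exponent is the post-cancellation net count $m(\sigma)$ rather than the naive length $n$, and that the belief bound~\eqref{1constraint} applies precisely to the same-$k$ ratios $w^k_s/w^k_t$ that the pairing produces. The zero-sum identity $\sum_{s\in S} d(\sigma,k,s)=0$ for each $k$, which comes straight from the balance condition~\ref{1it:sarseuthree} of a test sequence, is what guarantees that the numerator and denominator survivors can be matched one-to-one, so that no uncontrolled cross-$k$ ratio ever appears.
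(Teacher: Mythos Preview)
Your proof is correct and follows essentially the same approach as the paper: invoke Lemma~\ref{1lem:smoothisok} to obtain the first-order-condition data $(v^k_s,\la^k,\mu^k_s)$, factor $\prod_i \rho^{k_i}_{s_i}/\rho^{k'_i}_{s'_i}$ into the $v$-part (bounded by $1$ via monotonicity), the $\la$-part (which cancels by condition~\ref{1it:sarseuthree}), and the belief-distortion part, and then bound the latter by $(1+e)^{m(\sigma)}$ after canceling matched $(k,s)$ terms and pairing the survivors within each $k$. Your explicit use of the identity $\sum_s d(\sigma,k,s)=0$ to justify the same-$k$ pairing is exactly what the paper expresses as ``without loss of generality $k_j=k'_j$''.
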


\begin{proof}
Fix any sequence $(x^{k_i}_{s_i}, x^{k'_i}_{s'_i})_{i=1}^n \equiv \sigma$ of pairs that satisfies conditions~\ref{1it:sarseuone} and~\ref{1it:sarseuthree} in Definition~\ref{def:testsequence}. By Lemma~\ref{1lem:smoothisok}, there exist $v^{k_i}_{s_i}, v^{k'_i}_{s'_i}, \la^{k_i}, \la^{k'_i}, \mu^{k_i}_{s_i}, \mu^{k'_i}_{s'_i}$ such that $v^{k'_i}_{s'_i} \ge v^{k_i}_{s_i}$ and $v^{k_i}_{s_i}=\frac{\mu^*_{s_i}}{\mu^{k_i}_{s_i}}\la^{k_i} \rho^{k_i}_{s_i}$, and $v^{k'_i}_{s'_i}=\frac{\mu^*_{s'_i}}{\mu^{k'_i}_{s'_i}}\la^{k'_i} \rho^{k'_i}_{s'_i}$. 
Thus, we have 
\begin{equation*}
1 \ge \prod_{i=1}^n \frac{\la^{k_i}(\mu^{k'_i}_{s'_i}/\mu^*_{s'_i})\rho^{k_i}_{s_i}}{\la^{k'_i} (\mu^{k_i}_{s_i}/\mu^*_{s_i})\rho^{k'_i}_{s'_i}}
= \prod_{i=1}^n \frac{\mu^{k'_i}_{s'_i}/\mu^*_{s'_i}}{\mu^{k_i}_{s_i}/\mu^*_{s_i}}\prod_{i=1}^n \frac{\rho^{k_i}_{s_i}}{\rho^{k'_i}_{s'_i}} ,
\end{equation*}
where the second equality holds by condition~\ref{1it:sarseuthree}. 
%\fe{I think that this is by the prior lemma. note that I included a reference to that lemma. then we don't need to say ``See the proof of Lemma~10 in \cite{echenique2015savage} for details.'' Let me know if you agree. KS: I agree} 
Hence, 
\begin{equation*}
\prod_{i=1}^n \frac{\rho^{k_i}_{s_i}}{\rho^{k'_i}_{s'_i}} \le \prod_{i=1}^n \frac{\mu^{k_i}_{s_i}/\mu^*_{s_i}}{\mu^{k'_i}_{s'_i}/\mu^*_{s'_i}} . 
\end{equation*}

In the following, we evaluate the right hand side. For each $(k,s)$, we first cancel out all the terms $\mu^k_s$ that can be canceled out. Then, the number of $\mu^k_s$'s that remain in the numerator is $d(\sigma,k,s)$, as in Definition~\ref{def:sk_remaining}. Since the number of terms in the numerator and the denominator must be the same, the number of remaining fractions is $m(\sigma)\equiv \sum_{s\in S} \sum_{k\in K:d (\sigma, k,s)>0} d(\sigma, k,s)$. So by relabeling the index $i$ to $j$ if necessary, we obtain 
\begin{equation*}
\prod_{i=1}^n \frac{\mu^{k_i}_{s_i}/\mu^*_{s_i}}{\mu^{k'_i}_{s'_i}/\mu^*_{s'_i}}=\prod_{j=1}^{m(\sigma)} \frac{\mu^{k_j}_{s_j}/\mu^*_{s_j}}{\mu^{k'_j}_{s'_j}/\mu^*_{s'_j}} . 
\end{equation*}

Consider the corresponding sequence $(x^{k_j}_{s_j}, x^{k'_j}_{s'_j})_{j=1}^{m(\sigma)}$. Since the sequence is obtained by canceling out $x^k_s$ from the first element and the second element of the pairs, and since the original sequence $(x^{k_i}_{s_i}, x^{k'_i}_{s'_i})_{i=1}^n$ satisfies condition~\ref{1it:sarseuthree}, it follows that  $(x^{k_j}_{s_j}, x^{k'_j}_{s'_j})_{j=1}^{m(\sigma)}$ satisfies condition~\ref{1it:sarseuthree}. 

By condition~\ref{1it:sarseuthree}, we can assume without loss of generality that $k_j=k'_j$ for each $j$. Therefore, by the condition on the perturbation, 
\begin{equation*}
\prod_{j=1}^{m(\sigma)} \frac{\mu^{k_j}_{s_j}/\mu^*_{s_j}}{\mu^{k'_j}_{s'_j}/\mu^*_{s'_j}}\le (1+e)^{m(\sigma)} . 
\end{equation*}
In conclusion, we obtain that $\prod_{i=1}^{n} (\rho^{k_i}_{s_i} / \rho^{k'_i}_{s'_i}) \le (1+e)^{m(\sigma)}$.
\end{proof}

\subsection{Proof of the Sufficiency Direction of Theorem~\ref{theorem:robustoeu}}

We need three lemmas to prove the sufficiency direction. The idea behind the argument is the same as in \cite{echenique2015savage}. We know from Lemma~\ref{1lem:smoothisok} that it suffices to find a solution to the relevant system of Afriat inequalities. We take logarithms to linearize the Afriat inequalities in Lemma~\ref{1lem:smoothisok}.  Then we set up the problem to find a solution to the system of linear inequalities. 

The first lemma, Lemma~\ref{1lem:rationalprice}, shows that $e$-PSAROEU is sufficient for $e$-belief-perturbed OEU rationality under the assumption that the logarithms of the prices are rational numbers. The assumption of rational logarithms comes from our use of a version of the theorem of the alternative (see Lemma~\ref{lem:motzkin2} in Appendix~\ref{appendix:toa}): when there is no solution to the linearized Afriat inequalities, a rational solution to the dual system of inequalities exists. Then we construct a violation of $e$-PSAROEU from the given solution to the dual. 

The second lemma, Lemma~\ref{1lem:approximate}, establishes that we can approximate any dataset satisfying $e$-PSAROEU with a dataset for which the logarithms of prices are rational, and for which $e$-PSAROEU is satisfied.

The last lemma, Lemma~\ref{1lem:anyprice}, establishes the result by using another version of the theorem of the alternative, stated as Lemma~\ref{lem:motzkin1}. 

The rest of the section is devoted to the statement of these lemmas.

\begin{lemma}
\label{1lem:rationalprice}
Given $e \in \Re_+$, let a dataset $(x^k, p^k)_{k=1}^k$ satisfy $e$-PSAROEU. Suppose that $\log(p^k_s)\in \Q$ for all $k \in K$ and $s \in S$, $\log(\mu^*_s)\in \Q$  for all $s \in S$, and $\log (1+e)\in \Q$. Then there are numbers $v^k_s$, $\la^k$, $\mu^k_s$ for $s\in S$ and $k\in K$ satisfying~\eqref{1thesystem} and~\eqref{1constraint} in Lemma~\ref{1lem:smoothisok}. 
\end{lemma}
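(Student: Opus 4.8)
The plan is to invoke Lemma~\ref{1lem:smoothisok}, which reduces the claim to producing strictly positive numbers $v^k_s,\la^k,\mu^k_s$ solving the Afriat system~\eqref{1thesystem}--\eqref{1constraint} of statement~\ref{1lem:smoothisok:2}. Since all these quantities are positive, I would pass to logarithms and work with unknowns $V^k_s=\log v^k_s$, $M^k_s=\log\mu^k_s$, $L^k=\log\la^k$, together with the constants $P^k_s=\log p^k_s$, $\Pi_s=\log\mu^*_s$, and $E=\log(1+e)$, all of which are rational by hypothesis. The system then becomes a finite collection of linear equalities and weak linear inequalities,
\[ M^k_s+V^k_s-L^k=P^k_s,\qquad V^{k'}_{s'}-V^k_s\ge 0 \ \text{ whenever } x^k_s>x^{k'}_{s'}, \]
\[ M^k_t-M^k_s\ge \Pi_t-\Pi_s-E \qquad (k\in K,\ s\neq t), \]
where the last family encodes both directions of the belief bound~\eqref{1constraint}. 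The goal is to show this log-linear system is solvable; exponentiating any solution then returns the required $v,\la,\mu$.

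Suppose it is not solvable. Because every coefficient and right-hand side is rational, the rational theorem of the alternative (Lemma~\ref{lem:motzkin2}) yields an infeasibility certificate with rational multipliers, which after clearing denominators I may take to be integers: a free-sign integer $\alpha^k_s$ for each equality, a nonnegative integer for each revealed-preference inequality, and a nonnegative integer $\gamma^k_{s,t}$ for each belief inequality. Stationarity (the coefficient of every unknown must vanish) gives three families of identities. The coefficient of $L^k$ forces $\sum_{s}\alpha^k_s=0$; the coefficient of $V^k_s$ makes $\alpha^k_s$ equal to the number of times $(k,s)$ is used on the left of a weighted revealed-preference pair minus the number of times it is used on the right; and the coefficient of $M^k_s$ gives $\alpha^k_s=\sum_{t\neq s}\gamma^k_{s,t}-\sum_{t\neq s}\gamma^k_{t,s}$. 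Infeasibility also supplies the strict inequality
\[ \sum_{k,s}\alpha^k_s P^k_s+\sum_{k,s,t}\gamma^k_{s,t}\bigl(\Pi_t-\Pi_s-E\bigr)>0 . \]

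Next I would read the multipliers combinatorially. Listing each revealed-preference pair with its integer multiplicity produces a sequence $\sigma$: condition~\ref{1it:sarseuone} of Definition~\ref{def:testsequence} holds automatically, since such pairs satisfy $x^k_s>x^{k'}_{s'}$, and condition~\ref{1it:sarseuthree} follows from $\sum_s\alpha^k_s=0$, so $\sigma$ is a genuine test sequence. The $V$-identity then reads exactly as $\alpha^k_s=d(\sigma,k,s)$ in the notation of Definition~\ref{def:sk_remaining}, and the $M$-identity exhibits, for each fixed $k$, the array $(\gamma^k_{s,t})$ as a flow on the states with divergence $\alpha^k_s$ at $s$. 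Since at every node of positive divergence the outflow is at least that divergence, the total flow satisfies
\[ \sum_{k,s,t}\gamma^k_{s,t}\ \ge\ \sum_{k}\ \sum_{s:\,d(\sigma,k,s)>0} d(\sigma,k,s)\ =\ m(\sigma). \]

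Finally I would substitute the identities into the strict inequality. Using $\alpha^k_s=d(\sigma,k,s)$ and telescoping the $\Pi$-terms via the divergence identity, it collapses to $\sum_{k,s}d(\sigma,k,s)(P^k_s-\Pi_s)>E\sum_{k,s,t}\gamma^k_{s,t}$, that is $\sum_i\bigl(\log\rho^{k_i}_{s_i}-\log\rho^{k'_i}_{s'_i}\bigr)>E\sum_{k,s,t}\gamma^k_{s,t}$, since $\log\rho^k_s=P^k_s-\Pi_s$. Exponentiating and using $E=\log(1+e)\ge 0$ together with the flow bound yields $\prod_i \rho^{k_i}_{s_i}/\rho^{k'_i}_{s'_i}>(1+e)^{\sum\gamma}\ge(1+e)^{m(\sigma)}$, contradicting $e$-PSAROEU applied to $\sigma$. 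Hence the system is solvable. I expect the main obstacle to be the bookkeeping of the third and fourth steps: reconstructing an honest test sequence from the integer multipliers and, above all, recognizing the $\gamma$-multipliers as state-flows so that the net-occurrence count $m(\sigma)$ emerges as a lower bound on the total flow—this is precisely what matches the exponent $m(\sigma)$ of $e$-PSAROEU rather than the naive length $n$. The rationality hypotheses enter only to guarantee that the certificate can be taken integer-valued, which is what licenses the combinatorial reading.
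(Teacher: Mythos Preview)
Your proposal is correct and follows essentially the same route as the paper: log-linearize, apply the rational theorem of the alternative to obtain integer multipliers, read off a test sequence from the revealed-preference multipliers, and use the belief-inequality multipliers to bound $m(\sigma)$ and reach a contradiction with $e$-PSAROEU. Your flow interpretation of the $\gamma^k_{s,t}$ is exactly the paper's inequality $d(\sigma,k,s)\le\sum_{t\neq s}\eta(k,s,t)$ rephrased; the only cosmetic difference is that the paper homogenizes the system (adding a last ``constant'' column and the matrix $E$) so that Lemma~\ref{lem:motzkin2} applies verbatim, whereas you invoke it on the inhomogeneous system directly---you should make that homogenization step explicit to match the lemma's hypotheses, but the substance is identical.
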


\begin{lemma}
\label{1lem:approximate}
Given $e \in \Re_+$, let a dataset $(x^k, p^k)_{k=1}^k$ satisfy $e$-PSAROEU with respect to  $\mu^*$. Then for all positive numbers $\ol{\ep}$, there exist a positive real numbers $e' \in [e,e+\ol{\ep}]$, $\mu'_s \in [\mu^*_s-\ol{\ep}, \mu^*_s+\ol{\ep}]$, and $q^{k}_{s} \in [p^k_s-\ol{\ep}, p^k_s]$ for all $s\in S$ and $k \in K$ such that $\log q^k_s \in \Q$ for all $s\in S$ and $k \in K$, $\log(\mu'_s)\in \Q$ for all $s \in S$, and $\log (1+e')\in \Q$, $\mu' \in \Delta_{++}(S)$, and the dataset $(x^k, q^k)_{k=1}^k$ satisfy $e'$-PSAROEU with respect to  $\mu'$. 
\end{lemma}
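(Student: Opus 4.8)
The plan is to linearize everything and reduce the claim to approximating the vector of log risk-neutral prices by a rational vector while slightly enlarging $e$. Writing $L=\log(1+e)$ and taking logarithms, $e$-PSAROEU with respect to $\mu^*$ says exactly that for every test sequence $\sigma=(x^{k_i}_{s_i},x^{k'_i}_{s'_i})_{i=1}^n$,
\[
\sum_{i=1}^n\big(\log\rho^{k_i}_{s_i}-\log\rho^{k'_i}_{s'_i}\big)\le m(\sigma)\,L .
\]
The decisive structural point is that whether a sequence is a test sequence, and the value of $m(\sigma)$, depend only on the numbers $x^k_s$, which are not perturbed. So the whole family of inequalities is indexed by a \emph{fixed} set of sequences with \emph{fixed} right-hand-side slopes $m(\sigma)$; only the numbers $\log\rho^k_s=\log p^k_s-\log\mu^*_s$ entering the left side change. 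I would therefore look for rationals $t^k_s$ close to $\log\rho^k_s$ and a rational $L'=\log(1+e')>L$ for which the same family holds with $t$ and $L'$ in place of $\log\rho$ and $L$, and then realize $t^k_s$ as $\log q^k_s-\log\mu'_s$ by perturbing prices downward and beliefs slightly. As in Lemma~\ref{1lem:smoothisok}, only the ratios $\mu'_s/\mu'_t$ (equivalently the risk-neutral prices) enter the axiom, so the simplex normalization $\mu'\in\Delta_{++}(S)$ is innocuous and it suffices to make the relevant log-\emph{differences} rational; I phrase everything through the risk-neutral prices for this reason.

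The heart of the argument is a uniform bound showing that enlarging $e$ creates enough slack to absorb one fixed, small perturbation of $\log\rho$ simultaneously across \emph{all} test sequences, including arbitrarily long ones. Two facts do this, and the second is the step I expect to be the main obstacle. First, in any nonempty test sequence the largest value $v$ occurring cannot appear as a right coordinate (its partner would have to be strictly larger), so it appears only on the left; hence $d(\sigma,k,s)>0$ for $(k,s)$ with $x^k_s=v$, and $m(\sigma)\ge 1$. Thus enlarging $e$ strictly relaxes every nontrivial constraint, by $m(\sigma)(L'-L)$. Second, the length $n$ is controlled by $m(\sigma)$: letting $M$ be the number of distinct values among the $x^k_s$ (so $M\le |K|\,|S|$, a constant of the dataset) and writing $D(v)=d(\sigma,k,s)$ for $(k,s)$ with $x^k_s=v$, I claim $n\le (M-1)\,m(\sigma)$. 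This follows from a rank-telescoping computation: ranking the distinct values $v_1>\cdots>v_M$ by $\mathrm{rank}(v_j)=j$, each pair is strictly decreasing so $\mathrm{rank}(x^{k'_i}_{s'_i})-\mathrm{rank}(x^{k_i}_{s_i})\ge 1$, whence
\[
n\le\sum_{i=1}^n\big(\mathrm{rank}(x^{k'_i}_{s'_i})-\mathrm{rank}(x^{k_i}_{s_i})\big)=-\sum_{v}\mathrm{rank}(v)\,D(v)\le (M-1)\sum_{v:\,D(v)>0}D(v)\le (M-1)\,m(\sigma),
\]
using that ranks lie in $[1,M]$, that $\sum_{D(v)<0}|D(v)|=\sum_{D(v)>0}D(v)$, and that $\sum_{v:D(v)>0}D(v)\le m(\sigma)$.

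Granting this bound, the assembly is routine. By density choose $e'\in(e,e+\ol{\ep}]$ with $L'=\log(1+e')\in\Q$, and set $\gamma=L'-L>0$ (take it small, i.e.\ $e'$ near $e$). Pick rationals $t^k_s$ with $t^k_s\le\log\rho^k_s$ and $\|\Delta\|_\infty\le\gamma/\big(2(M-1)\big)$, where $\Delta^k_s=t^k_s-\log\rho^k_s$. Then for every test sequence, since there are $n$ terms of each sign,
\[
\sum_{i=1}^n\big(t^{k_i}_{s_i}-t^{k'_i}_{s'_i}\big)=\sum_{i=1}^n\big(\log\rho^{k_i}_{s_i}-\log\rho^{k'_i}_{s'_i}\big)+\sum_{i=1}^n\big(\Delta^{k_i}_{s_i}-\Delta^{k'_i}_{s'_i}\big)\le m(\sigma)L+2n\|\Delta\|_\infty\le m(\sigma)L',
\]
so the perturbed data satisfy $e'$-PSAROEU with respect to $\mu'$, where $\tilde\rho^k_s=e^{t^k_s}=q^k_s/\mu'_s$. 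Finally I would choose $\mu'$ close to $\mu^*$ with rational log-ratios and $\|\log\mu'-\log\mu^*\|_\infty\le\gamma/\big(2(M-1)\big)$, and define $\log q^k_s=t^k_s+\log\mu'_s$; the bound on $\Delta$ together with $t^k_s\le\log\rho^k_s$ gives $q^k_s\le p^k_s$, while taking $\gamma$ small keeps $q^k_s\ge p^k_s-\ol{\ep}$, $\mu'_s\in[\mu^*_s-\ol{\ep},\mu^*_s+\ol{\ep}]$, and $e'\le e+\ol{\ep}$. The only point needing care beyond the combinatorial bound is reconciling the normalization of $\mu'$ with the rationality requirement, which is exactly why the construction is run through the risk-neutral prices, where only ratios matter.
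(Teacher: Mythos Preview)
Your proposal is correct and takes a genuinely different route from the paper. The paper never proves (or needs) your combinatorial bound $n\le(M-1)\,m(\sigma)$. Instead, it perturbs the risk-neutral prices \emph{monotonically in the rank of $x^k_s$}: enumerating the distinct values $y_1<\cdots<y_N$, it chooses the multiplicative perturbations $\ep^k_s(n)$ so that larger $x^k_s$ receive strictly smaller factors. Since every pair in a test sequence has $x^{k_i}_{s_i}>x^{k'_i}_{s'_i}$, this guarantees $\ep^{k_i}_{s_i}/\ep^{k'_i}_{s'_i}<1$, so each individual term $\log(\hat\rho^{k_i}_{s_i}/\hat\rho^{k'_i}_{s'_i})$ is \emph{decreased} relative to the original. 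The left-hand side of every PSAROEU inequality therefore goes down termwise, and the axiom is preserved without any appeal to the length of the sequence; $e'\ge e$ is chosen only afterward to rationalize $\log(1+e')$.

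Your argument instead perturbs $\log\rho$ arbitrarily within a small ball and compensates using the slack $m(\sigma)(L'-L)$; to make this uniform over all (arbitrarily long) test sequences you need $n=O(m(\sigma))$, which your rank-telescoping establishes cleanly. What the paper's approach buys is that it sidesteps this counting entirely and needs no quantitative relationship between $n$ and $m(\sigma)$. What your approach buys is a transferable lemma: the bound $n\le(M-1)\,m(\sigma)$ is a structural fact about test sequences that could be reused elsewhere, and your scheme would work for perturbations that are not order-preserving. The residual issue you flag---finding $\mu'\in\Delta_{++}(S)$ with all $\log\mu'_s\in\Q$ near $\mu^*$---is handled identically (i.e., asserted) in the paper's proof, so it does not separate the two arguments.
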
 

\begin{lemma}
\label{1lem:anyprice} 
Given $e \in \Re_+$, let a dataset $(x^k, p^k)_{k=1}^k$ satisfy $e$-PSAROEU with respect to $\mu$. Then there are numbers $v^k_s$, $\la^k$, $\mu^k_s$ for $s\in S$ and $k\in K$ satisfying~\eqref{1thesystem} and~\eqref{1constraint} in Lemma~\ref{1lem:smoothisok}. 
\end{lemma}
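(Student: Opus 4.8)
The plan is to reduce, via Lemma~\ref{1lem:smoothisok}, to exhibiting a solution of the Afriat system~\eqref{1thesystem}--\eqref{1constraint}, and then to attack that system with the real theorem of the alternative (Lemma~\ref{lem:motzkin1}). Taking logarithms makes every constraint linear in the unknowns $\log v^k_s$, $\log\la^k$, $\log\mu^k_s$: the budget equations become $\log\mu^k_s+\log v^k_s-\log\la^k=\log p^k_s$, the monotonicity requirement in~\eqref{1thesystem} becomes $\log v^k_s-\log v^{k'}_{s'}\le 0$ whenever $x^k_s>x^{k'}_{s'}$, and the bound~\eqref{1constraint} becomes $-\log(1+e)\le(\log\mu^k_s-\log\mu^k_t)-(\log\mu^*_s-\log\mu^*_t)\le\log(1+e)$. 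The crucial observation is that the coefficients multiplying the unknowns are all integers; only the right-hand sides carry the (possibly irrational) data $\log p^k_s$, $\log\mu^*_s$, and $\log(1+e)$. Applying Lemma~\ref{lem:motzkin1}, either the system admits a solution---which is exactly statement~\ref{1lem:smoothisok:2} and finishes the proof---or the associated alternative (dual) system admits a solution certifying infeasibility with a strictly negative value.

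So the whole burden is to rule out the dual under the hypothesis that the data satisfies $e$-PSAROEU. Here I would exploit the fact that the dual-feasibility conditions are the transpose of the integer coefficient matrix, and therefore do not involve the data at all: the cone of dual-feasible weights is the same for the given dataset as for any dataset obtained by changing only the prices, the objective probability, and $e$. A dual solution assigns a nonnegative weight to each monotonicity pair $(x^k_s,x^{k'}_{s'})$ and to each bound inequality, together with free multipliers on the budget equations; feasibility forces the balancing condition~\ref{1it:sarseuthree}, so the weighted pairs assemble into a genuine test sequence $\sigma$, the bound weights contribute precisely the exponent $m(\sigma)$ of Definition~\ref{def:sk_remaining}, and a strictly negative dual value translates into $\prod_i \rho^{k_i}_{s_i}/\rho^{k'_i}_{s'_i}>(1+e)^{m(\sigma)}$. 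If the weights were rational this would already contradict $e$-PSAROEU; the obstacle is that Lemma~\ref{lem:motzkin1} only returns real weights, whereas a test sequence needs integer multiplicities.

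To bridge this rationality gap I would transfer the fixed real dual solution to a nearby dataset with rational data, using the two preceding lemmas. Let $(y,w)$ be the real dual solution, with value $-\delta<0$. Since the dual objective depends continuously on $(\log p,\log\mu^*,\log(1+e))$ while the dual-feasibility cone is unchanged when only these quantities move, I can choose $\ol{\ep}$ small enough that the same $(y,w)$ remains dual-feasible and retains a strictly negative value for any data within $\ol{\ep}$ of the original. Lemma~\ref{1lem:approximate} then supplies a dataset $(x^k,q^k)_{k=1}^K$ with $\log q^k_s,\log\mu'_s,\log(1+e')\in\Q$ that moreover satisfies $e'$-PSAROEU. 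For this rational dataset the persisting dual solution certifies, through Lemma~\ref{lem:motzkin1}, that the corresponding Afriat system has no solution. But Lemma~\ref{1lem:rationalprice} asserts exactly the opposite: a rational dataset satisfying $e'$-PSAROEU does admit a solution to~\eqref{1thesystem}--\eqref{1constraint}. This contradiction rules out the dual, so the original system is solvable and the lemma follows.

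I expect the main obstacle to be the careful handling of this rationality gap---in particular, verifying that the dual-feasibility cone is genuinely data-independent (so one real dual solution can be reused verbatim for the rational approximant) and that the alternative in Lemma~\ref{lem:motzkin1} can be taken with a strict sign, so the strictly negative value survives the small perturbation of the right-hand sides. The bookkeeping that turns integer dual weights into a test sequence with the correct exponent $m(\sigma)$---matching numerators and denominators through $d(\sigma,k,s)$ and using condition~\ref{1it:sarseuthree} to cancel the multipliers $\la^k$---mirrors the necessity argument of Lemma~\ref{1lem:necessity} and is where the precise form of $e$-PSAROEU enters; it is routine but must be aligned exactly with the weights the theorem of the alternative produces.
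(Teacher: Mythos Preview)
Your proposal is correct and follows essentially the same route as the paper: assume no solution, obtain a real dual certificate via Lemma~\ref{lem:motzkin1}, observe that the dual-feasibility constraints involve only the integer coefficient blocks (so the same $(\theta,\eta)$ remains feasible for any perturbation of $\log p$, $\log\mu^*$, $\log(1+e)$), pass to a nearby rational-data approximant via Lemma~\ref{1lem:approximate}, adjust the strict multiplier $\pi$ by continuity, and reach a contradiction with Lemma~\ref{1lem:rationalprice}. The paper organizes this via the explicit block decomposition $A=[A_1\mid A_2\mid A_3\mid A_4]$ (and likewise for $B$), noting that only the last columns $A_4,B_4$ carry the data; your ``dual-feasibility cone is data-independent'' is exactly this observation. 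Your digression into reading the dual weights as a test sequence with exponent $m(\sigma)$ is the content of Lemma~\ref{1lem:rationalprice}'s proof rather than of Lemma~\ref{1lem:anyprice} itself, but you correctly identify that the final contradiction is with the \emph{statement} of Lemma~\ref{1lem:rationalprice}, which is all that is needed here.
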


\subsubsection{Proof of Lemma~\ref{1lem:rationalprice}}

The proof is similar to the proof of the main result in \cite{echenique2015savage}, which corresponds to the case $e=0$. By log-linearizing the equation in system~\eqref{1thesystem} and the inequality~\eqref{1constraint} in Lemma~\ref{1lem:smoothisok}, we have for all $s\in S$ and $k\in K$, such that
\begin{equation}\label{1eq:logs}
\log \mu^k_s+ \log v^k_s = \log \la^k +\log p^k_s,
\end{equation}
\begin{equation}\label{1eq:concave}
x^k_s > x^{k'}_{s'} \implies \log v^k_s \leq \log v^{k'}_{s'},
\end{equation}
and for all $k \in K$ and $s,t \in S$, 
\begin{equation}\label{1eq:bounds}
-\log (1+e)+\log \mu^*_s-\log \mu^*_t \le \log \mu^k_s-\log \mu^k_t  \le \log (1+e)+\log \mu^*_s-\log \mu^*_t.
\end{equation}

We are going to write the system of inequalities~\eqref{1eq:logs}-\eqref{1eq:bounds} in matrix form, following \cite{echenique2015savage} with some modifications. 

Let $A$ be a matrix with $K \times S$ rows and $2 (K \times S) + K + 1$ columns, defined  as follows: We have one row for every pair $(k, s)$, two columns for every pair $(k, s)$, one columns for each $k$, and one last column. In the row corresponding to $(k, s)$, the matrix has zeroes everywhere with the following exceptions: it has $1$'s in columns for $(k, s)$; it has a~$-1$ in the column for $k$; it has $- \log p^k_s$ in the very last column. 
Matrix $A$ looks as follows: 
\begin{align*}
\hspace{-10em}
\resizebox{1.2\textwidth}{!}{%
\kbordermatrix{
&\cdots&v^k_s&v^k_t&v^l_s&v^l_t&\cdots&&\cdots&\mu^k_s&\mu^k_t&\mu^l_s&\mu^l_t&\cdots&&\cdots&\la^k&\la^l&\cdots&&p \\
&&\vdots&\vdots&\vdots&\vdots&&\vrule&&\vdots&\vdots&\vdots&\vdots&&\vrule&&\vdots&\vdots&&\vrule&\vdots \\
(k,s)&\cdots&1&0&0&0&\cdots&\vrule&\cdots&1&0&0&0&\cdots&\vrule&\cdots&-1&0&\cdots&\vrule&-\log  p^k_s \\
(k,t)&\cdots&0&1&0&0&\cdots&\vrule&\cdots&0&1&0&0&\cdots&\vrule&\cdots&-1&0&\cdots&\vrule&-\log  p^k_t \\
(l,s)&\cdots&0&0&1&0&\cdots&\vrule&\cdots&0&0&1&0&\cdots&\vrule&\cdots&0&-1&\cdots&\vrule&-\log  p^l_s \\
(l,t)&\cdots&0&0&0&1&\cdots&\vrule&\cdots&0&0&0&1&\cdots&\vrule&\cdots&0&-1&\cdots&\vrule&-\log  p^l_t \\
&&\vdots&\vdots&\vdots&\vdots&&\vrule&&\vdots&\vdots&\vdots&\vdots&&\vrule&&\vdots&\vdots&&\vrule&\vdots 
}.}
\end{align*}

Next, we write the system of inequalities~\eqref{1eq:concave} and~\eqref{1eq:bounds} in a matrix form. There is one row in matrix $B$ for each pair $(k, s)$ and $(k', s')$ for which $x_s^k > x_{s'}^{k'}$. In the row corresponding to $x_s^k > x_{s'}^{k'}$, we have zeroes everywhere with the exception of a~$-1$ in the column for $(k, s)$ and a~$1$ in the column for $(k', s')$. 
Matrix $B$ has additional rows, that capture the system of inequalities~\eqref{1eq:bounds}, as follows: 
\begin{align*}
\hspace{-7.5em}
\resizebox{1.15\textwidth}{!}{%
\kbordermatrix{
&\cdots&v^k_s&v^k_t&v^l_s&v^l_t&\cdots&&\cdots&\mu^k_s&\mu^k_t&\mu^l_s&\mu^l_t&\cdots&&\cdots&\la^k&\la^l&\cdots&&p \\
&&\vdots&\vdots&\vdots&\vdots&&\vrule&&\vdots&\vdots&\vdots&\vdots&&\vrule&&\vdots&\vdots&&\vrule&\vdots \\
&\cdots&0&0&0&0&\cdots&\vrule&\cdots&1&-1&0&0&\cdots&\vrule&\cdots&0&0&\cdots&\vrule&\log(1+e)-\log \mu^*_s+\log \mu^*_t\\
&\cdots&0&0&0&0&\cdots&\vrule&\cdots&-1&1&0&0&\cdots&\vrule&\cdots&0&0&\cdots&\vrule&\log(1+e)+\log \mu^*_s-\log \mu^*_t \\
&\cdots&0&0&0&0&\cdots&\vrule&\cdots&0&0&-1&1&\cdots&\vrule&\cdots&0&0&\cdots&\vrule&\log(1+e)+\log \mu^*_s-\log \mu^*_t \\
&\cdots&0&0&0&0&\cdots&\vrule&\cdots&0&0&1&-1&\cdots&\vrule&\cdots&0&0&\cdots&\vrule&\log(1+e)-\log \mu^*_s+\log \mu^*_t \\
&&\vdots&\vdots&\vdots&\vdots&&\vrule&&\vdots&\vdots&\vdots&\vdots&&\vrule&&\vdots&\vdots&&\vrule&\vdots 
}.}
\end{align*}

Finally, we have a matrix $E$ which has a single row and has zeroes everywhere except for~$1$ in the last column. 

To sum up, there is a solution to the system~\eqref{1eq:logs}-\eqref{1eq:bounds} if and only if there is a vector $u \in \Re^{2(K \times S)+K+1}$ that solves the system of equations and linear inequalities
\begin{equation*}
S1:\; 
\begin{cases}
A \cdot u = 0, \\ 
B \cdot u \geq 0, \\
E \cdot u > 0.
\end{cases}
\end{equation*}

The entries of $A$, $B$, and $E$ are either $0$, $1$ or $-1$, with the exception of the last column of $A$ and $B$. Under the hypotheses of the lemma we are proving, the last column consists of rational numbers. By Motzkin's theorem, then, there is such a solution $u$ to $S1$ if and only if there is no rational vector $(\theta,\eta,\pi)$ that solves the system of equations and linear inequalities 
\begin{equation*}
S2:\; 
\begin{cases}
\theta \cdot A + \eta \cdot B +\pi \cdot E = 0, \\ 
\eta \geq 0, \\
\pi > 0.
\end{cases}
\end{equation*}

In the following, we shall prove that the non-existence of a solution
$u$ implies that the dataset must violate $e$-PSAROEU. Suppose then that there is no solution $u$ and let $(\theta,\eta,\pi)$ be a rational vector as above, solving system $S2$. 

The outline of the rest of the proof is similar to the proof of  \cite{echenique2015savage}. Since $(\theta,\eta,\pi)$ are rational vectors, by multiplying a large enough integer, we can make the vectors integers.  Then we transform the matrices $A$ and $B$ using $\theta$ and $\eta$. (i) If $\theta_r>0$, then creat $\theta_r$ copies of the $r$th row; (ii) omitting row $r$ when $\theta_r=0$; and (iii) if $\theta_r< 0$, then $\theta_r$ copies of the $r$th row multiplied by $-1$. 

Similarly, we create a new matrix by including the same columns as $B$ and  $\eta_r$ copies of each row (and thus omitting row $r$ when $\eta_r=0$; recall that $\eta_r\geq 0$ for all $r$). 

By using the transformed matrices and the fact that $\theta \cdot A + \eta \cdot B +\pi \cdot E = 0$ and $\eta \geq 0$, we can prove the following claims:
\medskip

\begin{claim} 
\label{1claim11} 
There exists a sequence $(x^{k_i}_{s_i},x^{k'_i}_{s'_i})_{i=1}^{n^*}\equiv \sigma$ of pairs that satisfies conditions~\ref{1it:sarseuone} and~\ref{1it:sarseuthree} in Definition~\ref{def:testsequence}. 
\end{claim}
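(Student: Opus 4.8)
The plan is to read the test sequence directly off the integer dual vector $(\theta,\eta,\pi)$ solving $S2$, using the row/column structure of $A$ and $B$. After scaling by a large positive integer we may assume $\theta,\eta,\pi$ are integers, with $\eta\ge 0$ and $\pi>0$. Recall that $B$ has two kinds of rows: a \emph{concavity row} for every pair $(k,s),(k',s')$ with $x^k_s>x^{k'}_{s'}$ (carrying $-1$ in the $v^k_s$-column and $+1$ in the $v^{k'}_{s'}$-column, from~\eqref{1eq:concave}), and the \emph{bound rows} encoding~\eqref{1eq:bounds} (which touch only $\mu$-columns and the last column). I would let each concavity row, taken with its multiplicity $\eta_r\ge 0$, contribute $\eta_r$ copies of the pair $(x^k_s,x^{k'}_{s'})$ to the sequence $\sigma$. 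Because such a row exists only when $x^k_s>x^{k'}_{s'}$, every pair produced this way automatically satisfies condition~\ref{1it:sarseuone}, so the only work is to verify condition~\ref{1it:sarseuthree} and that $\sigma$ is nonempty.

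To obtain condition~\ref{1it:sarseuthree} I would exploit the equation $\theta\cdot A+\eta\cdot B+\pi\cdot E=0$ column by column. For each $(k,s)$ write $L(k,s)$ for the number of times $(k,s)$ occurs as the left (larger) coordinate among the $\eta$-weighted concavity pairs, and $R(k,s)$ for the number of times it occurs as the right (smaller) coordinate. The $v^k_s$-column of $A$ carries the single entry $\theta_{(k,s)}$, while the concavity rows of $B$ contribute $-\eta$ whenever $(k,s)$ is the larger coordinate and $+\eta$ whenever it is the smaller; the bound rows and $E$ do not touch $v$-columns. Hence the $v^k_s$-column equation reads $\theta_{(k,s)}-L(k,s)+R(k,s)=0$, i.e. $\theta_{(k,s)}=L(k,s)-R(k,s)$. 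Next, the $\lambda^k$-column receives $-1$ from each row $(k,s)$ of $A$ (over all $s$) and nothing from $B$ or $E$, so its column equation is $\sum_{s\in S}\theta_{(k,s)}=0$. Combining the two gives $\sum_{s}L(k,s)=\sum_{s}R(k,s)$ for every $k$; since $\sum_s L(k,s)$ is exactly the number of times $k$ appears as $k_i$ on the left and $\sum_s R(k,s)$ the number of times it appears as $k'_i$ on the right, this is precisely condition~\ref{1it:sarseuthree}.

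Finally I would check that $\sigma$ is nonempty, so $n^*\ge 1$. If every $\eta_r$ vanished, the $v$-column equations would force $\theta_{(k,s)}=0$ for all $(k,s)$, hence $\theta=0$, and $\theta\cdot A+\eta\cdot B+\pi\cdot E=0$ would collapse to $\pi\cdot E=0$; since $E\neq 0$ and $\pi>0$ this is impossible. Thus at least one concavity row is used with positive multiplicity, and $\sigma$ is a genuine test sequence of some length $n^*\ge 1$ satisfying~\ref{1it:sarseuone} and~\ref{1it:sarseuthree}.

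The step I expect to be most delicate is the bookkeeping of signs in the column equations: making sure the $-1$/$+1$ pattern of the concavity rows is matched to ``larger coordinate on the left,'' so the constructed pairs inherit the orientation $x^{k_i}_{s_i}>x^{k'_i}_{s'_i}$ demanded by~\ref{1it:sarseuone}, and confirming that the bound rows and the last column genuinely play no role here (they are instead what will drive the bound $(1+e)^{m(\sigma)}$ and the product of risk-neutral prices in the subsequent claims).
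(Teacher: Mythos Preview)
Your argument for conditions~\ref{1it:sarseuone} and~\ref{1it:sarseuthree} is correct and follows essentially the paper's route: build $\sigma$ from the concavity rows of $B$ with multiplicity $\eta_r$, then read off the balance condition from column equations of $\theta\cdot A+\eta\cdot B+\pi\cdot E=0$. The paper verifies~\ref{1it:sarseuthree} via the $\mu^k_s$-columns (obtaining the identity $n(x^k_s)-n'(x^k_s)=\sum_{t\neq s}[\eta(k,s,t)-\eta(k,t,s)]$ and then summing over $s$), whereas you use the $v^k_s$- and $\lambda^k$-columns; both routes are valid and yours is slightly more direct, since the $\lambda^k$-column gives $\sum_s\theta_{(k,s)}=0$ in one step.

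There is, however, a genuine gap in your nonemptiness check. You argue that if \emph{every} $\eta_r$ vanished then $\theta=0$ and $\pi\cdot E=0$, a contradiction; but this only yields $\eta\neq 0$, not that some \emph{concavity} row has positive weight. Since $B$ also contains the bound rows encoding~\eqref{1eq:bounds}, it is conceivable a priori that all concavity weights vanish while some bound weights are positive, leaving $\sigma$ empty. To close the gap, suppose all concavity $\eta_r=0$. Then the $v$-columns still force $\theta=0$, and the $\mu^k_s$-columns now give $\sum_{t\neq s}[\eta(k,t,s)-\eta(k,s,t)]=0$ for every $(k,s)$. Using this, the $\log\mu^*$-terms in the last-column equation cancel, and that equation reduces to
\[
\log(1+e)\sum_{k}\sum_{s}\sum_{t\neq s}\eta(k,s,t)+\pi=0,
\]
which is impossible since both summands are nonnegative and $\pi>0$. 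This is exactly where the bound rows and the last column, which you correctly set aside for the later claims, must in fact be invoked once.
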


\begin{proof} 
We can construct a sequence $(x^{k_i}_{s_i},x^{k'_i}_{s'_i})_{i=1}^{n^*}$ in a similar way to the proof of Lemma~11 of \cite{echenique2015savage}. By construction, the sequence satisfies condition~\ref{1it:sarseuone} that $x^{k_i}_{s_i} >  x^{k'_i}_{s'_i}$ for all $i$.

In the following, we show that the sequence satisfies condition~\ref{1it:sarseuthree} that  each $k$ appears as $k_i$  the same  number of times it appears as $k'_i$. Let $n(x^k_s)\equiv \#\{i \mid x^k_s=x^{k_i}_{s_i}\}$ and $n'(x^k_s)\equiv  \#\{i \mid x^k_s= x^{k'_i}_{s'_i}\}$.
It suffices to show that for each $k\in K$, $\sum_{s \in S} \left[ n(x^k_s)-n'(x^k_s) \right]=0$.

Recall our construction of the matrix $B$. We have a constraint for each triple $(k,s,t)$ with $s<t$. 
Denote the weight on the rows capturing $\frac{\mu^k_s/\mu^k_t}{\mu^*_s/\mu^*_t}\le 1+e$ by $\eta(k,s,t)$ and $1+e \le \frac{\mu^k_s/\mu^k_t}{\mu^*_s/\mu^*_t}$ by $\eta(k,t,s)$. 

For each  $k\in K$ and $s \in S$, in the column corresponding to $\mu^k_s$ in matrix $A$, remember that we have $1$ if we have $x^k_s= x^{k_i}_{s_i}$ for some $i$ and $-1$ if we have $x^k_s= x^{k'_i}_{s'_i}$ for some $i$. This is because a row in $A$ must have $1$ ($-1$) in the column corresponding to $v^k_s$ if and only if it has $1$ ($-1$, respectively) in the column corresponding to $\mu^k_s$. By summing over the column corresponding to $\mu^k_s$, we have $n(x^k_s)-n'(x^k_s)$.  

Now we consider matrix $B$. In the column corresponding to $\mu^k_s$,  we have $1$ in the row multiplied by $\eta(k,t,s)$ and $-1$ in the row multiplied by $\eta(k,s,t)$.  By summing over the column corresponding to $\mu^k_s$, we also have $-\sum_{t\neq s} \eta(k,s,t) + \sum_{t\neq s}\eta(k,t,s)$. 

For each  $k\in K$ and $s \in S$, the column corresponding to $\mu^k_s$ of matrices $A$ and $B$ must sum up to zero; so we have
\begin{equation}\label{eq:123}
n(x^k_s)-n'(x^k_s)+\sum_{t\neq s} \left[ -\eta(k,s,t)+ \eta(k,t,s) \right]=0.
\end{equation}
Hence for each $k\in K$ for each $k\in K$ $\sum_{s \in S} \left[ n(x^k_s)-n'(x^k_s) \right]=0$. 
\end{proof}

\begin{claim} 
\label{1claim13}
$\prod_{i=1}^{n^*} ( \rho^{k_i}_{s_i} / \rho^{k'_i}_{s'_i} ) >(1+e)^{m(\sigma^*)}$.
\end{claim}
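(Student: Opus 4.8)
The plan is to read the strict inequality straight off the dual certificate $(\theta,\eta,\pi)$ that solves $S2$, using the column-by-column consequences of the identity $\theta\cdot A+\eta\cdot B+\pi\cdot E=0$. In logarithms the claim is equivalent to
\[
\sum_{i=1}^{n^*}\bigl(\log\rho^{k_i}_{s_i}-\log\rho^{k'_i}_{s'_i}\bigr) > m(\sigma^*)\log(1+e),
\]
and since $\rho^k_s=p^k_s/\mu^*_s$ the left side decomposes into a price part and a $\mu^*$ part. The bookkeeping fact I would carry over from the proof of Claim~\ref{1claim11} is that the integer weight $\theta_{(k,s)}$ equals the net left-minus-right count $n(x^k_s)-n'(x^k_s)=d(\sigma^*,k,s)$; this follows from the $v^k_s$-column equation (a $+1$ from $A$ balanced against the $\mp1$'s of the concavity rows of $B$). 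Consequently $\sum_{(k,s)}\theta_{(k,s)}\log p^k_s=\sum_{i}\bigl(\log p^{k_i}_{s_i}-\log p^{k'_i}_{s'_i}\bigr)$, and the same accounting applies to $\log\mu^*$.

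First I would extract the last (``level'') column of $\theta\cdot A+\eta\cdot B+\pi\cdot E=0$. In $A$ this column carries $-\log p^k_s$ in row $(k,s)$; the concavity rows of $B$ contribute $0$; the belief row weighted by $\eta(k,s,t)$ carries $\log(1+e)+\log\mu^*_s-\log\mu^*_t$; and $E$ contributes $\pi$. This gives
\[
-\sum_{(k,s)}\theta_{(k,s)}\log p^k_s+\sum_{k}\sum_{s\neq t}\eta(k,s,t)\bigl(\log(1+e)+\log\mu^*_s-\log\mu^*_t\bigr)+\pi=0.
\]
By the identity for $\theta_{(k,s)}$, the first sum is exactly $\sum_i\bigl(\log p^{k_i}_{s_i}-\log p^{k'_i}_{s'_i}\bigr)$.

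The crux is to show that the $\mu^*$-terms above reproduce $\sum_i\bigl(\log\mu^*_{s_i}-\log\mu^*_{s'_i}\bigr)$, so that subtracting them recombines the price and belief pieces into risk-neutral prices. For this I would collect, for each fixed state $a$, the coefficient of $\log\mu^*_a$ in $\sum_k\sum_{s\neq t}\eta(k,s,t)(\log\mu^*_s-\log\mu^*_t)$, which is $\sum_k\sum_{t\neq a}\bigl[\eta(k,a,t)-\eta(k,t,a)\bigr]$; by the $\mu^k_a$-column equation of the identity (this is precisely \eqref{eq:123}) this equals $\sum_k\theta_{(k,a)}$, which is also the coefficient of $\log\mu^*_a$ in $\sum_i\bigl(\log\mu^*_{s_i}-\log\mu^*_{s'_i}\bigr)$. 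Hence the $\mu^*$-parts match term by term, and rearranging the level equation yields the clean identity
\[
\sum_{i=1}^{n^*}\bigl(\log\rho^{k_i}_{s_i}-\log\rho^{k'_i}_{s'_i}\bigr)=\pi+\log(1+e)\sum_{k}\sum_{s\neq t}\eta(k,s,t).
\]

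It remains to bound the $\eta$-sum below by $m(\sigma^*)$. Since each $\eta(k,t,s)\geq0$, the $\mu^k_s$-column equation $\theta_{(k,s)}=\sum_{t\neq s}\bigl[\eta(k,s,t)-\eta(k,t,s)\bigr]$ gives $\theta_{(k,s)}\leq\sum_{t\neq s}\eta(k,s,t)$; summing over the pairs with $\theta_{(k,s)}=d(\sigma^*,k,s)>0$ and discarding the remaining nonnegative $\eta$'s yields $\sum_k\sum_{s\neq t}\eta(k,s,t)\geq m(\sigma^*)$. Because $\log(1+e)\geq0$ and $\pi>0$, the displayed identity then gives the strict inequality $\sum_i\bigl(\log\rho^{k_i}_{s_i}-\log\rho^{k'_i}_{s'_i}\bigr)>m(\sigma^*)\log(1+e)$, and exponentiating proves the claim. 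I expect the main obstacle to be the coefficient-matching in the $\mu^*$-cancellation step, where the ordered-pair convention for $\eta(k,s,t)$ and the signs of the two belief rows must be tracked exactly; the concluding inequality is then a short consequence of $\pi>0$ and $\eta\geq0$.
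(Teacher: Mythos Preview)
Your proposal is correct and follows essentially the same route as the paper: both arguments read off the last-column equation of $\theta\cdot A+\eta\cdot B+\pi\cdot E=0$, combine it with the $\mu^k_s$-column identity \eqref{eq:123} to convert prices into risk-neutral prices, and then use $\eta\ge 0$ to bound $\sum_{k}\sum_{s\neq t}\eta(k,s,t)\ge m(\sigma^*)$ together with $\pi>0$ for strictness. Your presentation is in fact slightly cleaner than the paper's, since you first isolate the exact identity $\sum_i\log(\rho^{k_i}_{s_i}/\rho^{k'_i}_{s'_i})=\pi+\log(1+e)\sum\eta$ before applying the inequalities, whereas the paper runs the substitution and the bound inside a single chain.
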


\begin{proof}
By~\eqref{eq:123}, 
So for each $s \in S$
\begin{eqnarray*}
\sum_{k \in K} \sum_{s\in S}\sum_{t\neq s} \left[ \eta(k,s,t)-\eta(k,t,s) \right] \log \mu^*_s 
=\sum_{k \in K} \sum_{s\in S} \left[ n(x^k_s)-n'(x^k_s) \right] \log \mu^*_s
=\sum_{i=1}^{n^*} \log  \frac{\mu^*_{s_i}}{\mu^*_{s'_i}},
\end{eqnarray*}
where the last equality holds by the definition of $n$ and $n'$.  Moreover, since $d(\sigma^*, k,s)=n(x^k_s)-n'(x^k_s)= \sum_{t\neq s} \left[ \eta (k,s,t)-\eta(k,t,s) \right] \le  \sum_{t\neq s}\eta(k,s,t)$, we have
\begin{eqnarray*}
m(\sigma^*)\equiv \sum_{s \in S} \sum_{k\in K:d(\sigma^*,k,s)>0 }d(\sigma^*,k,s)=\sum_{s \in S} \sum_{k\in K} \min\{n(x^k_s)-n'(x^k_s),0\}\le \sum_{s \in S} \sum_{k\in K}  \sum_{t\neq s}\eta(k,s,t).
\end{eqnarray*}
By the equality and the inequality above and by the fact that the last column must sum up to zero and $E$ has one at the last column, we have 
\begin{equation*}
\begin{aligned}
0
&> \sum_{i=1}^{n^*} \log  \frac{p^{k'_i}_{s'_i}}{p^{k_i}_{s_i}} +\log (1+e) \sum_{k \in K} \sum_{s\in S}\sum_{t\neq s} \eta(k,s,t) +\sum_{k \in K} \sum_{s\in S}\sum_{t\neq s} (\eta(k,s,t)-\eta(k,t,s)) \log \mu^*_s \\
&= \sum_{i=1}^{n^*} \log  \frac{p^{k'_i}_{s'_i}}{p^{k_i}_{s_i}}-\sum_{i=1}^{n^*} \log  \frac{\mu^*_{s_i}}{\mu^*_{s'_i}}  +\log (1+e) \sum_{k \in K} \sum_{s\in S}\sum_{t\neq s} \eta(k,s,t)\\
&= \sum_{i=1}^{n^*} \log  \frac{\rho^{k'_i}_{s'_i}}{\rho^{k_i}_{s_i}} +\log (1+e) \sum_{k \in K} \sum_{s\in S}\sum_{t\neq s} \eta(k,s,t)\ge \sum_{i=1}^{n^*} \log  \frac{\rho^{k'_i}_{s'_i}}{\rho^{k_i}_{s_i}} +\log (1+e) m(\sigma^*). 
\end{aligned}
\end{equation*}
That is, $\sum_{i=1}^{n^*} \log (\rho^{k_i}_{s_i} / \rho^{k'_i}_{s'_i} ) > m(\sigma^*)\log (1+e)$. This is a contradiction.
\end{proof}

\subsubsection{Proof of Lemma~\ref{1lem:approximate}}

Let $\X= \{x^k_s \mid k \in K, s \in S\}$. 
Consider the set of sequences that satisfy conditions~\ref{1it:sarseuone} and~\ref{1it:sarseuthree} in Definition~\ref{def:testsequence}: 
\begin{equation*} 
\Sigma = 
\left\{ (x^{k_i}_{s_i},x^{k'_i}_{s'_i})_{i=1}^n \subset \X^2 \;\middle|\; 
\begin{array}{l}
(x^{k_i}_{s_i},x^{k'_i}_{s'_i})_{i=1}^n 
\text{ satisfies conditions~\ref{1it:sarseuone} and~\ref{1it:sarseuthree}} \\
\text{in Definition~\ref{def:testsequence} for some } n 
\end{array}
\right\} .
\end{equation*} 
For each sequence $\sa \in \Sigma$, we define a vector $t_\sa\in \N^{K^2S^2}$. For each pair $(x^{k_i}_{s_i},x^{k'_i}_{s'_i})$, we shall identify the pair with $((k_i,s_i),(k'_i,s'_i))$. Let $t_{\sa}((k,s),(k',s'))$ be the number of times that  the pair $(x^{k}_{s}, x^{k'}_{s'})$ appears in the sequence $\sa$. One can then describe the satisfaction of $e$-PSAROEU by means of the vectors $t_\sa$. Observe that $t$ depends only on $(x^k)_{k=1}^K$ in the dataset $(x^k,p^k)_{k=1}^K.$  It does not depend on prices. 

For each $((k,s),(k',s'))$ such that $x^{k}_{s}> x^{k'}_{s'}$, define $\delta((k,s),(k',s')) =\log (p^k_s/p^{k'}_{s'})$. And define $\delta((k,s),(k',s')) = 0$ when $x^{k}_{s}\leq x^{k'}_{s'}$. Then, $\delta$ is a $K^2S^2$-dimensional real-valued vector.
If $\sa = (x^{k_i}_{s_i},  x^{k'_i}_{s'_i})_{i=1}^n$, then 
\[\delta \cdot 
t_\sa = \sum_{((k,s),(k',s')) \in (KS)^2} 
\delta((k,s),(k',s')) 
t_{\sa}((k,s),(k',s')) 
= \log \left( \prod^n_{i=1}\frac{\rho^{k_i}_{s_i}}{\rho^{k'_i}_{s'_i}} \right).
\] 
So the dataset satisfies $e$-PSAROEU with respect to $\mu$ if and only if $\delta \cdot t_{\sigma} \leq m(\sigma)\log (1+e)$ for all $\sigma\in \Sigma$.   

Enumerate the elements in $\X$ in increasing order: $y_1< y_2< \dots <y_N$, and fix an arbitrary  $\ul \xi \in (0,1)$.  We shall construct by induction a sequence $\{(\ep^k_s(n))\}_{n=1}^N$, where  $\ep^k_s(n)$ is defined for all $(k,s)$ with $x^k_s=y_n$. 

By the denseness of the rational numbers, and the continuity of the exponential function, for each $(k,s)$ such that $x^k_s =y_1$, there exists a positive number $\ep^k_s(1)$ such that $\log (\rho^{k}_{s}\ep^k_s(1)) \in \Q$ and $\ul \xi < \ep^k_s(1)< 1$. Let $\ep(1)= \min\{\ep^k_s(1) \mid x^k_s = y_1\}$. 

In second place, for each $(k,s)$ such that $x^k_s =y_2$, there exists a positive $\ep^k_s(2)$ such that $\log (\rho^{k}_{s}\ep^k_s (2)) \in \Q$ and $\ul\xi< \ep^k_s(2) < \ep(1)$. Let $\ep(2)= \min\{\ep^k_s(2) \mid x^k_s = y_2\}$. 

In third place, and reasoning by induction, suppose that $\ep(n)$ has been defined and that $\ul\xi<\ep(n)$. For each $(k,s)$ such that  $x^k_s =y_{n+1}$, let $\ep^k_s(n+1)>0$ be such that $\log (\rho^{k}_{s}\ep^k_s(n+1)) \in \Q$, and $\ul\xi< \ep^k_s(n+1) < \ep(n)$. Let $\ep(n+1)=\min\{\ep^k_s(n+1) \mid x^k_s = y_n\}$. 

This defines the sequence $(\ep^k_s(n))$ by induction. Note that $\ep^k_s(n+1) / \ep(n) < 1$ for all $n$. Let $\bar\xi<1$ be such that $\ep^k_s(n+1) / \ep(n) < \bar\xi$. 

For each $k \in K$ and $s \in S$, let $\hat{\rho}^k_s=\rho^k_s\ep^k_s(n)$, where $n$ is such that  $x^k_s = y_n$. Choose $\mu' \in \Delta_{++}(S)$ such that  for all $s \in S$ $\log \mu'_s \in \Q$ and $\mu'_s \in [\bar\xi \mu_s,\mu_s/\bar\xi]$ for all $s \in S$. Such $\mu'$ exists by the denseness of the rational numbers. Now for each $k \in K$ and $s \in S$, define 
\begin{equation}
\label{price:q}
q^k_s = \frac{\hat{\rho}^k_s}{\mu'_s}.
\end{equation}
Then, $\log q^k_s = \log \hat{\rho}^k_s- \log \mu'_s \in \Q$.

We claim that  the dataset $(x^k,q^k)_{k=1}^K$ satisfies $e'$-PSAROEU with respect to $\mu'$. Let $\da^*$ be defined  from $(q^k)_{k=1}^K$ in the same manner as $\delta$ was defined from $(\rho^k)_{k=1}^K$. 

For each pair $((k,s),(k',s'))$ with $x^{k}_{s}> x^{k'}_{s'}$, if $n$ and $m$ are such that $x^{k}_{s}=y_n$ and $x^{k'}_{s'}=y_m$, then $n>m$. By definition of $\epsilon$, 
\[
\frac{\ep^{k}_{s}(n)}{\ep^{k'}_{s'}(m)}
<\frac{\ep^k_s(n)}{\ep(m)} <\bar \xi
< 1 . 
\] 
Hence,  
\[
\da^*((k,s),(k',s')) = 
\log \frac{\rho^k_s \ep^k_s(n)}{\rho^{k'}_{s'}\ep^{k'}_{s'}(m)} < 
\log \frac{\rho^k_s}{\rho^{k'}_{s'}} +\log \bar \xi 
<\log \frac{\rho^k_s}{\rho^{k'}_{s'}} = \delta((k,s), (k',s')).
\]
Now, we choose $e'$ such that $e' \ge e$ and $\log (1+e') \in \Q$. 

Thus, for all $\sigma \in \Sigma$, $\da^* \cdot t_{\sigma} \leq \delta \cdot t_{\sigma} \leq m(\sigma) \log (1+e) \le m(\sigma) \log (1+e')$ as $t_{\cdot}\geq 0$ and the dataset $(x^k,p^k)_{k=1}^K$ satisfies $e$-PSAROEU with respect to $\mu$. 

Thus the dataset $(x^k,q^k)_{k=1}^K$ satisfies $e'$-PSAROEU with respect to $\mu'$. Finally, note that $\ul \xi < \ep^k_s(n)<1$ for all $n$ and each $k \in K, s \in S$. So that by choosing $\ul\xi$ close enough to~$1$, we can take $\hat{\rho}$ to be as close to $\rho$ as desired. By the definition, we also can take $\mu'$ to be as close to $\mu$ as desired. Consequently, by~\eqref{price:q}, we can take $(q^k)_{k=1}^K$ to be as close to $(p^k)_{k=1}^K$ as desired. We also can take $e'$ to be as close to $e$ as desired.

\subsubsection{Proof of Lemma~\ref{1lem:anyprice}}

We use the following notational convention: For a matrix $D$ with $2(K \times S )+ K +1$ columns, write $D_1$ for the submatrix of $D$ corresponding to the first $K \times S$ columns; let $D_2$ be the submatrix corresponding to the following $K \times S$ columns; $D_3$ correspond to the next $K$ columns; and $D_4$ to the last column. Thus, $D=\left[D_1\, \vrule\,  D_2\,  \vrule\,  D_3 \, \vrule\,  D_4\,  \right]$.

Consider the system comprised by~\eqref{1eq:logs},~\eqref{1eq:concave}, and ~\eqref{1eq:bounds} in the proof of Lemma~\ref{1lem:rationalprice}. 
Let $A$, $B$, and $E$ be constructed from the dataset as in the proof of Lemma~\ref{1lem:rationalprice}. The difference with respect to  Lemma~\ref{1lem:rationalprice} is that now the entries of $A_4$ and $B_4$ may not be rational. Note that the entries of $E$, $B$, and $A_i$, $i=1,2,3$ are rational.

Suppose, towards a contradiction, that there is no solution to the system comprised by~\eqref{1eq:logs},~\eqref{1eq:concave}, and~\eqref{1eq:bounds}. Then, by the argument in the proof of Lemma~\ref{1lem:rationalprice} there is no solution to system $S1$. Lemma~\ref{lem:motzkin1} (in Appendix~\ref{appendix:toa}) with $\F=\Re$ implies that there is a real vector $(\theta,\eta,\pi)$ such that $\theta \cdot A + \eta \cdot B + \pi \cdot E =0$ and $\eta \geq 0, \pi>0$. Recall that $E_4=1$, so we obtain that $\theta\cdot A_4+ \eta\cdot B_4+ \pi =0$. 

Consider $(q^k)_{k=1}^K$, $\mu'$,  and $e'$ be such that the dataset $(x^k,q^k)_{k=1}^K$  satisfies $e'$-PSAROEU with respect to $\mu'$, and $\log q^k_s\in \Q$ for all $k$ and $s$, $\log \mu'_s \in \Q$ for all $s \in S$, and $\log (1+e') \in \Q$. (Such $(q^k)_{k=1}^K$, $\mu'$, and $e'$ exist by Lemma~\ref{1lem:approximate}.)
Construct matrices $A'$, $B'$, and $E'$ from this dataset in the same way as $A$, $B$, and $E$ is constructed in the proof of Lemma~\ref{1lem:rationalprice}. Note that only the prices, the objective probabilities, and the bounds are different. So $E'=E$ and $A'_i=A_i$ and $B'_i=B_i$ for $i=1,2,3$. Only $A'_4$ and $B'_4$ may be different from $A_4$ and $B_4$, respectively. 

By Lemma~\ref{1lem:approximate}, we can choose $q^k$, $\mu'$, and $e'$ such that $|(\theta\cdot A'_4+\eta \cdot B'_4) - (\theta\cdot A_4+\eta \cdot B_4)| < \pi/2$. We have shown that $\theta\cdot A_4 +\eta \cdot B_4= - \pi$, so the choice of $q^k$, $\mu'$, and $e'$ guarantees that $\theta\cdot A'_4 +\eta \cdot B'_4< 0$. Let $\pi'= - \theta\cdot A'_4-\eta \cdot B'_4 > 0$. 

Note that  $\theta\cdot A'_i + \eta \cdot B'_i + \pi' E_i = 0 $ for $i=1,2,3$, as $(\theta,\eta,\pi)$ solves system $S2$ for matrices $A$, $B$ and $E$, and $A'_i=A_i$, $B'_i=B_i$ and $E_i=0$ for $i=1,2,3$. 
Finally, $\theta\cdot A'_4 + \eta \cdot B'_4 + \pi' E_4 = \theta\cdot A'_4  +\eta \cdot B'_4 + \pi'=0$. We also have that $\eta\geq 0$ and $\pi'> 0$. 
Therefore $\theta$, $\eta$, and $\pi'$ constitute a solution to $S2$ for matrices $A'$, $B'$, and $E'$. 

Lemma~\ref{lem:motzkin1}  then implies that there is no solution to system $S1$ for matrices $A'$, $B'$, and $E'$. So there is no solution to the system comprised by~\eqref{1eq:logs},~\eqref{1eq:concave}, and~\eqref{1eq:bounds} in the proof of Lemma~\ref{1lem:rationalprice}. 
However, this contradicts Lemma~\ref{1lem:rationalprice} because the dataset $(x^k,q^k)$ satisfies $e'$-PSAROEU with $\mu'$, $\log (1+e') \in \Q$, $\log \mu'_s \in \Q$ for all $s\in S$, and $\log q^k_s \in \Q$ for all $k\in K$ and $s\in S$. 
\end{appendices}

% ----------
% References
% ----------
\renewcommand{\baselinestretch}{0}
% \small 
\bibliographystyle{ecta}
\bibliography{approximate_eu}

% ----------
% Appendix
% ----------
\clearpage 
\renewcommand{\baselinestretch}{1.2}
\setcounter{page}{1}
\appendix 
\setcounter{section}{1}

\numberwithin{equation}{section}
\counterwithin{figure}{section}
\counterwithin{table}{section}
\setcounter{lemma}{\getrefnumber{1lem:anyprice}}  % Lemma counter continues from the main manuscript
\setcounter{equation}{\getrefnumber{price:q}}  % Equation counter continues from the main manuscript

\begin{center}
{\LARGE \bf Online Appendix}
\end{center}

% ---------- 
% Omitted Proofs
% ---------- 
\section{Omitted Proofs}
\label{appendix:proof}

\subsection{Proof of Theorem~\ref{theorem:robustseu0}}

First, we prove a lemma which establishes Theorem~\ref{theorem:robustseu0} and proves useful for the sufficiency part of Theorem~\ref{theorem:robustseu}. This lemma provides ``Afriat inequalities'' for the problem at hand. 

\begin{lemma}\label{lem:smoothisok:seu}
Given $e \in \Re_+$, and let $(x^k,p^k)_{k=1}^K$ be a dataset. The following statements are equivalent. 
\begin{enumerate}[label=(\alph*),ref=(\alph*)]
\item\label{1lem:smoothisok:seu:1} $(x^k,p^k)_{k=1}^K$ is $e$-belief-perturbed SEU rational.
%\item  $(x^k,p^k)_{k=1}^K$ is SEU rational with a continuously differentiable,  strictly increasing and concave utility function.
\item\label{1lem:smoothisok:seu:2} There are strictly positive numbers $v^k_s$, $\la^k$, $\mu^k_s$, for $s\in S$ and $k\in K$, such that
\begin{equation} \label{thesystem:seu}
\mu^k_s v^k_s = \la^k p^k_s,\qu x^k_s > x^{k'}_{s'} \implies v^k_s \leq v^{k'}_{s'},
\end{equation}
and for each $k, l \in K$ and $s,t \in S$, 
\begin{equation}\label{constraint:seu}
\frac{\mu^k_s/\mu^k_t}{\mu^l_s/\mu^l_t}\le 1+e.
\end{equation}
\item\label{1lem:smoothisok:seu:3} $(x^k,p^k)_{k=1}^K$ is $e$-price-perturbed SEU rational.
\item\label{1lem:smoothisok:seu:4} There are strictly positive numbers $\hat{v}^k_s$, $\hat{\la}^k$, $\mu_s$, and $\ep^k_s$ for $s\in S$ and $k\in K$, such that
\begin{equation*} \label{thesystem':seu}
\mu_s \hat{v}^k_s = \hat{\la}^k \ep^k_s p^k_s,\qu x^k_s > x^{k'}_{s'} \implies \hat{v}^k_s \leq \hat{v}^{k'}_{s'},
\end{equation*}
and for all $k,l \in K$ and $s,t \in S$, 
\begin{equation*} \label{constraint':seu}
\frac{\ep^k_s/\ep^k_t}{\ep^l_s/\ep^l_t}\le 1+e.
\end{equation*}
\item\label{1lem:smoothisok:seu:5} $(x^k,p^k)_{k=1}^K$ is $e$-utility-perturbed SEU rational.
\item\label{1lem:smoothisok:seu:6} There are strictly positive numbers $\hat{v}^k_s$, $\hat{\la}^k$, $\mu_s$, and $\hat{\ep}^k_s$ for $s\in S$ and $k\in K$, such that
\begin{equation*} \label{1thesystem'':seu}
\mu_s \hat{\ep}^k_s  \hat{v}^k_s = \hat{\la}^k p^k_s,\qu x^k_s > x^{k'}_{s'} \implies \hat{v}^k_s \leq \hat{v}^{k'}_{s'},
\end{equation*}
and for all $k,l \in K$ and $s,t \in S$, 
\begin{equation*} \label{constraint'':seu}
\frac{\hat{\ep}^k_s/\hat{\ep}^k_t}{\hat{\ep}^l_s/\hat{\ep}^l_t}\le 1+e.
\end{equation*}
\end{enumerate}
\end{lemma}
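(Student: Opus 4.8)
The plan is to follow the proof of Lemma~\ref{1lem:smoothisok} verbatim in structure, since statements~\ref{1lem:smoothisok:seu:1}--\ref{1lem:smoothisok:seu:6} are precisely the SEU analogues of the six OEU statements, the only difference being that each bound now compares beliefs (or perturbations) across two observations rather than against a fixed objective probability $\mu^*$. As there, the three ``vertical'' equivalences \ref{1lem:smoothisok:seu:1}$\Leftrightarrow$\ref{1lem:smoothisok:seu:2}, \ref{1lem:smoothisok:seu:3}$\Leftrightarrow$\ref{1lem:smoothisok:seu:4}, and \ref{1lem:smoothisok:seu:5}$\Leftrightarrow$\ref{1lem:smoothisok:seu:6} are the standard passage between rationalizability by a concave, strictly increasing $u$ and solvability of the associated first-order (Afriat) inequalities, obtained exactly as in \cite{echenique2015savage}; the first-order conditions $\mu^k_s u'(x^k_s)=\la^k p^k_s$ (and their price- and utility-perturbed versions) generate the displayed systems. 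The equivalence \ref{1lem:smoothisok:seu:4}$\Leftrightarrow$\ref{1lem:smoothisok:seu:6} is immediate on putting $\ep^k_s=1/\hat{\ep}^k_s$: the defining equation becomes $\mu_s\hat{\ep}^k_s\hat{v}^k_s=\hat{\la}^k p^k_s$, and since $(\ep^k_s/\ep^k_t)/(\ep^l_s/\ep^l_t)=(\hat{\ep}^k_t/\hat{\ep}^k_s)/(\hat{\ep}^l_t/\hat{\ep}^l_s)$, the two bounds coincide after interchanging $s$ and $t$. It therefore remains only to prove \ref{1lem:smoothisok:seu:2}$\Leftrightarrow$\ref{1lem:smoothisok:seu:4}.

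For \ref{1lem:smoothisok:seu:4}$\Rightarrow$\ref{1lem:smoothisok:seu:2} I would absorb the price perturbation into a per-observation belief, using the single belief $\mu$ in place of the objective $\mu^*$ of the OEU proof. Set $v=\hat{v}$, $\mu^k_s=(\mu_s/\ep^k_s)/\bigl(\sum_{t\in S}\mu_t/\ep^k_t\bigr)$, and $\la^k=\hat{\la}^k/\bigl(\sum_{t\in S}\mu_t/\ep^k_t\bigr)$. Then each $\mu^k\in\Delta_{++}(S)$, the identity $\mu_s\hat{v}^k_s=\hat{\la}^k\ep^k_s p^k_s$ rearranges to $\mu^k_s v^k_s=\la^k p^k_s$, and from $\mu^k_s/\mu^k_t=(\mu_s/\mu_t)(\ep^k_t/\ep^k_s)$ one gets
\[ \frac{\mu^k_s/\mu^k_t}{\mu^l_s/\mu^l_t}=\frac{\ep^k_t/\ep^k_s}{\ep^l_t/\ep^l_s}\le 1+e, \]
the inequality being the bound in~\ref{1lem:smoothisok:seu:4} after swapping $s$ and $t$, so that~\eqref{constraint:seu} holds.

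For \ref{1lem:smoothisok:seu:2}$\Rightarrow$\ref{1lem:smoothisok:seu:4} I would run this in reverse, pushing the belief ratios into the perturbations. Since no objective $\mu^*$ is available here, I am free to choose any reference belief; fixing some $k_0\in K$ and taking $\mu=\mu^{k_0}$, set $\hat{v}=v$, $\hat{\la}^k=\la^k$, and $\ep^k_s=\mu_s/\mu^k_s=\mu^{k_0}_s/\mu^k_s$. Then $\mu^k_s v^k_s=\la^k p^k_s$ gives $\mu_s\hat{v}^k_s=\hat{\la}^k\ep^k_s p^k_s$, and from $\ep^k_s/\ep^k_t=(\mu^{k_0}_s/\mu^{k_0}_t)/(\mu^k_s/\mu^k_t)$ one obtains
\[ \frac{\ep^k_s/\ep^k_t}{\ep^l_s/\ep^l_t}=\frac{\mu^l_s/\mu^l_t}{\mu^k_s/\mu^k_t}\le 1+e, \]
which is~\eqref{constraint:seu} with $k$ and $l$ interchanged.

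I do not expect a serious obstacle, as the argument is the same bookkeeping as in Lemma~\ref{1lem:smoothisok}. The only point requiring genuine care is the bound. Unlike the OEU bound, which is anchored to the fixed $\mu^*$ and hence explicitly two-sided, the SEU bound in~\eqref{constraint:seu} is a ratio across observations that is symmetric under $k\leftrightarrow l$ and under $s\leftrightarrow t$; I must exploit exactly these symmetries (which also deliver the matching lower bound $1/(1+e)$ automatically) to check that the constraint transfers under each substitution. I must also verify that the freedom to pick the reference belief $\mu$ in \ref{1lem:smoothisok:seu:2}$\Rightarrow$\ref{1lem:smoothisok:seu:4} is harmless, which it is because both systems constrain only relative beliefs and relative perturbations.
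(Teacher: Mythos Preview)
Your proposal is correct and follows essentially the same approach as the paper: the same reduction to \ref{1lem:smoothisok:seu:2}$\Leftrightarrow$\ref{1lem:smoothisok:seu:4}, the same substitutions for \ref{1lem:smoothisok:seu:4}$\Rightarrow$\ref{1lem:smoothisok:seu:2}, and the same idea for the converse. The only cosmetic difference is that for \ref{1lem:smoothisok:seu:2}$\Rightarrow$\ref{1lem:smoothisok:seu:4} the paper takes $\mu_s=\sum_{k\in K}\mu^k_s/|K|$ as the reference belief rather than your $\mu=\mu^{k_0}$; as you correctly note, any choice works since only ratios enter the constraint.
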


\begin{proof}
The equivalence between~\ref{1lem:smoothisok:seu:1} and~\ref{1lem:smoothisok:seu:2}, the equivalence between~\ref{1lem:smoothisok:seu:3} and~\ref{1lem:smoothisok:seu:4}, and the equivalence between~\ref{1lem:smoothisok:seu:5} and~\ref{1lem:smoothisok:seu:6} follow from standard arguments: see \citeSOMpapers{echenique2015som} for details. Moreover, it is easy to see the equivalence between~\ref{1lem:smoothisok:seu:4} and~\ref{1lem:smoothisok:seu:6} with $\ep^k_s=1/ \hat{\ep}^k_s$ for each $k \in K$ and $s \in S$. Hence, to prove the result, it suffices to show that~\ref{1lem:smoothisok:seu:2} and~\ref{1lem:smoothisok:seu:4} are equivalent.

To show that~\ref{1lem:smoothisok:seu:4} implies~\ref{1lem:smoothisok:seu:2}, define $v = \hat{v}$ and 
\[
\mu^k_s = \frac{\mu_s}{\ep^k_s} \Bigg/ \left( \sum_{s \in S}\frac{\mu_s}{\ep^k_s} \right)
\]
for each $k \in K$ and $s \in S$ and 
\[
\la^k = \hat{\la}^k \Bigg/ \left( \sum_{s \in S}\frac{\mu_s}{\ep^k_s} \right)
\]
for each $k \in K$.  Then, $\mu^k \in \Delta_{++}(S)$. Since $\mu_s \hat{v}^k_s = \hat{\la}^k \ep^k_s p^k_s$, we have $\mu^k_s v^k_s = \la^k p^k_s$. 
Moreover, for each $k, l \in K$ and $s,t \in S$, 
\[
\frac{\mu^k_s/\mu^k_t}{\mu^l_s/\mu^l_t}= \frac{\ep^k_t/\ep^k_s}{\ep^l_t/\ep^l_s}\le 1+e.
\]

To show~\ref{1lem:smoothisok:seu:2} implies~\ref{1lem:smoothisok:seu:4}, for all $s \in S$ define $\hat{v}=v$ and
\[
\mu_s = \sum_{k \in K} \frac{\mu^k_s}{|K|}.
\]
Then, $\mu \in \Delta_{++}(S)$. For all  $k \in K$, $\hat{\la}^k=\la^k$. For all  $k \in K$ and $s \in S$, define 
\[
\ep^k_s= \frac{\mu_s}{\mu^k_s}.
\]
For each $k \in K$ and $s\in S$, since $\mu^k_s v^k_s= \la^k p^k_s$, we have $\mu_s v^k_s = \hat{\la}^k \ep^k_s p^k_s$. 
Finally, for each $k,l \in K$ and $s,t \in S$,
\[
\frac{\ep^k_s/\ep^k_t}{\ep^l_s/\ep^l_t}=\frac{\mu^k_t/ \mu^k_s}{\mu^{l}_t/\mu^{l}_s}\le 1+e.
\]
\end{proof}

\subsection{Proof of the Necessity Direction of Theorem~\ref{theorem:robustseu}}

\begin{lemma}\label{lem:necessity}
Given $e \in \Re_+$, if a dataset is $e$-belief-perturbed SEU rational then the dataset satisfies $e$-PSARSEU. 
\end{lemma}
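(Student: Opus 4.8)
The plan is to mirror the proof of the objective necessity result (Lemma~\ref{1lem:necessity}), replacing the objective Afriat inequalities with the subjective ones of Lemma~\ref{lem:smoothisok:seu}. First I would invoke Lemma~\ref{lem:smoothisok:seu}, part~\ref{1lem:smoothisok:seu:2}: since the dataset is $e$-belief-perturbed SEU rational, there exist strictly positive $v^k_s,\la^k,\mu^k_s$ with $\mu^k_s v^k_s=\la^k p^k_s$, with $x^k_s>x^{k'}_{s'}\implies v^k_s\le v^{k'}_{s'}$, and satisfying the cross-ratio bound~\eqref{constraint:seu}. Writing $p^k_s=\mu^k_s v^k_s/\la^k$ and fixing a test sequence $(x^{k_i}_{s_i},x^{k'_i}_{s'_i})_{i=1}^n\equiv\sigma$ obeying condition~\ref{1it:sarseuone}, condition~\ref{1it:sarseuthree}, and the additional state-balance condition of $e$-PSARSEU, substitution gives
\[
\prod_{i=1}^n\frac{p^{k_i}_{s_i}}{p^{k'_i}_{s'_i}}
=\left(\prod_{i=1}^n\frac{\la^{k'_i}}{\la^{k_i}}\right)
\left(\prod_{i=1}^n\frac{v^{k_i}_{s_i}}{v^{k'_i}_{s'_i}}\right)
\prod_{i=1}^n\frac{\mu^{k_i}_{s_i}}{\mu^{k'_i}_{s'_i}}.
\]
The $\la$-product equals $1$ because condition~\ref{1it:sarseuthree} cancels every $\la^k$, and the $v$-product is at most $1$ because condition~\ref{1it:sarseuone} together with concavity gives $v^{k_i}_{s_i}\le v^{k'_i}_{s'_i}$ termwise. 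This reduces the claim to showing $\prod_{i=1}^n \mu^{k_i}_{s_i}/\mu^{k'_i}_{s'_i}\le (1+e)^{m(\sigma)}$.

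Next I would cancel, as in the objective proof, every factor $\mu^k_s$ appearing in both numerator and denominator; by the bookkeeping of Definition~\ref{def:sk_remaining} the residual ratio has exactly $m(\sigma)$ factors in the numerator and $m(\sigma)$ in the denominator, with net multiplicities $d(\sigma,k,s)$. The decisive difference from the objective case is that there is no objective $\mu^*$ to normalize against, so a single surviving factor $\mu^k_s/\mu^l_t$ is not individually controlled; the bound~\eqref{constraint:seu} only constrains the four-term cross-ratios $(\mu^k_s/\mu^k_t)/(\mu^l_s/\mu^l_t)$. The plan is therefore to reorganize the residual doubly-indexed product into a product of such cross-ratios: condition~\ref{1it:sarseuthree} makes the surviving factors balanced across observations $k$, and the extra state-balance condition of $e$-PSARSEU (absent from $e$-PSAROEU) makes them balanced across states $s$. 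This double balance is exactly the structure that permits expressing the residual as a product of cross-ratios, each at most $1+e$ by~\eqref{constraint:seu}, and the whole point of requiring the state condition in the subjective axiom.

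The main obstacle is making this final combinatorial step precise. Concretely, the array $d(\sigma,k,s)$ is an integer array whose sums over $s$ (for fixed $k$) and over $k$ (for fixed $s$) both vanish, and one must show that any such doubly-balanced array can be realized as a nonnegative combination of elementary cross-ratio patterns $e_{(k,s)}+e_{(l,t)}-e_{(k,t)}-e_{(l,s)}$ using no more than $m(\sigma)$ of them. This is the analogue of decomposing a circulation into elementary cycles, and it can be carried out by greedily peeling off one such pattern at a time so that the positive mass $\sum_{d>0}d$ strictly decreases at each step, bounding the number of patterns by $m(\sigma)$. Once this decomposition is in hand, taking logarithms and applying~\eqref{constraint:seu} to each cross-ratio yields $\prod_{i=1}^n \mu^{k_i}_{s_i}/\mu^{k'_i}_{s'_i}\le (1+e)^{m(\sigma)}$, completing the argument in parallel with Lemma~\ref{1lem:necessity}.
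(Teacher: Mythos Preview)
Your argument is correct, and the first half coincides with the paper's proof. The divergence is in bounding the residual product $\prod_j \mu^{k_j}_{s_j}/\mu^{k'_j}_{s'_j}$ after cancellation. You propose to decompose the doubly-balanced integer array $d(\sigma,\cdot,\cdot)$ into at most $m(\sigma)$ elementary $2\times 2$ cross-ratio patterns by a greedy peel-off (which does work: choosing $(k,s)$ with $d>0$, then $(k,t)$ with $d<0$ in the same row, then $(l,t)$ with $d>0$ in the same column, the positive mass drops by at least one). The paper instead exploits the state-balance condition more directly: since each state appears the same number of times on the left and on the right of the residual pairs, one may re-pair so that $s_j=s'_j$ for every $j$; then, fixing a single reference state $s^*$, each term is bounded as
\[
\frac{\mu^{k_j}_{s_j}}{\mu^{k'_j}_{s_j}}
=\frac{\mu^{k_j}_{s_j}/\mu^{k_j}_{s^*}}{\mu^{k'_j}_{s_j}/\mu^{k'_j}_{s^*}}\cdot\frac{\mu^{k_j}_{s^*}}{\mu^{k'_j}_{s^*}}
\le (1+e)\,\frac{\mu^{k_j}_{s^*}}{\mu^{k'_j}_{s^*}},
\]
and condition~\ref{1it:sarseuthree} makes the product $\prod_j \mu^{k_j}_{s^*}/\mu^{k'_j}_{s^*}$ equal to~$1$. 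This yields exactly $m(\sigma)$ cross-ratios in one stroke, with no combinatorial induction. Your decomposition has the merit of making explicit \emph{why} both balance conditions are needed (zero row and column sums of $d$), whereas the paper's reference-state trick is shorter and sidesteps the ``main obstacle'' you flagged.
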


\begin{proof}
Fix any sequence $(x^{k_i}_{s_i}, x^{k'_i}_{s'_i})_{i=1}^n\equiv\sigma$ of pairs that satisfies conditions~\ref{1it:sarseuone} and~\ref{1it:sarseuthree} in Definition~\ref{def:testsequence} and another condition that each $s$ appears as $s_i$ (on the left of the pair) the same number of times it appears as $s'_i$ (on the right), which we refer to as condition~(iii) throughout this section. 
By the standard argument using the concavity of $u$, for each $i$, there exist $v^{k_i}_{s_i}, v^{k'_i}_{s'_i}, \la^{k_i}, \la^{k'_i}, \mu^{k_i}_{s_i}, \mu^{k'_i}_{s'_i}$ such that $v^{k'_i}_{s'_i} \ge v^{k_i}_{s_i}$ and $v^{k_i}_{s_i}=\frac{\la^{k_i} p^{k_i}_{s_i}}{\mu^{k_i}_{s_i}}$, and $v^{k'_i}_{s'_i}=\frac{\la^{k'_i} p^{k'_i}_{s'_i}}{\mu^{k'_i}_{s'_i}}$. 
Thus, we have 
\[
1\ge \prod_{i=1}^n \frac{\la^{k_i} \mu^{k'_i}_{s'_i} p^{k_i}_{s_i}}{\la^{k'_i} \mu^{k_i}_{s_i} p^{k'_i}_{s'_i}}=  \prod_{i=1}^n \frac{\mu^{k'_i}_{s'_i}}{\mu^{k_i}_{s_i}}\prod_{i=1}^n \frac{p^{k_i}_{s_i}} {p^{k'_i}_{s'_i}},
\]
where the second equality holds by condition (ii). See the proof of Lemma 10 of \citeSOMpapers{echenique2015som} for detail. Thus, 
\[
\prod_{i=1}^{n} \frac{p^{k_i}_{s_i}}{p^{k'_i}_{s'_i}}\le \prod_{i=1}^n \frac{\mu^{k_i}_{s_i}}{\mu^{k'_i}_{s'_i}}.
\]

In the following, we evaluate the right hand side. For each $(k,s)$, we first cancel out the same $\mu^k_s$ as much as possible both from the denominator and the numerator. Then, the number of $\mu^k_s$ remained in the numerator is $d(\sigma,k,s)$ as defined in Definition~\ref{def:sk_remaining}. Since the number of terms in the numerator and the denominator must be the same, the number of remaining fraction is $m(\sigma)\equiv \sum_{s\in S} \sum_{k\in K:d (\sigma, k,s)>0} d(\sigma, k,s)$. So by relabeling the index $i$ to $j$ if necessary, we obtain 
\[
\prod_{i=1}^n \frac{\mu^{k_i}_{s_i}}{\mu^{k'_i}_{s'_i}}=\prod_{j=1}^{m(\sigma)} \frac{\mu^{k_j}_{s_j}}{\mu^{k'_j}_{s'_j}}.
\]

Consider the corresponding sequence $(x^{k_j}_{s_j}, x^{k'_j}_{s'_j})_{j=1}^{m(\sigma)}$. Since the sequence is obtained by canceling out $x^k_s$ from the first element and the second element of the pairs the same number of times; and since the original sequence $(x^{k_i}_{s_i}, x^{k'_i}_{s'_i})_{i=1}^n$ satisfies  conditions~\ref{1it:sarseuthree} and~(iii), it follows that  $(x^{k_j}_{s_j}, x^{k'_j}_{s'_j})_{j=1}^{m(\sigma)}$ satisfies conditions~\ref{1it:sarseuthree} and~(iii). 

By condition~(iii), we can assume without loss of generality that $s_j=s'_j$ for each $j$. Fix $s^* \in S$. Then by condition~\eqref{eq:seubound} of $e$-belief perturbed SEU, for each $j \in \{1, \dots, m(\sigma)\}$, 
\[
\frac{\mu^{k_j}_{s_j}}{\mu^{k'_j}_{s'_j}}=\frac{\mu^{k_j}_{s_j}}{\mu^{k'_j}_{s_j}}\le (1+e) \frac{\mu^{k'_j}_{s^*}}{\mu^{k_j}_{s^*}}.
\]
Moreover by condition~\ref{1it:sarseuthree},
\[
\prod_{j=1}^{m(\sigma)} \frac{\mu^{k'_j}_{s^*}}{\mu^{k_j}_{s^*}}=1.
\]
Therefore,
\[
\prod_{i=1}^n \frac{\mu^{k_i}_{s_i}}{\mu^{k'_i}_{s'_i}}=\prod_{j=1}^{m(\sigma)} \frac{\mu^{k_j}_{s_i}}{\mu^{k'_j}_{s'_j}}\le (1+e)^{m(\sigma)} \prod_{j=1}^n \frac{\mu^{k'_j}_{s^*}}{\mu^{k_j}_{s^*}}=(1+e)^{m(\sigma)} ,
\]
and hence, 
\[
\prod_{i=1}^{n} \frac{p^{k_i}_{s_i}}{p^{k'_i}_{s'_i}}\le (1+e)^{m(\sigma)}.
\]
\end{proof}

\subsection{Proof of the Sufficiency Direction in Theorem~\ref{theorem:robustseu}}

The outline of the argument is the same as the proof of Theorem 2 and \citeSOMpapers{echenique2015som}. As in the proof of Theorem 2,  we  need three lemmas to prove the sufficiency direction. 

We know from Lemma~\ref{lem:smoothisok:seu} that it suffices to find a solution to the Afriat inequalities (actually first-order conditions).  So we set up the problem to find a solution to a system of linear inequalities obtained from using logarithms to linearize the Afriat inequalities in Lemma~\ref{lem:smoothisok:seu}. 

The first lemma, Lemma~\ref{lem:rationalprice:seu}, establishes that $e$-PSARSEU is sufficient for e-belief-perturbed SEU rationality  when the logarithms of the prices are rational numbers. 

The second lemma, Lemma~\ref{lem:approximate:seu}, establishes that we can approximate any dataset  satisfying $e$-PSARSEU with a dataset for which the logarithms of prices are rational, and for which $e$-PSARSEU is satisfied. 

Finally, Lemma~\ref{lem:anyprice:seu} establishes the result by using another version of the theorem of the alternative, stated as Lemma~\ref{lem:motzkin1} above. 

The statement of the lemmas follow. The rest of the section is devoted to the proof of these lemmas.

\begin{lemma}\label{lem:rationalprice:seu}
Given $e \in \Re_+$, let a dataset $(x^k, p^k)_{k=1}^k$ satisfy $e$-PSARSEU. Suppose that $\log(p^k_s)\in \Q$ for all $k$ and $s$ and $\log (1+e)\in \Q$. Then there are numbers $v^k_s$, $\la^k$, $\mu^k_s$ for $s\in S$ and $k\in K$ satisfying~\eqref{thesystem:seu} and~\eqref{constraint:seu} in Lemma~\ref{lem:smoothisok:seu}. 
\end{lemma}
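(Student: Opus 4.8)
The plan is to follow the template of the proof of Lemma~\ref{1lem:rationalprice} for the objective case, adapting only the bookkeeping to the subjective belief constraint~\eqref{constraint:seu}. By Lemma~\ref{lem:smoothisok:seu} it suffices to produce strictly positive $v^k_s,\la^k,\mu^k_s$ solving~\eqref{thesystem:seu} and~\eqref{constraint:seu}, and taking logarithms linearizes these. Writing the unknowns as $\log v^k_s$, $\log\mu^k_s$, $\log\la^k$, the equality $\mu^k_s v^k_s=\la^k p^k_s$ becomes $\log\mu^k_s+\log v^k_s-\log\la^k=\log p^k_s$; the monotonicity implication becomes $x^k_s>x^{k'}_{s'}\implies \log v^k_s\le\log v^{k'}_{s'}$; and the bound~\eqref{constraint:seu} becomes $\log\mu^k_s-\log\mu^k_t-\log\mu^l_s+\log\mu^l_t\le\log(1+e)$ for all $k,l,s,t$. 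I would then assemble these into matrices exactly as in the proof of Lemma~\ref{1lem:rationalprice}: a matrix $A$ encoding the equalities (with $-\log p^k_s$ in the last column), a matrix $B$ encoding the monotonicity inequalities together with the belief-ratio bounds, and the one-row matrix $E$ picking out the last coordinate. The only structural change from the objective case is that each belief-ratio row of $B$ now carries four nonzero entries, namely $-1,+1,+1,-1$ in the columns for $\mu^k_s,\mu^k_t,\mu^l_s,\mu^l_t$ (rather than the two entries of the objective case, where one index was pinned to $\mu^*$), with $\log(1+e)$ in the last column.

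Because every entry of $A$, $B$, $E$ is rational under the hypotheses $\log p^k_s\in\Q$ and $\log(1+e)\in\Q$, I would invoke the rational form of the theorem of the alternative (Lemma~\ref{lem:motzkin2}): either the primal system $S1$ has a solution $u$, in which case exponentiating its coordinates yields the desired $v,\mu,\la$; or there is a rational dual vector $(\theta,\eta,\pi)$ with $\theta\cdot A+\eta\cdot B+\pi\cdot E=0$, $\eta\ge 0$, $\pi>0$. Assuming the latter, I would clear denominators so that $(\theta,\eta,\pi)$ is integral and build a sequence of pairs by taking, for each row of $A$ weighted by $\theta$, the appropriate number of copies (with orientation reversed when $\theta_r<0$), exactly as in Claim~\ref{1claim11}. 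Condition~\ref{1it:sarseuone} of Definition~\ref{def:testsequence} holds by construction from the monotonicity rows, and condition~\ref{1it:sarseuthree} (each $k$ balanced) follows by summing the dual identity over the columns for the $v^k_s$ and $\la^k$ variables, just as in the objective proof. The genuinely new task is the extra SARSEU requirement~(iii), that each state $s$ appear on the left as often as on the right; I would extract this by summing the dual identity over the belief columns grouped by state and using that each belief-ratio row of $B$ contributes canceling $+1/-1$ pairs within a fixed state.

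Finally, reading off the last-column equation $\theta\cdot A_4+\eta\cdot B_4+\pi=0$, substituting $A_4=(-\log p^k_s)$ and the constant $\log(1+e)$ from $B_4$, and translating the sums of dual weights into the counts $d(\sigma,k,s)$ and $m(\sigma)$ of Definition~\ref{def:sk_remaining}, would yield $\prod_{i} p^{k_i}_{s_i}/p^{k'_i}_{s'_i}>(1+e)^{m(\sigma)}$ for the constructed sequence, contradicting $e$-PSARSEU and completing the argument. I expect the main obstacle to be precisely this combinatorial bookkeeping for the belief columns: unlike the objective case, where each constraint pinned $\mu^k$ against the fixed $\mu^*$ and touched a single observation, here each row of $B$ couples two observations and two states, so one must show simultaneously that the induced sequence balances observations (condition~\ref{1it:sarseuthree}) and states (condition~(iii)) and that the net surplus of uncanceled belief terms equals exactly $m(\sigma)$. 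Establishing the state-balance condition~(iii), which has no analogue in the proof of Lemma~\ref{1lem:rationalprice}, and matching the exponent to $m(\sigma)$ rather than to the raw number of pairs, is where the care is needed; I would model this step on the reduction to a fixed reference state $s^*$ already used in the necessity direction (Lemma~\ref{lem:necessity}).
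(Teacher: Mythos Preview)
Your proposal is correct and follows essentially the same approach as the paper: log-linearize, set up the matrices $A,B,E$ with the four-entry belief-ratio rows in $B$, invoke the rational theorem of the alternative, and from a dual solution build a test sequence that violates $e$-PSARSEU. The paper organizes the contradiction into three claims—conditions~\ref{1it:sarseuone} and~\ref{1it:sarseuthree}, then the state-balance condition~(iii), then the price-product inequality—and obtains both~(iii) and the bound $m(\sigma)\le\sum\eta(k,l,s,t)$ by summing the dual identity over the $\mu^k_s$ columns (grouped by $k$ for~\ref{1it:sarseuthree}, by $s$ for~(iii)); no reduction to a reference state $s^*$ is used in the sufficiency direction, so that last suggestion is unnecessary but harmless.
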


\begin{lemma}\label{lem:approximate:seu}
Given $e \in \Re_+$, let a dataset $(x^k, p^k)_{k=1}^k$ satisfy $e$-PSARSEU. Then for all positive numbers $\ol{\ep}$, there exist a positive real number $e' \in [e,e+\ol{\ep}]$ and $q^{k}_{s} \in [p^k_s-\ol{\ep}, p^k_s]$ for all $s\in S$ and $k \in K$ such that $\log q^k_s \in \Q$ and the dataset $(x^k, q^k)_{k=1}^k$ satisfy $e'$-PSARSEU.
\end{lemma}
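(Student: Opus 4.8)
The plan is to mirror the proof of the objective analog, Lemma~\ref{1lem:approximate}, with one simplification: since there is no objective probability in the subjective model, and $e$-PSARSEU is stated purely in terms of the observed prices $p^k_s$, I only need to produce prices $q^k_s$ with rational logarithms, close to $p^k_s$ and no larger, together with a rational-log bound $\log(1+e')$. First I would recast the axiom in vector form. Let $\X=\{x^k_s\mid k\in K,\,s\in S\}$ and let $\Sigma$ be the set of sequences satisfying conditions~\ref{1it:sarseuone} and~\ref{1it:sarseuthree} of Definition~\ref{def:testsequence} together with the state-balancing condition~(iii) of $e$-PSARSEU. For each $\sigma\in\Sigma$ let $t_\sigma\in\N^{K^2S^2}$ count how often each pair $((k,s),(k',s'))$ occurs, and set $\delta((k,s),(k',s'))=\log(p^k_s/p^{k'}_{s'})$ when $x^k_s>x^{k'}_{s'}$ and $0$ otherwise, so that $\delta\cdot t_\sigma=\log\prod_i\bigl(p^{k_i}_{s_i}/p^{k'_i}_{s'_i}\bigr)$ and the dataset satisfies $e$-PSARSEU exactly when $\delta\cdot t_\sigma\le m(\sigma)\log(1+e)$ for every $\sigma\in\Sigma$. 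Crucially, $t_\sigma\ge 0$ and depends only on $(x^k)$, not on prices.

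The heart of the argument is an inductive perturbation that rationalizes the logarithms while strictly shrinking every relevant price ratio. I would enumerate $\X$ in increasing order $y_1<\cdots<y_N$, fix $\ul\xi\in(0,1)$, and set $\ep(0)=1$. Proceeding by induction on $n$, for each $(k,s)$ with $x^k_s=y_n$ I choose $\ep^k_s(n)$ with $\log(p^k_s\,\ep^k_s(n))\in\Q$ and $\ul\xi<\ep^k_s(n)<\ep(n-1)$, where $\ep(n)=\min\{\ep^k_s(n)\mid x^k_s=y_n\}$; density of $\Q$ and continuity of $\exp$ make each choice possible. Setting $q^k_s=p^k_s\,\ep^k_s(n)$ for the $n$ with $x^k_s=y_n$ then yields $\log q^k_s\in\Q$. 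Since the $\ep(n)$ are strictly decreasing, there is $\bar\xi<1$ with $\ep^k_s(n)/\ep(n-1)<\bar\xi$; hence whenever $x^k_s>x^{k'}_{s'}$, so the index $n$ of $x^k_s$ exceeds the index $m$ of $x^{k'}_{s'}$, we get $\ep^k_s(n)/\ep^{k'}_{s'}(m)<\ep^k_s(n)/\ep(m)<\bar\xi<1$. Writing $\da^*$ for the vector built from $(q^k)$ exactly as $\delta$ was built from $(p^k)$, this forces $\da^*((k,s),(k',s'))<\delta((k,s),(k',s'))$ on every coordinate with $x^k_s>x^{k'}_{s'}$, and equality (both zero) elsewhere.

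It then remains to verify the perturbed axiom and to fix $e'$. Because $t_\sigma\ge 0$, the coordinatewise bound $\da^*\le\delta$ gives $\da^*\cdot t_\sigma\le\delta\cdot t_\sigma\le m(\sigma)\log(1+e)$ for all $\sigma\in\Sigma$; choosing $e'\ge e$ with $\log(1+e')\in\Q$ (possible by density) yields $\da^*\cdot t_\sigma\le m(\sigma)\log(1+e')$, i.e.\ $(x^k,q^k)_{k=1}^K$ satisfies $e'$-PSARSEU. Finally, since $\ul\xi<\ep^k_s(n)<1$, taking $\ul\xi$ sufficiently close to~$1$ forces $q^k_s\in[p^k_s-\ol\ep,p^k_s]$, and $e'$ can be taken within $\ol\ep$ of $e$.

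The main obstacle is preserving the strict inequalities $x^k_s>x^{k'}_{s'}$ at the level of the perturbed ratios while simultaneously rationalizing all logarithms: a naive independent rounding of each $\log p^k_s$ could enlarge some ratio $q^k_s/q^{k'}_{s'}$ and break $\da^*\cdot t_\sigma\le\delta\cdot t_\sigma$. The monotone-in-value construction resolves this by assigning strictly smaller perturbations to strictly larger consumption levels, so the ratio attached to any admissible pair strictly decreases. The only genuine difference from the objective case is that $\Sigma$ now carries the extra state-balancing condition~(iii) and that there is no $\mu^*$ to approximate; but since the reduction through $t_\sigma$ and the sign constraint $t_\sigma\ge 0$ are unchanged, condition~(iii) merely restricts which $\sigma$ enter the inequality and plays no active role, so the argument transfers essentially verbatim.
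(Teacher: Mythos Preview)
Your proposal is correct and follows essentially the same approach as the paper's own proof: the same vectorization via $t_\sigma$ and $\delta$, the same monotone-in-value inductive perturbation $\ep^k_s(n)$ to rationalize $\log q^k_s$ while strictly shrinking every admissible ratio, and the same closing choice of $e'\ge e$ with $\log(1+e')\in\Q$. The only differences are cosmetic (you introduce $\ep(0)=1$ explicitly and add the commentary on condition~(iii) and the absence of $\mu^*$), and your observation that condition~(iii) plays no active role here is exactly right.
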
 

\begin{lemma}\label{lem:anyprice:seu} 
Given $e \in \Re_+$, let a dataset $(x^k, p^k)_{k=1}^k$ satisfy $e$-PSARSEU. Then there are numbers $v^k_s$, $\la^k$, $\mu^k_s$ for $s\in S$ and $k\in K$ satisfying~\eqref{thesystem:seu} and~\eqref{constraint:seu}  in Lemma~\ref{lem:smoothisok:seu}. 
\end{lemma}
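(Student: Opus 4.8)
The plan is to mirror the proof of Lemma~\ref{1lem:anyprice} from the objective case, substituting the SEU ingredients for their OEU counterparts. By Lemma~\ref{lem:smoothisok:seu} it suffices to exhibit strictly positive numbers $v^k_s,\la^k,\mu^k_s$ solving the Afriat-type system~\eqref{thesystem:seu} together with the bound~\eqref{constraint:seu}. Taking logarithms linearizes these into: equalities $\log\mu^k_s+\log v^k_s=\log\la^k+\log p^k_s$; concavity inequalities $x^k_s>x^{k'}_{s'}\implies \log v^k_s\le\log v^{k'}_{s'}$; and the perturbation bounds $\log\mu^k_s-\log\mu^k_t-\log\mu^l_s+\log\mu^l_t\le\log(1+e)$ for all $k,l\in K$ and $s,t\in S$. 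I would encode these, exactly as in the proof of Lemma~\ref{lem:rationalprice:seu}, by a matrix $A$ (the equalities, whose last column $A_4$ carries the possibly irrational entries $-\log p^k_s$), a matrix $B$ (the concavity and bound inequalities, whose last column $B_4$ carries only the constant $\log(1+e)$), and the matrix $E$ whose single nonzero entry is a $1$ in the last column, with column blocks $D=[D_1\,\vrule\,D_2\,\vrule\,D_3\,\vrule\,D_4]$ as in the main appendix.

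First I would argue by contradiction: suppose the primal system $S1$ (namely $A\cdot u=0$, $B\cdot u\ge 0$, $E\cdot u>0$) has no solution. The real version of the theorem of the alternative, Lemma~\ref{lem:motzkin1} with $\F=\Re$, then yields a real vector $(\theta,\eta,\pi)$ solving the dual $S2$: $\theta\cdot A+\eta\cdot B+\pi\cdot E=0$, $\eta\ge 0$, $\pi>0$. Since $E_4=1$ and $E_i=0$ for $i=1,2,3$, reading off the last coordinate gives $\theta\cdot A_4+\eta\cdot B_4+\pi=0$, so $\theta\cdot A_4+\eta\cdot B_4=-\pi<0$.

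Next I would invoke the SEU approximation lemma, Lemma~\ref{lem:approximate:seu}, to produce prices $q^k$ with $\log q^k_s\in\Q$ and a bound $e'\ge e$ with $\log(1+e')\in\Q$ such that $(x^k,q^k)_{k=1}^K$ satisfies $e'$-PSARSEU. The combinatorial data in the first three column blocks depend only on $(x^k)_{k=1}^K$ and on the comparison pattern $x^k_s>x^{k'}_{s'}$, which are unchanged by a small price perturbation; hence the matrices $A',B',E'$ built from the approximating dataset satisfy $A'_i=A_i$, $B'_i=B_i$, $E'=E$ for $i=1,2,3$, and only $A'_4,B'_4$ may differ. Choosing the approximation fine enough that $|(\theta\cdot A'_4+\eta\cdot B'_4)-(\theta\cdot A_4+\eta\cdot B_4)|<\pi/2$ forces $\theta\cdot A'_4+\eta\cdot B'_4<0$; setting $\pi'=-(\theta\cdot A'_4+\eta\cdot B'_4)>0$ makes $(\theta,\eta,\pi')$ a solution of $S2$ for $A',B',E'$. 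Lemma~\ref{lem:motzkin1} then rules out any solution of $S1$ for the primed matrices, so the linearized SEU Afriat system has no solution for $(x^k,q^k)$. But that dataset satisfies $e'$-PSARSEU with all relevant logarithms rational, so Lemma~\ref{lem:rationalprice:seu} supplies a solution — a contradiction. Hence $S1$ is solvable, establishing~\eqref{thesystem:seu} and~\eqref{constraint:seu}.

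The main obstacle I anticipate is the transfer step: making precise that moving to rational-logarithm prices changes only the last columns of the constraint matrices, and that a sufficiently accurate approximation preserves the strict sign needed to rebuild a dual certificate with $\pi'>0$. This is cleaner in the subjective case than in the objective one, because $\log(1+e)$ is the only datum besides the prices appearing in $B_4$ — there are no $\log\mu^*_s$ terms, since SEU carries no objective probability — so the approximation need only control $\log q^k_s$ and $\log(1+e')$ while holding the entire incidence structure of the problem fixed.
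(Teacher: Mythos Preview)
Your proposal is correct and follows essentially the same approach as the paper's own proof: argue by contradiction, extract a real dual certificate $(\theta,\eta,\pi)$ via Lemma~\ref{lem:motzkin1}, approximate the dataset using Lemma~\ref{lem:approximate:seu} so that only the last columns $A_4,B_4$ change, adjust $\pi$ to $\pi'$ to preserve the dual, and derive a contradiction with Lemma~\ref{lem:rationalprice:seu}. Your closing observation that the SEU case is actually cleaner than the OEU case---because $B_4$ carries only $\log(1+e)$ and no $\log\mu^*_s$ terms---is correct and a nice point the paper does not make explicit.
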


\subsubsection{Proof of Lemma~\ref{lem:rationalprice:seu}}

The proof is similar to the proof of \citeSOMpapers{echenique2015som}, which corresponds to the case with $e=0$.  By log-linearizing  system~\eqref{thesystem:seu}, and inequality~\eqref{constraint:seu} in Lemma~\ref{lem:smoothisok:seu}, we have for all $s\in S$ and $k\in K$, such that
\begin{equation}\label{eq:logs:seu}
\log \mu^k_s+ \log v^k_s = \log \la^k +\log p^k_s,
\end{equation}
\begin{equation}
\label{eq:concave:seu}
x^k_s > x^{k'}_{s'} \implies \log v^k_s \leq \log v^{k'}_{s'},
\end{equation}
and for all $k,l \in K$ and $s,t \in S$, 
\begin{equation}\label{eq:bounds:seu}
\log \mu^k_s-\log \mu^k_t - \log \mu^l_s+\log \mu^l_t \le \log (1+e).
\end{equation}

We are going to write the system of inequalities~\eqref{eq:logs:seu}-\eqref{eq:bounds:seu} in  matrix form. The formulation follows \citeSOMpapers{echenique2015som}, with some modifications. 

Let $A$ be a matrix with $K \times S$ rows and $2 (K \times S) + K + 1$ columns, defined  as follows: We have one row for every pair $(k, s)$, two columns for every pair $(k, s)$, one column for each $k$, and one last column. In the row corresponding to $(k, s)$, the matrix has zeroes everywhere with the following exceptions: it has $1$'s in columns for $(k, s)$; it has a~$-1$ in the column for $k$; it has $- \log p^k_s$ in the very last column. 
Matrix $A$ looks as follows:
\begin{align*}
\hspace{-10em}
\resizebox{1.2\textwidth}{!}{%
\kbordermatrix{
&\cdots&v^k_s&v^k_t&v^l_s&v^l_t&\cdots&&\cdots&\mu^k_s&\mu^k_t&\mu^l_s&\mu^l_t&\cdots&&\cdots&\la^k&\la^l&\cdots&&p \\
&&\vdots&\vdots&\vdots&\vdots&&\vrule&&\vdots&\vdots&\vdots&\vdots&&\vrule&&\vdots&\vdots&&\vrule&\vdots \\
(k,s)&\cdots&1&0&0&0&\cdots&\vrule&\cdots&1&0&0&0&\cdots&\vrule&\cdots&-1&0&\cdots&\vrule&-\log  p^k_s \\
(k,t)&\cdots&0&1&0&0&\cdots&\vrule&\cdots&0&1&0&0&\cdots&\vrule&\cdots&-1&0&\cdots&\vrule&-\log  p^k_s \\
(l,s)&\cdots&0&0&1&0&\cdots&\vrule&\cdots&0&0&1&0&\cdots&\vrule&\cdots&0&-1&\cdots&\vrule&-\log  p^l_s \\
(l,t)&\cdots&0&0&0&1&\cdots&\vrule&\cdots&0&0&0&1&\cdots&\vrule&\cdots&0&-1&\cdots&\vrule&-\log  p^l_s \\
&&\vdots&\vdots&\vdots&\vdots&&\vrule&&\vdots&\vdots&\vdots&\vdots&&\vrule&&\vdots&\vdots&&\vrule&\vdots 
}.}
\end{align*}

Next, we write the system of inequalities~\eqref{eq:concave:seu} and~\eqref{eq:bounds:seu} in matrix form. There is one row in matrix $B$ for each pair $(k, s)$ and $(k', s')$ for which $x_s^k > x_{s'}^{k'}$. In the row corresponding to $x_s^k > x_{s'}^{k'}$, we have zeroes everywhere with the exception of a~$-1$ in the column for $(k, s)$ and a~$1$ in the column for $(k', s')$. 
Matrix $B$ has additional rows, that capture the system of inequalities~\eqref{eq:bounds:seu}: 
We do not need a constraint for each quadruple $(k,l,s,t)$, as some of
them would be redundant. Specifically, we need the constraints 
$\frac{\mu^k_s/\mu^k_t}{\mu^l_s/\mu^l_t}\le 1+e$, and
$\frac{\mu^l_s/\mu^l_t}{\mu^k_s/\mu^k_t}\le 1+e$, which is equivalent
to $\frac{\mu^k_s/\mu^k_t}{\mu^l_s/\mu^l_t} \ge 1/(1+e)$. But note
that  $\frac{\mu^l_t/\mu^l_s}{\mu^k_t/\mu^k_s}\le 1+e$ is redundant,
as $\frac{\mu^l_t/\mu^l_s}{\mu^k_t/\mu^k_s}=\frac{\mu^k_s/\mu^k_t}{\mu^l_s/\mu^l_t}$.
So for each $(s,t)$ with $s<t$, and each $k\neq l$ we are going to have the constraint $(k,l,s,t)$.\footnote{The inequality $s<t$ is simply a devise to ensure that we choose only one of the two ordered pairs of $s$ and $t$.} 
For each such $(k,l,s,t)$ we have two rows. One of these rows has a $1$ in the column for $\mu^k_s$ and $\mu^l_t$, a $-1$ in the column for $\mu^k_t$ and $\mu^l_s$, and $\log(1+e)$ in the very last column; one of these rows has a $1$ in the column for $\mu^k_t$ and $\mu^l_s$, a $-1$ in the column for $\mu^k_s$ and $\mu^l_t$, and $\log(1+e)$ in the very last column. So this part of matrix $B$ is as follows:
\begin{align*}
\hspace{-10em}
\resizebox{1.2\textwidth}{!}{%
\kbordermatrix{
&\cdots&v^k_s&v^k_t&v^l_s&v^l_t&\cdots&&\cdots&\mu^k_s&\mu^k_t&\mu^l_s&\mu^l_t&\cdots&&\cdots&\la^k&\la^l&\cdots&&p \\
&&\vdots&\vdots&\vdots&\vdots&&\vrule&&\vdots&\vdots&\vdots&\vdots&&\vrule&&\vdots&\vdots&&\vrule&\vdots \\
&\cdots&0&0&0&0&\cdots&\vrule&\cdots&-1&1&1&-1&\cdots&\vrule&\cdots&0&0&\cdots&\vrule&\log(1+e) \\
&\cdots&0&0&0&0&\cdots&\vrule&\cdots&1&-1&-1&1&\cdots&\vrule&\cdots&0&0&\cdots&\vrule&\log(1+e) \\
&&\vdots&\vdots&\vdots&\vdots&&\vrule&&\vdots&\vdots&\vdots&\vdots&&\vrule&&\vdots&\vdots&&\vrule&\vdots 
}.}
\end{align*}

Finally, we have a matrix $E$ which has a single row and has zeroes everywhere except for~$1$ in the last column. 

To sum up, there is a solution to the system~\eqref{eq:logs:seu}-\eqref{eq:bounds:seu} if and only if there is a vector $u \in \Re^{2(K \times S)+K+1}$ that solves the system of equations and linear inequalities
\begin{equation*}
S1:\; 
\begin{cases}
A \cdot u = 0, \\ 
B \cdot u \geq 0, \\
E \cdot u > 0.
\end{cases}
\end{equation*}

The entries of $A$, $B$, and $E$ are either $0$, $1$ or $-1$, with the exception of the last column of $A$ and $B$. Under the hypotheses of the lemma we are proving, the last column consists of rational numbers. By Motzkin's theorem, then, there is such a solution $u$ to $S1$ if and only if there is no rational vector $(\theta,\eta,\pi)$ that solves the system of equations and linear inequalities 
\[
S2:\; \begin{cases}
\theta \cdot A + \eta \cdot B +\pi \cdot E =0, \\ 
\eta \geq 0, \\
\pi > 0.
\end{cases}
\]

In the following, we shall prove that the non-existence of a solution $u$ implies that the dataset must violate $e$-PSARSEU. Suppose then that there is no solution $u$ and let $(\theta,\eta,\pi)$ be a rational vector as above, solving system $S2$.  

The outline of the rest of the proof is similar to the proof of Theorem~\ref{theorem:robustoeu}. Since $(\theta,\eta,\pi)$ are rational vectors, by multiplying all of their entries by a large enough integer, we can without loss of generality assume that  $(\theta,\eta,\pi)$ are integer vectors.  

Then we transform the matrices $A$ and $B$ using $\theta$ and $\eta$. (i) If $\theta_r>0$, then create $\theta_r$ copies of the $r$th row; (ii) omitting row $r$ when $\theta_r=0$;  and (iii) if $\theta_r< 0$, then $\theta_r$ copies of the $r$th row multiplied by $-1$. 

Similarly, we create a new matrix by including the same columns as $B$ and  $\eta_r$ copies of each row (and thus omitting row $r$ when $\eta_r=0$; recall that $\eta_r\geq 0$ for all $r$). 

By using the transformed matrices and the fact that $\theta \cdot A + \eta \cdot B +\pi \cdot E = 0$ and $\eta \geq 0$, we can prove the following claims:
\medskip

\begin{claim} \label{claim11:seu} 
There exists a sequence $(x^{k_i}_{s_i},x^{k'_i}_{s'_i})_{i=1}^{n^*}$ of pairs that satisfies conditions~\ref{1it:sarseuone} and~\ref{1it:sarseuthree} in Definition~\ref{def:testsequence}. 
\end{claim}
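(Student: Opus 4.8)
The plan is to mirror the proof of Claim~\ref{1claim11} in the objective case, changing only the bookkeeping that involves the subjective bound rows of matrix $B$. Starting from the integer solution $(\theta,\eta,\pi)$ of $S2$ and the row-copying operation already described, I would build the sequence exactly as in Lemma~11 of \cite{echenique2015savage}: every concavity row of $B$ corresponds to a pair $((k,s),(k',s'))$ with $x^k_s > x^{k'}_{s'}$, and taking $\eta_r$ copies of the $r$-th such row yields the pairs $(x^{k_i}_{s_i},x^{k'_i}_{s'_i})_{i=1}^{n^*}$. By construction every pair is a concavity pair, so $x^{k_i}_{s_i} > x^{k'_i}_{s'_i}$ for all $i$, which is condition~\ref{1it:sarseuone}.

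The substance is condition~\ref{1it:sarseuthree}. Writing $n(x^k_s)=\#\{i : x^k_s = x^{k_i}_{s_i}\}$ and $n'(x^k_s)=\#\{i : x^k_s = x^{k'_i}_{s'_i}\}$, it suffices to prove $\sum_{s\in S}[n(x^k_s)-n'(x^k_s)]=0$ for every $k$. The $v^k_s$-column of the identity $\theta\cdot A+\eta\cdot B+\pi\cdot E=0$ gives $\theta_{(k,s)} = n(x^k_s)-n'(x^k_s)$, because $A$ has a single $1$ there, the $\eta$-weighted concavity rows of $B$ contribute $-n(x^k_s)+n'(x^k_s)$, and $E$ vanishes on all $v$-columns. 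Since each row of $A$ carries a $1$ in the $\mu^k_s$-column exactly when it does in the $v^k_s$-column (both come from the single row $(k,s)$), the $\theta$-weighted contribution of $A$ to the $\mu^k_s$-column is again $\theta_{(k,s)}=n(x^k_s)-n'(x^k_s)$, exactly as in the objective case.

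Now I would read the balance off the $\mu^k_s$-column. Since $E$ and the concavity rows of $B$ both vanish on the $\mu$-columns, the balance reduces to $n(x^k_s)-n'(x^k_s)+c(k,s)=0$, where $c(k,s)$ is the $\eta$-weighted contribution of the \emph{subjective} bound rows to the $\mu^k_s$-column. It then remains to show $\sum_{s\in S} c(k,s)=0$ for each fixed $k$. This is where the subjective case departs from the objective one: a bound row now encodes a constraint on a quadruple $(k,l,s,t)$ and touches the four columns $\mu^k_s,\mu^k_t,\mu^l_s,\mu^l_t$, so a given index $k$ may enter such a row either as the first or as the second superscript. The decisive observation is that, in \emph{either} role, the two entries of the row belonging to observation $k$ — those in columns $\mu^k_s$ and $\mu^k_t$ — carry opposite signs $+1$ and $-1$. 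Hence, summing $c(k,\cdot)$ over all states $s$, each bound row contributes $+1$ and $-1$ with the same weight and cancels, so $\sum_s c(k,s)=0$ and therefore $\sum_s[n(x^k_s)-n'(x^k_s)]=0$, which is condition~\ref{1it:sarseuthree}.

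I expect the only real obstacle to be this final cancellation. Unlike the objective bound rows, which couple only two states within a single observation, the subjective bound rows couple two observations simultaneously; one must therefore check carefully that when $k$ appears as the ``$l$''-index of a constraint the same-$k$ entries remain sign-opposite across the two states, so that the telescoping over $s$ still annihilates the bound contribution. Once this antisymmetry is confirmed, the counting identity closes and the argument follows the objective template verbatim.
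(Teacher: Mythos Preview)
Your argument is correct, but it does more work than the paper's own proof, which is a one-line deferral to Lemma~11 of Echenique and Saito (2015). The difference lies in which column you read the balance from. You use the $\mu^k_s$-columns, which are touched by the subjective bound rows, forcing you to verify the antisymmetry you describe (and you do so correctly). The Echenique--Saito argument instead reads condition~\ref{1it:sarseuthree} off the $\lambda^k$-column: matrix $A$ has a $-1$ in column $\lambda^k$ for every row $(k,s)$, while \emph{all} rows of $B$ (concavity and bound alike) and of $E$ vanish on the $\lambda$-columns. Hence $\sum_{s\in S}\theta_{(k,s)}=0$ directly, and combined with $\theta_{(k,s)}=n(x^k_s)-n'(x^k_s)$ from the $v$-columns this yields condition~\ref{1it:sarseuthree} without ever inspecting the bound rows. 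Your route mirrors the objective-case Claim~\ref{1claim11} (which deliberately works through the $\mu$-columns because equation~\eqref{eq:123} is needed downstream), and it has the virtue of being self-contained; but the $\lambda$-column shortcut is available in the subjective case and is precisely why the paper can simply cite the earlier lemma verbatim.
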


\begin{proof}
The proof is the same as in the proof of Lemma~11 in \citeSOMpapers{echenique2015som}.
\end{proof}

\begin{claim} \label{claim12:seu} 
In the sequence $(x^{k_i}_{s_i},x^{k'_i}_{s'_i})_{i=1}^{n^*}\equiv \sigma^*$,  each $s$ appears as $s_i$ (on the left of the pair) the same number of times it appears as $s'_i$ (on the right). 
\end{claim}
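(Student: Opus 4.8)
The plan is to mirror the argument for the objective case in Claim~\ref{1claim11}, but now to sum the relevant column equations over the choice indices $k$ for a \emph{fixed} state $s$, rather than over states for a fixed choice. Recall that $(\theta,\eta,\pi)$ is the integer solution of system $S2$, i.e.\ $\theta\cdot A+\eta\cdot B+\pi\cdot E=0$ with $\eta\ge 0$ and $\pi>0$, and that the sequence $\sigma^*=(x^{k_i}_{s_i},x^{k'_i}_{s'_i})_{i=1}^{n^*}$ of Claim~\ref{claim11:seu} is built from the concavity rows of $B$ with multiplicities given by $\eta$ (each such row records a pair with $x^{k}_{s}>x^{k'}_{s'}$, its larger entry sitting on the left). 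As in the objective case write $n(x^k_s)=\#\{i\mid x^k_s=x^{k_i}_{s_i}\}$ and $n'(x^k_s)=\#\{i\mid x^k_s=x^{k'_i}_{s'_i}\}$, so that the number of times state $s$ appears as $s_i$ equals $\sum_{k\in K}n(x^k_s)$ and the number of times it appears as $s'_i$ equals $\sum_{k\in K}n'(x^k_s)$. The goal is thus to prove $\sum_{k\in K}\big[n(x^k_s)-n'(x^k_s)\big]=0$ for every $s\in S$.

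First I would read the weight $\theta_{(k,s)}$ off the $v^k_s$-column of $\theta\cdot A+\eta\cdot B+\pi\cdot E=0$. Matrix $A$ contributes $+1$ to this column only in row $(k,s)$, the concavity rows of $B$ contribute $-1$ when $(k,s)$ is the left (larger) entry of a pair and $+1$ when it is the right (smaller) entry, and $E$ contributes $0$; hence $\theta_{(k,s)}=n(x^k_s)-n'(x^k_s)$. Next, reading the same equation off the $\mu^k_s$-column, $A$ again contributes $\theta_{(k,s)}$ (its row $(k,s)$ carries a $1$ in both the $v^k_s$ and the $\mu^k_s$ column), $E$ contributes $0$, and the concavity rows of $B$ contribute nothing, so that
\[
 n(x^k_s)-n'(x^k_s)+\beta(k,s)=0,
\]
where $\beta(k,s)$ collects the contributions of the belief-perturbation rows of $B$ to the $\mu^k_s$-column.

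The key---and only delicate---step is to sum this identity over all $k\in K$ for a fixed $s$ and to argue that $\sum_{k\in K}\beta(k,s)=0$. Each belief-perturbation row of $B$ corresponds to a quadruple $(k,l,a,b)$ with $a<b$ and $k\ne l$, and its state-$s$ columns are nonzero only when $s\in\{a,b\}$; in that case the row carries exactly $\pm1$ in $\mu^k_s$ and $\mp1$ in $\mu^l_s$, which are negatives of one another. Consequently each belief row contributes $0$ to $\sum_{k\in K}(\text{entry in the }\mu^k_s\text{-column})$, so $\sum_{k\in K}\beta(k,s)=0$. Summing the displayed identity over $k$ then yields $\sum_{k\in K}\big[n(x^k_s)-n'(x^k_s)\big]=0$, which is exactly the assertion that $s$ appears as $s_i$ as often as it appears as $s'_i$. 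This cancellation is the whole point: it is what distinguishes SEU from OEU---in the objective case one sums over states for fixed $k$, killing the $(k,s,t)$ rows and producing the $k$-balancing condition---and it is precisely what forces the state-balancing condition to appear for SEU.
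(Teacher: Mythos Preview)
Your proposal is correct and follows essentially the same approach as the paper: read off the $\mu^k_s$-column of $\theta\cdot A+\eta\cdot B+\pi\cdot E=0$ to obtain $n(x^k_s)-n'(x^k_s)+\beta(k,s)=0$, then sum over $k\in K$ and use that the belief-constraint rows of $B$ cancel in that sum. Your framing of the cancellation---that each belief row carries $\pm 1$ in $\mu^k_s$ and $\mp 1$ in $\mu^l_s$, so its contribution to $\sum_{k}\beta(k,s)$ is zero row by row---is slightly cleaner than the paper's explicit relabeling of the double sums $\sum_{k}\sum_{l\neq k}\eta(k,l,s,t)$ and $\sum_{k}\sum_{l\neq k}\eta(l,k,s,t)$, but the content is identical.
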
 

\begin{proof} Recall our construction of the matrix $B$. We have a constraint for each  quadruple $(k,l,s,t)$ with $s<t$. 
Denote the weight on the rows capturing
$\frac{\mu^k_s/\mu^k_t}{\mu^l_s/\mu^l_t}\le 1+e$  by
$\eta(k,l,s,t)$. Let $n(x^k_s)\equiv \#\{i \mid x^k_s=
x^{k_i}_{s_i}\}$ and $n'(x^k_s)\equiv  \#\{i \mid x^k_s=
x^{k'_i}_{s'_i}\}$. For notational convenience, define
$\eta(k,l,s,t)=0$ for all quadruples $(k,l,s,t)$ with $t<s$.

For each  $k\in K$ and $s \in S$, in the column corresponding to $\mu^k_s$ in matrix $A$, remember that we have $1$ if we have $x^k_s= x^{k_i}_{s_i}$ for some $i$ and $-1$ if we have $x^k_s= x^{k'_i}_{s'_i}$ for some $i$. This is because a row in $A$ must have $1$ ($-1$) in the column corresponding to $v^k_s$ if and only if it has $1$ ($-1$, respectively) in the column corresponding to $\mu^k_s$. By summing over the column corresponding to $\mu^k_s$, we have $n(x^k_s)-n'(x^k_s)$.  

Now we consider matrix $B$. In the column corresponding to $\mu^k_s$
and $s<t$,  we have $-1$ in the row multiplied by $\eta(k,l,s,t)$ and $1$ in the row multiplied by $\eta(l,k, s,t)$.  By summing over the column corresponding to $\mu^k_s$, we also have $-\sum_{l \neq k}\sum_{t\neq s} \eta(k,l,s,t) +\sum_{l \neq k}\sum_{t\neq s}\eta(l,k,s,t)$. 

For each  $k\in K$ and $s \in S$, the column corresponding to $\mu^k_s$ of matrices $A$ and $B$ must sum up to zero; so we have
\[
n(x^k_s)-n'(x^k_s)-\sum_{l \neq k}\sum_{t\neq s} \eta(k,l,s,t) +\sum_{l \neq k}\sum_{t\neq s} \eta(l,k,s,t)=0.
\]

Therefore, for each $s$,
\begin{align*}
  \sum_{k\in K}\Big(n(x^k_s)-n'(x^k_s)\Big)
& =\sum_{k\in K}\left[\sum_{l \neq k}\sum_{t\neq s} \eta(k,l,s,t)
  -\sum_{l \neq k}\sum_{t\neq s} \eta(l,k,s,t)\right] \\
& =\sum_{t\neq s}  \left[\sum_{k\in K}\sum_{l\neq k} \eta(k,l,s,t) - \sum_{k\in
  K}\sum_{l\neq k} \eta(l,k,s,t)\right]  \\
& =0.
\end{align*}
This means that each $s$ appears as $s_i$ (on the left of the pair) the same number of times it appears as $s'_i$ (on the right). 
\end{proof}

\begin{claim} \label{claim13:seu}
$\prod_{i=1}^{n^*} ( p^{k_i}_{s_i} / p^{k'_i}_{s'_i} ) >(1+e)^{m(\sigma^*)}$.
\end{claim}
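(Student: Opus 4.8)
The plan is to derive the displayed strict inequality directly from the dual solution $(\theta,\eta,\pi)$ of system $S2$, and then read it off as a violation of $e$-PSARSEU. By Claims~\ref{claim11:seu} and~\ref{claim12:seu}, the sequence $\sigma^*$ is a genuine test sequence: it satisfies conditions~\ref{1it:sarseuone} and~\ref{1it:sarseuthree}, and each state $s$ appears the same number of times on the left and on the right (condition~(iii)). Consequently $e$-PSARSEU would force $\prod_{i=1}^{n^*}(p^{k_i}_{s_i}/p^{k'_i}_{s'_i}) \le (1+e)^{m(\sigma^*)}$, so establishing the reverse strict inequality is exactly the contradiction that finishes the proof of Lemma~\ref{lem:rationalprice:seu}. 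The whole argument runs parallel to Claim~\ref{1claim13} in the OEU proof, with one simplification: there is no objective belief $\mu^*$ to eliminate, so we work directly with the prices $p^k_s$ rather than with risk-neutral prices.

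First I would isolate the last-column equation. Since $\theta \cdot A + \eta \cdot B + \pi \cdot E = 0$ and the only nonzero entry of $E$ is a $1$ in the final column, the last coordinate gives $\theta \cdot A_4 + \eta \cdot B_4 + \pi = 0$. The entries of $A_4$ are the numbers $-\log p^k_s$, while the only nonzero entries of $B_4$ are the values $\log(1+e)$ placed in the rows encoding the belief bounds~\eqref{eq:bounds:seu}. Writing $W$ for the total dual weight carried by those bound rows, and using that the net weight on row $(k,s)$ of $A$ equals $n(x^k_s)-n'(x^k_s)$ (this is precisely the bookkeeping already performed in the proof of Claim~\ref{claim12:seu}), the equation becomes
\[
-\sum_{k \in K}\sum_{s \in S}\bigl(n(x^k_s)-n'(x^k_s)\bigr)\log p^k_s + W\log(1+e) + \pi = 0 .
\]
Because $\pi > 0$ and $\sum_{k,s}(n(x^k_s)-n'(x^k_s))\log p^k_s = \sum_{i=1}^{n^*}\log(p^{k_i}_{s_i}/p^{k'_i}_{s'_i})$, this rearranges to
\[
\sum_{i=1}^{n^*}\log\frac{p^{k_i}_{s_i}}{p^{k'_i}_{s'_i}} > W\log(1+e) .
\]

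The remaining step is to show $W \ge m(\sigma^*)$, so that the right-hand side is at least $m(\sigma^*)\log(1+e)$ and exponentiating yields the claim. From the identity derived in Claim~\ref{claim12:seu} together with $\eta \ge 0$,
\[
d(\sigma^*,k,s) = n(x^k_s)-n'(x^k_s) = \sum_{l\neq k}\sum_{t\neq s}\bigl(\eta(k,l,s,t)-\eta(l,k,s,t)\bigr) \le \sum_{l\neq k}\sum_{t\neq s}\eta(k,l,s,t) .
\]
Summing this inequality only over the pairs $(k,s)$ with $d(\sigma^*,k,s)>0$, and recalling $m(\sigma^*) = \sum_{s}\sum_{k:\,d(\sigma^*,k,s)>0} d(\sigma^*,k,s)$, gives
\[
m(\sigma^*) \le \sum_{k \in K}\sum_{s \in S}\sum_{l\neq k}\sum_{t\neq s}\eta(k,l,s,t) \le W ,
\]
where the last inequality holds because each term on the left is one of the bound-row weights counted in $W$. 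Chaining this with the previous display produces $\sum_{i=1}^{n^*}\log(p^{k_i}_{s_i}/p^{k'_i}_{s'_i}) > m(\sigma^*)\log(1+e)$, i.e.\ $\prod_{i=1}^{n^*}(p^{k_i}_{s_i}/p^{k'_i}_{s'_i}) > (1+e)^{m(\sigma^*)}$, which is Claim~\ref{claim13:seu}.

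The delicate point, exactly as in the OEU version, is not the final algebra but the combinatorial bookkeeping linking the integer dual vector to the structure of $\sigma^*$. One must be certain that the net weight on each row $(k,s)$ of $A$ genuinely equals $n(x^k_s)-n'(x^k_s)$, which rests on the Eulerian-path construction underlying Claim~\ref{claim11:seu}, and that the weights counted in $W$ are precisely the $\eta(k,l,s,t)$ appearing in the identity above, with the correct orientation so that no bound row is double counted and the bound $W \ge m(\sigma^*)$ is legitimate. The four-index weights $\eta(k,l,s,t)$ (rather than the three-index $\eta(k,s,t)$ of the OEU proof) are the cost of comparing beliefs across \emph{pairs} of observations $k \neq l$; keeping their orientation straight when summing over $k\neq l$ and $s<t$ is where the main care is required.
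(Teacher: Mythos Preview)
Your argument is correct and follows the paper's own proof essentially line for line: use the last-column identity $\theta\cdot A_4+\eta\cdot B_4+\pi=0$ to get $\sum_i\log(p^{k_i}_{s_i}/p^{k'_i}_{s'_i})>W\log(1+e)$, then bound $m(\sigma^*)\le W$ via the $\mu^k_s$-column identity from Claim~\ref{claim12:seu} and $\eta\ge 0$. One cosmetic point: in your chained display the quantity $\sum_{k}\sum_{s}\sum_{l\neq k}\sum_{t\neq s}\eta(k,l,s,t)$ \emph{is} $W$, so the final ``$\le W$'' should be ``$= W$'' (the genuine inequality is the step where you extend the sum from pairs with $d(\sigma^*,k,s)>0$ to all $(k,s)$).
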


\begin{proof}
By the fact that the last column must sum up to zero and $E$ has one at the last column, we have 
\[
\sum_{i=1}^{n^*} \log  \frac{p^{k'_i}_{s'_i}}{p^{k_i}_{s_i}} +\left( \sum_{k \in K}\sum_{l \neq k}\sum_{s\in S}\sum_{t \neq s} \eta(k,l,s,t) \right) \log (1+e) = -\pi<0.
\]
Hence, by multiplying $-1$, we have
\[
\sum_{i=1}^{n^*} \log  \frac{p^{k_i}_{s_i}}{p^{k'_i}_{s'_i}}- \left( \sum_{k \in K}\sum_{l \neq k}\sum_{s\in S}\sum_{t \neq s} \eta(k,l,s,t) \right) \log (1+e)>0.
\]
Remember that for all $k \in K$ and $s \in S$, 
\begin{eqnarray*}
\begin{array}{lllll}
n(x^k_s)-n'(x^k_s)
=\displaystyle +\sum_{l \neq k}\sum_{t\neq s} \eta(k,l,s,t) -\sum_{l \neq k}\sum_{t\neq s} \eta(l,k,s,t) \le\displaystyle \sum_{l \neq k}\sum_{t\neq s} \eta(k,l,s,t).
\end{array}
\end{eqnarray*}
Since $d(\sigma^*, k,s)=n(x^k_s)-n'(x^k_s)$, we have
\begin{equation*}
\begin{aligned}
m(\sigma^*)
\equiv \sum_{s \in S} \sum_{k\in K:d(\sigma^*,k,s)>0 }d(\sigma^*, k,s)
&=\sum_{s \in S} \sum_{k\in K} \max\{n(x^k_s)-n'(x^k_s),0\} \\
&\le \displaystyle \sum_{s \in S} \sum_{k\in K}\sum_{l \neq k}\sum_{t\neq s} \eta(k,l,s,t).
\end{aligned}
\end{equation*}
Therefore, 
\begin{equation*}
\sum_{i=1}^{n^*} \log  \frac{p^{k_i}_{s_i}}{p^{k'_i}_{s'_i}}
> \left( \sum_{k \in K}\sum_{l \neq k}\sum_{s\in S}\sum_{t\neq s} \eta(k,l,s,t) \right) \log (1+e) 
\geq m(\sigma^*) \log (1+e).
\end{equation*}
This is a contradiction.
\end{proof}

\subsubsection{Proof of Lemma~\ref{lem:approximate:seu}}
Let $\X= \{x^k_s \mid k \in K, s \in S\}$. 
Consider the set of sequences that satisfy conditions~\ref{1it:sarseuone} and~\ref{1it:sarseuthree} in Definition~\ref{def:testsequence}, and (iii) in $e$-PSARSEU: 
\begin{eqnarray*}
\Sigma = 
\left\{ (x^{k_i}_{s_i},x^{k'_i}_{s'_i})_{i=1}^n \subset \X^2 \,\middle|\, 
\begin{array}{l}
(x^{k_i}_{s_i},x^{k'_i}_{s'_i})_{i=1}^n 
\text{ satisfies conditions~\ref{1it:sarseuone} and~\ref{1it:sarseuthree}} \\ 
\text{in Definition~\ref{def:testsequence} and (iii)  for some } n 
\end{array}
\right\} . 
\end{eqnarray*}
For each sequence $\sa \in \Sigma$, we define a vector $t_\sa\in \N^{K^2S^2}$. For each pair $(x^{k_i}_{s_i},x^{k'_i}_{s'_i})$, we shall identify the pair with $((k_i,s_i),(k'_i,s'_i))$. Let $t_{\sa}((k,s),(k',s'))$ be the number of times that  the pair $(x^{k}_{s}, x^{k'}_{s'})$ appears in the sequence $\sa$. One can then describe the satisfaction of $e$-PSARSEU by means of the vectors $t_\sa$. Observe that $t$ depends only on $(x^k)_{k=1}^K$ in the dataset $(x^k,p^k)_{k=1}^K.$  It does not depend on prices. 

For each $((k,s),(k',s'))$ such that $x^{k}_{s}> x^{k'}_{s'}$, define $\delta((k,s),(k',s')) =\log (p^k_s/p^{k'}_{s'})$. And define $\delta((k,s),(k',s')) = 0$ when $x^{k}_{s}\leq x^{k'}_{s'}$. 
Then, $\delta$ is a $K^2S^2$-dimensional real-valued vector. If $\sa = (x^{k_i}_{s_i},  x^{k'_i}_{s'_i})_{i=1}^n$, then 
\[\delta \cdot 
t_\sa = \sum_{((k,s),(k',s')) \in (K \times S)^2} 
\delta((k,s),(k',s')) 
t_{\sa}((k,s),(k',s')) 
= \log \left( \prod^n_{i=1}\frac{p^{k_i}_{s_i}}{p^{k'_i}_{s'_i}}\right) . \] 

So the dataset satisfies  $e$-PSARSEU if and only if $\delta \cdot t_{\sigma} \leq m(\sigma)\log (1+e)$ for all $\sigma\in \Sigma$.   

Enumerate the elements in $\X$ in increasing order: $y_1< y_2< \dots <y_N$, and fix an arbitrary  $\ul \xi \in (0,1)$. We shall construct by induction a sequence $\{(\ep^k_s(n))\}_{n=1}^N$, where  $\ep^k_s(n)$ is defined for all $(k,s)$ with $x^k_s=y_n$. 

By the denseness of the rational numbers, and the  continuity of the exponential function, for each $(k,s)$ such that $x^k_s =y_1$, there exists a positive number $\ep^k_s(1)$ such that $\log(p^{k}_{s}\ep^k_s(1)) \in \Q$ and $\ul \xi < \ep^k_s(1)< 1$. Let $\ep(1)= \min\{\ep^k_s(1) \mid x^k_s = y_1\}$. 

In second place, for each $(k,s)$ such that $x^k_s =y_2$, there exists a positive $\ep^k_s(2)$ such that $\log (p^{k}_{s}\ep^k_s (2)) \in \Q$ and $\ul\xi< \ep^k_s(2) < \ep(1)$. Let $\ep(2)= \min\{\ep^k_s(2) \mid x^k_s = y_2\}$. 

In third place, and reasoning by induction, suppose that $\ep(n)$ has been defined and that $\ul\xi<\ep(n)$. For each $(k,s)$ such that  $x^k_s =y_{n+1}$, let $\ep^k_s(n+1)>0$ be such that $\log (p^{k}_{s}\ep^k_s(n+1)) \in \Q$, and $\ul\xi< \ep^k_s(n+1) < \ep(n)$. Let $\ep(n+1)=\min\{\ep^k_s(n+1) \mid x^k_s = y_n\}$. 

This defines the sequence $(\ep^k_s(n))$ by induction. Note that $\ep^k_s(n+1) / \ep(n) < 1$ for all $n$.  Let $\bar\xi<1$ be such that $\ep^k_s(n+1) / \ep(n) < \bar\xi$. 

For each $k \in K$ and $s \in S$, let $q^k_s=p^k_s\ep^k_s(n)$, where $n$ is such that  $x^k_s = y_n$. We claim that the dataset $(x^k,q^k)_{k=1}^K$ satisfies $e$-PSARSEU. Let $\da^*$ be defined  from $(q^k)_{k=1}^K$ in the same manner as $\delta$ was defined from $(p^k)_{k=1}^K$.   

For each pair $((k,s),(k',s'))$ with $x^{k}_{s}> x^{k'}_{s'}$, if $n$ and $m$ are such that $x^{k}_{s}=y_n$ and $x^{k'}_{s'}=y_m$, then  $n>m$. By definition of $\ep$, 
\[
\frac{\ep^{k}_{s}(n)}{\ep^{k'}_{s'}(m)}
<\frac{\ep^k_s(n)}{\ep(m)} <\bar \xi
< 1.\] 
Hence,  
\[
\da^*((k,s),(k',s')) = 
\log \frac{p^k_s \ep^k_s(n)}{p^{k'}_{s'}\ep^{k'}_{s'}(m)} < 
\log \frac{p^k_s}{p^{k'}_{s'}} +\log \bar \xi 
<\log \frac{p^k_s}{p^{k'}_{s'}} = \delta((k,s), (k',s')).
\]
Now we choose $e'$ such that $e' \ge e$ and $\log (1+e') \in \Q$. 

Thus, for all $\sigma \in \Sigma$, $\da^* \cdot t_{\sigma} \leq \delta \cdot t_{\sigma} \leq m(\sigma) \log (1+e) \le m(\sigma) \log (1+e')$ as $t_{\cdot}\geq 0$ and  the dataset $(x^k,p^k)_{k=1}^K$ satisfies $e$-PSARSEU.

Therefore, the dataset  $(x^k,q^k)_{k=1}^K$ satisfies $e'$-PSARSEU.
Finally, note that $\ul \xi < \ep^k_s(n)<1$ for all $n$ and each $k \in K, s \in
S$. So that by choosing $\ul\xi$ close enough to~$1$ we can take $(q^k)_{k=1}^K$ to be as close to $(p^k)_{k=1}^K$ as desired.  We also can take $e'$ to be as close to $e$ as desired.

\subsubsection{Proof of Lemma~\ref{lem:anyprice:seu}}

Consider the system comprised by~\eqref{eq:logs:seu},~\eqref{eq:concave:seu}, and ~\eqref{eq:bounds:seu} in the proof of Lemma~\ref{lem:rationalprice:seu}. 
Let $A$, $B$, and $E$ be constructed from the dataset as in the proof of Lemma~\ref{lem:rationalprice:seu}. The difference with respect to  Lemma~\ref{lem:rationalprice:seu} is that now the entries of $A_4$ and $B_4$ may not be rational. Note that the entries of $E$, $B_i$, and $A_i$, for $i=1,2,3$ are rational. 

Suppose, towards a contradiction, that there is no solution to the system comprised by~\eqref{eq:logs:seu},~\eqref{eq:concave:seu}, and ~\eqref{eq:bounds:seu}. Then, by the argument in the proof of Lemma~\ref{lem:rationalprice:seu} there is no solution to system $S1$.  
Lemma~\ref{lem:motzkin1} (in Appendix~\ref{appendix:toa}) with $\F=\Re$ implies that there is a real vector $(\theta,\eta,\pi)$ such that $\theta \cdot A + \eta \cdot B + \pi \cdot E =0$ and $\eta \geq 0, \pi>0$. 
Recall that $E_4=1$, so we obtain that $\theta\cdot A_4 + \eta \cdot B_4+\pi =0$. 

Let $(q^k)_{k=1}^K$ vectors of prices and a positive real number $e'$ be such that the dataset $(x^k,q^k)_{k=1}^K$  satisfies $e'$-PSARSEU and $\log q^k_s\in \Q$ for all $k$ and $s$ and $\log (1+e') \in \Q$. (Such $(q^k)_{k=1}^K$ and $e'$ exist by Lemma~\ref{lem:approximate:seu}.)
Construct matrices $A'$, $B'$, and $E'$ from this dataset in the same way as $A$, $B$, and $E$ is constructed in the proof of Lemma~\ref{lem:rationalprice:seu}. Since only prices $q^k$ and the bound $e'$ are different in this dataset, only $A'_4$ and $B'_4$ may be different from $A_4$ and $B_4$, respectively.  So $E'=E$, $B'_i=B_i$ and $A'_i=A_i$ for $i=1,2,3$. 

By Lemma~\ref{lem:approximate:seu}, we can choose prices $q^k$ and $e'$ such that $|(\theta\cdot A'_4+ \eta \cdot B'_4) - (\theta\cdot A_4+\eta \cdot B_4)| < \pi/2$. 
We have shown that $\theta\cdot A_4+\eta \cdot B_4 = - \pi$, so the choice of prices $q^k$ and $e'$ guarantees that $\theta\cdot A'_4+ \eta \cdot B'_4< 0$. Let $\pi'= - \theta\cdot A'_4-\eta \cdot B'_4 > 0$. 

Note that  $\theta\cdot A'_i + \eta \cdot B'_i + \pi' E_i = 0 $ for $i=1,2,3$, as $(\theta,\eta,\pi)$ solves system $S2$ for matrices $A$, $B$ and $E$, and $A'_i=A_i$, $B'_i=B_i$ and $E_i=0$ for $i=1,2,3$. 
Finally, $\theta\cdot A'_4 + \eta \cdot B'_4 + \pi' E_4 = \theta\cdot A'_4 +\eta\cdot B'_4 + \pi'=0.$
We also have that $\eta\geq 0$ and $\pi'> 0$. 
Therefore $\theta$, $\eta$, and $\pi'$ constitute a solution to $S2$ for matrices $A'$, $B'$, and $E'$. 

Lemma~\ref{lem:motzkin1} then implies that there is no solution to system $S1$ for matrices $A'$, $B'$, and $E'$. So there is no solution to the system comprised by~\eqref{eq:logs:seu},~\eqref{eq:concave:seu}, and~\eqref{eq:bounds:seu} in the proof of Lemma~\ref{lem:rationalprice:seu}. 
However, this contradicts Lemma~\ref{lem:rationalprice:seu} because the dataset $(x^k,q^k)$ satisfies $e'$-PSARSEU, $\log (1+e') \in \Q$,  and $\log q^k_s \in \Q$ for all $k\in K$ and $s\in S$.

\clearpage
% ---------- 
% Theorem of the Alternative
% ---------- 
\subsection{Theorem of the Alternative}
\label{appendix:toa}

We shall use the following lemma, which is a version of the Theorem of the Alternative. This is Theorem~1.6.1 in \citeSOMpapers{stoer1970convexity}. We shall use it here in the cases where $F$ is either the real or the rational number field. 

\begin{lemma} 
\label{lem:motzkin1} 
Let $A$ be an $m\times n$ matrix, $B$ be an $l\times n$ matrix, and $E$ be an $r\times n$ matrix. Suppose that the entries of the matrices $A$, $B$, and $E$ belong to a commutative ordered field $\F$. 
Exactly one of the following alternatives is true.
\begin{enumerate}
\item There is $u\in\F^n$ such that $A\cdot u = 0$, $B\cdot u \geq 0$, $E\cdot u \gg 0$. 
\item There is $\theta \in \F^r$, $\eta \in \F^l$, and $\pi \in \F^m$ such  that $\theta \cdot A+  \eta \cdot B+ \pi \cdot E = 0$; $\pi>0$ and  $\eta \geq 0$.
\end{enumerate}
\end{lemma}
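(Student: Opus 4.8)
The plan is to recognize this as the Motzkin transposition theorem stated over an arbitrary commutative ordered field $\F$, which forces a purely algebraic argument: since $\F$ need not be $\Re$, no compactness or separating-hyperplane reasoning is available, and indeed the whole point of the generality is that the same lemma serves for both $\F=\Re$ and $\F=\Q$ as used later. First I would dispatch the easy half, that the two alternatives are mutually exclusive. If $u$ solves the first system and $(\theta,\eta,\pi)$ the second, then $(\theta\cdot A+\eta\cdot B+\pi\cdot E)\cdot u=0$ expands as $\theta\cdot(Au)+\eta\cdot(Bu)+\pi\cdot(Eu)=0$. Here $\theta\cdot(Au)=0$ since $Au=0$; $\eta\cdot(Bu)\ge 0$ since $\eta\ge 0$ and $Bu\ge 0$; and $\pi\cdot(Eu)>0$ since $\pi$ is semipositive ($\pi\ge 0$, $\pi\ne 0$) while $Eu\gg 0$, so the positive component of $\pi$ multiplies a strictly positive entry of $Eu$. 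The sum is then strictly positive, contradicting its equality to $0$.

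For the substantive half I would show that infeasibility of the first system forces the second to be feasible. The preliminary step is to eliminate the strict inequalities by homogenization. Because the first system is homogeneous, any solution rescales: it is solvable if and only if the non-strict system $Au=0$, $Bu\ge 0$, $Eu\ge \mathbf{1}$ is solvable, where $\mathbf{1}$ is the all-ones vector. One direction rescales a solution so that $\min_i(Eu)_i=1$; the other is immediate since $\mathbf{1}\gg 0$ makes $Eu\ge\mathbf{1}$ imply $Eu\gg 0$. This converts the problem into a mixed linear system with non-strict inequalities and a nonzero right-hand side.

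I would then apply the affine Farkas lemma for such mixed systems: folding $Bu\ge 0$ and $Eu\ge\mathbf{1}$ into a single inequality block with right-hand side $(0,\mathbf{1})$ and keeping $Au=0$ as the equality block, infeasibility yields multipliers $\eta\ge 0$ (for $B$), $\pi\ge 0$ (for $E$), and free $\theta$ (for $A$) with $\theta\cdot A+\eta\cdot B+\pi\cdot E=0$ and $\pi\cdot\mathbf{1}>0$. The conditions $\pi\ge 0$ and $\sum_i\pi_i=\pi\cdot\mathbf{1}>0$ together are exactly $\pi>0$ in the semipositive sense of the lemma, so $(\theta,\eta,\pi)$ is precisely a solution of the second alternative. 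The affine Farkas lemma itself I would prove over $\F$ by Fourier--Motzkin elimination: induct on the number of columns $n$, eliminating one variable at a time by combining each inequality with positive coefficient on that variable against each with negative coefficient (a nonnegative combination), while equalities are split into opposed inequality pairs. Solvability is preserved at each step, and when the variable-free reduced system exhibits a contradiction of the form $0\ge c$ with $c>0$, unwinding the accumulated nonnegative combinations reconstructs the multiplier vector.

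The main obstacle is the bookkeeping inside the Fourier--Motzkin step: one must track, through every elimination, which nonnegative combination of the original rows produced each derived inequality, and then verify that the recovered multipliers carry the correct signs ($\eta,\pi\ge 0$ and $\theta$ free). This is exactly where the ordered-field generality matters, since the elimination rule and the sign analysis are algebraic identities valid in any ordered field, with no appeal to limits, completeness, or topology; that is what permits the single lemma to be invoked with $\F=\Re$ in Lemma~\ref{1lem:anyprice} and with $\F=\Q$ in the rational reductions.
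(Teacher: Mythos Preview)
The paper does not actually prove this lemma: it simply records it as Theorem~1.6.1 of Stoer and Witzgall, \emph{Convexity and Optimization in Finite Dimensions~I} (1970), and uses it as a black box over $\F=\Re$ and $\F=\Q$. Your proposal goes further and supplies a genuine proof sketch, and the sketch is correct: mutual exclusivity by pairing the two systems; then, for the substantive direction, homogenizing the strict block $Eu\gg 0$ to the non-strict $Eu\ge\mathbf{1}$ (legitimate because the primal system is positively homogeneous in $u$), applying the affine Farkas alternative to the resulting mixed system to extract multipliers $(\theta,\eta,\pi)$ with $\eta,\pi\ge 0$, $\theta$ free, $\theta A+\eta B+\pi E=0$ and $\pi\cdot\mathbf{1}>0$, and finally observing that $\pi\ge 0$ together with $\sum_i\pi_i>0$ is exactly the semipositivity $\pi>0$ required. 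Grounding Farkas in Fourier--Motzkin elimination is the right move precisely because that argument uses only ordered-field arithmetic, which is what makes the lemma available simultaneously for $\Re$ and $\Q$ as the paper needs. What you gain over the paper is self-containment and an explicit verification that no completeness or compactness is invoked; what the paper gains is brevity by deferring to a standard reference.
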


The next lemma is a direct consequence of Lemma~\ref{lem:motzkin1}. See Lemma~12 in \citeSOMpapers{chambersechenique} for a proof.  

\begin{lemma} 
\label{lem:motzkin2}
Let $A$ be an $m\times n$ matrix, $B$ be an $l\times n$ matrix, and $E$ be an $r\times n$ matrix. Suppose that the entries of the matrices $A$, $B$, and $E$ are rational numbers.
Exactly one of the following alternatives is true.
\begin{enumerate}
\item There is $u\in\Re^n$ such that $A\cdot u = 0$, $B\cdot u \geq 0$, and $E\cdot u \gg 0$.
\item There is $\theta \in \Q^r$, $\eta\in \Q^l$, and $\pi\in \Q^m$ such  that $\theta \cdot A + \eta \cdot B +  \pi \cdot E = 0$; $\pi >0$ and  $\eta \geq 0$. 
\end{enumerate}
\end{lemma}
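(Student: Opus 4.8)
The plan is to obtain Lemma~\ref{lem:motzkin2} as a corollary of Lemma~\ref{lem:motzkin1}, applied with the ordered field taken to be the rationals, $\F = \Q$. This is legitimate precisely because the entries of $A$, $B$, and $E$ are assumed to be rational, so they lie in $\Q$. The only delicate point is the apparent mismatch between alternative~1 of Lemma~\ref{lem:motzkin2}, which asks for a \emph{real} vector $u$, and the rational field over which I invoke Lemma~\ref{lem:motzkin1}; I would resolve this by noting that a rational primal solution is automatically a real one, so enlarging alternative~1 from $\Q^n$ to $\Re^n$ can only add solutions and never destroys the dichotomy.

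First I would check that the two alternatives cannot hold at once; this step uses no rationality and is a direct pairing argument. Suppose $u \in \Re^n$ satisfies $A\cdot u = 0$, $B\cdot u \geq 0$, and $E\cdot u \gg 0$, while $(\theta,\eta,\pi)$ satisfies $\theta\cdot A + \eta\cdot B + \pi\cdot E = 0$ with $\eta \geq 0$ and $\pi > 0$. Multiplying the dual identity on the right by $u$ gives
\[
0 = (\theta\cdot A + \eta\cdot B + \pi\cdot E)\cdot u = \theta\cdot(A\cdot u) + \eta\cdot(B\cdot u) + \pi\cdot(E\cdot u) = \eta\cdot(B\cdot u) + \pi\cdot(E\cdot u),
\]
where the last equality uses $A\cdot u = 0$. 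Now $\eta \geq 0$ and $B\cdot u \geq 0$ give $\eta\cdot(B\cdot u) \geq 0$, while $\pi > 0$ (nonnegative and nonzero) together with $E\cdot u \gg 0$ give $\pi\cdot(E\cdot u) > 0$. Hence the right-hand side is strictly positive, contradicting that it vanishes. So at most one alternative can hold.

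For the other half I would apply Lemma~\ref{lem:motzkin1} with $\F = \Q$, which yields that exactly one of the following is true: (a)~there is $u \in \Q^n$ with $A\cdot u = 0$, $B\cdot u \geq 0$, $E\cdot u \gg 0$; or (b)~there are rational $\theta,\eta,\pi$ with $\theta\cdot A + \eta\cdot B + \pi\cdot E = 0$, $\eta \geq 0$, $\pi > 0$. In case~(a) the rational vector $u$ belongs to $\Re^n$, so alternative~1 of Lemma~\ref{lem:motzkin2} holds; in case~(b) alternative~2 holds verbatim. Thus at least one alternative of Lemma~\ref{lem:motzkin2} is always true, and combined with the mutual-exclusivity step exactly one holds. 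I do not anticipate a substantive obstacle, since all the real work is carried by Lemma~\ref{lem:motzkin1}; the one thing to keep straight is the sign conventions ($\gg$, $>$, $\geq$) so that the pairing argument remains valid, and the observation that a rational witness for the primal doubles as a real one.
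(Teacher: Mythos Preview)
Your proposal is correct and matches the paper's approach: the paper states that Lemma~\ref{lem:motzkin2} is a direct consequence of Lemma~\ref{lem:motzkin1} and refers the reader to Lemma~12 of Chambers and Echenique for the details. Your argument---applying Lemma~\ref{lem:motzkin1} over $\F=\Q$, observing that a rational primal witness is automatically a real one, and verifying mutual exclusivity directly via the pairing $(\theta\cdot A + \eta\cdot B + \pi\cdot E)\cdot u = 0$---is precisely the standard way to fill in that citation, and no step is missing.
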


% ---------- 
% Computing $e_*$ 
% ---------- 
\clearpage 
\section{Computing $e_*$}
\label{sec:compute_e}

We demonstrate how to calculate $e_*$ given a dataset of choice under risk. To calculate the value, it is easier to use price-perturbed OEU rationality, rather than belief-perturbed OEU rationality. Formally, for a given data set $(x^k,p^k)_{k=1}^K$, we want to compute $e_*$ such that the data set is price perturbed OEU rational given the number $e$. We can transform this problem into an easier problem with the following remark. 

\begin{remark}
Given $e \in \Re_+$, a data set $(x^k,p^k)_{k=1}^K$ is $e$-price-perturbed OEU rational if and only if there are strictly positive numbers $v^k_s$, $\la^k$, $\mu_s$, and $\ep^k_s$ for $s\in S$ and $k\in K$, such that
\begin{equation} \label{1thesystem'''}
\mu^*_s v^k_s = \la^k \ep^k_s p^k_s,\qu x^k_s > x^{k'}_{s'} \implies v^k_s \leq v^{k'}_{s'},
\end{equation}
and for all $k \in K$ and $s,t \in S$
\begin{equation*} \label{1constraint''}
\frac{1}{1+e}\le \frac{\ep^k_s}{\ep^k_t} \le 1+e.
\end{equation*}
\label{remark:e-oeu_linear_problem}
\end{remark}

By the remark, the $e_*$ can be obtained by solving the following problem: 
\begin{equation*}
\begin{aligned}
\min_{(\mu_s, v^k_s, \la^k, \ep^k_s)_{k,s}} & \max_{k \in K, s, t \in S}  \frac{\ep^k_s}{\ep^k_t} \\
\text{ s.t. } & \;\; \mu^*_s v^k_s = \la^k \ep^k_s p^k_s , \\
& \;\; x^k_s > x^{k'}_{s'} \implies v^k_s \leq v^{k'}_{s'} . 
\end{aligned}
\end{equation*}

We then substitute $\ep^k_s$ in the objective function by using the equality constraint in (\ref{1thesystem'''}). By canceling out $\la^k $ and  log-linearizing, we obtain the following:
\begin{equation}
\begin{aligned}
\min_{(v^k_s)_{k,s}} & \max_{k \in K , s, t \in S} (\log \mu^*_s + \log v^k_s - \log p^k_s) - (\log \mu^*_t + \log v^k_t - \log p^k_t) \\
\text{s.t.} & \;\;\; x^k_s > x^{k'}_{s'} \implies \log v^k_s \le  \log v^{k'}_{s'} . 
\end{aligned}
\tag{$\star$}
\label{eq:robustOEU_problem_original}
\end{equation}

By the discussion above, we have the following result:

\begin{remark}
For any data set $(x^k,p^k)_{k=1}^K$, $e_*$ is the solution of the problem~\eqref{eq:robustOEU_problem_original}, which always exists. 
\label{remark:e-oeu_linear_problem_existence}
\end{remark}

By using~\eqref{eq:robustOEU_problem_original} and the peculiarities
of the experiments, we can  simplify the problem: we have $|S| = 2$ and $\mu^*_s = 1/2$ for all $s \in S$. Hence, the problem simplifies to the following: 
\begin{equation*}
\begin{aligned}
\min_{(v^k_s)_{k,s}} & \max_{k \in K, s, t \in S} (\log v^k_s - \log p^k_s) - (\log v^k_t - \log p^k_t) \\
\text{s.t.} & \;\; \phantom{(} x^k_s > x^{k'}_{s'} \implies \log v^k_s \le \log v^{k'}_{s'} . \end{aligned}
\tag{$\diamond$}
\label{eq:robustOEU_problem_reduced}
\end{equation*}

% ----------
% Implementation Details 
% ----------
\clearpage 
\section{Implementation Details}
\label{appendix:implementation}

In order to calculate $e_*$ for each subject's data, we solve problem~\eqref{eq:robustOEU_problem_original} using \texttt{Matlab R2017b} (MathWorks). 

For each subject, the decision in every trial is characterized by a tuple $(a_1, a_2, x_1, x_2)$ where $a_i$ represents the intercept of the budget line on each axis (here we call the $x$-axis ``account~1'' and the $y$-axis ``account~2''), and $x_i$ represents the subject's allocation to account $i$. In order to rewrite the choice data in a price-consumption format as in the theory, we set prices $p_1 = 1$ (normalization) and $p_2 = a_1 / a_2$. This gives us a dataset $(x^k, p^k)_{k=1}^K$. 

Remember that the problem we are going to solve is: 
\begin{align}
\begin{aligned}
\min_{(v^k_s)_{k,s}} & \max_{k \in K, s, t \in S} (\log \mu_s^* + \log v^k_s - \log p^k_s) - (\log \mu_t^* + \log v^k_t - \log p^k_t) \\
%\min_{(v^k_s)_{k,s}} \max_{k \in K, s, t \in S} & \;\; (\log v^k_s - \log p^k_s) - (\log v^k_t - \log p^k_t) \\
\text{s.t.} & \;\; \phantom{(} x^k_s > x^{k'}_{s'} \implies \log v^k_s \le \log v^{k'}_{s'} . 
\end{aligned}
\tag{$\star$}
\end{align}
Our main task is to express this problem in a matrix notation. 

Let $\boldsymbol{z}$ be a vector of length $K \times S + K \times S + S$, whose first $K \times S$ entries correspond to each of $(\log v_s^k)_{s, k}$ and the last $K \times S + S$ entries are all~1. This vector corresponds to the control variables of the problem. The reason why we have $K \times S$ additional rows of~1 in the vector will become clear shortly. 

We construct two matrices $A$ and $B$. The first matrix $A$ has $K \times S$ rows and $K \times S + K \times S + S$ columns, and looks as follows: 
\begin{equation*}
\hspace{-8em}
\resizebox{1.2\textwidth}{!}{
\kbordermatrix{
   & \cdots & v_s^k & v_t^k & v_s^l & v_t^l & \cdots & & \cdots & p_s^k & p_t^k & p_s^l & p_t^l & \cdots & & \cdots & \mu_s^* & \mu_t^* & \cdots \\
\vdots &  & \vdots & \vdots & \vdots & \vdots & & \vrule & & \vdots & \vdots & \vdots & \vdots & & \vrule & & \vdots & \vdots & \\
(k,s,t) & \cdots & 1 & -1 & 0 & 0 & \cdots & \vrule & \cdots & - \log p_s^k & \log p_t^k & 0 & 0 & \cdots & \vrule & \cdots & 1 & -1 & \cdots \\
(k,t,s) & \cdots & -1 & 1 & 0 & 0 & \cdots & \vrule & \cdots & \log p_s^k & - \log p_t^k & 0 & 0 & \cdots & \vrule & \cdots & -1 & 1 & \cdots \\
(l,s,t) & \cdots & 0 & 0 & 1 & -1 & \cdots & \vrule & \cdots & 0 & 0 & - \log p_s^l & \log p_t^l & \cdots & \vrule & \cdots & 1 & -1 & \cdots \\
(l,t,s) & \cdots & 0 & 0 & -1 & 1 & \cdots & \vrule & \cdots & 0 & 0 & \log p_s^l & - \log p_t^l & \cdots & \vrule & \cdots & -1 & 1 & \cdots \\
\vdots &  & \vdots & \vdots & \vdots & \vdots & & \vrule & & \vdots & \vdots & \vdots & \vdots & & \vrule & & \vdots & \vdots & \\
} .
} 
\end{equation*}
Similarly, the second matrix $B$ has $K \times S + K \times S + S$ columns. There is one row for every pair $(k, s)$ and $(k', s')$ with $x_s^k > x_{s'}^{k'}$. In the row corresponding to $(k, s)$ and $(k', s')$ we have zeroes everywhere with the exception of a $-1$ in the column for $v_s^k$ and a $1$ in the column for $v_{s'}^{k'}$. 

We use the function \texttt{fmincon} to find a solution $\boldsymbol{z}^*$ and the value of the problem (i.e., $e_*$), with $\max A \cdot \boldsymbol{z}$ being the objective function we are going to minimize and $B \cdot \boldsymbol{z} \geq 0$ being the constraint.

% ---------- 
% Minimum Perturbation Test 
% ---------- 
\clearpage 
\section{Minimum Perturbation Test}
\label{appendix:minimum_perturbation_test}

\paragraph{Rationale behind the test.} 
We provide a detailed exposition of how we derive our test. 
Let $H_0$ and $H_1$ denote the null hypothesis that the true dataset $D_{\text{true}} = (p^k, x^k)_{k=1}^K$ is OEU rational and the alternative hypothesis that $D_{\text{true}}$ is not OEU rational. To construct our test, consider a number $\mathcal{E}^*$, which is the result of the following optimization problem given a dataset $D_{\text{true}}$: 
\begin{equation} 
\begin{aligned}
\min_{(v_s^k, \lambda^k, \varepsilon_s^k)_{s, k}} & \;\; \max_{k \in K, s, t \in S} \frac{\varepsilon_s^k}{\varepsilon_t^k} \\
\text{s.t.} & \;\; \log \mu_s^* + \log v_s^k - \log \lambda^k - \log p_s^k - \log \varepsilon_s^k = 0 \\
& \;\; x_s^k > x_{s'}^{k'} \Longrightarrow \log v_s^k \leq \log v_{s'}^{k'} . 
\end{aligned}
\label{eq:minimum_perturbation_test} 
\end{equation}
Under $H_0$, the true dataset $D_{\text{true}} = (p^k, x^k)_{k=1}^K$ is OEU rational. A slight modification of Lemma~7 in \citeSOMpapers{echenique2015som} implies that there exist strictly positive numbers $\widetilde{v}_s^k$, and $\widetilde{\lambda}^k$ for all $s \in S$ and $k \in K$ such that 
\begin{equation*}
\log \mu_s^* + \log \widetilde{v}_s^k - \log \widetilde{\lambda}^k - \log p_s^k = 0 
\;\; \text{ and } \;\; x_s^k > x_{s'}^{k'} \Longrightarrow \log \widetilde{v}_s^k \leq \log \widetilde{v}_{ts}^{k'} . 
\end{equation*}
Substituting the relationship $\tilde p_s^k = p_s^k \varepsilon_s^k$ for all $s \in S$ and $k \in K$ yields 
\begin{equation*}
\log \mu_s^* + \log \widetilde{v}_s^k - \log \widetilde{\lambda}^k -
\log \tilde  p_s^k = \log \varepsilon_s^k  
\;\; \text{ and } \;\; x_s^k > x_{s'}^{k'} \Longrightarrow \log \widetilde{v}_s^k \leq \log \widetilde{v}_{s'}^{k'} , 
\end{equation*}
which implies that the tuple $(\widetilde{v}_s^k, \widetilde{\lambda}^k, \varepsilon_s^k)_{s, k}$ satisfies the constraint in problem~\eqref{eq:minimum_perturbation_test}.

Letting $\mathcal{E}^* \left( (p^k, x^k)_{k=1}^K \right)$ denote the optimal value of the problem \eqref{eq:minimum_perturbation_test}, we have 
\begin{equation*}
\mathcal{E}^* \left( (p^k, x^k)_{k=1}^K \right) \leq \max_{k \in K, s, t \in S} \frac{\varepsilon_s^k}{\varepsilon_s^k} = \widehat{\mathcal{E}} 
\end{equation*}
under the null hypothesis. 

We construct a test as follows: 
\begin{equation*}
\begin{cases}
\text{reject } H_0 & \text{if } \displaystyle \int_{\mathcal{E}^* \left( (p^k, x^k)_{k=1}^K \right)}^\infty f_{\widehat{\mathcal{E}}} (z) dz < \alpha \\
\text{accept } H_0 & \text{otherwise}
\end{cases} , 
\end{equation*}
where $\alpha$ is the size of the test and $f_{\widehat{\mathcal{E}}}$ is the density function of the distribution of $\widehat{\mathcal{E}} = \max_{k, s, t} \varepsilon_s^k / \varepsilon_t^k$. 
Given a nominal size $\alpha$, we can find a critical value $C_\alpha$ satisfying $\Pr [\widehat{\mathcal{E}} > C_\alpha] = \alpha$; we set $C_\alpha = F^{-1}_{\widehat{\mathcal{E}}} (1-\alpha)$, where $F_{\widehat{\mathcal{E}}}$ denotes the cumulative distribution function of $\widehat{\mathcal{E}}$. 
However, because $\mathcal{E}^* \left( ( p^k, x^k)_{k=1}^K \right) \leq \widehat{\mathcal{E}}$, the true size of the test is better than $\alpha$. Concretely, $\text{size} = \Pr [\mathcal{E}^* > C_\alpha] \leq \Pr [\widehat{\mathcal{E}} > C_\alpha] = \alpha$.

\paragraph{Parameter tuning.}
In order to perform the test, we need to obtain the distribution of $\widehat{\mathcal{E}}$ and its critical value $C_\alpha$ given a significance level $\alpha$. We obtain the distribution of $\widehat{\mathcal{E}}$ by assuming that $\varepsilon$ follows a log-normal distribution $\varepsilon \sim \Lambda (\nu, \xi^2)$.~\footnote{Note that parameters $(\nu, \xi^2)$ correspond to the mean and the variance of the random variable in the log-scale. In other words, $\log \varepsilon \sim N (\nu, \xi^2)$. The moments of the log-normal distribution $\varepsilon \sim \Lambda (\nu, \xi^2)$ are then calculated by $\mathbf{E} [\varepsilon] = \exp (\nu + \xi^2 / 2)$ and $\Var (\varepsilon) = \exp (2 \nu + \xi^2) (\exp (\xi^2) - 1)$.} 

The crucial step in our approach is the selection of parameters $(\nu, \xi^2)$. It is natural to choose these parameters so that there is no price perturbation on average (i.e., $\mathbf{E} [\varepsilon] = 1$). However, as we discussed above, there is no objective guide to choosing an appropriate level of $\Var (\varepsilon)$. Therefore, we use variation in (relative) prices observed in the data. 

We have assumed that $\tilde p_s^k = p_s^k \varepsilon_s^k$ for all $s \in S$, $k \in K$, and the noise term $\varepsilon$ is independent of the random selection of budgets $(p_s^k)_{s \in S, k \in K}$. Hence, 
\begin{equation*}
\begin{aligned}
& \Var (\tilde p) = \Var (p) \cdot \Var (\varepsilon) + \Var (p) \cdot \mathbf{E} [\varepsilon]^2 + \mathbf{E} [p]^2 \cdot \Var (\varepsilon) \\
\iff \;\; & \frac{\Var (\tilde p)}{\Var (p)} = \mathbf{E} [\varepsilon]^2 + \left( 1 + \frac{\mathbf{E} [p]^2}{\Var (p)} \right) \Var (\varepsilon) . 
\end{aligned}
\end{equation*}
Given the observed variation in $(p_s^k)_{s \in S, k \in K}$, $\Var (\varepsilon)$ determines how much larger (or smaller, in ratio) the variation of perturbed prices $(\tilde p_s^k)_{s \in S, k \in K}$ is relative to actual prices. 

Let us consider an agent who has trouble telling the two variances apart. More generally, the agent has trouble telling the distributions of prices apart, that is why she is confusing actual and perceived prices, but the distribution depends only on the variance; so we focus on variance. Consider a hypothesis test for the null hypothesis that the variance of a normal random variable with known mean has variance $\sigma_0^2$ against the alternative that $\sigma^2 \geq \sigma_0^2$. Let $\hat \sigma_n^2$ be the sample variance.  

The agent performs an upper-tailed chi-squared test defined as 
\begin{align}
\text{H}_0 :& \;\; \sigma^2 = \sigma_0^2 \notag \\
\text{H}_1 :
& \;\; \sigma^2 > \sigma_0^2 \notag 
\end{align}
The test statistic is: 
\begin{equation*}
T_n = \frac{(n-1) \hat \sigma_n^2}{\sigma_0^2} 
\end{equation*}
where $n$ is the sample size (i.e., the number of budget sets). The sampling distribution of the test statistic $T_n$ under the null hypothesis follows a chi-squared distribution with $n-1$ degrees of freedom. 

We consider the probability $\eta^I$ of rejecting the null hypothesis when it is true, a type~I error; and the probability $\eta^{\mathit{II}}$ of failing to reject the null hypothesis when the alternative $\sa^2= \sa^2_1 > \sigma^2_0$ is true, a type~II
error. The test rejects the null hypothesis that the variance is $\sigma_0^2$ if 
\begin{equation*}
T_n > \chi_{1-\alpha, n-1}^2 \notag 
\end{equation*}
where $\chi_{1-\alpha, n-1}^2$ is the critical value of a chi-squared distribution with $n-1$ degree of freedom at the significance level $\alpha$, defined by $\pr [\chi^2 < \chi_{1-\alpha, n-1}^2] = 1 - \eta^I$.~\footnote{An alternative approach, without assuming that a distribution for $T_n$, and based on a large sample approximation to the distribution of $T_n$, yields very similar results. Calculations and empirical findings are available from the authors upon request.} 

Under the alternative hypothesis that $\sigma^2 = \sigma^2_1>\sigma^2_0$, the statistic $(\sigma_0^2/\sigma_1^2) \cdot T_n$ follows a chi-squared distribution (with $n-1$ degrees of freedom). Then, the probability $\eta^{\mathit{II}}$ of making a type~II error is given by 
\begin{equation*}
\begin{aligned}
\eta^{\mathit{II}} 
= \pr [T_n < \chi_{1-\alpha, n-1}^2 \mid \text{H}_1 : \sigma_1^2 > \sigma_0^2 \text{ is true}] 
&= \pr \left[ \frac{\sigma_0^2}{\sigma_1^2} \cdot T_n < \frac{\sigma_0^2}{\sigma_1^2} \cdot \chi_{1-\alpha, n-1}^2 \right] \\
&= \pr \left[ \chi^2 < \frac{\sigma_0^2}{\sigma_1^2} \cdot \chi_{1-\alpha, n-1}^2 \right] . 
\end{aligned}
\end{equation*}

Let $\chi_{\beta, n-1}^2$ be the value that satisfies $\pr [\chi^2 < \chi_{\beta, n-1}^2] = \eta^{\mathit{II}}$. 
Then, given $\eta^I$ and $\eta^{\mathit{II}}$, we obtain 
\begin{equation*}
\begin{aligned}
\pr \left[ \chi^2 < \frac{\sigma_0^2}{\sigma_1^2} \cdot \chi_{1-\alpha, n-1}^2 \right] = \eta^{\mathit{II}} 
& \iff 
\frac{\sigma_0^2}{\sigma_1^2} \cdot \chi_{1-\alpha, n-1}^2 = \chi_{\beta, n-1}^2 \\
& \iff 
\frac{\sigma_1^2}{\sigma_0^2} = \frac{\chi_{1-\alpha, n-1}^2}{\chi_{\beta, n-1}^2} . 
\end{aligned}
\end{equation*}

As a consequence, given a measured variance $\sigma_0^2$, calculated from observed prices, and assumed values for $\eta^{I}$ and $\eta^{\mathit{II}}$, we can back out the minimum ``detectable'' value of the variance $\sigma^2_1$. From this variance of prices, we obtain $\Var (\ep)$.

% ---------- 
% Supplementary Empirical Analysis 
% ---------- 
\clearpage 
\section{Supplementary Empirical Analysis}
\label{appendix:supplementary_analysis}

\subsection{First-Order Stochastic Dominance} 
\label{appendix:fosd}

In the portfolio allocation environment studied in the three studies we looked at, choosing an allocation $(x_1, x_2)$ from a budget line defined by prices $(p_1, p_2)$ violates {\em monotonicity with respect to first-order stochastic dominance (FOSD-monotonicity)} when either (i) $p_1 > p_2$ and $x_1 > x_2$ or (ii) $p_2 > p_1$ and $x_2 > x_1$ (i.e., the choice involves more allocation toward more-expensive security).

Table~\ref{table:fosd_violation} presents the average fraction (out of~25) of choices violating FOSD-monotonicity and the number of subjects without FOSD-monotonicity violations. On average, subjects made 24-34\% violations of FOSD-monotonicity. The number of subjects who made no FOSD-violating choices is less than 10\% for all datasets. As discussed in \citeSOMpapers{choi2014som}, choices can be consistent with GARP even with violations of FOSD-monotonicity. The average fraction of FOSD-violating choices calculated from the subsample of GARP-compliant ($\text{CCEI} = 1$) subjects is close to the one we obtain from the whole sample. 
The entire distributions are presented in Figure~\ref{fig:fosd-mon_violation_frac}. 

\begin{table}[!h]
\centering 
\caption{FOSD violation.}
\label{table:fosd_violation}
\resizebox{\textwidth}{!}{%
\begin{tabular}{l rrr rrr}
\toprule 
 & \multicolumn{3}{c}{All subjects} & \multicolumn{3}{c}{$\text{CCEI = 1}$} \\
 \cmidrule(lr){2-4} \cmidrule(lr){5-7}
 & CKMS & CMW & CS & CKMS & CMW & CS \\
\midrule 
Number of subjects & 1,182 & 1,116 & 1,421 & 270 & 207 & 313 \\
Average fraction of FOSD-mon. violations & 0.335 & 0.320 & 0.239 & 0.364 & 0.312 & 0.221 \\
Fraction of subjects without FOSD-mon. violations & 0.025 & 0.047 & 0.066 & 0.066 & 0.164 & 0.153 \\
\bottomrule 
\end{tabular}%
}
\end{table}

\begin{figure}[!h]
\centering 
\includegraphics[width=\textwidth]{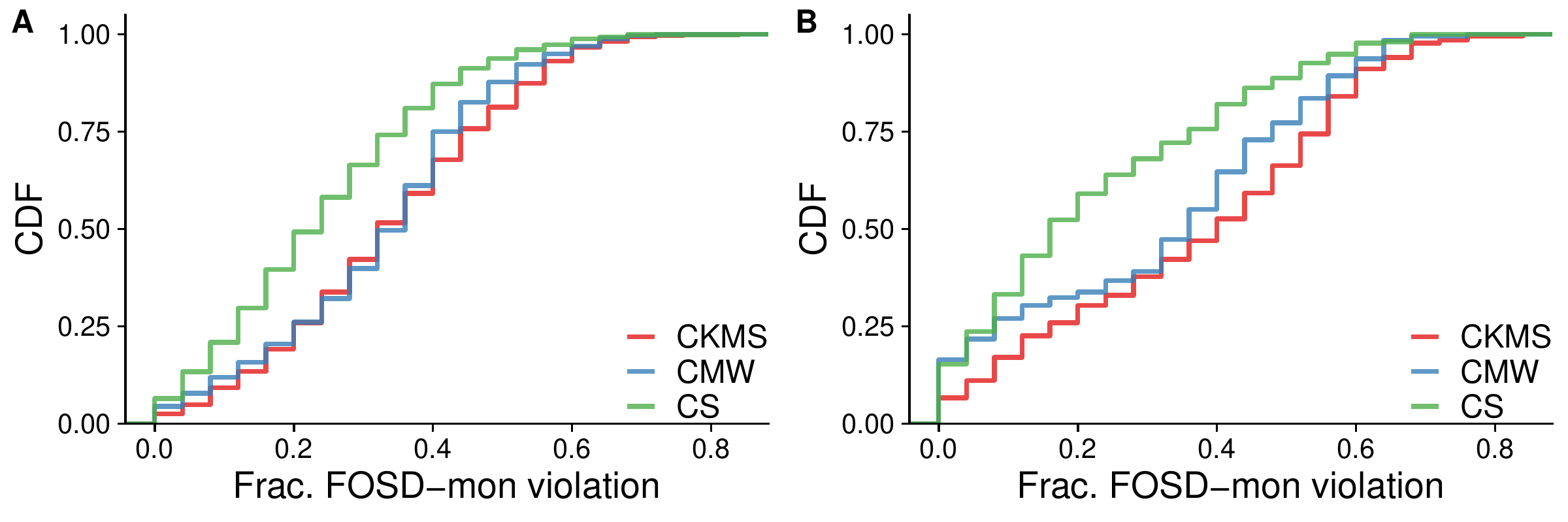} 
\caption{Empirical CDFs of fraction of choices that violate FOSD-monotonicity. (A) All subjects. (B) Subjects with $\text{CCEI} = 1$.}
\label{fig:fosd-mon_violation_frac}
\end{figure}

\clearpage
\subsection{Choices on the 45-Degree Line} 
\label{appendix:diagonal}

In the experiments, subjects made choices of allocations $(x_1, x_2)$ by clicking on the budget line graphically presented on the screen. 
Note that points on the 45-degree line correspond to equal allocations between the two accounts ($x_1 = x_2$) and therefore involve no risk (i.e., the 45-degree line is the ``full insurance'' line). If a subject's all choices are on the 45-degree line (call such pattern {\em diagonal allocations}), we can rationalize the data with EU and hence $e_* = 0$. 

It is, however, extremely difficult (or almost impossible) to choose the point ``exacctly'' on the 45-degree line in practice. Actual choices subjects made may be slightly off from the 45-degree line, and it can generate large $e_*$ (through violations of the downward-sloping demand) while CCEI and EU-CCEI stay close to~1 (see Figure~\ref{fig:choice_pattern_ccei_1}, panel~D). 
In this section, we examine how much of the disagreement between $e_*$ and CCEI or EU-CCEI are driven by small deviations from the diagonal allocations. 

To this end, we first re-define diagonal allocations. Instead of requiring all choices to be exactly on he 45-degree line, we call a data {\em almost diagonal allocations} if all choices are inside small balls (with fixed radius~$r$) drawn around the intersections of budget lines and the 45-degree line. We can control the size of acceptable deviations by changing the radius~$r$ of the ball. 
The idea is shown in Figure~\ref{fig:almost_diagonal_illustration}. In this example, chosen allocations (black dots) are not exactly on the 45-degree line, but they are inside the balls around the diagonal allocations (red circles).\footnote{These choices also violate FOSD-monotonicity. We would expect relatively large $e_*$ from this choice pattern, but its CCEI is~1 because it satisfies GARP.}  

\begin{figure}[!h]
\centering
\begin{tikzpicture}
\draw[-,dashed,thick] (0,0) -- (4.6,4.6);
\draw[-,blue,very thick] (0,4.5) -- (4,0);
\draw[magenta,draw opacity=0.5,fill=magenta,fill opacity=0.5] (36/17,36/17) circle (0.25cm);
\draw (37.5/17,2.0184) node[black,fill,circle,scale=0.3] {};
\draw[-,blue,very thick] (0,2.4) -- (4.8,0);
\draw[magenta,draw opacity=0.5,fill=magenta,fill opacity=0.5] (1.6,1.6) circle (0.25cm);
\draw (1.5,1.65) node[black,fill,circle,scale=0.3] {};
\draw[->,very thick] (0,0) -- (5,0) node[below right] {$x_1$};
\draw[->,very thick] (0,0) -- (0,5) node[above left] {$x_2$};
\end{tikzpicture}
\caption{Almost diagonal allocations.}
\label{fig:almost_diagonal_illustration}
\end{figure}
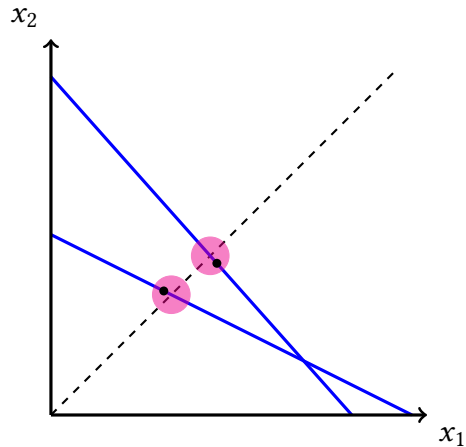

\clearpage 
Table~\ref{table:almost_diagonal} shows the fraction of subjects who made almost diagonal allocations (in all 25 questions) under different sizes of~$r$. Between 6\% and 12\% of subjects made such choice pattern when the radius is set to $r = 1$. 

\begin{table}[!h]
\centering 
\caption{Fraction of subjects who made almost diagonal allocations.}
\label{table:almost_diagonal}
\begin{tabular}{lrrrrr}
\toprule
 & & \multicolumn{4}{c}{Radius of the ball ($r$)} \\
\cmidrule(lr){3-6}
Study & $N$ & $0.05$ & $0.20$ & $0.50$ & $1.00$ \\ 
\midrule 
CKMS & 1182 & 0.000 & 0.000 & 0.035 & 0.083 \\ 
CMW & 1116 & 0.008 & 0.040 & 0.098 & 0.120 \\ 
CS & 1421 & 0.005 & 0.023 & 0.048 & 0.060 \\ 
\bottomrule 
\end{tabular}
\end{table}

Figures~\ref{fig:almost_diagonal_e-ccei} and~\ref{fig:almost_diagonal_e-euccei} below show the relationship between $e_*$ and CCEI as well as EU-CCEI, as in Figure~\ref{fig:minimal_e_vs_ccei} (Section~\ref{section:oeu_application_results}). Bottom panels in each figure focus on subjects who made almost diagonal allocations (the radius of the ball is set to $r = 1$) in all 25 questions, and top panels present the rest of the subjects. 

Bottom panels in each figure confirm that almost diagonal allocations yield values of CCEI and EU-CCEI that are close to~1. The same subjects have dispersed values of $e_*$, including the highest value in each experiment. 

It does not meant that the disagreement between $e_*$ and CCEI-based measures come mainly from slight deviations from the diagonal allocations. Top panels in each figure show that there are choice patterns, other than almost diagonal allocations, that have CCEI/EU-CCEI $\approx 1$ and large $e_*$.

\clearpage 
\begin{figure}[!th]
\centering 
\includegraphics[width=\textwidth]{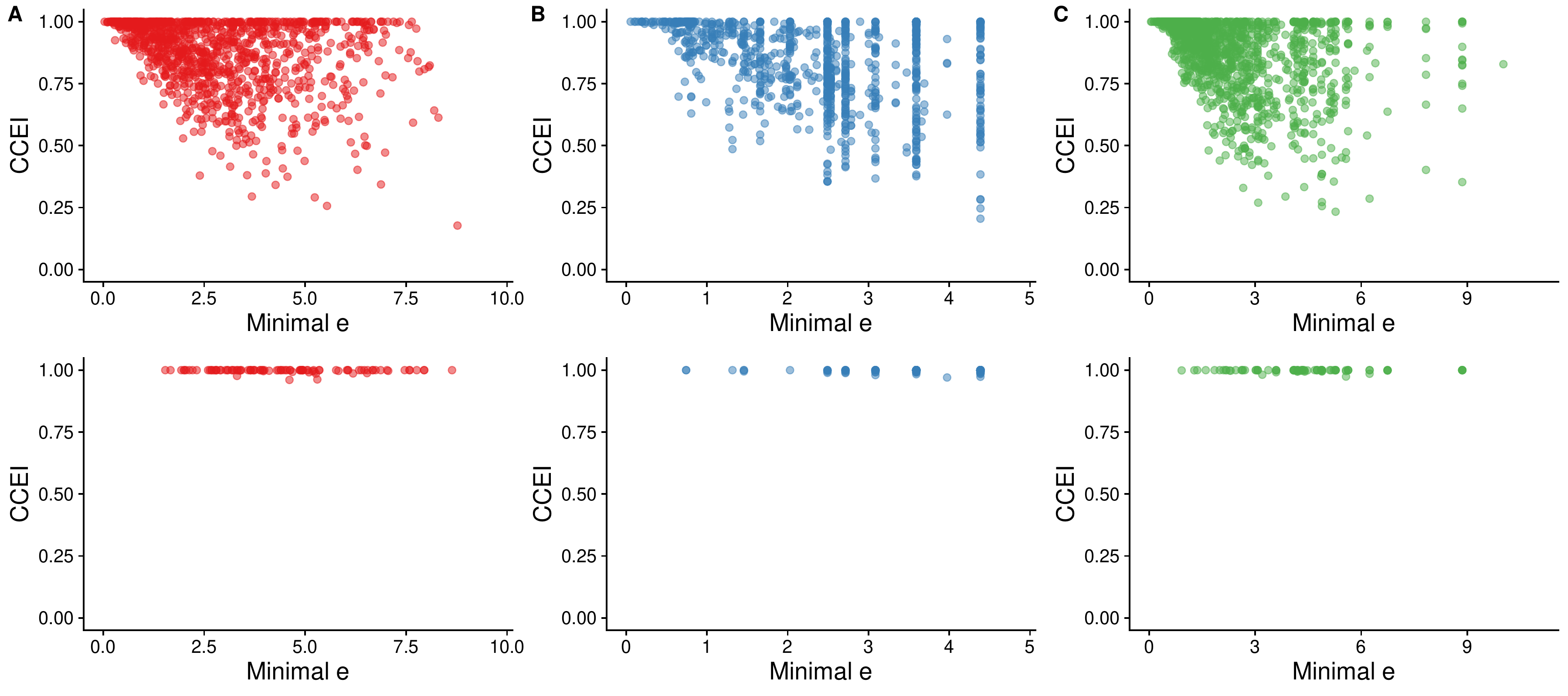} 
\caption{Correlation between $e_*$ and CCEI. Top panels show subjects who did not choose almost diagonal allocations and bottom panels show those who selected almost diagonal allocations (with $r = 1$). Panels: (A) CKMS, (B) CMW, (C) CS.}
\label{fig:almost_diagonal_e-ccei}
\end{figure}

\begin{figure}[!th]
\centering 
\includegraphics[width=\textwidth]{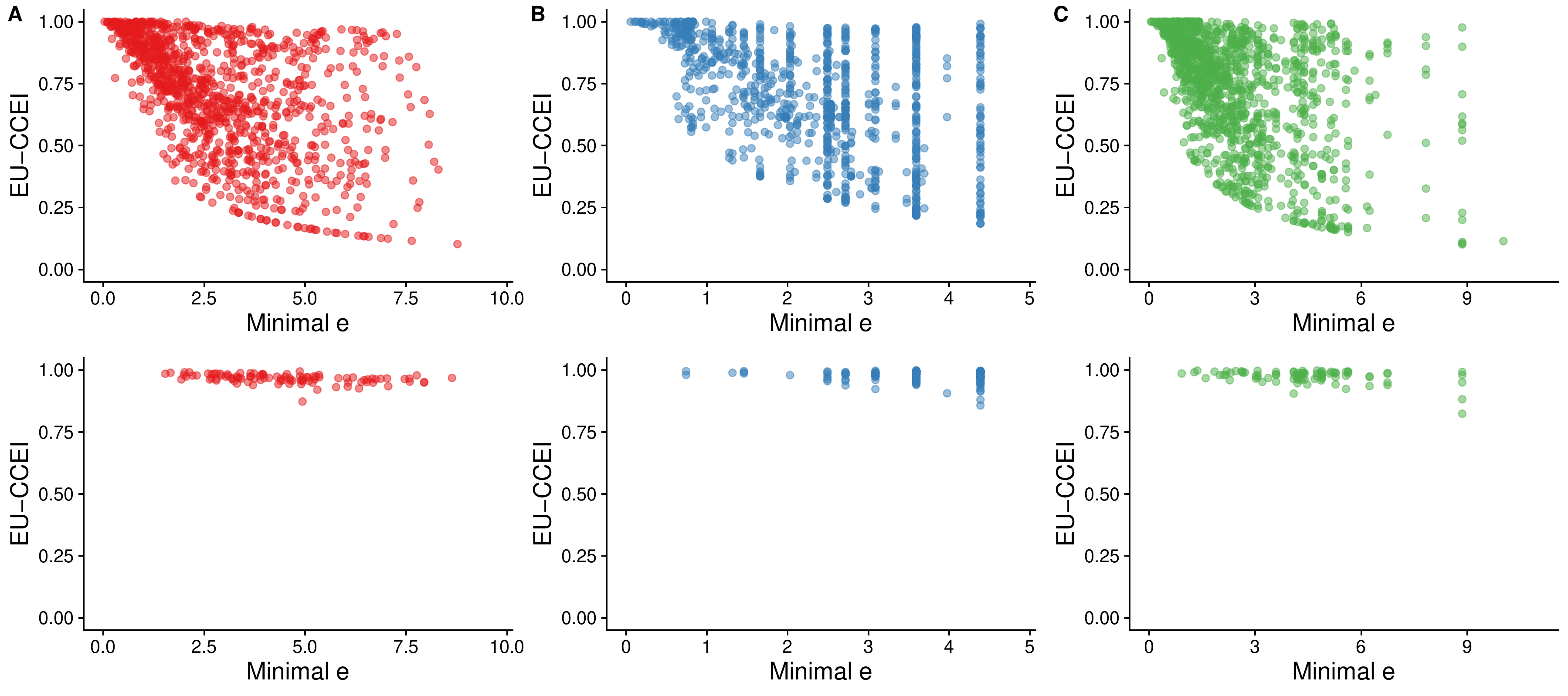} 
\caption{Correlation between $e_*$ and EU-CCEI. Top panels show subjects who did not choose almost diagonal allocations and bottom panels show those who selected almost diagonal allocations (with $r = 1$). Panels: (A) CKMS, (B) CMW, (C) CS.}
\label{fig:almost_diagonal_e-euccei}
\end{figure}

\clearpage
\subsection{Sensitivity} 
\label{appendix:sensitivity} 

As is clear from the definition, our measure $e_*$ is a bound that has to hold across all observations and states (see conditions~\eqref{eq:oeubound},~\eqref{eq:oeubound:price}, and~\eqref{eq:oeubound:utility} in the definitions of $e$-perturbed OEU in Section~\ref{section:robustoeu}). It is possible that a couple of ``bad'' choices significantly influence the measure. This section presents several robustness checks for the main empirical result. 

\paragraph{Dropping critical mistakes.} 
In this robustness check, we recalculate $e_*$ using subsets of observed choices that exclude outliers. More precisely, for each subject, we calculate $e_*$ for all combinations of $25 - m$ choices and pick the smallest $e_*$. We do this for $m = 1, 2$. 

By construction, dropping critical mistakes shifts the distribution of the measure (Figure~\ref{fig:minimal_e_cdf_drop}). However, it does not dramatically change the correlational patterns between $e_*$ and CCEI (Figure~\ref{fig:minimal_e_vs_ccei_drop}) nor between $e_*$ and demographic characteristics (Figures~\ref{fig:minimal_e_demographics_drop1} and~\ref{fig:minimal_e_demographics_drop2}). In this sense, the main empirical results are robust to the presence of small number of bad choices. 

\begin{figure}[h]
\centering 
\includegraphics[width=\textwidth]{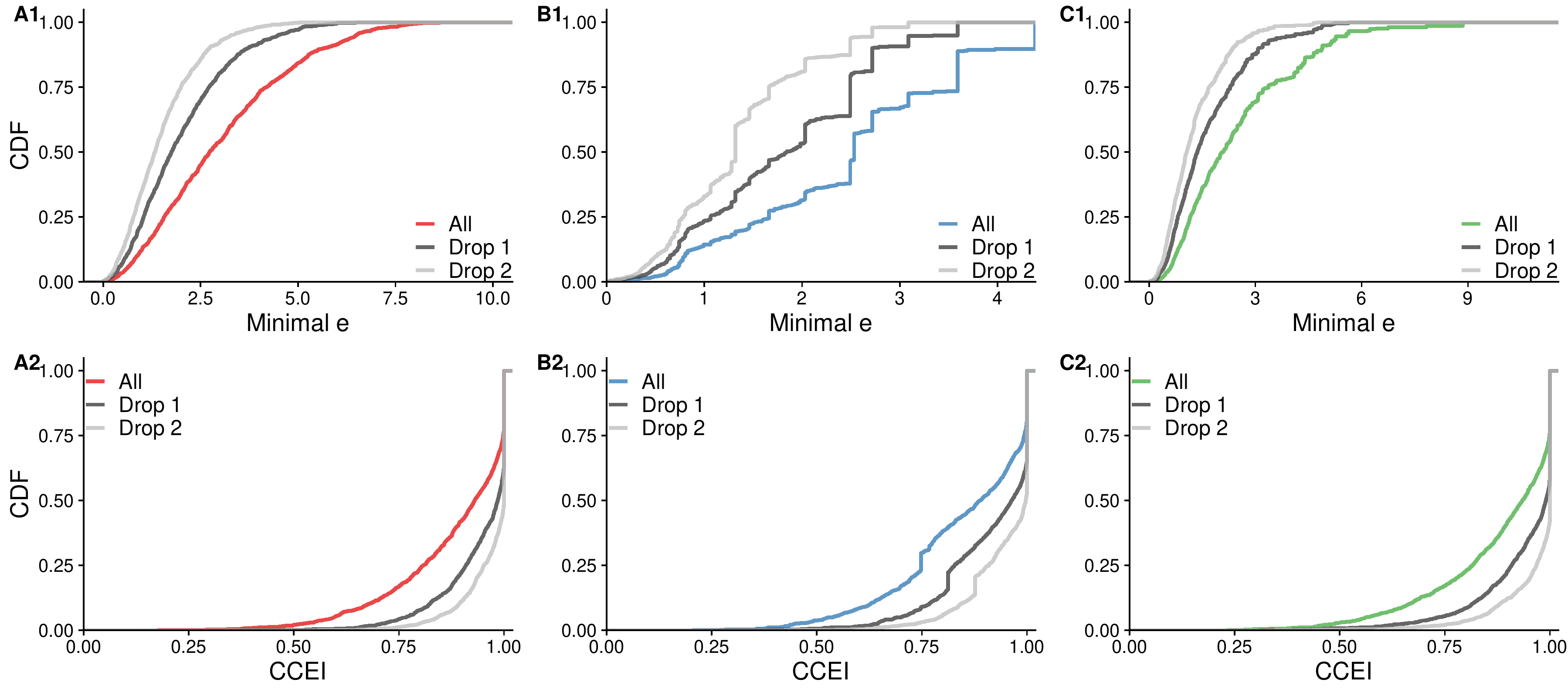} 
\caption{Empirical CDFs of $e_*$ and CCEI, using all observations or subsets of observations dropping one or two critical mistakes. Panels: (A) CKMS, (B) CMW, (C) CS.}
\label{fig:minimal_e_cdf_drop}
\end{figure}

\begin{figure}[p]
\centering 
\includegraphics[width=\textwidth]{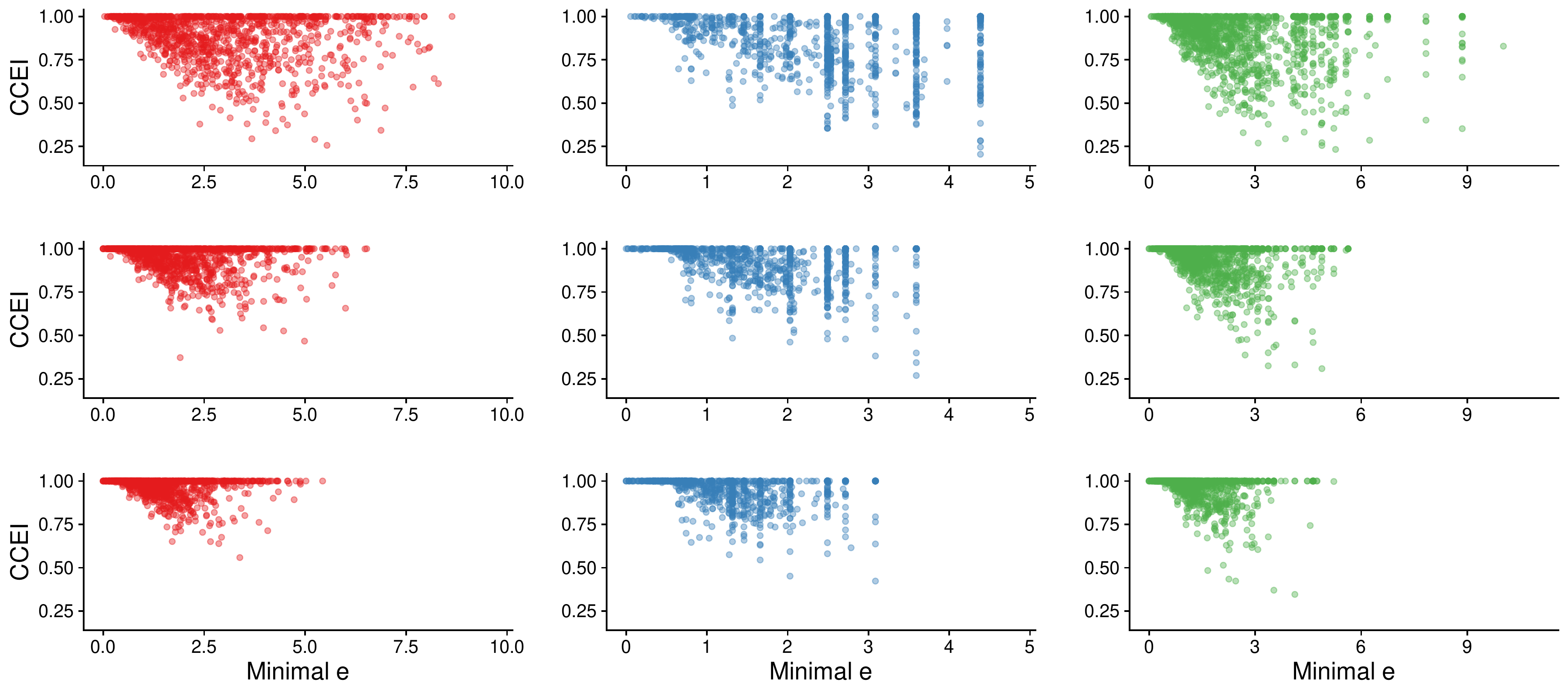} 
\caption{Correlation between $e_?$ and CCEI. (Top panels) All 25 observations. (Middle panels) Drop one critical mistake. (Bottom panels) Drop two critical mistakes.}
\label{fig:minimal_e_vs_ccei_drop}
\end{figure}

\clearpage 
\begin{figure}[p]
\centering 
\includegraphics[width=\textwidth]{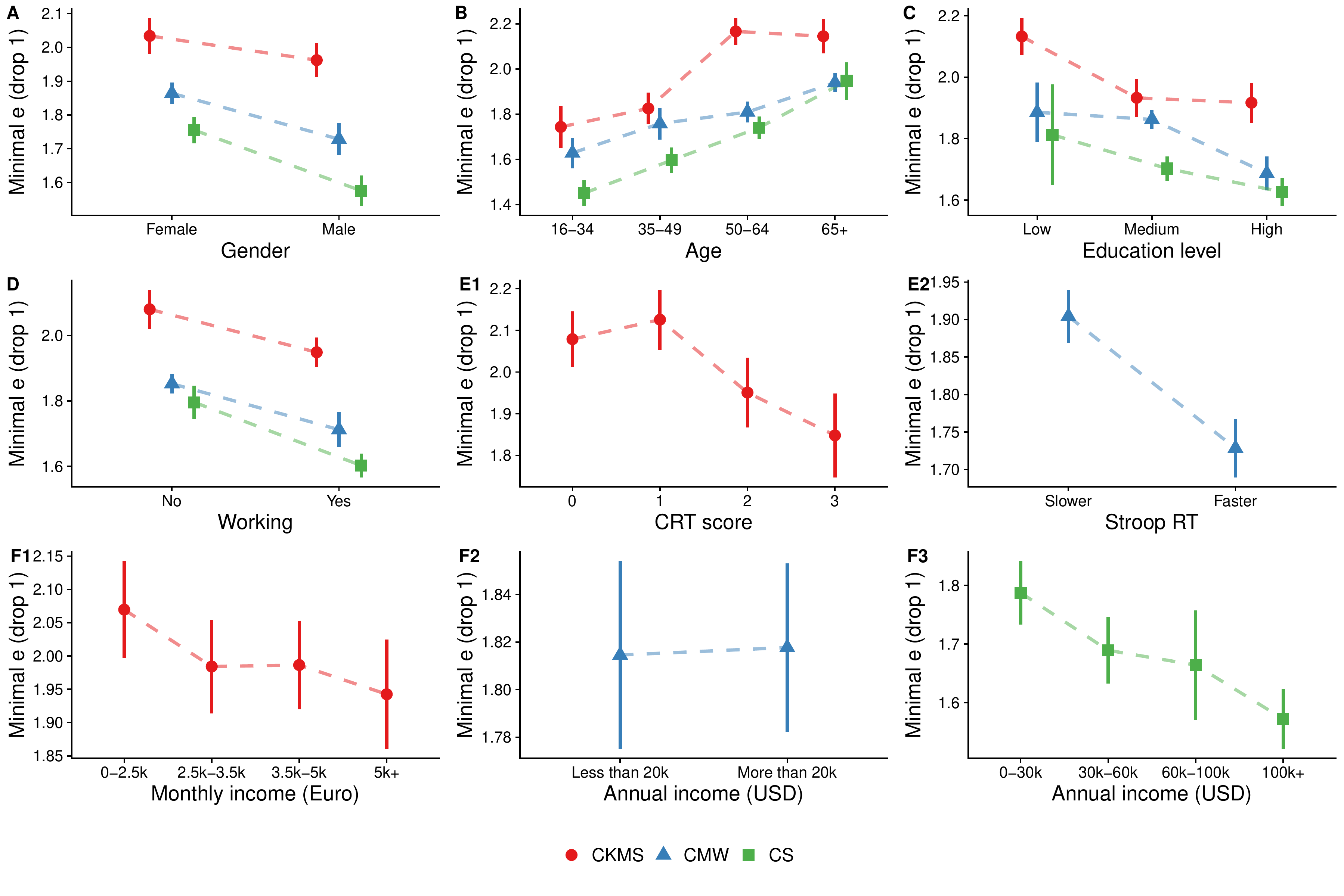} 
\caption{Robustness of demographic correlations in Figure~\ref{fig:minimal_e_demographics}. For each subject, $e_*$ is recalculated after dropping one critical mistake.}
\label{fig:minimal_e_demographics_drop1}
\end{figure}

\begin{figure}[p]
\centering 
\includegraphics[width=\textwidth]{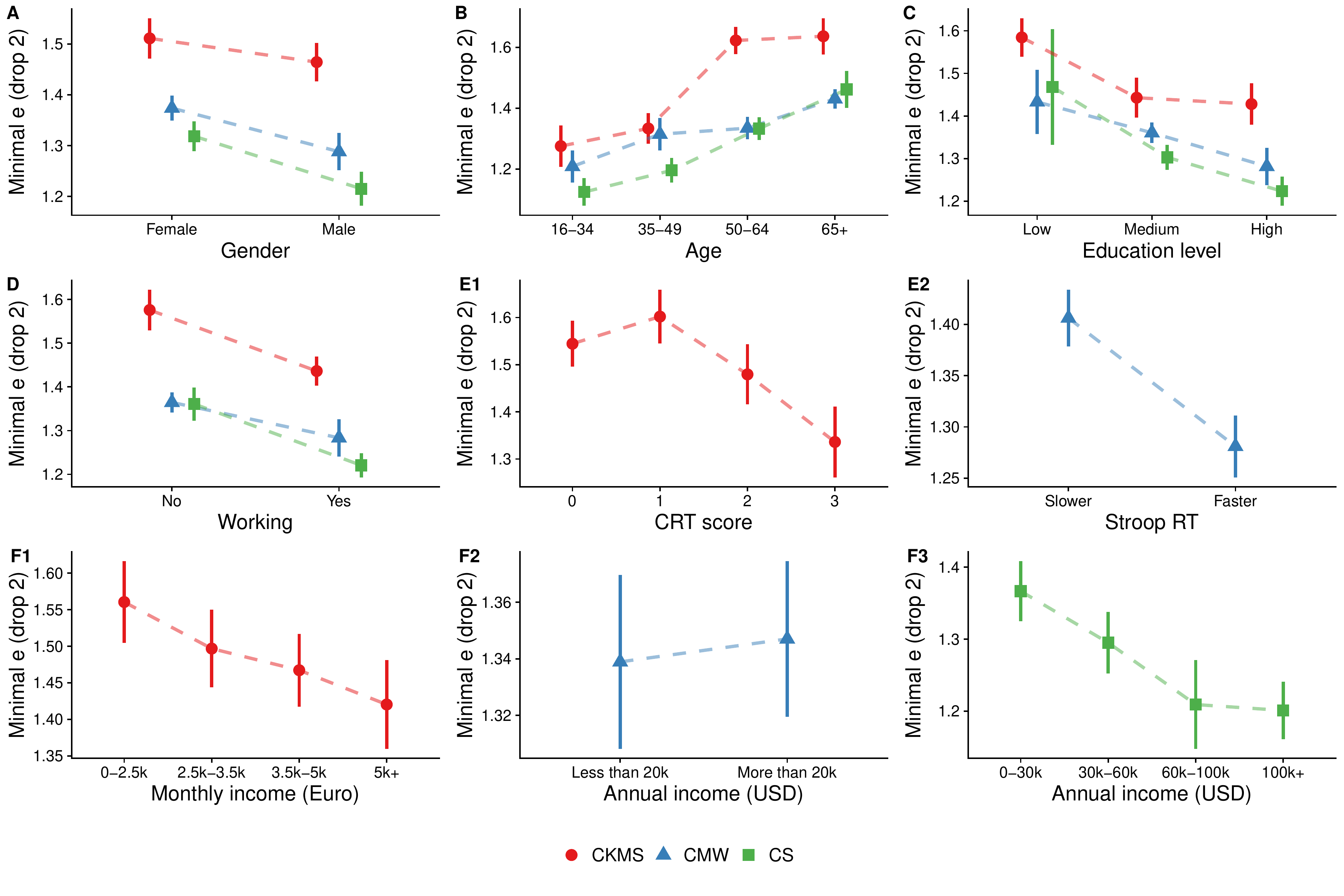} 
\caption{Robustness of demographic correlations in Figure~\ref{fig:minimal_e_demographics}. For each subject, $e_*$ is recalculated after dropping two critical mistakes.}
\label{fig:minimal_e_demographics_drop2}
\end{figure}

\clearpage 
\paragraph{``Average'' perturbation.} 
Let $\bar{e}$ be the solution to the following minimization problem: 

\begin{equation*}
\begin{aligned}
\min_{(\varepsilon_s^k)_{s,k}} &\;\; \sum_{k \in K} \sum_{s \in S} \frac{\left| \log \varepsilon_s^k \right|}{K S} \\
\text{s.t.} &\;\; (x^k, q^k)_{k=1}^K \text{ is OEU rational} \\
&\;\; q_s^k = p_s^k \varepsilon_s^k \text{ for each } s \in S, k \in K 
\end{aligned}
\end{equation*}

The idea behind this alternative measure is simple. As in the case of $e$-price-perturbed utility, we search for sets of multiplicative noises $(\varepsilon_s^k)_{s,k}$ which could rationalize the observed data. Instead of looking at the uniform bound $\max_{s, t, k} (\log{\varepsilon_s^k} - \log{\varepsilon_t^k})$ and minimizing it, we take the average of these perturbations and minimize it. A similar idea was applied to quantify the distance from several models of time preferences in \citeSOMpapers{echenique2016som}. 

Figure~\ref{fig:corr_ptb_e_ccei} presents the relationship between $\bar{e}$, $e_*$, and CCEI. Figure~\ref{fig:ptb_demographics} shows the correlation between $\bar{e}$ and demographic variables. 
These figures do not show correlational patterns that are markedly different from those presented in the main empirical results (Figures~\ref{fig:minimal_e_vs_ccei} and~\ref{fig:minimal_e_demographics} in Section~\ref{section:oeu_application_results}). 

\begin{figure}[!h]
\centering 
\includegraphics[width=\textwidth]{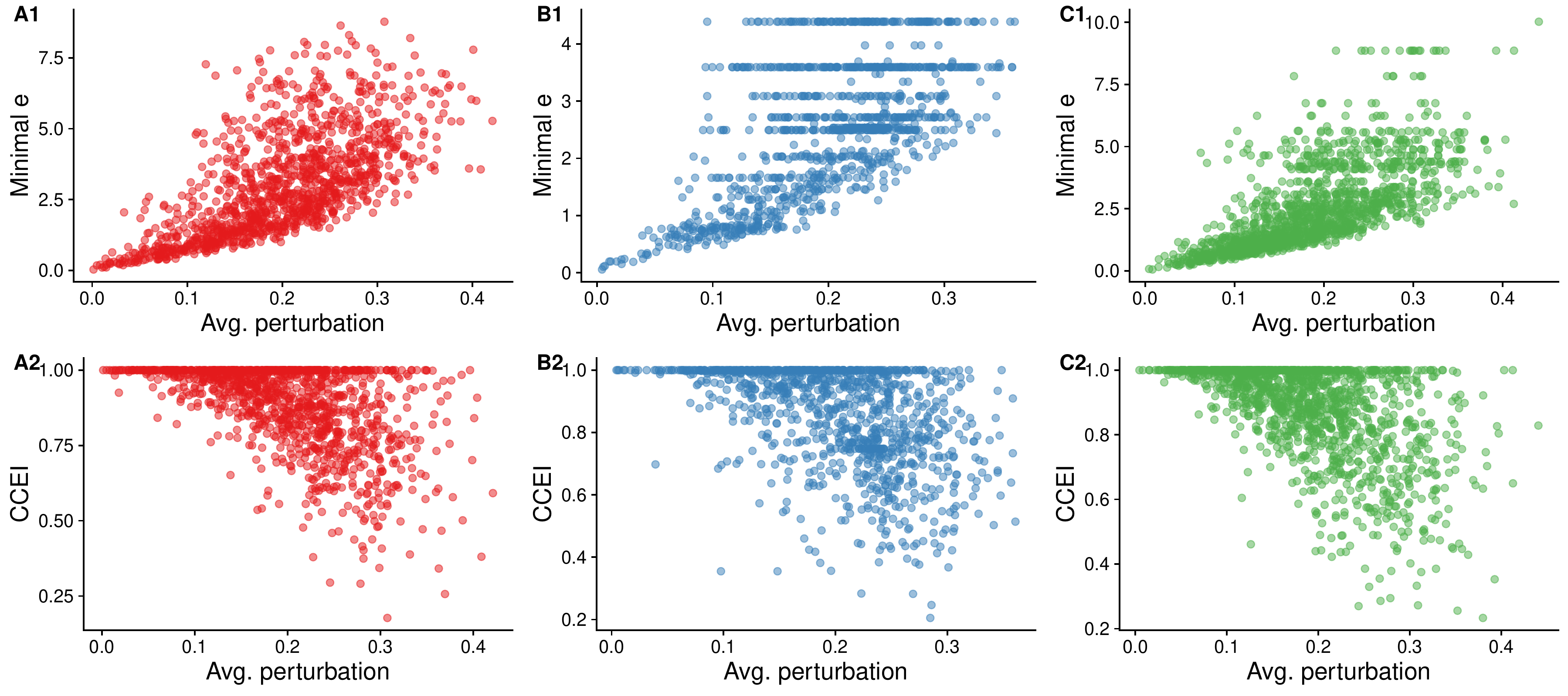} 
\caption{Correlation between $\bar{e}$ and $e_*$ (top panels) and $\bar{e}$ and CCEI (bottom panels). Panels: (A) CKMS, (B) CMW, (C) CS.}
\label{fig:corr_ptb_e_ccei}
\end{figure}

\begin{figure}[!p]
\centering 
\includegraphics[width=\textwidth]{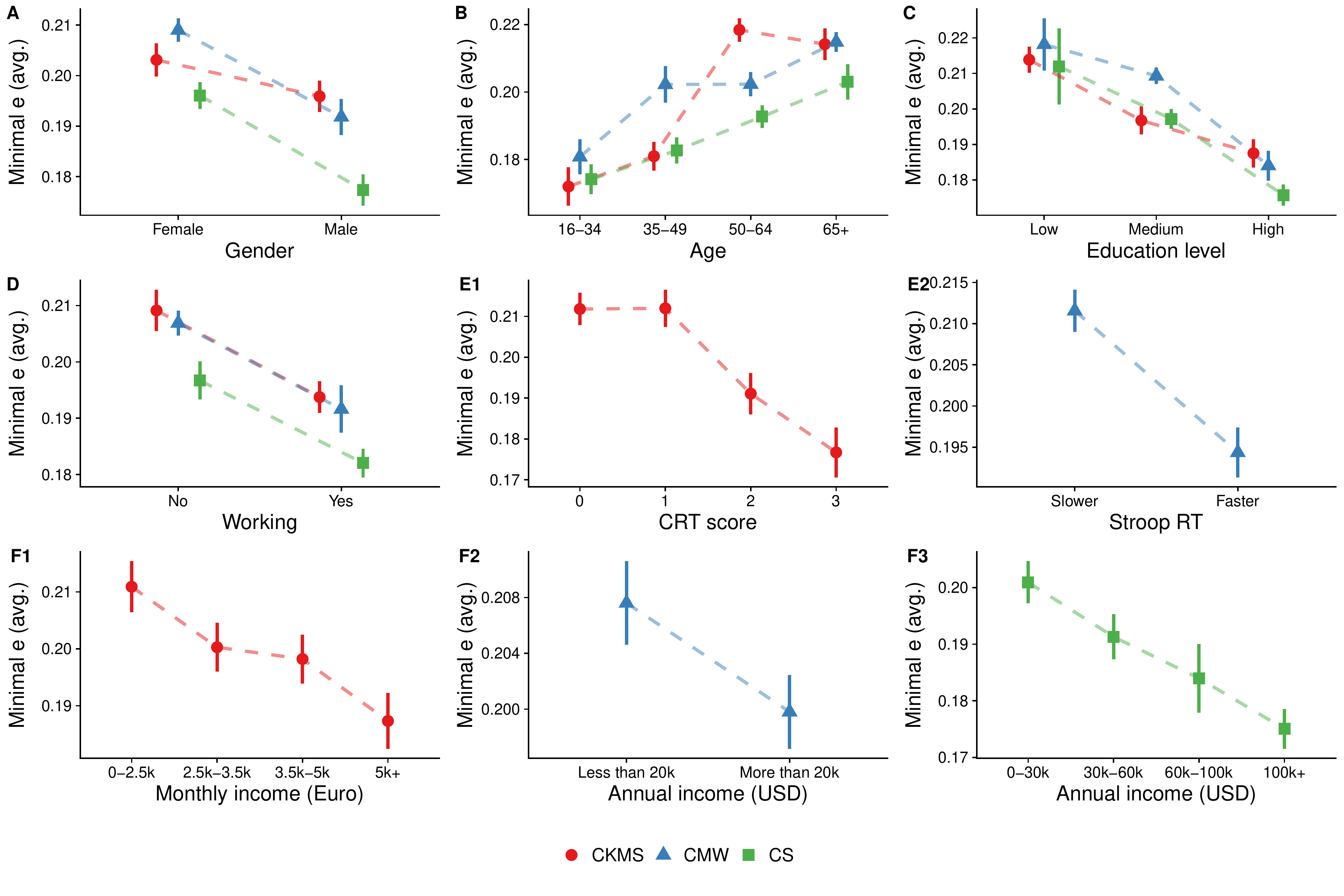} 
\caption{$\bar{e}$ and demographic variables.}
\label{fig:ptb_demographics}
\end{figure}

\clearpage 
\subsection{Properties of $e_*$} 
\label{appendix:supplementary_analysis_property}

\paragraph{$e_*$ from observed and simulated choices.} 
The statistical approach described in Section~\ref{sec:implementation_perturbation} is one way to assess ``how big'' the observed $e_*$'s are. Another way is to simulate choice data assuming some behavioral model and calculate $e_*$ on the simulated dataset. 
Following \citeSOMpapers{bronars1987}, we randomly select an allocation from each budget line. 
Since subjects in CKMS and CS faced a randomly selected set of budgets, we first randomly select one set of budgets (from the observed sets of budgets) and then randomly choose allocations on these budgets. We then calculate $e_*$, as well as CCEI, using the simulated choices. 
We repeat this 10,000 times for each of the three datasets. 

Figure~\ref{fig:obs-sim_e-ccei_cdf} compares the observed and simulated $e_*$. The distribution of observed $e_*$ locates left of simulated $e_*$ (all differences are statistically significant, according to two-sample Kolmogorov-Smirnov test). 
The actual subjects' behavior is thus closer to OEU rationality compared to completely random behavior (even though complete random is unrestrictive and may not be the best benchmark). 

Figure~\ref{fig:obs-sim_e-ccei_corr} looks at the correlation between $e_*$ and CCEI and compares the pattern in observed and simulated datasets (panels A-C in the top row are same as Figure~\ref{fig:minimal_e_vs_ccei}). 

\clearpage 
\begin{figure}[p]
\centering 
\includegraphics[width=\textwidth]{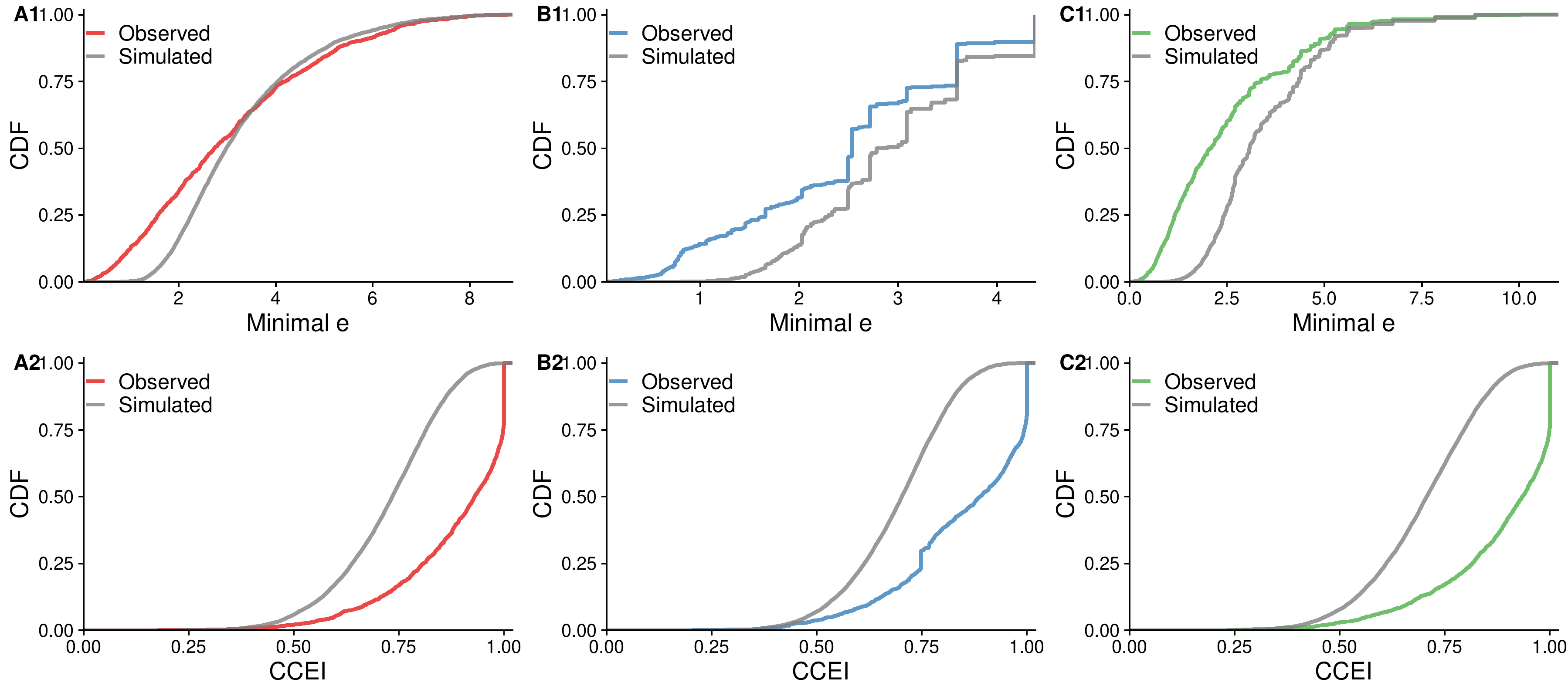}
\caption{Comparison between observed and simulated $e_*$ (top panels) and CCEI (bottom panels). Panels: (A) CKMS, (B) CMW, (C) CS.}
\label{fig:obs-sim_e-ccei_cdf}
\end{figure}

\begin{figure}[p]
\centering 
\includegraphics[width=\textwidth]{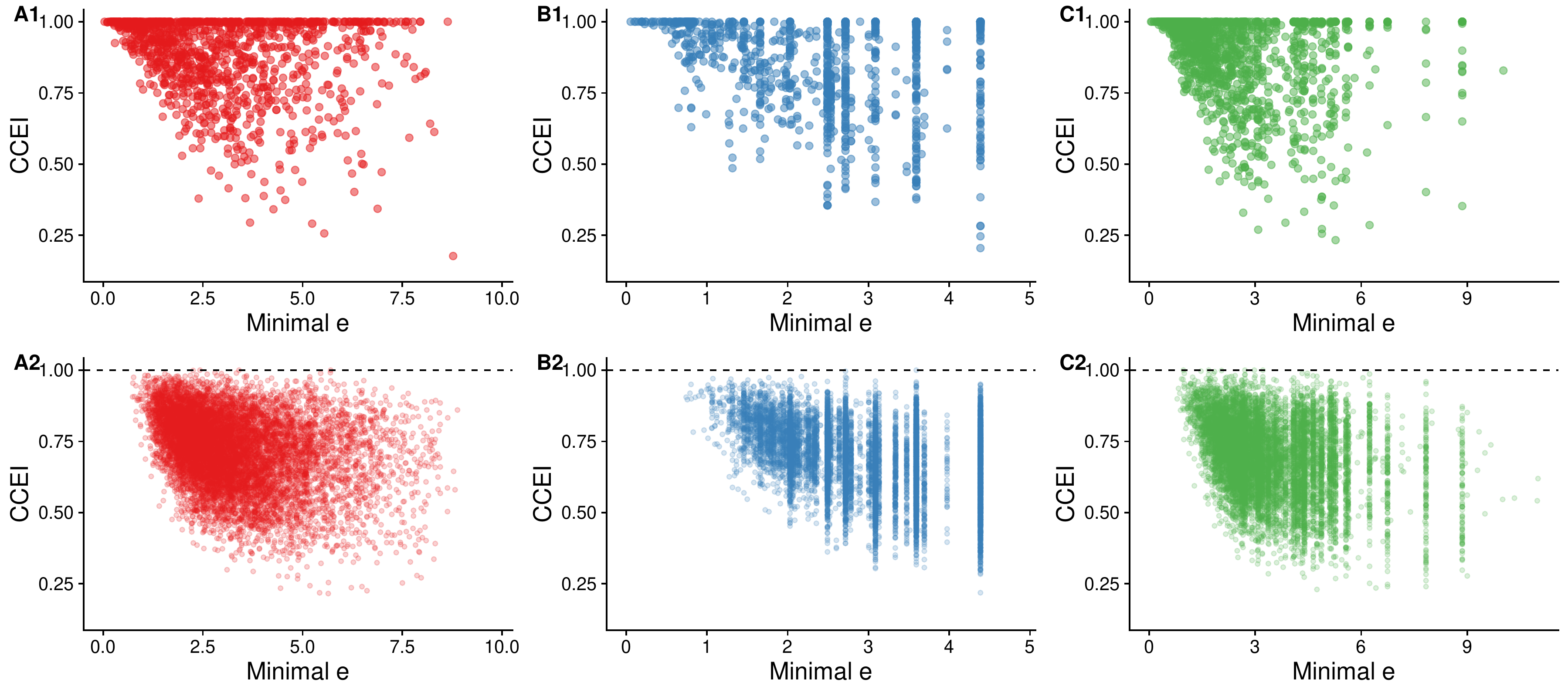}
\caption{Comparison between observed (top panels) and simulated (bottom panels) $e_*$ and CCEI. Panels: (A) CKMS, (B) CMW, (C) CS. {\em Notes}: Top panels are identical to those in Figure~\ref{fig:minimal_e_vs_ccei}.}
\label{fig:obs-sim_e-ccei_corr}
\end{figure}

\clearpage 
\paragraph{Bound of $e_*$.} 

The value of $e_*$ depends on the structure of the budgets an agent faces. 
In particular, it is clear from $e$-PSAROEU that $1 + e_*$ is bounded by the maximum ratio of risk-neutral prices: 
  \[ 1 + e_* \leq \max_{k, \in K, s, t \in S} \frac{\rho^{k}_{s}}{\rho^{k}_{t}} . \]
Since CKMS, CMW, and CS experiments all used two equally-likely states, the ratio of risk-neutral prices is equal to the ratio of prices. 
Figure~\ref{fig:bound_e-ccei} shows the observed $e_*$ and (participant-specific) upper bound. 
(Since all subjects faced the same set of budgets in the CMW study, there is only one vertical line.) 
About 13\% of the subjects ($475/3719$ in merged data; $221/1182$ in CKMS; $114/1116$ in CMW; $140/1421$ in CS) have their $e_*$ exactly at the upper bound. 

\begin{figure}[h]
\centering 
\includegraphics[width=\textwidth]{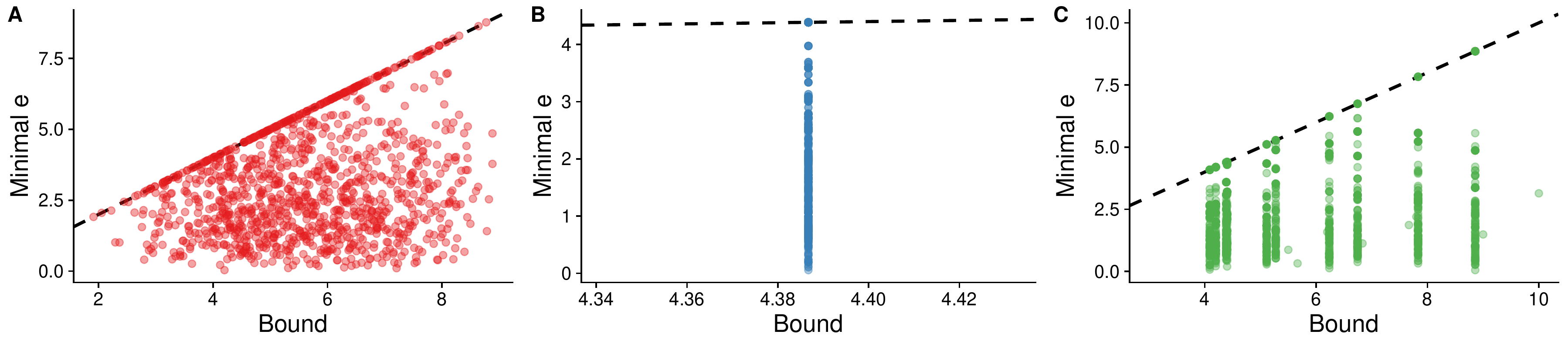} 
\caption{Bound of $e_*$. The $x$-axis in each plot is the upper bound of $e_*$, given by $\max_{k, s, t} p_s^k / p_t^k - 1$. {\em Notes}: There is no variation in bounds in the CMW data (panel~B) since all subjects faced the same set of budgets. In the CS data (panel~C), the $x$-axis is cut at~10 for better visualization. There are~22 additional observations in the data with the bounds ranging from~11 to~48.} 
\label{fig:bound_e-ccei}
\end{figure}

\clearpage
\subsection{Illustration of $e$-Perturbed OEU} 
\label{appendix:example_e-price-perturbation} 

In Figure~\ref{fig:choice_pattern_ccei_1}, we present typical choice patterns from selected subjects with $\text{CCEI} = 1$ and varying degrees of~$e_*$. Panels~A-F plot observed choices and panels~a-f plot the relationship between $\log (x_2 / x_1)$ and $\log (p_2 / p_1)$, which shows how much the dataset conforms to the downward-sloping demand. The measure $e_*$, roughly speaking, captures the degree of deviation from the downward-sloping demand. 

Consider an observed dataset $(x^k, p^k)_{k=1}^K$ and a perturbed dataset $(x^k, \tilde{p}^k)_{k=1}^K$, where $\tilde{p}_s^k = p_s^k \varepsilon_s^k$ and $\varepsilon_s^k \geq 0$ for all $s \in S$ and $k \in K$. 
Since we fix the chosen bundle $(x^k)_{k=1}^K$ and rotate the budget lines around them, price perturbation ``moves'' points in panels~a-f horizontally. 

To make the dataset $e$-price-perturbed OEU rational (Definition~\ref{def:e-price-perturbed_oeu}), we need to move the points horizontally so that they satisfy the downward-sloping demand. 
Note that the horizontal distance for each observation $k$ is given by 
\begin{equation*}
\begin{aligned}
\log \left( \frac{\tilde{p}_2^k}{\tilde{p}_1^k} \right) - \log \left( \frac{p_2^k}{p_1^k} \right) 
= \log \left( \frac{\tilde{p}_2^k / p_2^k}{\tilde{p}_1^k / p_1^k} \right) 
= \log \left( \frac{\varepsilon_2^k}{\varepsilon_1^k} \right) . 
\end{aligned}
\end{equation*}
We thus need to look at the maximal horizontal adjustment among observations, and the measure $e_*$ is obtained by minimizing it. 

Figure~\ref{fig:e-price-perturbation_example} shows the idea behind calculation of $e_*$ using price perturbation. It plots the same six subjects as in Figure~\ref{fig:choice_pattern_ccei_1}. 
In panels~A-F, red dotted lines represent the original budgets and blue solid lines represent perturbed budgets. 
In panels~a-f, green circles represent the original dataset and blue triangles represent the perturbed dataset. Red arrows connect points that correspond to the maximal adjustment. 
The figure shows that $e_*$-perturbed datasets satisfy the downward-sloping demand.\footnote{Perturbed dataset in each panel is based on one particular set of $(\varepsilon_s^k)_{s \in S, k \in K}$ returned by \texttt{Matlab}. There are small deviations from the downward-sloping demand (e.g., in panels~C and~E), but it is possible to correct for these numerical deviations without influencing the value of $e_*$.} 

We can draw several observations about the practical aspect of $e_*$. 
First, observe that the ``cheapest'' way for correcting choices violating FOSD-monotonicity is to perturb budgets corresponding to these observations so that $\tilde{p}_1^k = \tilde{p}_2^k$. 
Second, the figure provides an intuitive explanation of why $e_*$ can be large for choice patterns like panel~D. Since clicking on the point exactly on the 45-degree line is a challenging task, choices would scatter around the 45-degree line, occasionally falling in the region of FOSD-monotonicity. No matter how small these deviations from the 45-degree line are, $e$-price perturbation requires horizontal adjustments to achieve the downward-sloping demand. If the necessary adjustment is applied on a relatively extreme budget line, $e_*$ for such a subject can be very high.

\begin{figure}[!hp]
\centering 
\includegraphics[width=0.85\textwidth]{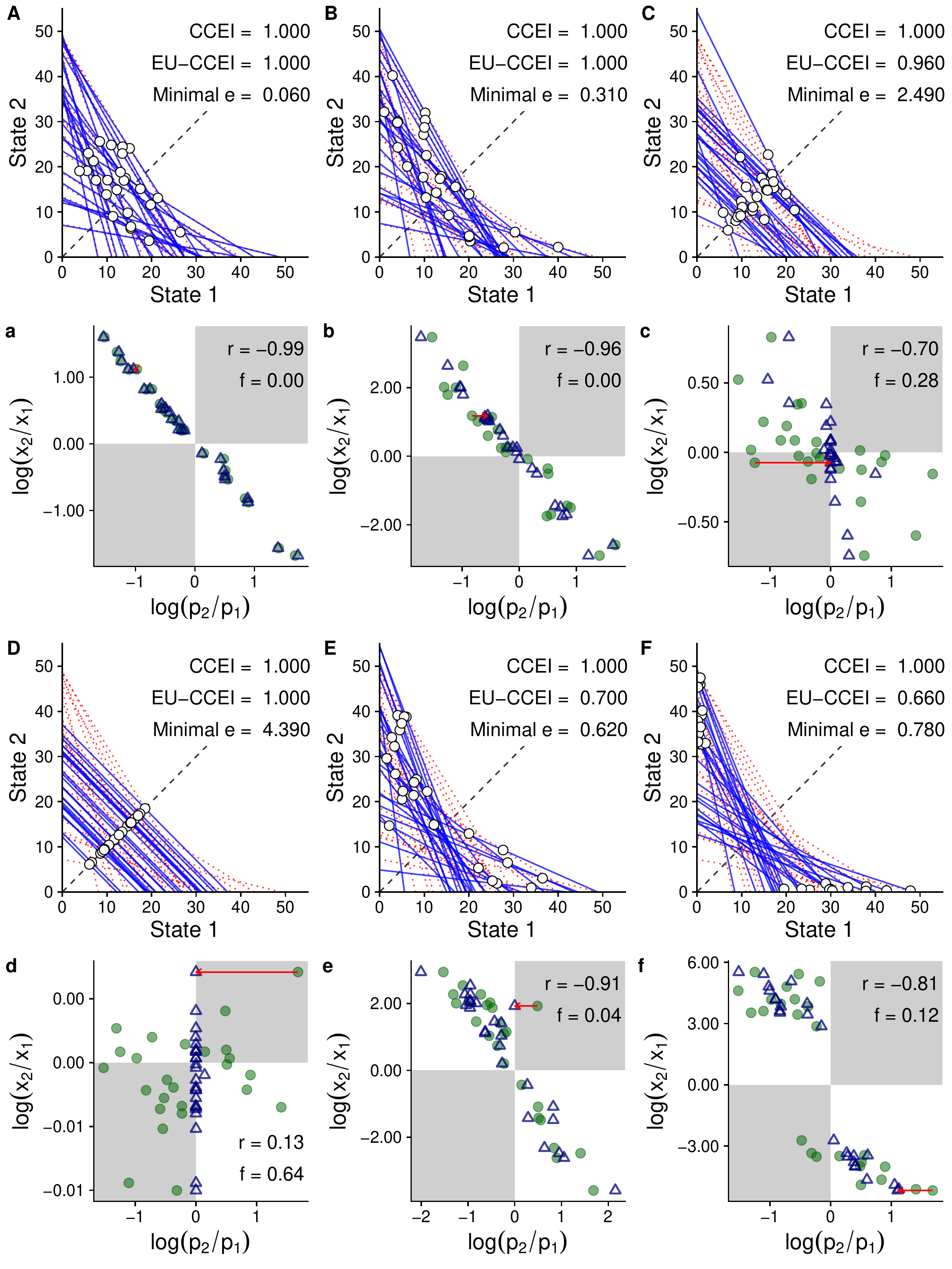}
\caption{Illustration of $e$-price-perturbed OEU rationalization. (A-F) Perturbed budgets (blue solid lines) and the original budgets (red dotted lines). (a-f) The relation between $\log (x_2 / x_1)$ and $\log (p_2 / p_1)$ (green circles), and $\log (x_2 / x_1)$ and $\log (\tilde{p}_2 / \tilde{p}_1)$ (blue triangles). Red arrows indicate observations requiring the largest adjustment.} 
\label{fig:e-price-perturbation_example}
\end{figure}

\clearpage
\subsection{Comparing Measures} 
\label{appendix:comparing_measures}

We calculate CCEI at which a subject is consistent with a given model, stochastically monotone utility maximization \citep{nishimura2017comprehensive}, EU, and concave EU, using the GRID method developed in \citeSOMpapers{polissonquahrenou17som}.\footnote{A stohastically monotone utility function gives strictly higher utility to bundle $x$ compared to another bundle $y$ if $x$ first-order stochastically dominates $y$ and gives them the same utility if two bundles are stochastically equivalent. In the environment we consider (two states with equally likely objective probabilities), a utility function is stochastically monotone if and only if it is symmetric and strictly increasing. 

\citeSOMpapers{choi2014som} also discuss a similar idea. They propose additional measure, which jointly captures the extent of GARP violations and violations of stochastic dominance, by combining the observed data and its ``mirror-image''. More precisely, they assume that if an allocation $(x_1, x_2)$ is chosen under the budget constraint $p_1 x_1 + p_2 x_2 = 1$, then $(x_2, x_1)$ would have been chosen under the mirror-image budget constraint $p_2 x_1 + p_1 x_2 = 1$. They then re-calculate CCEI for the ``combined'' data consisting of~50 ($25 \text{ budgets} \times 2$) choices.} 
We call these measures F-GARP, EU-CCEI, and cEU-CCEI. 
For a given dataset, the measures are ordered as 
\[ \text{cEU-CCEI} \leq \text{EU-CCEI} \leq \text{F-GARP} \leq \text{CCEI} , \]
since models we look at are nested in this order. 
Note that \citeSOMpapers{polissonquahrenou17som} calculated and reported CCEI, F-GARP, EU-CCEI, and cEU-CCEI for the CKMS dataset but not for the CMW and the CS datasets. 

Figures~\ref{fig:compare_measures_ckms}-\ref{fig:compare_measures_cs} compare $e_*$, CCEI, and these three additional measures.\footnote{We did not compute cEU-CCEI for 23 subjects (8 in CMW, and 15 in CS) since the code spent significantly long computation time. (\citeSOMpapers{polissonquahrenou17som} used a high-performance computing facility.) We also treated cEU-CCEI for six subjects in CS as missing values, since the code incorrectly returned $\text{cEU-CCEI} = 0$. Note that F-GARP and EU-CCEI for these 29 subjects are included in Figures~\ref{fig:compare_measures_ckms}-\ref{fig:compare_measures_cs}.} 
Panels on the diagonal show the distribution of each measure. Pairwise scatter plots are presented below diagonal, and their Spearman's correlation coefficients are shown above the diagonal (all $p < 0.001$; uncorrected for multiple comparison). 

The first column in each figure shows the relationship between $e_*$ and other measures. The second and the fourth panels in this column ($e_*$ vs. CCEI and $e_*$ vs. EU-CCEI) are identical to those presented in Figure~\ref{fig:minimal_e_vs_ccei}. As we discussed in Section~\ref{section:oeu_application_results} of the paper, we see that there are a significant number of subjects whose CCEI and EU-CCEI are close to one but their $e_*$'s are widely dispersed and further away from zero. 

This observation is not specific to CCEI and EU-CCEI. In the third and the fifth panels of the same column, we can see a similar pattern between $e_*$ and F-GARP as well as $e_*$ and EU-CCEI. The pattern is a general feature that distinguishes the idea behind the measures: $e_*$ is based on rotating budget lines while the other measures, which are all variants of CCEI, are based on shrinking budget sets.

\clearpage 
\begin{figure}[ht!]
\centering 
\includegraphics[width=\textwidth]{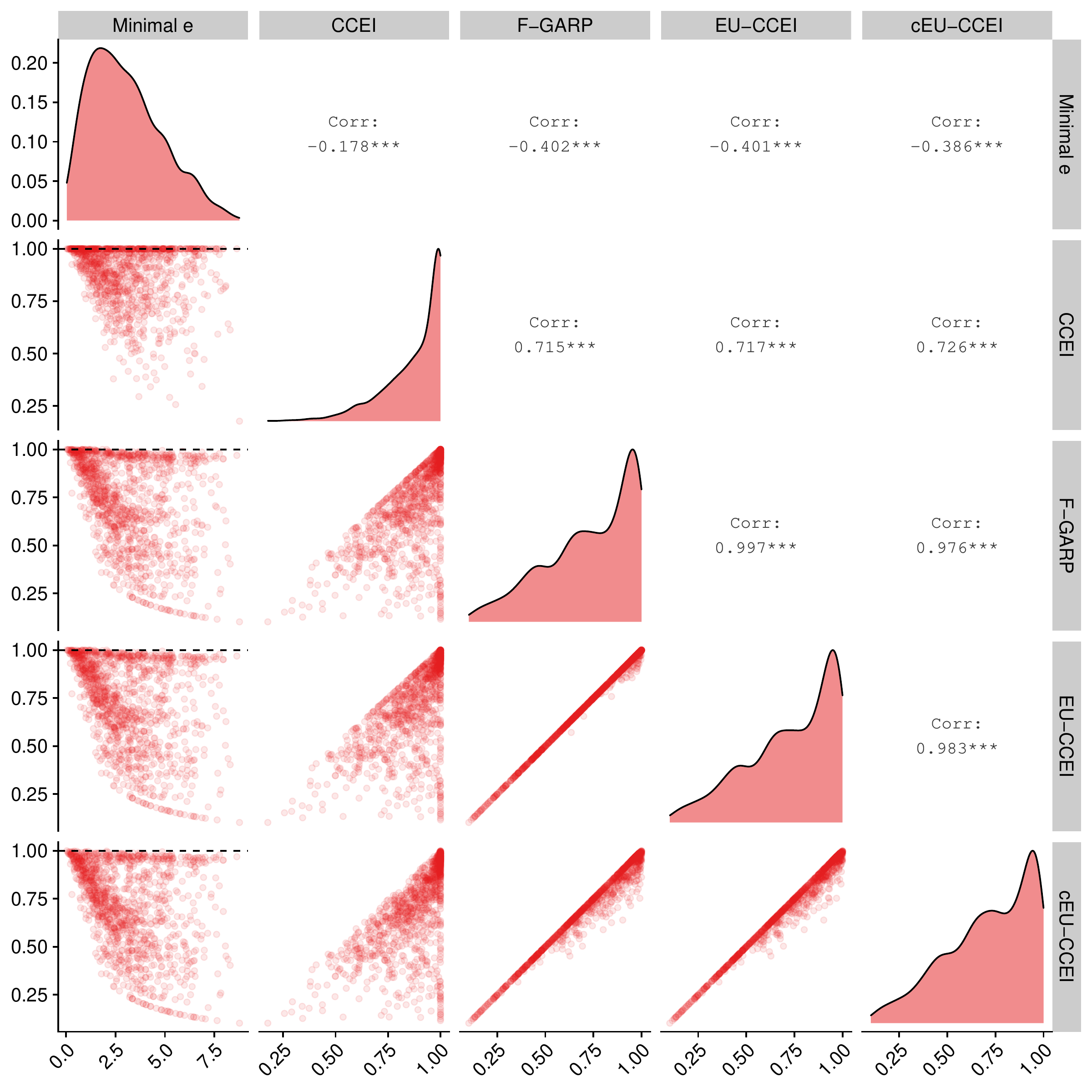} 
\caption{Comparing measures of rationality in the CKMS data.}
\label{fig:compare_measures_ckms}
\end{figure}

\begin{figure}[ht!]
\centering 
\includegraphics[width=\textwidth]{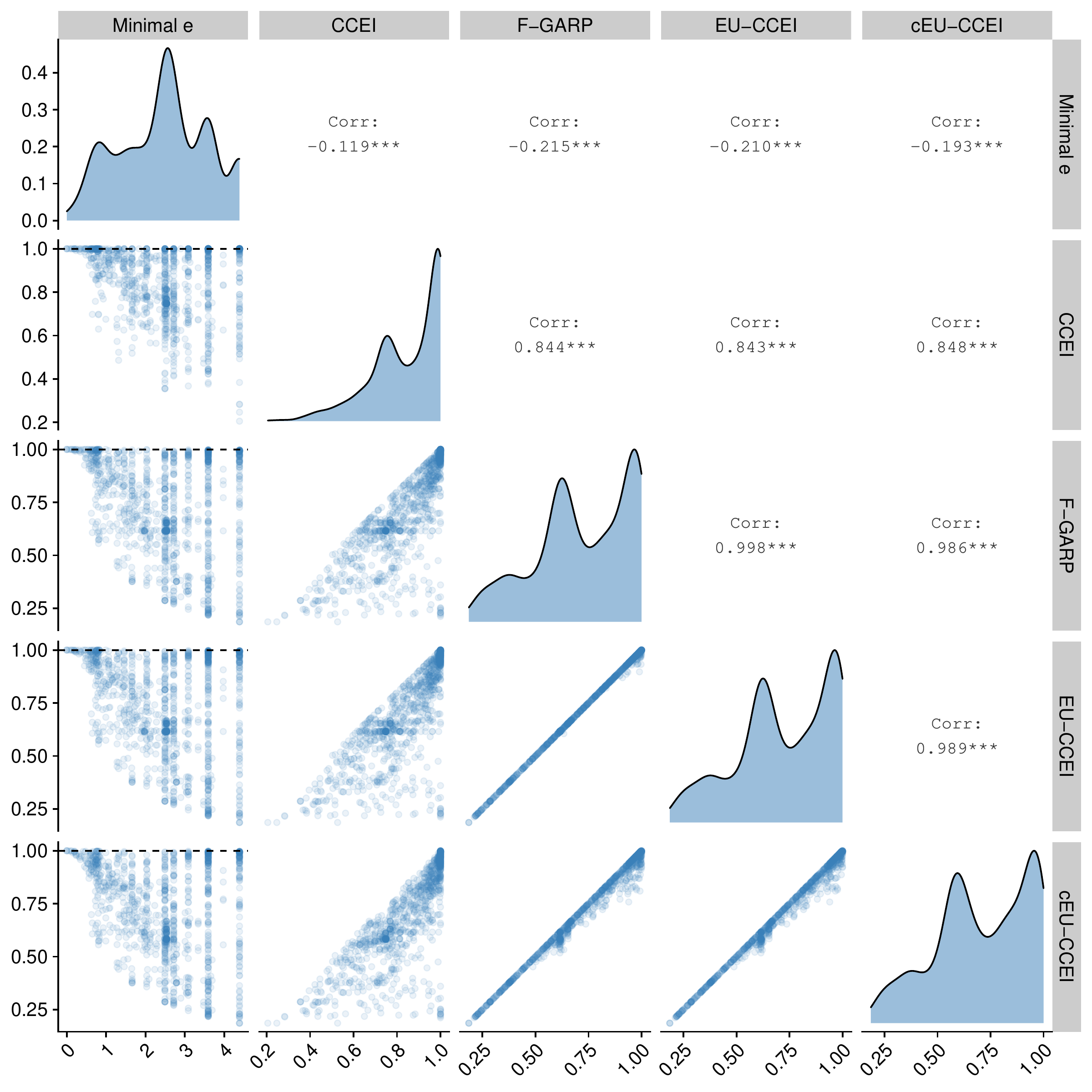} 
\caption{Comparing measures of rationality in the CMW data.}
\label{fig:compare_measures_cmw}
\end{figure}

\begin{figure}[ht!]
\centering 
\includegraphics[width=\textwidth]{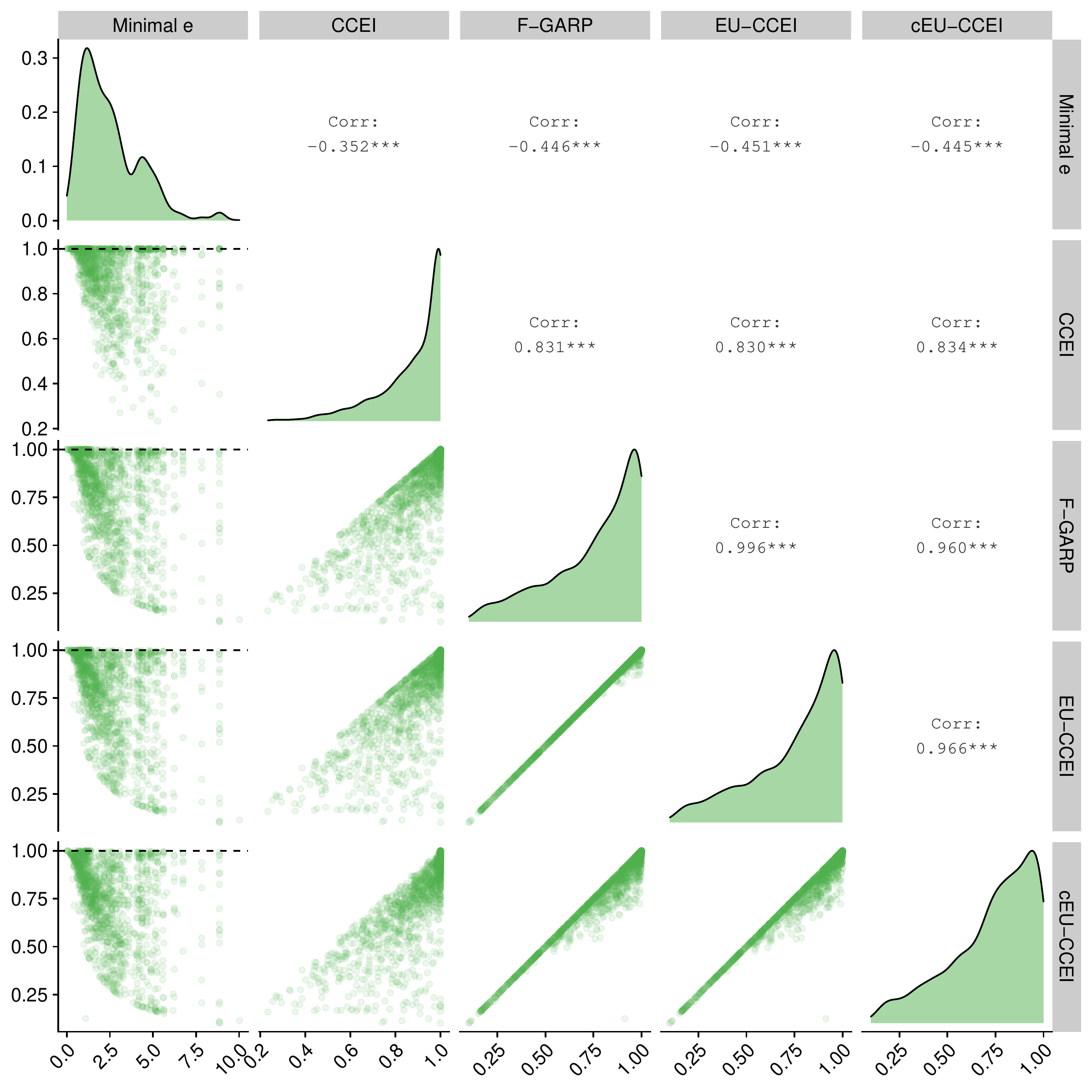} 
\caption{Comparing measures of rationality in the CS data.}
\label{fig:compare_measures_cs}
\end{figure}

\clearpage
\subsection{Choice Pattern: Additional Examples} 
\label{appendix:more_choice_pattern}

Choice data from four subjects presented in Section~\ref{section:oeu_application_results}, Figure~\ref{fig:choice_pattern_ccei_1}, are not meant to be representative of the entire dataset consisting of more than 3,000 subjects. 
In this section, we present more examples to understand the similarity and differences between $e_*$, CCEI, and EU-CCEI. 

We pick subjects from the CMW experiment, where all the subjects faced with the same set of 25 budget lines. This feature of the design makes the variation of $e_*$ smaller than in the other datasets (we observe several ``jumps'' in the empirical CDF of $e_*$ in Figure~\ref{fig:minimal_e_cdf}), but the comparison across choice patterns becomes easier. 

Figure~\ref{fig:cmw_map_sample_subjects} is the scatterplot of $e_*$ and EU-CCEI in the CMW data. Dashed lines represent the 25th, 50th, and 75th percentiles of $e_*$ and EU-CCEI. Two shaded areas represent combinations of $e_*$ and EU-CCEI that ``disagree'', in the sense that one measure says the subject is close to EU (relative to the median subject) but the other measure says the same subject far from EU (again, relative to the median subject). 
Each subject's choice pattern is shown below. 

\begin{figure}[ht!]
\centering 
\includegraphics[width=\textwidth]{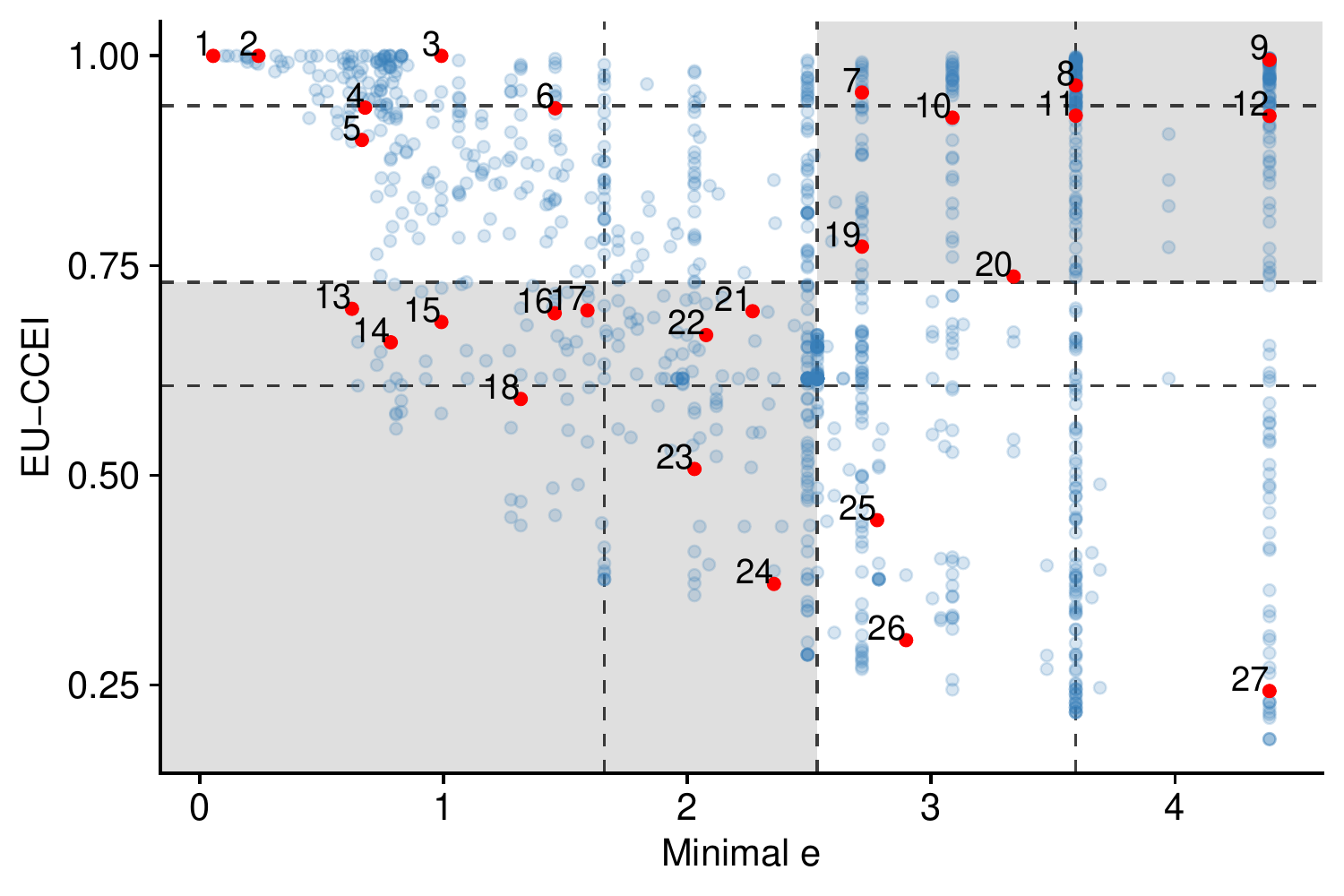} 
\caption{$e_*$ and EU-CCEI in CMW data. {\em Notes}: Vertical dashed lines represent the 25th, 50th, and 75th percentiles of $e_*$. Horizontal dashed lines represent the 25th, 50th, and 75th percentiles of EU-CCEI.}
\label{fig:cmw_map_sample_subjects}
\end{figure}

% \begin{itemize}
% \item 1-6: $e_*$ and EU-CCEI both say ``close to EU''. 
% \item 7-12: $e_*$ says ``far from EU'' but EU-CCEI disagrees. 
% \item 13-18: $e_*$ says ``close to EU'' but EU-CCEI disagrees. 
% \item 19-24: $e_*$ and EU-CCEI mildly disagree. 
% \item 25-27: $e_*$ and EU-CCEI both say ``far from EU''. 
% \end{itemize}

\begin{figure}[ht!]
\centering 
\includegraphics[width=0.9\textwidth]{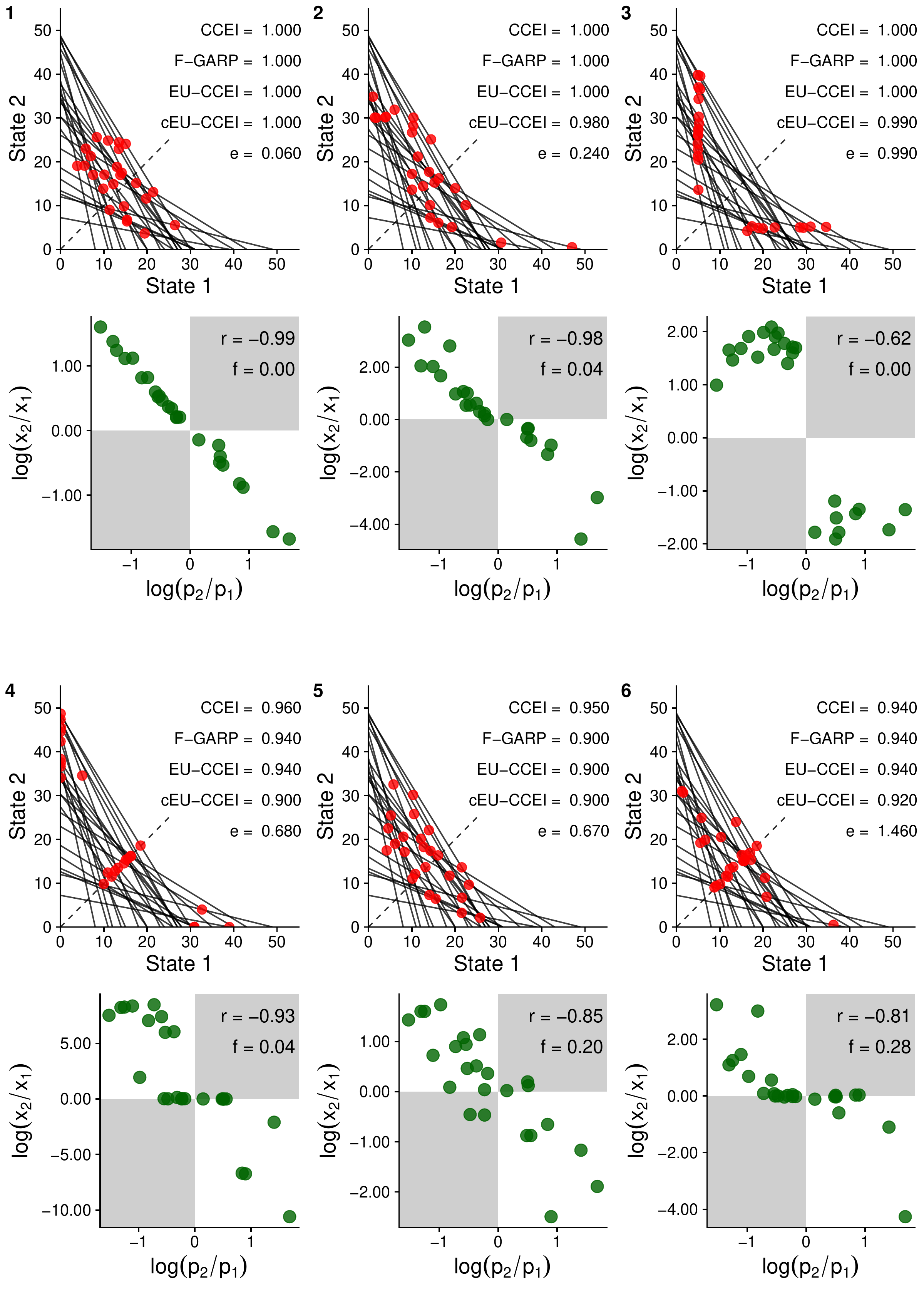} 
\end{figure}

\begin{figure}[ht!]
\centering 
\includegraphics[width=0.9\textwidth]{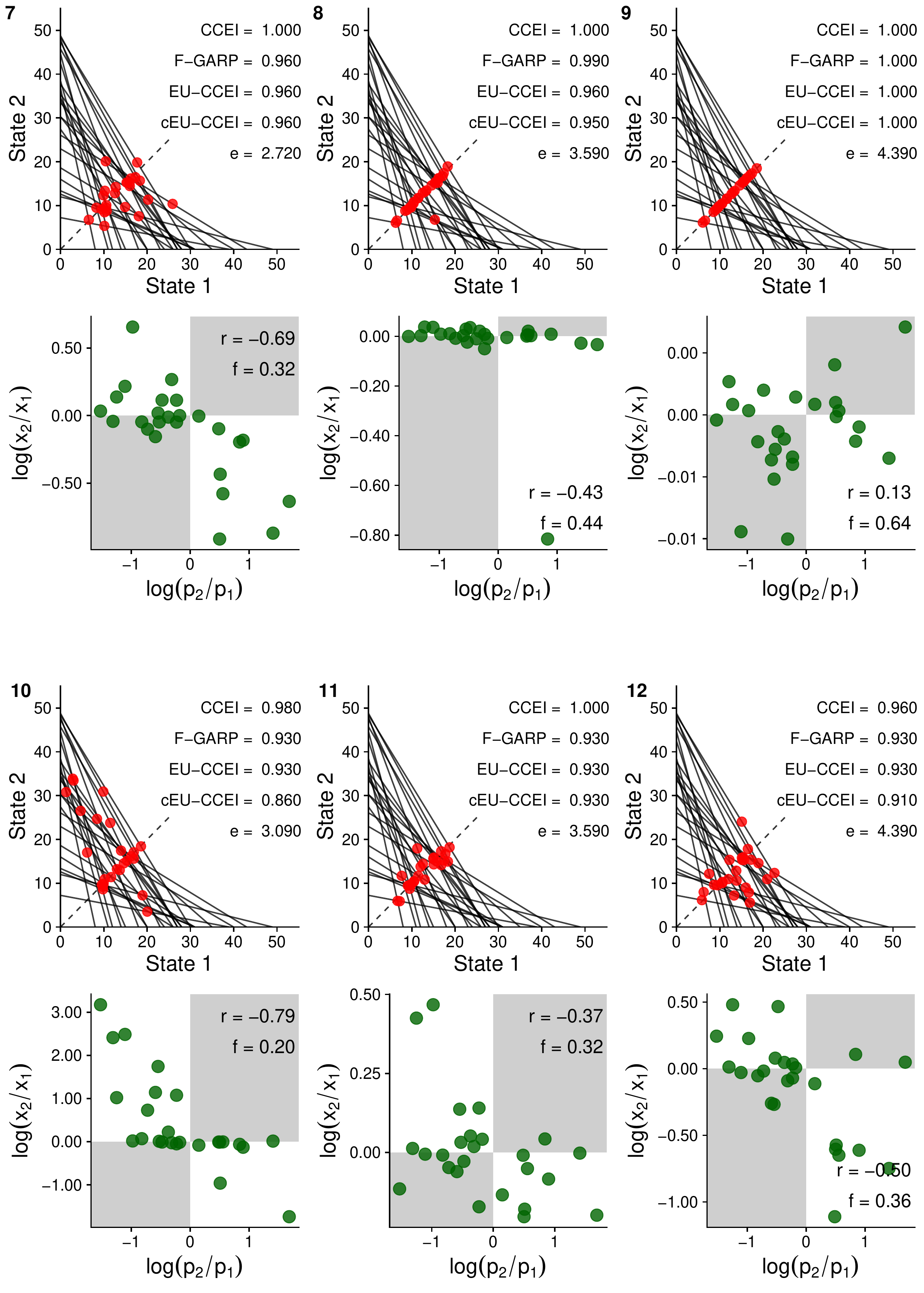} 
\end{figure}

\begin{figure}[ht!]
\centering 
\includegraphics[width=0.9\textwidth]{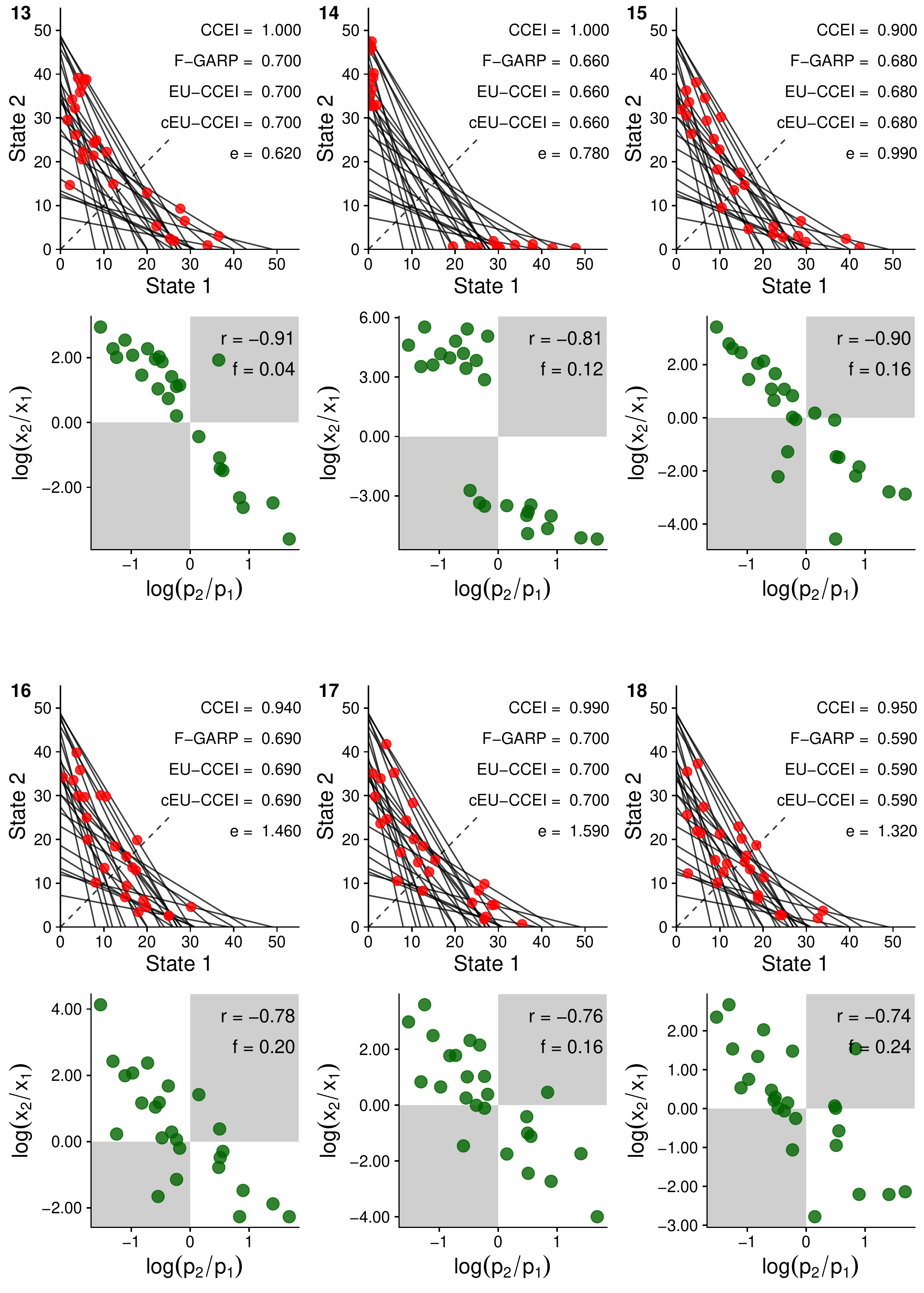} 
\end{figure}

\begin{figure}[ht!]
\centering 
\includegraphics[width=0.9\textwidth]{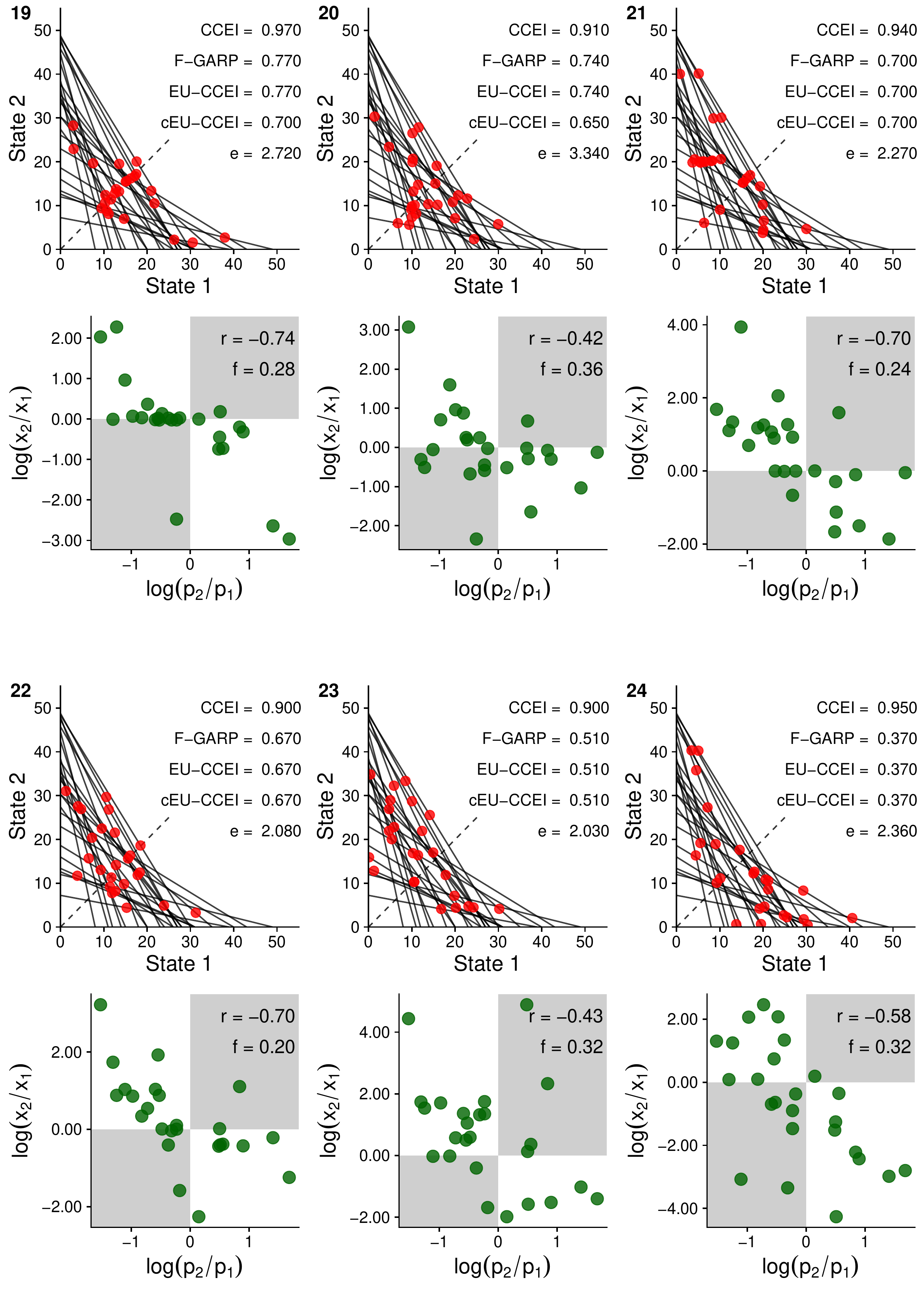} 
\end{figure}

\begin{figure}[ht!]
\centering 
\includegraphics[width=0.9\textwidth]{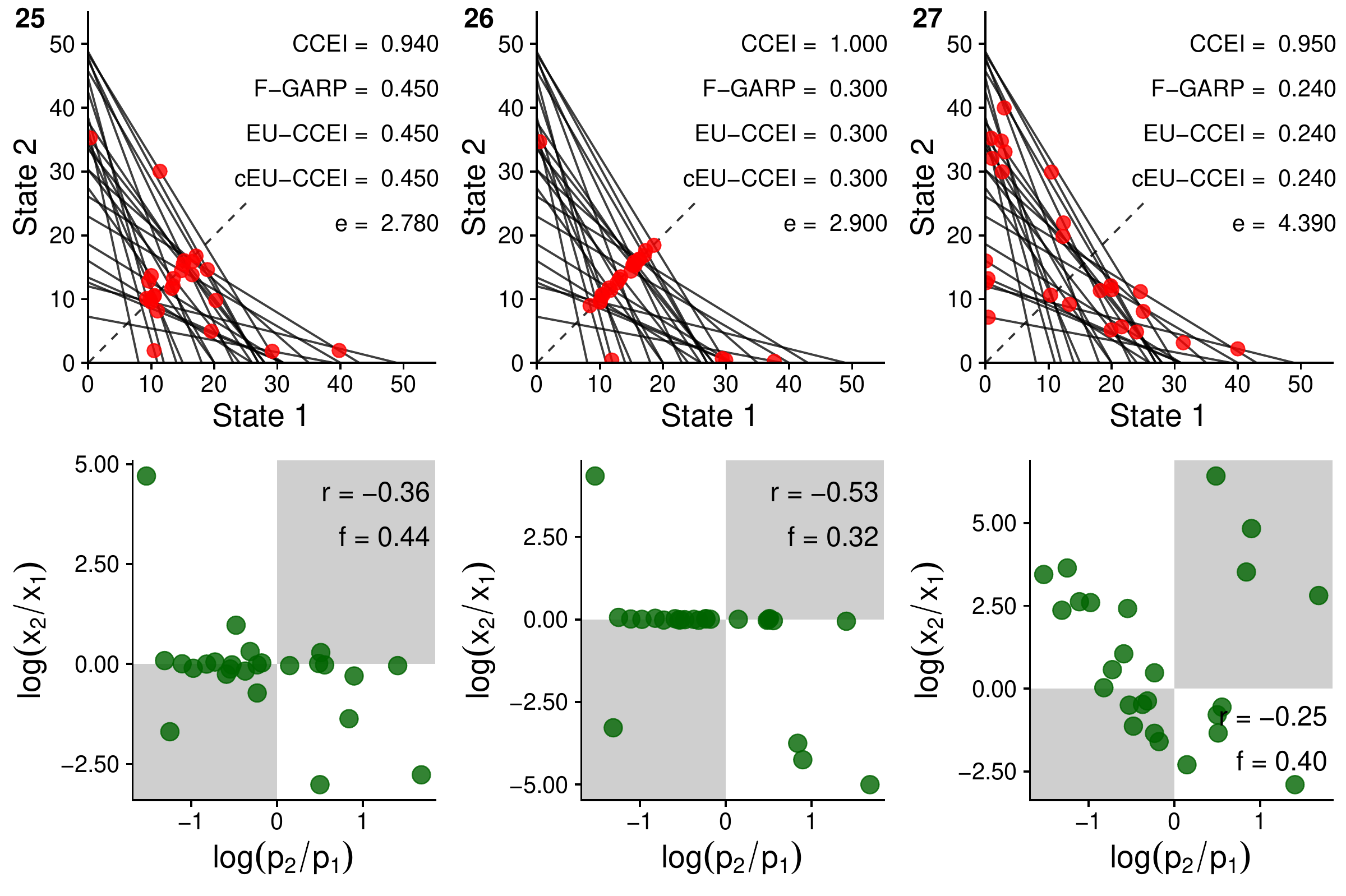} 
\end{figure}

% ----------
% References 
% ----------
\clearpage 
\renewcommand{\baselinestretch}{0}
\bibliographystyleSOMpapers{ecta}
\bibliographySOMpapers{approximate_eu}

\end{document}